\newtheorem{theorem}{Theorem}
\newtheorem{proposition}{Proposition}
\title{Fitting tails affected by truncation}
\author[1,3]{Jan Beirlant\footnote{Corresponding author. Address: Celestijnenlaan 200B, 3001 Leuven, Belgium; Email: \href{mailto:jan.beirlant@kuleuven.be}{jan.beirlant@kuleuven.be}.
}}
\author[2]{Isabel~Fraga Alves}
\author[1]{Tom Reynkens}
\affil[1]{Department of Mathematics and LStat, KU Leuven}
\affil[2]{Department of Statistics and Operations Research, University of Lisbon}
\affil[3]{Department of Mathematical Statistics and Actuarial Science, University of the Free State}
\begin{document}

 \maketitle
\begin{abstract}
{\noindent 
In several applications, ultimately at the largest data, truncation effects can be observed when analysing tail characteristics of statistical  distributions. In some cases truncation effects are forecasted through physical models such as the Gutenberg-Richter relation in geophysics, while at other instances the nature of the measurement process itself may cause under recovery of large values, for instance due to flooding in river discharge readings.
Recently, \citet{Truncation} discussed tail fitting for truncated Pareto-type distributions.
Using examples from earthquake analysis, hydrology and diamond valuation we demonstrate the need for a unified treatment of extreme value analysis for truncated heavy and light tails.
We generalise the classical Peaks over Threshold approach for the different max-domains of attraction with shape parameter $\xi > -1/2$ to allow for truncation effects. We use a pseudo maximum likelihood approach to estimate the model parameters and consider extreme quantile estimation and reconstruction of quantile levels before truncation whenever appropriate. We report on some simulation experiments and provide some basic asymptotic results.
 }
\end{abstract}

\noindent {\bf Keywords:} Tail estimation, truncation, maximum likelihood estimation.


\section{Introduction}

Modelling extreme events has recently received a lot of interest. Assessing the risk of rare events through estimation of extreme quantiles or corresponding return periods has been developed extensively and was applied to a wide variety of fields such as meteorology, finance, insurance and geology, among others.
The methodology on modelling the univariate upper tail of the distribution of such quantities $Y$ relies on the fact that the maximum of independent measurements $Y_i, \; i=1,\ldots,n,$  can be approximated by the generalised extreme value distribution: as $n\to \infty$
\begin{equation}
\mathbb{P}\left(\tfrac{\displaystyle\max_{i=1,\ldots,n}Y_i -b_n}{a_n} \leq y \right) \to
G_{\xi} (y) = \exp \left( - (1 + \xi y)^{-1/\xi} \right),   \;\; 1+\xi y>0,
\label{eq:maxd}
\end{equation}
where $b_n \in \mathbb{R}$, $a_n >0$ and $\xi \in \mathbb{R}$ are the location, scale and shape parameters, respectively. For $\xi =0$, $G_0(y)$ has to be read as $\exp \{- \exp (-y)\}$.
In fact, \eqref{eq:maxd} represents the only possible non-degenerate limits for maxima of independent and identically distributed sequences $Y_i$. Condition \eqref{eq:maxd} is equivalent to the convergence of the  distribution of excesses (or peaks)  over high thresholds $t$ to the generalised Pareto distribution (GPD):
as $t$ tends to the endpoint of the distribution of $Y$, then, with $\bar{F}$ the right tail function (RTF) of a given distribution,
{\small\begin{equation}
\mathbb{P} \left(\frac{Y -t}{\sigma_Y (t)}>y \,\middle|\, Y>t\right) =
\frac{\bar{F}_Y (t+y \sigma_Y(t) )}{\bar{F}_Y (t)}
 \to H_{\xi}(y) = -\log G_{\xi} (y) = \left( 1+\xi y\right)^{-1/\xi},
\label{eq:pot}
\end{equation}}%
where $\sigma_Y (t) >0$. Below we set $ \sigma_Y (t)= \sigma_t$.
Setting $t$ at the $(k+1)$th largest observation $y_{n-k,n}$ for some $k \in \{1\ldots,n-1\}$ so that $k$ data points are larger than the threshold $t$, \eqref{eq:pot} leads  to the  estimator
\begin{equation}
\hat{p}_c = \frac{k}{n}H_{\hat\xi}\left( \frac{c-y_{n-k,n}}{\hat\sigma}\right)
\label{eq:phat}
\end{equation}
of the tail probability $\mathbb{P}(Y>c)$ for $c>0$ large, where ($\hat\xi,\hat\sigma $) denote estimators for ($\xi,\sigma_t$). The modelling of extreme values and the estimation of tail parameters through the peaks over threshold (POT) methodology has been discussed for instance in \citet{Coles}, \citet{Embrechts}, \citet{SoE}, and \citet{dHF}.

\vspace{0.5cm}\noindent
Recently, \citet{trHill}, \citet{Chakra} and \citet{Truncation} have addressed the problem of using unbounded  probability mass leading to levels that are unreasonably large or physically impossible. All of these papers consider cases with shape parameter $\xi >0$. In \citet{Truncation} it was observed that the above mentioned extreme value methods, even when using a negative extreme value index, are not able to capture truncation at high levels. However, in several other fields, such as hydrology and earthquake magnitude modelling, the underlying distributions appear to be lighter tailed than Pareto.
In this paper we will propose an adaptation of the classical approach to truncated tails over the whole range of max-convergence \eqref{eq:maxd} with $\xi >-0.5$ as in the original POT approach.

\vspace{0.5cm} \noindent
First, we consider recent {\bf magnitude data} (expressed on the Richter scale) of the 200 largest earthquakes in the Groningen area (the Netherlands), in the period 2003--2015, which are caused by gas extraction. In Figure~\ref{fig:EarthquakeQQ}, we present the time plot and the exponential QQ-plot $\left(x_{n-j+1,n},\log (j/n) \right)$ ($j=1,2, \ldots,n$) where $x_{1,n} \leq \ldots \leq x_{n-j+1,n}\leq \ldots \leq x_{n,n}$ denote the ordered data.
\begin{figure}[!ht]
    \begin{center}
		\includegraphics[height=0.485\textwidth, angle=270]{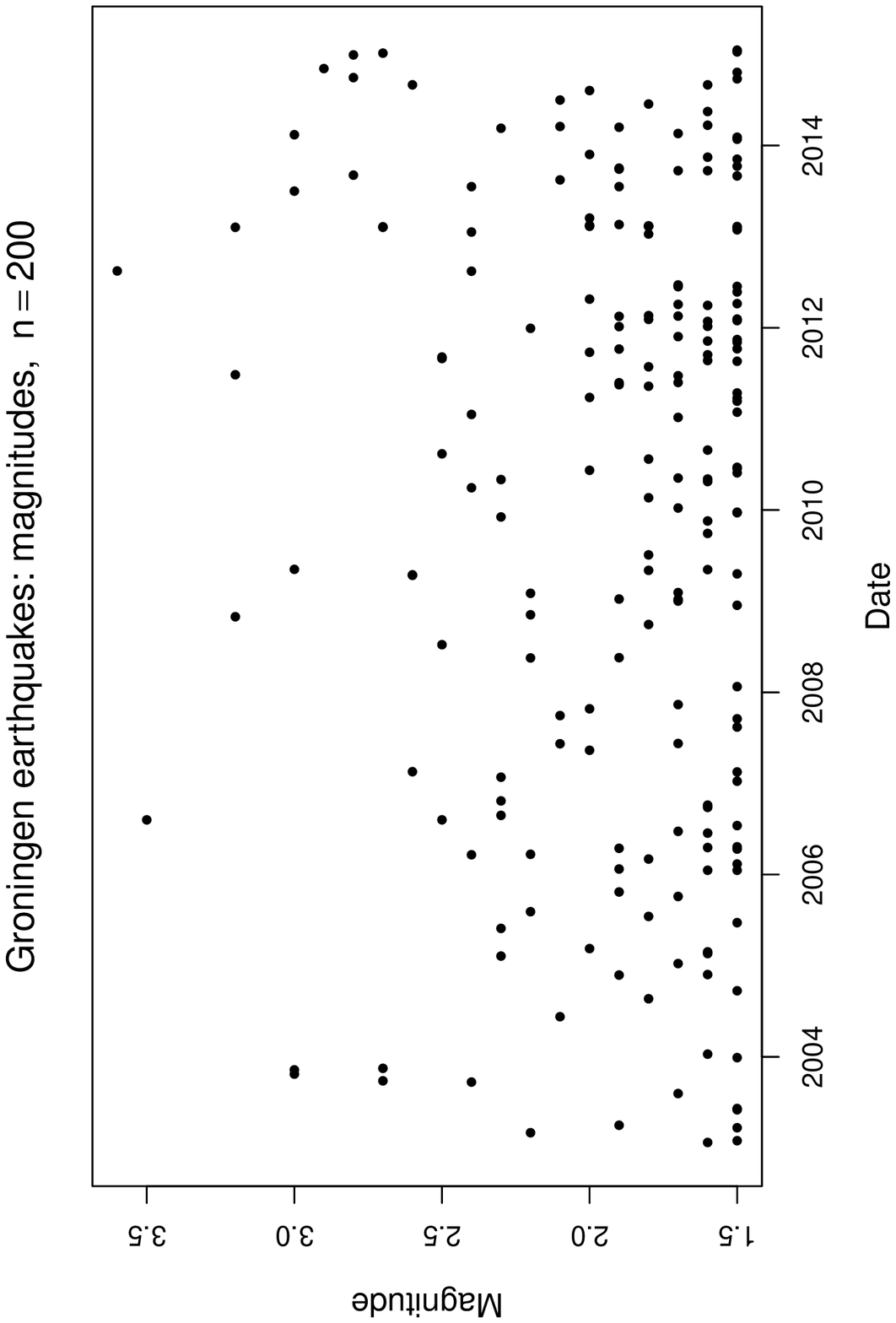}
    \includegraphics[height=0.485\textwidth, angle=270]{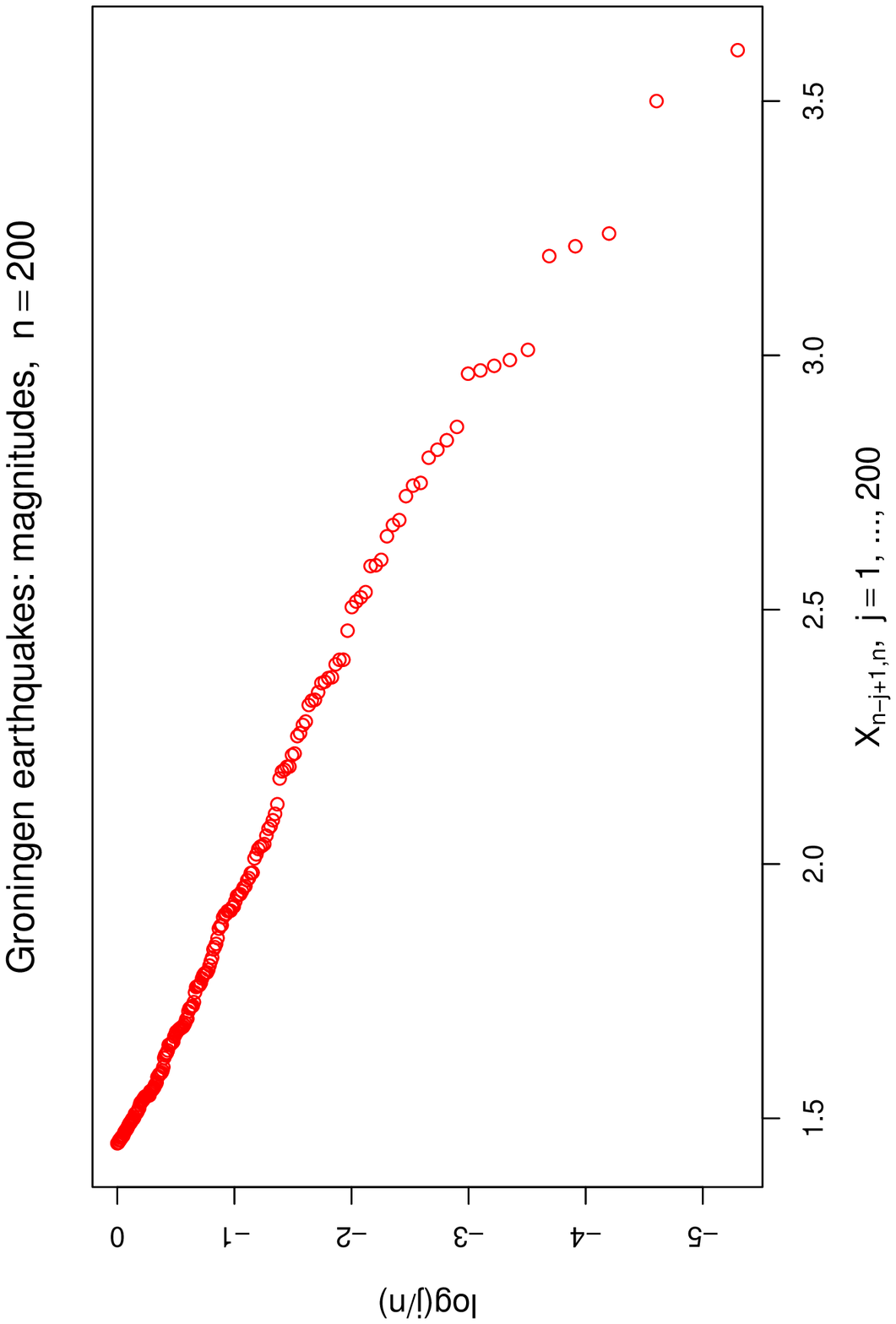}
 \caption {Time plot and exponential QQ-plot of earthquake magnitude data from the Groningen area.}\label{fig:EarthquakeQQ}
    \end{center}
   \end{figure}
Along the Gutenberg-Richter (\citeyear{GR}) law the magnitudes of independent earthquakes are drawn from a doubly truncated exponential distribution
\[
\mathbb{P}(M>m) = \frac{e^{-\lambda m}-e^{-\lambda T_M}}{e^{-\lambda m_0}-e^{-\lambda T_M}},\; m_0 < m < T_M.
\]
\citet{KS}
provide a review of the vast literature on  estimating the maximum possible magnitude $T_M$.    The energy $E$ released by earthquakes, expressed in megajoule ($MJ$), relates to the magnitude $M$, expressed on the Richter scale, by
\[
\mbox{ M } = \log_{10} \left( E/2\right)/1.5 +1
.\]
In Figure~\ref{fig:EarthquakeQQ}, a linear pattern is visible for a large section of the magnitudes data, while some curvature appears at the largest values. The data set was  tested for serial correlation and no significance could be detected.

\vspace{0.5cm} \noindent
Secondly, we revisit the {\bf diamond size data} considered in \citet{VWSP}. The nature of metallurgical recovery processes in diamond mining may cause under recovery of large diamonds between 30 and 60 cts per stone. If stones are not recovered during this process they are discarded onto tailing dumps from which they can be recovered during future re-mining programs. Because even a small number of large diamonds can have a large value, the question arises whether re-mining a mine dump can be made profitable by recovering these large diamonds. Therefore, the expected number of large diamonds above certain carat values $c$ is of interest and the original non-truncated  values are to be reconstructed from the data, which exhibit truncation. In Figure~\ref{fig:DiamondQQ}, the Pareto QQ-plot or log-log plot $\left(\log x_{n-j+1,n},\log (j/n) \right)$ ($j=1,2, \ldots,n$) of the available carat data is presented. Again, a curvature near the top data is visible.
\begin{figure}[!ht]
    \begin{center}
    \includegraphics[height=0.70\textwidth, angle=270]{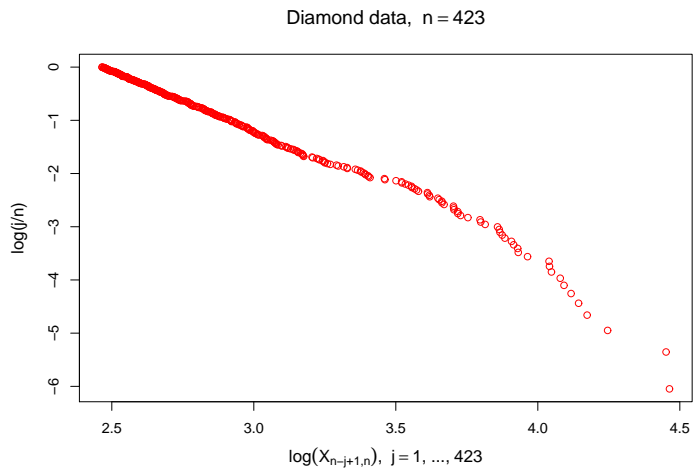}
 \caption{Log-log plot of diamond size data from \citet{VWSP}.}\label{fig:DiamondQQ}
    \end{center}
   \end{figure}

\vspace{0.5cm} \noindent
Thirdly, we study the river flows of the Molenbeek river at Erpe-Mere in Belgium (in $m^3/s, n=426$) obtained between 1986 and 1996. The data are peaks over threshold values taken from a complete series of hourly flow measurements which was filtered in order to satisfy hydrological independence as discussed in \citet{Willems}. This river is prone to flooding at high flow levels and hence the measurements can be truncated. In Figure~\ref{fig:MolenbeekQQ} the exponential QQ-plot
 is given, which exhibits a linear (i.e.~exponential) pattern with a downward curvature near the largest floods.
\begin{figure}[!ht]
    \begin{center}
        \includegraphics[height=0.70\textwidth, angle=270]{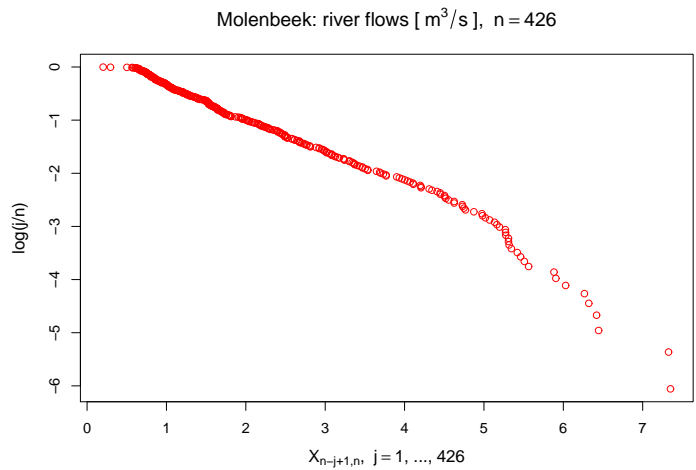}
 \caption{Exponential QQ-plot of the Molenbeek flow data.}\label{fig:MolenbeekQQ}
    \end{center}
   \end{figure}

\vspace{0.2cm}\noindent
In this paper, we aim to provide a statistical model being able to approximate tail characteristics of distributions truncated at high levels. Moreover, the statistical estimation methods should also include the case of no-truncation in order for these methods to be useful and competitive both in cases with and without truncation.
In the case of Pareto-type tails with $\xi>0$ the proposed methods should also be compared with the methods which have been developed specifically for that sub-case.
\\
To this purpose we extend the classical POT technique with maximum likelihood estimation of the GPD parameters $\xi$ and $\sigma$. Of course estimators for tail probabilities and extreme quantiles of a truncated distribution are to be discussed. Estimation of the endpoint $T$ of a truncated distribution is of particular importance as discussed above in earthquake applications. Motivated by the river flow and diamond valuation examples, we finally consider the problem of reconstructing quantiles of the underlying unobserved variable $Y$ before truncation.

\section{Model}

Let $Y$ denote a parent random variable with distribution function
$F_Y (y) = \mathbb{P}(Y \leq y)$, RTF $\bar{F}_Y(y) = 1-F_Y(y)$, quantile function $Q_Y (p) = \inf \{y: F_Y(y) \geq p \}$ ($0<p<1$), and tail quantile function $U_Y (v) = Q_Y(1-\frac{1}{v})$ ($v>1$). We consider the right truncated distribution from which independent and identically distributed data $X_1, X_2,\ldots, X_n$ are observed with, for some $T>0$,
\begin{equation}
X =_d Y\,|\,Y<T.
\label{eq:T}
\end{equation}
The corresponding RTF is denoted with $\bar{F}_T (x) = \mathbb{P}(X >x)$ and the tail quantile function is given by $U_T (u) = Q_T (1-\frac{1}{u})$ ($u>1$).
Then,
\begin{eqnarray}
 \bar{F}_T (x) &=& \frac{\bar{F}_Y(x)-\bar{F}_Y(T)}{1-\bar{F}_Y(T)} = (1+D_T)\bar{F}_Y(x)-D_T , \label{eq:FYT}\\
 U_T (u) &=& U_Y \left( \frac{u}{F_Y (T)} \left[1 +uD_T \right]^{-1}\right) \label{eq:UYCD} \\
 &= &U_Y \left( \frac{1}{\bar{F}_Y (T)}\left[1 +\frac{1}{uD_T} \right]^{-1}\right), \label{eq:UYTD}
\end{eqnarray}
where $D_T = \bar{F}_Y (T)/F_Y(T)$ equals the odds  of the truncated probability mass under the untruncated distribution $Y$.

\vspace{0.3cm} \noindent
The goal of this paper is to provide a test for truncation and to estimate
\begin{itemize}
\item the model parameters $\xi$ and $\sigma=\sigma_t$,
\item the odds $D_T$,
\item quantiles  $Q_T (1-p)$ ($p$ small) of the truncated distribution and the truncation point $T=Q_T(1)$,
\item tail probabilities $\mathbb{P}(X >c)$ ($c$ large) of the truncated distribution,
\item and reconstruct quantile levels $Q_Y (1-p)$ of the parent variable $Y$ before truncation,
\end{itemize}
all on the basis of a pure random sample from $X$ (possibly) truncated at some large $T$.

\vspace{0.3cm} \noindent
We assume that the distribution of $Y$ satisfies \eqref{eq:maxd} or, equivalently, \eqref{eq:pot}.
Condition \ref{eq:pot} is also known to be equivalent to the following condition relating extreme quantile levels at $1-\frac{1}{vy}$ and $1-\frac{1}{y}$ close to the endpoint of the distribution: there exists a positive measurable function $a$ such that
\begin{equation}
\frac{U_Y(vy) -U_Y(y)}{a(y) } \to \frac{v^{\xi}-1}{\xi} \mbox{ when } y \to \infty,
\label{eq:maxdU}
\end{equation}
with $a(1/\bar{F}_Y (t_{k,n}))=\sigma_t$ where $t=t_{k,n}=U_T(n/k)$. The right hand side of \eqref{eq:maxdU} is to be read as $\log v$ for $\xi=0$.

\vspace{0.3cm} \noindent
The specific case $\xi >0$ of Pareto-type distributions satisfies
\begin{equation}
\frac{U_Y (vy)}{U_Y (y)} \to_{y\to \infty} v^{\xi} \mbox{ and }
\mathbb{P} (Y/t>y \,|\, Y>t) = \frac{\bar{F}_Y (ty )}{\bar{F}_Y (t)}
\to_{t\to \infty} y^{-1/\xi}.
\label{eq:Pa}
\end{equation}
Also when $\xi >0$, $\sigma_t \sim \xi t$  as $t \to \infty$. Furthermore, it is known that $\sigma_t/t \to 0$ when $\xi \leq 0$.
\\
Note that for a given $T$ fixed, the tail of a truncated model $X$ defined through \eqref{eq:T} has an extreme value index $\xi_X=-1$, see for instance Figure~2.8 in \citet{SoE}.

\vspace{0.3cm} \noindent
Truncation of a  distribution $Y$ satisfying \eqref{eq:pot} at a value $T$ necessarily requires $t<T \to \infty$. The threshold $t$ is mostly taken at the theoretical quantile $Q_T (1-\frac{k}{n})\\=U_T(n/k)$, which in practice is estimated by the empirical quantile $X_{n-k,n}$. Given the fact that our model is only defined choosing $t=t_n,T=T_n \to \infty$ as the sample size $n \to \infty$, the underlying model depends on  $n$ and a triangular array formulation  $X_{n1}, \ldots, X_{nn}$ of the observations should be used in order to emphasise the  nature of the model. However, in statistical procedures as presented here, when a single sample is given, the notation $X_1,\ldots,X_n$ is more natural and will be used throughout.
\\
The model considered in this paper is then given by
\begin{itemize}
\item[($\mathcal{M}$)] For a sequence $T_n \to \infty$, $\{X_{n1},\ldots,X_{nn}\}=\{X_{1},\ldots,X_{n}\}$ are independent copies of a random variable $X=X_{T_n}$ where $X=X_{T_n}$ is distributed as $Y|Y<T_n$, with $Y$ satisfying \eqref{eq:pot} or equivalently \eqref{eq:maxdU}.
\end{itemize}

\vspace{0.3cm} \noindent
Now we consider the distribution of the POT values for the data  of the truncated distribution under ($\mathcal{M}$):
\begin{eqnarray}
\mathbb{P}\left(\frac{X-t}{\sigma_t } >x \,\middle|\, X>t\right) &=& \mathbb{P}\left(\frac{Y-t}{ \sigma_t} >x \,\middle|\, t<Y<T\right)  \nonumber \\
&=& \frac{\mathbb{P}(Y>t+x \sigma_t) - \mathbb{P}(Y>T)}{\mathbb{P}(Y>t)-\mathbb{P}(Y>T)}
\nonumber
\\
&= & \frac{\frac{\mathbb{P}(Y>t+x \sigma_t)}{\mathbb{P}(Y>t)}-\frac{\mathbb{P}(Y>T)}{\mathbb{P}(Y>t)}} {1-\frac{\mathbb{P}(Y>T)}{\mathbb{P}(Y>t)}}.
\label{eq:GPDT}
\end{eqnarray}
One can now consider two cases as $t,T \to \infty$:
\begin{itemize}
\item
($\mathcal{T}_t$)  {\it Rough truncation with the threshold $t=t_n$}:
\begin{equation}
(T-t)/\sigma_t \to \kappa >0,
\label{eq:C}
\end{equation}
and hence from \eqref{eq:pot} and with local  uniform convergence in \eqref{eq:pot}
\begin{equation}
\frac{\mathbb{P}(Y>T)}{\mathbb{P}(Y>t)} \to (1+\xi\kappa)^{-1/\xi}.
\label{eq:C2}
\end{equation}
This entails that for $x\in (0,\kappa)$
\begin{equation}
\mathbb{P}\left(\frac{X-t}{\sigma_t} >x \,\middle|\, X>t\right) \to \frac{(1+ \xi  x)^{-1/\xi}  -  (1+ \xi \kappa )^{-1/\xi}}
{1-(1+ \xi \kappa)^{-1/\xi}} =: \bar{F}_{\xi,\kappa}(x).
\label{eq:FT}
\end{equation}
This corresponds to situations where  the deviation from the Pareto behaviour due to truncation at a high value $T$ will be visible in the data from $t$ on, and the approximation of the POT distribution  using the limit distribution in \eqref{eq:FT} appears more appropriate than with a simple GPD.
\item ($\bar{\mathcal{T}}_t$)  {\it Light truncation with the threshold $t=t_n$ : $\frac{\mathbb{P}(Y>T)}{\mathbb{P}(Y>t)} \to 0$}.\\ This entails
\begin{equation}
\mathbb{P}\left(\frac{X-t}{\sigma_t} > x \,\middle|\, X>t\right)  \to (1+ \xi  x)^{-1/\xi} , \;\; 1+\xi x>0.
\label{eq:LT}
\end{equation}
Light truncation is introduced for mathematical completeness.
But ($\bar{\mathcal{T}}_t$) means that the  truncation is not really visible in the data above $t$, and the classical extreme value modelling without truncation is appropriate. Hence, it will be practically impossible to discriminate light truncation from no truncation (i.e.~$T=\infty$).
\end{itemize}

 \vspace{0.3cm}\noindent
 Under ($\mathcal{T}_t$) with $t=t_{k,n}= U_T (n/k)$ we find from applying $F_Y$ to both sides of \eqref{eq:UYCD} with $u=n/k$ that
 \begin{equation*}
\bar{F}_Y (t) = F_Y(T) \frac{1+ (n/k)D_T}{n/k}= F_Y (T) \left(\frac{k}{n}+D_T \right),
 \end{equation*}
 from which, dividing by $\bar{F}_Y (T)$, we obtain
\[
\frac{\bar{F}_Y (t) }{\bar{F}_Y (T)} = \frac{1}{D_T}\left(\frac{k}{n} +D_T \right),
\]
while, using \eqref{eq:pot} and ($\mathcal{T}_t$),
\[
\frac{\bar{F}_Y (T)}{\bar{F}_Y (t)}  \to (1+\xi \kappa)^{-1/\xi},
\]
and hence under ($\mathcal{T}_t$)
\begin{equation}
\frac{k}{nD_T} \to (1+\xi\kappa)^{1/\xi}-1.
\label{eq:knDT}
\end{equation}

\vspace{0.3cm}\noindent
Now in order to be able to construct extreme quantile estimators under ($\mathcal{T}_t$), remark that from \eqref{eq:maxdU} with $vy=1/p$, $y=1/\bar{F}_Y(t)$ and $k_{\xi}(u)= (u^{\xi}-1)/\xi$, we have as $t \to \infty$ and $\bar{F}_Y (t)/p \to C$ for some constant $C>0$ that
\[
\frac{Q_Y (1-p) - t}{\sigma_t} - k_{\xi}\left( \frac{\bar{F}_Y (t)}{p}\right) \to 0.
\]
Hence, with \eqref{eq:UYTD} and $p = \bar{F}_Y (T)(1+\frac{1}{ uD_T})$
we obtain
\begin{eqnarray*}
\frac{U_T (u) - t}{\sigma_t} &=& \frac{U_Y \left( \frac{1}{\bar{F}_Y (t)}[1+ \frac{1}{uD_T}]^{-1}\right) -t}{\sigma_t}\\
&= & k_\xi  \left( \frac{\bar{F}_Y (t)}{\bar{F}_Y (T)[1+\frac{1}{uD_T}]}\right) + o(1).
\end{eqnarray*}
Using \eqref{eq:knDT} and \eqref{eq:pot} with $y=\kappa$ we obtain under ($\mathcal{T}_t$) that
\[
\frac{\bar{F}_Y (t)}{\bar{F}_Y (T)} \sim (1+\xi\kappa)^{1/\xi} \sim 1+ \frac{k}{nD_T}.
\]
Hence, we conclude that under ($\mathcal{T}_t$) for $1/(uD_T) \to 0$
\begin{equation}
\frac{U_T(u) - t}{\sigma_t} - k_{\xi}\left( \frac{1+ \frac{k}{nD_T}}{1+\frac{1}{uD_T}}\right) \to 0.
\label{eq:UTy}
\end{equation}
These derivations will motivate the proposed estimators of $D_T$ and extreme quantiles $Q_T (1-p)$.

\section{Inference}

\subsection{Estimators and goodness-of-fit} Estimation of the parameters ($\xi,\sigma$) in the classical POT without truncation is well-developed \citep{Coles, SoE}. Fitting the scaled GPD  with RTF $(1+\frac{\xi}{\sigma} x)^{-1/\xi}$ to the excesses $X-t$ given $X>t$ (based on \eqref{eq:FT}) using maximum likelihood is by far the most popular method in this respect. Here we rely on the generalisation \eqref{eq:FT} under ($\mathcal{T}_t$), with $t$ replaced by a random threshold $X_{n-k,n}$ and using the exceedances $E_{j,k}= X_{n-j+1,n}-X_{n-k,n}$  ($j=1,2,\ldots,k$) for some $k \geq 2$. Substituting
$ E_{1,k}/\sigma$ for $\kappa$ following \eqref{eq:C}, the log-likelihood is  given by
\begin{eqnarray*}
\log L_{k,n} (\xi,\sigma ) &=& \log \left(\prod_{j=2}^k \frac{\sigma^{-1}\left(1+ \frac{\xi}{\sigma} E_{j,k}\right)^{-(1/\xi) -1} }
{1-\left(1+ \frac{\xi}{\sigma}E_{1,k}\right)^{-1/\xi}}\right) \\
 &=&  -(k-1) \log \sigma - \left( 1 + \frac{1}{\xi }\right)\sum_{j=2}^k \log \left(1+\frac{\xi}{\sigma} E_{j,k}\right) \\
&& \vspace{1cm}
-(k-1) \log \left( 1- \left(1+ \frac{\xi}{\sigma}E_{1,k}\right)^{-1/\xi} \right),
 \end{eqnarray*}
or, by reparametrising ($\xi,\sigma$) to  ($\xi,\tau$) with $\tau= \xi/\sigma$,
\begin{eqnarray*}
\log L_{k,n} (\xi,\tau )
 &=&   (k-1) \log \tau - (k-1) \log \xi- \left( 1 + \frac{1}{\xi }\right)\sum_{j=2}^k \log (1+\tau E_{j,k}) \\
 && \vspace{1cm}
-(k-1) \log \left( 1- (1+ \tau E_{1,k})^{-1/\xi} \right).
 \end{eqnarray*}
The partial derivatives are given by 
\newpage
\begin{align*}
\frac{1}{k-1}\frac{ \partial \log L_{k,n} (\xi,\tau )}{\partial  \xi } &=  -\frac{1}{\xi}  + \frac{1}{\xi^2 } \frac{1}{k-1}\sum_{j=2}^k \log (1+ \tau E_{j,k}) \nonumber \\
&\quad+ \frac{1}{\xi^2 } \frac{(1+\tau E_{1,k})^{-1/\xi} \log (1+\tau E_{1,k})}{1- (1+\tau E_{1,k})^{-1/\xi}}, \\
\frac{1}{k-1} \frac{\partial \log L_{k,n} (\xi,\tau )}{\partial  \tau } &= \frac{1}{\tau }  -
 \left( 1 + \frac{1}{\xi } \right)\frac{1}{k-1} \sum_{j=2}^k \frac{E_{j,k}}{1+ \tau E_{j,k}} \nonumber \\
 &\quad- \frac{1}{\xi}E_{1,k} \frac{(1+\tau E_{1,k})^{-1-1/\xi}}{1- (1+\tau E_{1,k})^{-1/\xi} },
\end{align*}
from which the likelihood equations defining the pseudo maximum likelihood estimators ($\hat\xi _k, \hat\tau _k$) are obtained:
\begin{align}
&\frac{1}{k-1}\sum_{j=2}^k \log (1+ \hat{\tau}_k E_{j,k})+ \frac{(1+\hat{\tau}_k E_{1,k})^{-1/\hat{\xi}_k} \log (1+\hat{\tau}_k E_{1,k})}{1- (1+\hat{\tau}_k E_{1,k})^{-1/\hat{\xi}_k}}=\hat{\xi}_k 
\label{eq:lik_xitau1} \\
& \frac{1}{k-1} \sum_{j=2}^k \frac{1}{1+ \hat{\tau}_k E_{j,k}} =\frac{1}{1+\hat{\xi}_k} 
\frac{1-(1+\hat{\tau}_k E_{1,k})^{-1-1/\hat{\xi}_k}}{1- (1+\hat{\tau}_k E_{1,k})^{-1/\hat{\xi}_k} }.
\label{eq:lik_xitau2}
\end{align}
When computing ($\hat\xi _k,\hat\tau _k$), one has to impose the model restrictions. In order to meet the restrictions $\sigma=\xi/\tau>0$ and  $1+\tau E_{j,k}>0$ for $j=1,\ldots,k$, in our implementation we require the estimates of these quantities to be larger than the numerical tolerance value $10^{-10}$.

\vspace{0.3cm}\noindent
An estimator of $D_T$ now follows from taking $u=n$ in \eqref{eq:UTy}:
\[
U_T(n) - U_T (n/k) \approx \sigma k_{\xi}\left( \frac{1+ \frac{k}{nD_T}}{1+\frac{1}{nD_T}}\right).
\]
Estimating $U_T(n) - U_T (n/k)$ by $E_{1,k}$ we obtain
\begin{equation}
\hat{D}_{T,k}:= \max\left\{0,\frac{k}{n}
\frac{(1+ \hat\tau _k E_{1,k})^{-1/\hat\xi _k}-\frac{1}{k}}
{1-(1+ \hat\tau _k E_{1,k})^{-1/\hat\xi _k}}\right\}.
\label{eq:DTk}
\end{equation}
Similarly taking $u=1/p$ in \eqref{eq:UTy} with $np/k \to 0$, we obtain estimators for $Q_T (1-p)$:
\begin{equation}
{\hat{Q}_{T,k}(1-p)} = X_{n-k,n} +
\frac{1}{\hat\tau _k}\left[\left\{ \frac{\hat{D}_{T,k} + \frac{k}{n}}{\hat{D}_{T,k}+p} \right\}^{\hat\xi _k} -1 \right].
\label{eq:QTp}
\end{equation}
Setting $p=0$ in \eqref{eq:QTp} one obtains an estimator for the truncation point $T$:
\begin{equation}
\hat{T} _{k} = X_{n-k,n} + \frac{1}{\hat\tau _k}\left[\left\{ \frac{1-k^{-1}}{(1+ \hat\tau _k E_{1,k})^{-1/\hat\xi _k}-k^{-1}  } \right\}^{\hat\xi _k} -1 \right].
\label{eq:hatT}
\end{equation}
Based on \eqref{eq:phat} and \eqref{eq:FYT} an estimator for tail probabilities $\mathbb{P}(X>c)$ can be derived:
\begin{equation}
{\hat{p}_{T,k}(c)} = (1+ \hat{D}_{T,k})\; \frac{k}{n}\left(1+\hat\tau _k(c-X_{n-k,n}) \right)^{-1/\hat\xi _k} -\hat{D}_{T,k}.
\label{eq:pTc}
\end{equation}
Note that all proposed estimators from \eqref{eq:lik_xitau1}, \eqref{eq:lik_xitau2},  \eqref{eq:QTp} and \eqref{eq:pTc} are direct generalisations of the classical POT estimators under no-truncation which are obtained by setting $\hat{D}_{T,k}$ equal to 0.

\vspace{0.3cm}\noindent
From \eqref{eq:UYCD} it follows that when $p-(1-p)D_T >0$, or $p > D_T/(1+D_T) = \bar{F}_Y (T)$
\[
Q_Y (1-p) = Q_T ((1-p)(1+D_T))= Q_T (1-\{p-(1-p)D_T \}),
\]
from which the following estimator reconstructing $Q_Y (1-p)$ of the parent distribution $Y$ follows:
\begin{eqnarray}
\hat{Q}_{Y,k}(1-p) &=& \hat{Q}_{T,k} \left( 1-[p -(1-p)\hat{D}_{T,k}] \right)\nonumber \\
&=& X_{n-k,n} +
\frac{1}{\hat\tau _k}\left[\left\{ \frac{\hat{D}_{T,k} + \frac{k}{n}}{p(\hat{D}_{T,k}+1)} \right\}^{\hat\xi _k} -1 \right].
\label{eq:QYp}
\end{eqnarray}

\vspace{0.3cm}\noindent
In the specific case $\xi>0$ the estimators developed above can be compared with those developed in \citet{Truncation} for this special Pareto-type case:
\begin{eqnarray*}
H_{k,n} & = & \hat\xi ^+ _{k} + \frac{R_{k,n}^{1/\hat\xi ^+ _{k}}\log R_{k,n}}{1- R_{k,n}^{1/\hat\xi ^+_{k}}},\\
\hat{D}^+_{T,k} &=&\max \left\{0,\frac{k}{n}\frac{R_{k,n}^{1/\hat{\xi}^+_{k}}-\frac{1}{k}}{1-R_{k,n}^{1/\hat{\xi}^+_{k}}}\right\},\\
\log \hat{Q}^+_{T,k}(1-p) &=& \log X_{n-k,n} + \hat{\xi}^+_{k}
\log\left( \frac{\hat{D}^+_{T,k} + \frac{k}{n}}{\hat{D}^+_{T,k} + p}\right),
\end{eqnarray*}
with $H_{k,n} = \frac{1}{k}\sum_{j=1}^k \log X_{n-j+1,n}-\log X_{n-k,n}$ the \citet{Hill} statistic, and $R_{k,n} = X_{n-k,n}/X_{n,n}$.

\vspace{0.3cm}\noindent
Of course, in practice there is a clear need for detecting rough truncation. Let $(\bar{\mathcal{T}}_k)$ and $(\mathcal{T}_k)$ denote light and rough truncation with the thresholds $X_{n-k,n}$.  A test for 
$$H_{0,k}:\, (\bar{\mathcal{T}}_k) \mbox{ versus } H_{1,k}:\, (\mathcal{T}_k)$$ can be constructed generalising the goodness-of-fit test which was proposed by \citet{trHill} within a Pareto context, rejecting $H_{0,k}$ at asymptotic level $q \in (0,1)$ when
\begin{equation}
T_{k,n} := k \left(1+\hat{\tau}E_{1,k} \right)^{-1/\hat{\xi}_k}
> \log (1/q),
\label{eq:gof}
\end{equation}
while the P-value is given by $e^{-T_{k,n}}$, as under $H_{0,k}$, $T_{k,n}$ approximately  follows a standard exponential distribution as will be shown in Theorem~\ref{thm:trTest_mle} below.

\subsection{Simulation study}

The authors have performed an extensive simulation study concerning all the proposed estimators for different distributions of $Y$.
We compare the results with the results from a Pareto analysis $\hat\xi ^+ _{k}$ and $\hat{Q}^+_{T,k}(1-p)$ \citep{trHill,Truncation}, with the classical POT maximum likelihood results denoted by
 $\hat\xi ^{\infty} _{k}$, $\hat{Q}^{\infty}_{k}(1-p)$, and with the classical moment estimators \citep{Dek}
\begin{eqnarray*}
\hat \xi_{k}^{M} &=& M_{k}^{(1)}+1-\frac{1}{2} \left[1-\frac{\left(M_{k}^{(1)}\right)^2}{M_{k}^{(2)}}\right]^{-1}, \\
\hat{Q}_{k}^{M}(1-p) &=& X_{n-k,n}+X_{n-k,n}M_{k}^{(1)}\left(1-\hat \xi_{k}^{M}\right)\frac{\left(\frac{k}{np}\right)^{\hat \xi_{k}^{M}} -1}{\hat \xi_{k}^{M}},
\end{eqnarray*}
with
$M_{k}^{(j)}:=\frac{1}{k}\sum_{l=1}^{k}\log^j \left( X_{n-l+1,n}/X_{n-k,n}\right)$, $j=1,2$. In the Appendix we give a selection from these simulation results for $Y$ following the standard Pareto distribution, the standard lognormal distribution, the standard exponential distribution, and the GPD with RTF $H_{-0.2}$. For each setting, 1000 samples for $X$ of size 500 were generated where we consider different levels of truncation: $T=Q_Y(0.975)$, $T=Q_Y(0.99)$ and $T=Q_Y(1)$. Note that  the last case corresponds to no truncation, or $X =_d Y$. The samples were generated using inverse transform sampling with the quantile function $Q_T(p)=Q_Y(pF_Y(T))$ (which can easily be deduced from \eqref{eq:FYT}).

\vspace{0.3cm}\noindent To show the performance of the test for truncation, we plot the average P-values over the 1000 simulations as a function of $k$ in the first columns of Figures~\ref{fig:sim_xi_first}--\ref{fig:sim_xi_last} (full line). Additionally, the median (dashed line), first quartile (dotted line) and third quartile (dotted line) of the P-values over the 1000 simulations are also plotted as a function of $k$. This corresponds to the box of the boxplot of P-values as a function of $k$. Finally, we add blue horizontal lines (dash-dotted line) indicating the standard significance levels of 1\% and 5\%. When truncation is present ($T=Q_Y(0.975)$ or $T=Q_Y(0.99)$), the average P-values show that the test rejects the null hypothesis of no truncation when $k$ is large enough. For the standard exponential, standard lognormal and GPD(-0.2,1) truncated at $T=Q_Y(0.99)$, the average P-value is higher than, or just below, the 5\% significance level, even for high values of $k$. However, when looking at the median values and the third quartile, we see that the majority, and sometimes more than 75\%, of the P-values are below the 5\% significance level.
When the data are not truncated, i.e.~$X =_d Y$, the P-values are on average always well above the considered significance levels, hence correctly not rejecting the null hypothesis. The first quartile of the P-values is also above the 5\% significance level, except for smaller values of $k$.  Note that when we look at $Y\sim \mbox{GPD}(-0.2,1)$, $Y$ itself is upper truncated at $-\sigma/\xi=5$, but still $X =_d Y$ when we set $T=Q_Y(1)$. The simulation results show that the test performs as expected: rejecting the null hypothesis when $T=Q_Y(0.975)$ or $T=Q_Y(0.99)$, and not rejecting the null hypothesis when $T=Q_Y(1)$.

\vspace{0.3cm}\noindent
Concerning the estimation of $\xi$, see the second and third columns in Figures~\ref{fig:sim_xi_first}--\ref{fig:sim_xi_last}, the behaviour of $\hat\xi _{k}$ in the standard Pareto case exhibits a slightly smaller bias but quite a larger variance compared to $\hat\xi ^+_{T,k}$ from \citet{trHill,Truncation}  which was constructed exclusively for the case $\xi>0$. The classical POT and moment estimators exhibit large bias under truncation, as they tend to -1 when the threshold tends to  $x_{n,n}$. The mean squared error of $\hat\xi _{k}$ is comparable to the mean squared error (MSE) of these estimators for $k \geq 200$. In  case of no truncation the bias of $\hat\xi _{k}$ is the smallest for $k \geq 100$ while the mean squared error is the worst of the four estimators.\\
 When $\xi \leq 0$, the estimator  $\hat\xi ^+_{T,k}$ from the Pareto analysis is breaking down as can be expected whereas the difference between the classical estimators and the newly proposed POT estimator is small for $k \geq 200$ in case $\xi =0$ and $k \geq 300$ in the case $\xi <0$. In all cases presented $\hat\xi _{k}$ compares well for $k$ sufficiently large with the classical estimators when there is no truncation. Note that all estimators have a large bias for the (truncated) log-normal distribution. As can clearly be seen, the bias of all estimators decreases as truncation becomes lighter, or when there is no truncation, as expected. Moreover, the stable area of the $\hat{\xi}_k$ estimates starts for smaller values of $k$ when the truncation point gets larger.

\vspace{0.3cm}\noindent
Concerning the estimation of $Q_T(1-p)$, see Figures~\ref{fig:sim_Q_first}--\ref{fig:sim_Q_last} with $p=0.01$ and $0.005$ and $T= Q_Y(0.975), Q_Y(0.99)$, the estimator $\hat{Q}_{T,k}(1-p)$ has the smallest bias, uniformly over all distributions and values of $p$ considered, while the MSE values are always comparable with the best performing estimators. Even in case of no truncation $\hat{Q}_{T,k}(1-p)$ does not lose too much accuracy in comparison with the classical MLE estimator. 


\subsection{Asymptotic results}

Here we present the asymptotic normality of ($\hat\xi, \hat\tau $) and $\hat{Q}_{T,k}(1-p)$ under rough truncation, and the asymptotic null distribution of the goodness-of-fit test statistic $T_{k,n}$. The proofs are provided in the Appendix. 
\\
We assume a second-order remainder relation in \eqref{eq:maxdU} as in Theorem~3.4.2 in \citet{dHF}: with $\xi>-\frac12$,
\begin{equation}
\lim_{t\to\infty} \frac{\frac{U_Y(tx)-U_Y(t)}{a_Y(t)}-\frac{x^{\xi}-1}{\xi}}{A(t)}=\Psi_{\xi,\rho}(x)  \mbox{ for all } x>0,
\label{eq:thm342}
\end{equation}
where
\[\Psi_{\xi,\rho}(x) = \int_1^x s^{\xi-1} \int_1^s u^{\rho-1} \, du\,ds,\]
with $\rho\leq 0$.
Furthermore, we introduce the notations $b_{T,k,n}:=\frac{k+1}{(n+1)D_T}$, $a_{T,k,n}:=a_Y\left(1/(\bar{F}_Y(T)(1+b_{T,k,n}))\right)$, and we denote the limit of $k/(nD_T)$ under rough truncation as derived in \eqref{eq:knDT} by $\beta := (1+\xi\kappa)^{1/\xi}-1$. 

\vspace{0.3cm}\noindent
\begin{theorem}\label{thm:trMLE}
Let $X_1,X_2,\ldots,$ be i.i.d.~random variables with distribution function $F_T$ following \eqref{eq:FYT} where $U_Y$ satisfies \eqref{eq:thm342}. Let $n,k=k_n\to\infty$, $\frac{k}{n}\to0$, $T\to\infty$. Then, under $(\mathcal{T}_t)$ we have that as $\sqrt{k}A(1/[\bar{F}_Y(T)(1+b_{T,k,n})]) \to \lambda \in \mathbb{R}$
\[
\sqrt{k} \left(\hat\xi _k -\xi, \hat \tau_k a_{T,k,n} -\xi \right)'
=  \mathcal{I}_{\beta} ^{-1}{\bf N}_{\xi,\beta}+ \lambda\mathcal{I}_{\beta} ^{-1}{\bf f}_{\xi,\beta,\rho} + o_p(1){\bf 1},
\]
where 
\[
\mathcal{I}_{\beta} = 
\begin{pmatrix}
1- \frac{1+\beta}{\beta^2}\log^2 (1+\beta) &
\frac{1}{\xi}\left[-\frac{\xi}{1+\xi}\frac{1+\beta}{\beta}(1-(1+\beta)^{-1-\xi})
\right.\\
&\left.\qquad
+\frac{1+\beta}{\beta^2}\log (1+\beta) (1-(1+\beta)^{-\xi})\right] 
 \\ \\
 -\frac{1}{\xi}\left[ -\frac{\xi}{1+\xi}\frac{1+\beta}{\beta}(1-(1+\beta)^{-1-\xi})\right.
& -\frac{1}{\xi\beta}\left[\frac{\xi}{1+2\xi}(1+\beta)(1-(1+\beta)^{-1-2\xi}) \right.
\\ \left. \quad
+\frac{1+\beta}{\beta^2}\log (1+\beta) (1-(1+\beta)^{-\xi})\right] & \hspace{1.2cm}\left. - \frac{1+\beta}{\beta}\frac{1}{\xi}(1-(1+\beta)^{-\xi})^2 \right]
\end{pmatrix},
\]
\[
{\bf N}_{\xi,\beta}= \frac{\beta}{1+\beta}
\begin{pmatrix}
\xi \int_0^1 W_n(u) \left( \frac{1+u\beta}{1+\beta}\right)^{-1}du \\
\hspace{0.5cm}
-\xi W_n(1) \left(-\frac{(1+\beta)^{1-\xi}\log (1+\beta)}{\beta^2} +\frac{\xi (1+\beta)^{-\xi}+(1+\beta)}{(1+\xi)\beta}\right)\\ \\
\xi(1+\xi)\int_0^1W_n (u) \left( \frac{1+u\beta}{1+\beta}\right)^{-1+\xi}du \\
\hspace{0.5cm}
-W_n(1) \left( \frac{\xi(1+\xi)(1+\beta)}{(1+2\xi)\beta}(1-(1+\beta)^{-1-2\xi})\right.\\
\left. \hspace{2.5cm}-\frac{(1+\beta)^{1-\xi}}{\beta^2} (1-(1+\beta)^{-\xi})\right)
\end{pmatrix},
\]
and
\[
{\bf f}_{\xi,\beta,\rho}= 
\begin{pmatrix}
\xi \int_0^1 \Psi_{\xi,\rho}(\frac{1+\beta}{1+u\beta}) \left( \frac{1+u\beta}{1+\beta}\right)^{\xi}du \\
\hspace{1cm}
-\xi  \Psi_{\xi,\rho}(1+\beta)(1+\beta)^{-\xi}
\left(\frac{(1+\beta)\log (1+\beta)}{\beta^2} -\frac{1}{\beta}\right)\\ \\
\xi(1+\xi)\int_0^1 \Psi_{\xi,\rho}(\frac{1+\beta}{1+u\beta}) \left( \frac{1+u\beta}{1+\beta}\right)^{2\xi}du\\
\hspace{1cm}
-
\Psi_{\xi,\rho}(1+\beta) \frac{(1+\beta)^{1-\xi}}{\beta^2} (1-(1+\beta)^{-\xi})
\end{pmatrix},
\]
for a sequence of Brownian motions $\{ W_n (s); s \geq 0\}$.
\end{theorem}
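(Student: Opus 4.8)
The plan is to treat $(\hat\xi_k,\hat\tau_k)$ as a Z-estimator solving the estimating equations \eqref{eq:lik_xitau1}--\eqref{eq:lik_xitau2} and to linearise these around the true value. It is convenient to work in the reparametrisation $\theta=(\xi,\eta)$ with $\eta=\tau\,a_{T,k,n}$, whose true value is $\theta_0=(\xi,\xi)$ because $\tau_0\sigma_t=\xi$ and $a_{T,k,n}\sim\sigma_t$. Writing the suitably normalised pair of score equations as $\Psi_n(\theta)=0$, the usual Taylor expansion gives
\[
\sqrt k\,(\hat\theta-\theta_0)=-\bigl[\dot\Psi_n(\theta_n^\ast)\bigr]^{-1}\,\sqrt k\,\Psi_n(\theta_0),
\]
for some $\theta_n^\ast$ on the segment between $\hat\theta$ and $\theta_0$. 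The theorem then reduces to three ingredients: consistency $\hat\theta\to_p\theta_0$ (so that $\theta_n^\ast\to_p\theta_0$); the score limit $\sqrt k\,\Psi_n(\theta_0)\Rightarrow\mathbf N_{\xi,\beta}+\lambda\,\mathbf f_{\xi,\beta,\rho}$; and the information limit $\dot\Psi_n(\theta_n^\ast)\to_p-\mathcal I_\beta$, after which the identity $-(-\mathcal I_\beta)^{-1}=\mathcal I_\beta^{-1}$ yields the stated expression.

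The analytic core is a uniform asymptotic representation of the exceedances $E_{j,k}$. Writing the upper order statistics through the truncated tail quantile function $U_T$, relating $U_T$ to $U_Y$ by \eqref{eq:UYTD}, and combining the first-order profile \eqref{eq:UTy} with the second-order condition \eqref{eq:thm342} and a Brownian coupling of the tail quantile process as in \citet{dHF}, I would obtain, uniformly in $j=2,\dots,k$ with $u=j/k$,
\[
1+\tau_0E_{j,k}=\Bigl(\tfrac{1+\beta}{1+u\beta}\Bigr)^{\xi}\,\bigl(1+O_p(k^{-1/2})\bigr),
\]
where the $O_p(k^{-1/2})$ remainder decomposes into a fluctuation part, carried by $W_n(u)$ for the individual statistic $X_{n-j+1,n}$ and by $W_n(1)$ for the common random threshold $X_{n-k,n}$, and a deterministic second-order bias of size $A(\cdot)\,\Psi_{\xi,\rho}\!\bigl(\tfrac{1+\beta}{1+u\beta}\bigr)$. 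The limit $k/(nD_T)\to\beta$ from \eqref{eq:knDT} is what converts the arguments $j/(nD_T)$ into $u\beta$. The top exceedance $E_{1,k}$ (the case $u\to0$) concentrates at the deterministic level $(1+\beta)^{\xi}$ and is perturbed at the $\sqrt k$-scale only through the threshold fluctuation $W_n(1)$, because the intrinsic fluctuation of the sample maximum is of order $k^{-1}=o(k^{-1/2})$.

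Substituting this representation into $\sqrt k\,\Psi_n(\theta_0)$, the Riemann sums $\tfrac1k\sum_{j=2}^k g(j/k)$ appearing in \eqref{eq:lik_xitau1}--\eqref{eq:lik_xitau2} converge to $\int_0^1 g$, the individual fluctuations $W_n(u)$ generate the stochastic integral terms of $\mathbf N_{\xi,\beta}$, while the common threshold fluctuation together with the separate truncation-correction term built on $E_{1,k}$ generate the $W_n(1)$ contributions. The second-order bias, weighted by $\sqrt k\,A(1/[\bar F_Y(T)(1+b_{T,k,n})])\to\lambda$, produces the $\Psi_{\xi,\rho}$-integrals and the boundary term $\Psi_{\xi,\rho}(1+\beta)$ that make up $\lambda\,\mathbf f_{\xi,\beta,\rho}$. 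Differentiating the score once more and inserting the same representation (now only its deterministic profile matters) gives $\dot\Psi_n\to_p-\mathcal I_\beta$, and $\mathcal I_\beta$ is recognised as the Fisher information of the limiting truncated-GPD family $\bar F_{\xi,\kappa}$ of \eqref{eq:FT}; its explicit entries follow by evaluating the integrals over $[0,1]$ of the relevant powers of $(1+u\beta)/(1+\beta)$ and their $\log$-weighted variants.

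The main obstacle is the joint, uniform control of the top of the sample: coupling the entire tail quantile process down to $E_{1,k}$ to a single Brownian motion while simultaneously tracking the common threshold shift $X_{n-k,n}$ (which enters every summand and the truncation correction) and the second-order remainder, all tightly enough that, after multiplication by the score weights---which are largest for the uppermost order statistics---the aggregate error is genuinely $o_p(1)$ at the $\sqrt k$-scale. A second essential point is the regularity of $\mathcal I_\beta$: one must verify that it is invertible and that the observed information is asymptotically equicontinuous in a neighbourhood of $\theta_0$, so that $\dot\Psi_n(\theta_n^\ast)\to_p-\mathcal I_\beta$ and the inversion in the linearisation are legitimate. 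Both are available in the classical range $\xi>-1/2$, where the truncated-GPD information is finite and non-degenerate.
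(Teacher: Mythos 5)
Your proposal takes essentially the same route as the paper: its core ingredient, the weighted uniform Brownian approximation of the exceedances obtained from \eqref{eq:UYTD}, the second-order condition \eqref{eq:thm342} and the quantile-process coupling of \citet{dHF}, is exactly the paper's Proposition~\ref{thm:prop_trMLE}, and the paper then linearises the likelihood equations \eqref{eq:lik_xitau1}--\eqref{eq:lik_xitau2} in $(\hat\xi_k-\xi,\,\hat\tau_k a_{T,k,n}-\xi)$ to read off $\mathcal{I}_\beta$, $\mathbf{N}_{\xi,\beta}$ and $\lambda\mathbf{f}_{\xi,\beta,\rho}$ from the resulting linear system, which is precisely your Z-estimator expansion carried out by direct computation at the estimator rather than in mean-value form at $\theta_0$. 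The only caveats are cosmetic: like the paper, you assume rather than prove consistency of a solution sequence, and your identification of $\mathcal{I}_\beta$ with the Fisher information of $\bar{F}_{\xi,\kappa}$ is loose (as displayed it is not symmetric, being the coefficient matrix of the particular scaled system \eqref{eq:final1}--\eqref{eq:final2}), but neither affects the argument.
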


\vspace{0.3cm} \noindent
Under $(\bar{\mathcal{T}}_t)$ the asymptotic result for $(\hat\xi _k, \hat\tau _k)$ can be checked to be identical to that of the classical MLE estimators under no truncation as given in Theorem 3.4.2 in \citet{dHF}. 
\\
Note that  the information matrix $\mathcal{I}_{\beta}$ equals 0 when $\kappa=0$, or equivalently $\beta =0$, so that the asymptotic variances are unbounded in such case. In practice this induces large variances for smaller values of $k$. This also appears in Figures~\ref{fig:sim_xi_first}--\ref{fig:sim_xi_last}. Fortunately, the bias stays reasonably small for  larger values of $k$, as can be deduced for instance in case of the lognormal distribution. 

\vspace{0.3cm} \noindent
In order to state the asymptotic result for the quantile estimator $\hat{Q}_{T,k}(1-p)$ with $p=p_n \to 0$, we use the notation $d_n = k/(np_n)$. Furthermore, we will use the result that when $U_Y$ satisfies \eqref{eq:thm342}, we have that 
\begin{equation}\label{eq:dHF}
\lim_{t \to \infty}\frac{\frac{a_Y (tx)}{a_Y(t)}-x^{\xi}}{A(t)}
= Cx^{\xi}\frac{x^{\rho}-1}{\rho}
\end{equation}
for some constant $C$ (see B.3.4 in \citet{dHF}).
\noindent
\begin{theorem}\label{thm:Q_T}
Let $X_1,X_2,\ldots,$ be i.i.d.~random variables with distribution function $F_T$ following \eqref{eq:FYT} where $U_Y$ satisfies \eqref{eq:thm342}. Let $n,k=k_n\to\infty$, $\frac{k}{n}\to0$, $T\to\infty$, $p=p_n \to 0$ and $np_n/\sqrt{k} \to 0$. Then, under $(\mathcal{T}_t)$  we have that
\begin{align*}
& \hspace{-1cm}
\frac{\left( \hat{Q}_{T,k}(1-p) -Q_T (1-p) \right)}{a_Y\left(\frac1{\bar{F}_Y(T)}\right)} \\
& = -\frac{\beta}{k}(E-1) + O_p\left(\frac{1}{k^2}\vee \frac{1}{d_n^2}\right) \\
& \quad -\beta\left(\frac{1}{d_n}-\frac{1}{k} \right)\left[ A\left(\frac1{\bar{F}_Y(T)}\right)C \frac{(1+\beta)^{-\rho}-1}{\rho} \right. \\
& \hspace{2cm} +\left( \frac{ \hat{\xi}_k}{\hat{\tau}_k}\frac{1}{a_{T,k,n}}-1 \right)\\
& \hspace{2cm} - \left(\hat{\xi}_k -\xi \right)\frac{1}{\xi}
\frac{(1+\beta)\log (1+\beta)}{\beta} \\
& \hspace{2cm} +\left(\hat{\tau}_k a_{T,k,n}-\xi \right)
\frac{1- (1+\beta)^{-\xi}}{\xi}\left(1+\frac{1+\beta}{\xi\beta}\right)\displaybreak\\
&\hspace{2cm}
+(1+\beta)^{-\xi}\left(\frac{1+\beta}{\beta}+\xi\right) \\
& \hspace{2.5cm} \left. \times\left(-\frac{W_n(1)}{\sqrt{k}}
+ A\left(\frac1{\bar{F}_Y(T)}\right)(1+\beta)^{-\xi}\Psi_{\xi,\rho}(1+\beta)
\right)
\right],
\end{align*}
where $E$ is a standard exponential random variable and  $\{ W_n (s); s \geq 0\}$ a sequence of Brownian motions.
\end{theorem}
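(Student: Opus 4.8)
The plan is to linearise the estimator around the deterministic target produced by \eqref{eq:UTy} and to track, after normalising by $a_Y(1/\bar F_Y(T))$, the few contributions that survive at the stated orders. Writing $t_{k,n}=U_T(n/k)$ and $\phi(D,\xi,\tau)=\frac1\tau[(\frac{D+k/n}{D+p})^{\xi}-1]$, the estimator \eqref{eq:QTp} is exactly $\hat Q_{T,k}(1-p)=X_{n-k,n}+\phi(\hat D_{T,k},\hat\xi_k,\hat\tau_k)$, whereas \eqref{eq:UTy} refined by the second-order condition \eqref{eq:thm342} gives $Q_T(1-p)=t_{k,n}+\phi(D_T,\xi,\tau)+(\text{remainder})$. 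I would therefore study $\hat Q_{T,k}(1-p)-Q_T(1-p)=(X_{n-k,n}-t_{k,n})+(\phi(\hat D_{T,k},\hat\xi_k,\hat\tau_k)-\phi(D_T,\xi,\tau))-(\text{remainder})$ term by term.

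The decisive observation is that the additive threshold $X_{n-k,n}$ does not contribute a term of order $1/\sqrt k$, because it cancels against the exceedance hidden inside $\hat D_{T,k}$. Indeed, linearising \eqref{eq:DTk} shows $\frac{k}{n\hat D_{T,k}}\approx(1+\hat\tau_k E_{1,k})^{1/\hat\xi_k}-1$, so for $p\to0$ the ratio $\frac{\hat D_{T,k}+k/n}{\hat D_{T,k}+p}$ tends to $(1+\hat\tau_k E_{1,k})^{1/\hat\xi_k}$ and $\phi(\hat D_{T,k},\hat\xi_k,\hat\tau_k)\approx E_{1,k}=X_{n,n}-X_{n-k,n}$; thus $X_{n-k,n}+\phi\approx X_{n,n}$ at leading order and the fluctuation of $X_{n-k,n}$ is absorbed. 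What remains at leading order is the fluctuation of the maximum, which via the quantile transform $X_{n,n}\stackrel{d}{=}U_T(1/U_{1,n})$ and $nU_{1,n}\to E$ lives near the endpoint $T$ with a standard exponential $E$; this is the source of the rate $1/k$ and of the term $-\frac{\beta}{k}(E-1)$. I would make this precise using the representation $X_{n-j+1,n}\stackrel{d}{=}U_T(1/U_{j,n})$, the tail quantile process (Brownian approximation $W_n$) for the intermediate statistics, and the local behaviour of $U_T$ near $T$.

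The corrections of order $(1/d_n-1/k)$ come from the departure of the above ratio from its $p=0$ value together with the estimation errors, and these I would obtain by a delta-method expansion of $\phi$. Since $k/n$, $p$ and $D_T$ all vanish while $\frac{D_T+k/n}{D_T+p}\to1+\beta$ by \eqref{eq:knDT} and $d_n\to\infty$, I would reparametrise through $k/(nD_T)$ and through the normalised slope $\hat\tau_k a_{T,k,n}$, evaluate the partial derivatives of $\phi$ at the limiting configuration $1+\beta$, and substitute the joint expansion of $(\hat\xi_k,\hat\tau_k a_{T,k,n})$ from Theorem~\ref{thm:trMLE} together with the linearisation of $\hat D_{T,k}$. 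A delicate point here is that $\hat D_{T,k}$ is a function of $(\hat\xi_k,\hat\tau_k,E_{1,k})$, so its fluctuations are correlated with those of the slope and with the maximum, and must be propagated jointly rather than added independently.

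Finally I would collect the bias. Both the remainder in \eqref{eq:UTy} and the ratio $a_{T,k,n}/a_Y(1/\bar F_Y(T))$ are controlled by \eqref{eq:thm342} and \eqref{eq:dHF}: the former produces the $\Psi_{\xi,\rho}(1+\beta)$ contributions and the latter the term $A(1/\bar F_Y(T))\,C\,\frac{(1+\beta)^{-\rho}-1}{\rho}$, after which the accumulated powers of $1+\beta$ fall into the bracket multiplying $-\beta(1/d_n-1/k)$. The main obstacle is the bookkeeping: keeping the three scales $\sigma_t$, $a_{T,k,n}$ and $a_Y(1/\bar F_Y(T))$ mutually consistent (they differ by powers of $1+\beta$ through \eqref{eq:dHF}), separating the genuine $O_p(1/k)$ and $O_p((1/d_n-1/k)/\sqrt k)$ terms from the negligible $O_p(1/k^2\vee1/d_n^2)$ ones, and verifying that the correlated fluctuations of $\hat D_{T,k}$, the slope and the maximum recombine into exactly the stated coefficients of $W_n(1)$ and $E$.
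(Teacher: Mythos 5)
Your proposal follows essentially the same route as the paper's own proof: you identify the same key cancellation (the definition of $\hat{D}_{T,k}$ makes $X_{n-k,n}$ plus the correction collapse onto $X_{n,n}$, so the leading stochastic term is the maximum's exponential fluctuation, giving $-\frac{\beta}{k}(E-1)$ at rate $1/k$), and you produce the $(1/d_n-1/k)$ bracket by the same linearisation against the expansions of Theorem~\ref{thm:trMLE}, with the bias terms $\Psi_{\xi,\rho}(1+\beta)$ and $C((1+\beta)^{-\rho}-1)/\rho$ coming from \eqref{eq:thm342} and \eqref{eq:dHF} exactly as in the paper. The only difference is cosmetic: the paper anchors its decomposition at $Q_T(1-1/n)$ rather than at $U_T(n/k)$, but both reduce to the same three pieces (maximum fluctuation, deterministic quantile gap, estimated correction).
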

This result should be compared with Theorem~4.3.1 in \citet{dHF} stating the basic asymptotic result for the quantile estimator based on the classical ML estimators under no truncation. Note that under $(\mathcal{T}_t)$ the rate of the stochastic part in the asymptotic representation is $O_p(1/k)$ rather than the classical $O_p(1/\sqrt{k})$.

\begin{theorem}\label{thm:trTest_mle}
Let $X_1,X_2,\ldots,$ be i.i.d.~random variables with distribution function $F_T$ following \eqref{eq:FYT} where $U_Y$ satisfies \eqref{eq:thm342}. Let $n,k=k_n\to\infty$, $\frac{k}{n}\to0$, $T\to\infty$. Then, under $(\bar{\mathcal{T}}_k)$ with $nD_T \to 0$ we have that
\[
T_{k,n}=_d E(1+o_p(1))
\]
where $E$ is a standard exponential random variable.
\end{theorem}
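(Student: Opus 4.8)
The plan is to reduce the statement to a clean fact about uniform order statistics and then transfer it through the second-order condition \eqref{eq:thm342} and the $\sqrt k$-consistency of $(\hat\xi_k,\hat\tau_k)$. First I would use the hypothesis $nD_T\to0$ to make the truncation asymptotically invisible: since $D_T\sim\bar F_Y(T)$, we have $\mathbb P(\max_{i\le n}Y_i\ge T)=1-(1-\bar F_Y(T))^n\to0$, so on an event of probability tending to one the truncated sample coincides with an untruncated one. Working under $F_T$ directly, set $P_j:=\bar F_T(X_{n-j+1,n})$; these are the $k+1$ smallest order statistics of $n$ i.i.d.\ uniforms, and $X_{n-j+1,n}=U_T(1/P_j)$. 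The R\'enyi representation gives, conditionally on $P_{k+1}$, that $(P_1,\dots,P_k)$ are the order statistics of $k$ i.i.d.\ Uniform$(0,P_{k+1})$ variables; hence $P_1/P_{k+1}$ is distributed as the minimum of $k$ i.i.d.\ uniforms and
$$k\,\frac{P_1}{P_{k+1}}\to_d E,\qquad \frac nk\,P_{k+1}\to_p1,$$
the exponential limit that is the backbone of the statement. In particular $nP_1\to_d E$ and $P_{k+1}/P_1\approx k/E$.

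Second, I would show that the true-parameter version of the bracketed quantity equals this ratio up to $o_p$. Writing $E_{1,k}=U_T(1/P_1)-U_T(1/P_{k+1})$ and applying \eqref{eq:maxdU} with $y=1/P_{k+1}$ and $v=P_{k+1}/P_1$, together with $a(1/P_{k+1})\sim\sigma_t$ (since $1/P_{k+1}\sim n/k\sim1/\bar F_Y(t_{k,n})$), yields $\tau E_{1,k}=(P_{k+1}/P_1)^{\xi}-1+o_p(1)$ with $\tau=\xi/\sigma_t$, and therefore $(1+\tau E_{1,k})^{-1/\xi}=(P_1/P_{k+1})(1+o_p(1))$. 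The role of the second-order remainder \eqref{eq:thm342} and the bias condition $\sqrt kA(\cdot)\to\lambda$ is precisely to certify that these $o_p$ terms do not survive after multiplication by $k$; this is where care is needed because $P_{k+1}/P_1$ diverges.

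Third I would pass from $(\xi,\tau)$ to the pseudo-MLE $(\hat\xi_k,\hat\tau_k)$. Under $(\bar{\mathcal T}_k)$ these coincide asymptotically with the classical POT maximum likelihood estimators, so by Theorem~3.4.2 in \citet{dHF} (cf.\ the remark following Theorem~\ref{thm:trMLE}) one has $\hat\xi_k-\xi=O_p(1/\sqrt k)$ and $\hat\tau_k/\tau\to_p1$. Taking logarithms,
$$\log\frac{(1+\hat\tau_k E_{1,k})^{-1/\hat\xi_k}}{(1+\tau E_{1,k})^{-1/\xi}}=-\Big(\tfrac1{\hat\xi_k}-\tfrac1\xi\Big)\log(1+\tau E_{1,k})-\tfrac1{\hat\xi_k}\log\frac{1+\hat\tau_k E_{1,k}}{1+\tau E_{1,k}},$$
and the dominant term is the first one. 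Since $\log(1+\tau E_{1,k})\sim\xi\log(P_{k+1}/P_1)=O_p(\log k)$ while $\tfrac1{\hat\xi_k}-\tfrac1\xi=O_p(1/\sqrt k)$, this term is $O_p(\log k/\sqrt k)\to_p0$; the second term is handled similarly using $\hat\tau_k/\tau-1\to_p0$. Hence the ratio of the estimated to the true bracketed quantity tends to $1$ in probability.

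Finally, combining the three steps through Slutsky's theorem,
$$T_{k,n}=k\,(1+\hat\tau_k E_{1,k})^{-1/\hat\xi_k}=\Big(k\,\tfrac{P_1}{P_{k+1}}\Big)\cdot\frac{(1+\hat\tau_k E_{1,k})^{-1/\hat\xi_k}}{P_1/P_{k+1}}\to_d E,$$
which is the assertion $T_{k,n}=_dE(1+o_p(1))$. The main obstacle is the second and third steps taken together: $E_{1,k}$ is an extreme spacing that diverges, so $\log(1+\tau E_{1,k})$ grows, and one must verify that this logarithmic growth is genuinely of order $\log k$ (uniformly in the regime $\xi>-1/2$, including $\xi\le0$ where the bracket tends to $0$) so that it is annihilated by the $O_p(1/\sqrt k)$ estimation error, and that the second-order bias carried by $\lambda$ does not disturb the convergence of the ratio to one in the third step.
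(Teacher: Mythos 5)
Your overall route --- reduce $E_{1,k}$ to a ratio of uniform order statistics, obtain the exponential limit from $k\,P_1/P_{k+1}\to_d E$, and kill the parameter-estimation error by a $\log k/\sqrt{k}$ argument --- is essentially the paper's own proof (which works with standard Pareto order statistics, $Y_{n-k,n}/Y_{n,n}=_d U_{1,k}$ the minimum of $k$ i.i.d.\ uniforms, and $kU_{1,k}=_d E(1+o_p(1))$); your coupling observation that $nD_T\to0$ lets the truncated sample coincide with an untruncated one on an event of probability tending to one is a clean substitute for the paper's explicit treatment of the factors $\frac{1+Y_{n-k,n}D_T}{1+Y_{n,n}D_T}=1+o_p(1)$. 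However, there is a genuine gap at the decisive step, and it sits exactly where you say ``care is needed''. Your second step asserts $\tau E_{1,k}=(P_{k+1}/P_1)^{\xi}-1+o_p(1)$ and concludes ``therefore'' $(1+\tau E_{1,k})^{-1/\xi}=(P_1/P_{k+1})(1+o_p(1))$. For $\xi>0$ the implication is valid because the leading term $(P_{k+1}/P_1)^{\xi}$ diverges; but for $-1/2<\xi<0$ it is false as stated: there $(P_{k+1}/P_1)^{\xi}\asymp_p k^{-|\xi|}\to0$, so an additive $o_p(1)$ error can be arbitrarily larger than the quantity being approximated, and $k(1+\tau E_{1,k})^{-1/\xi}=k(1+\tau E_{1,k})^{1/|\xi|}$ could then converge to anything or diverge. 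What is needed is \emph{relative} control, $1+\tau E_{1,k}=(P_{k+1}/P_1)^{\xi}(1+o_p(1))$, i.e.\ every remainder must be shown to be $o_p\bigl(k^{-|\xi|}\bigr)$, not merely $o_p(1)$.

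Closing this requires exactly the two quantitative inputs your proposal never pins down. First, the second-order remainder must be kept in its explicit form $A(1/P_{k+1})\Psi_{\xi,\rho}(P_{k+1}/P_1)$: for $\xi<0$, $\rho\le0$ the function $\Psi_{\xi,\rho}$ is bounded, so this term is of order $A(n/k)$ and one needs $k^{|\xi|}A(n/k)\to0$, which holds when $\sqrt{k}A(n/k)$ stays bounded precisely because $|\xi|<1/2$. Second, one needs the $\sqrt{k}$-rate of $\hat\tau_k$, not mere consistency: your third step invokes only $\hat\tau_k/\tau\to_p1$, but for $\xi<0$ the term $\frac1{\hat\xi_k}\log\frac{1+\hat\tau_kE_{1,k}}{1+\tau E_{1,k}}$ involves $(\hat\tau_k-\tau)E_{1,k}/(1+\tau E_{1,k})\asymp_p(\hat\tau_k^s-\xi)\,k^{|\xi|}$, which vanishes only because $\hat\tau_k^s-\xi=O_p(1/\sqrt{k})$ and $|\xi|<1/2$. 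This is how the paper's proof proceeds: it never collapses the remainder to a bare $o_p(1)$, but factors $U_{1,k}^{-\xi}$ out of the bracket and disposes of the residual term $\xi U_{1,k}^{\xi}A\left(\frac nk(1+o_p(1))\right)\Psi_{\xi,\rho}(k(1+o_p(1)))$ together with the estimation errors carried in the exponent. So your architecture is the right one and matches the paper, but as written the proposal establishes the claim only for $\xi>0$; the range $-1/2<\xi\le0$ --- the novel range for this paper, with $\xi=0$ additionally requiring the limiting interpretation $\exp(-E_{1,k}/\sigma_t)$ of the bracket --- is left open.
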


\section{Case studies}

Analysing the magnitude data from the Groningen area, it appears that the given 200 top data confirm the Gutenberg-Richter law with $\hat\xi _k$ clearly indicating that $Y=M$ belongs to the Gumbel $\xi=0$ domain. The goodness-of-fit test rejects light truncation for $k\geq 40$ and the proposed truncation model fits well to the top 50 data as indicated on the exponential QQ-plot in Figure~\ref{fig:Groningen2}. Furthermore, $\hat{D}_{T,k}$ indicates a truncation volume $D_T$ between 0.01 and 0.02. Finally, the endpoint can be estimated in two ways: directly on the magnitude data using $\hat{T} _{M,k} = \hat{Q}_{T,k}(1)$, or using a Pareto analysis $\hat{T}^+_{E,k} = \hat{Q}^+_{T,k}(1)$ based on the energy data and transforming back to the magnitude scale with a logarithmic transformation. Both approaches lead to a value around 3.75.

   \begin{figure}[!ht]
	\centering
	    \includegraphics[height=0.45\textwidth, angle=270]{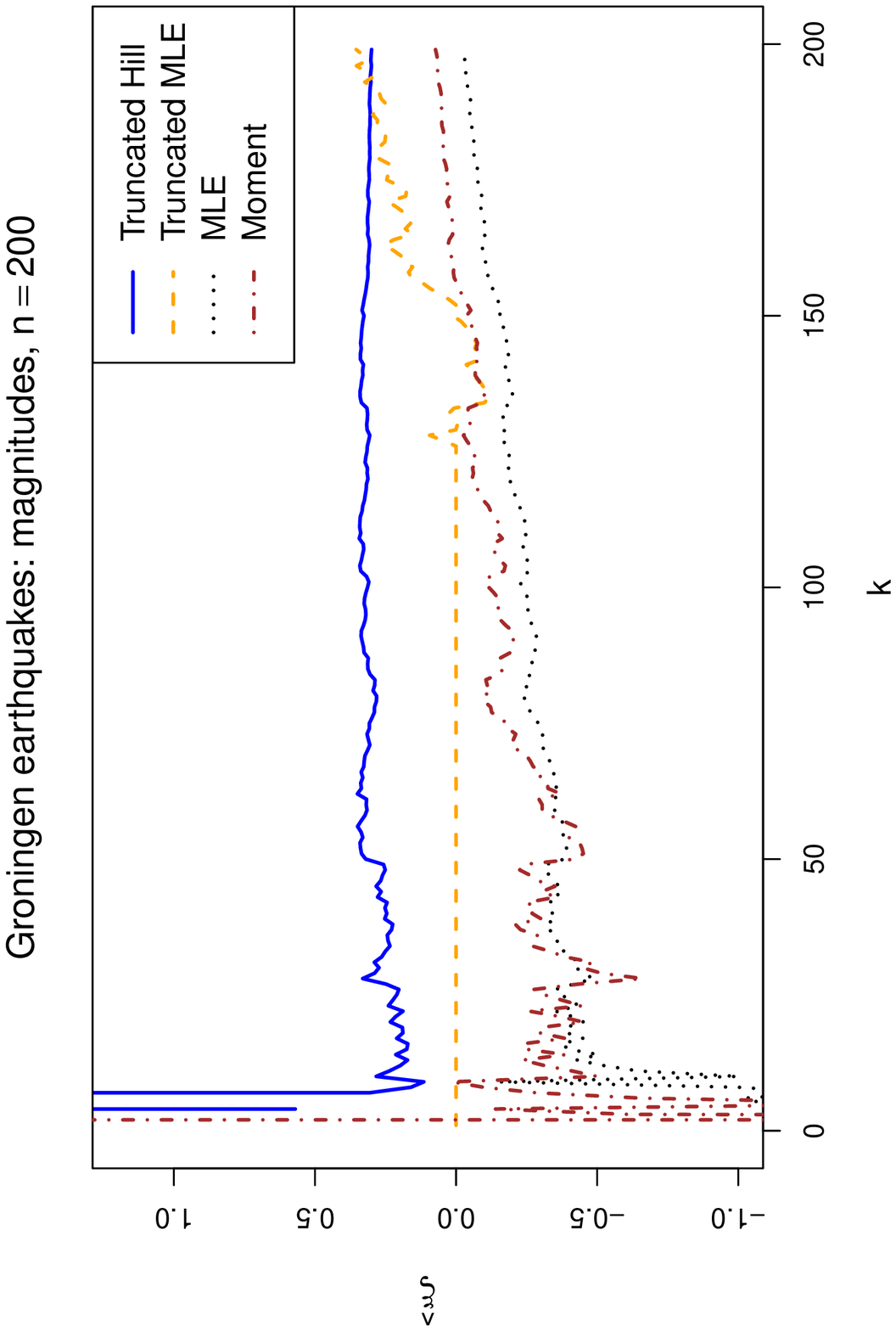}      
			\includegraphics[height=0.45\textwidth, angle=270]{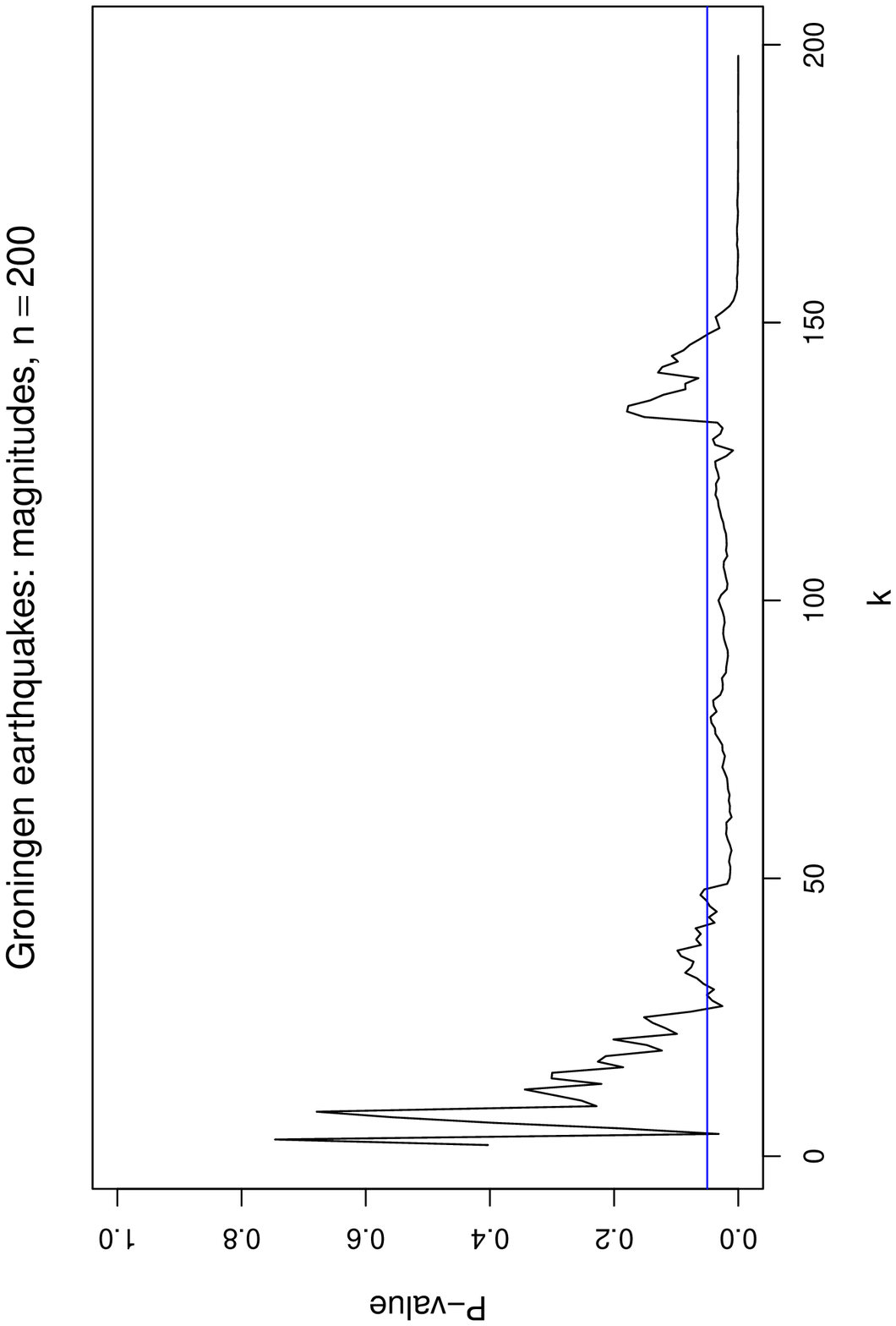}\\
      \includegraphics[height=0.45\textwidth, angle=270]{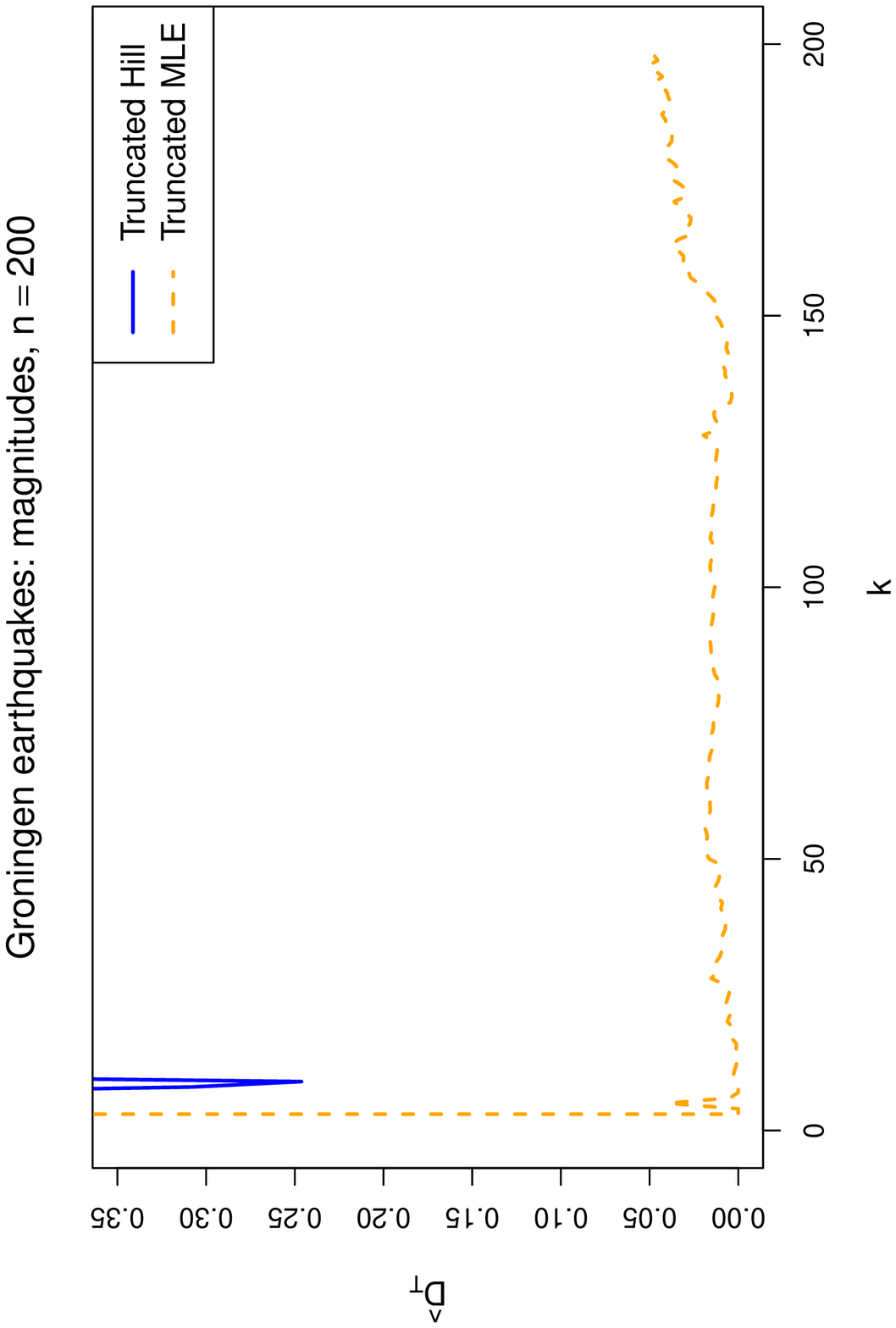}
			\includegraphics[height=0.45\textwidth, angle=270]{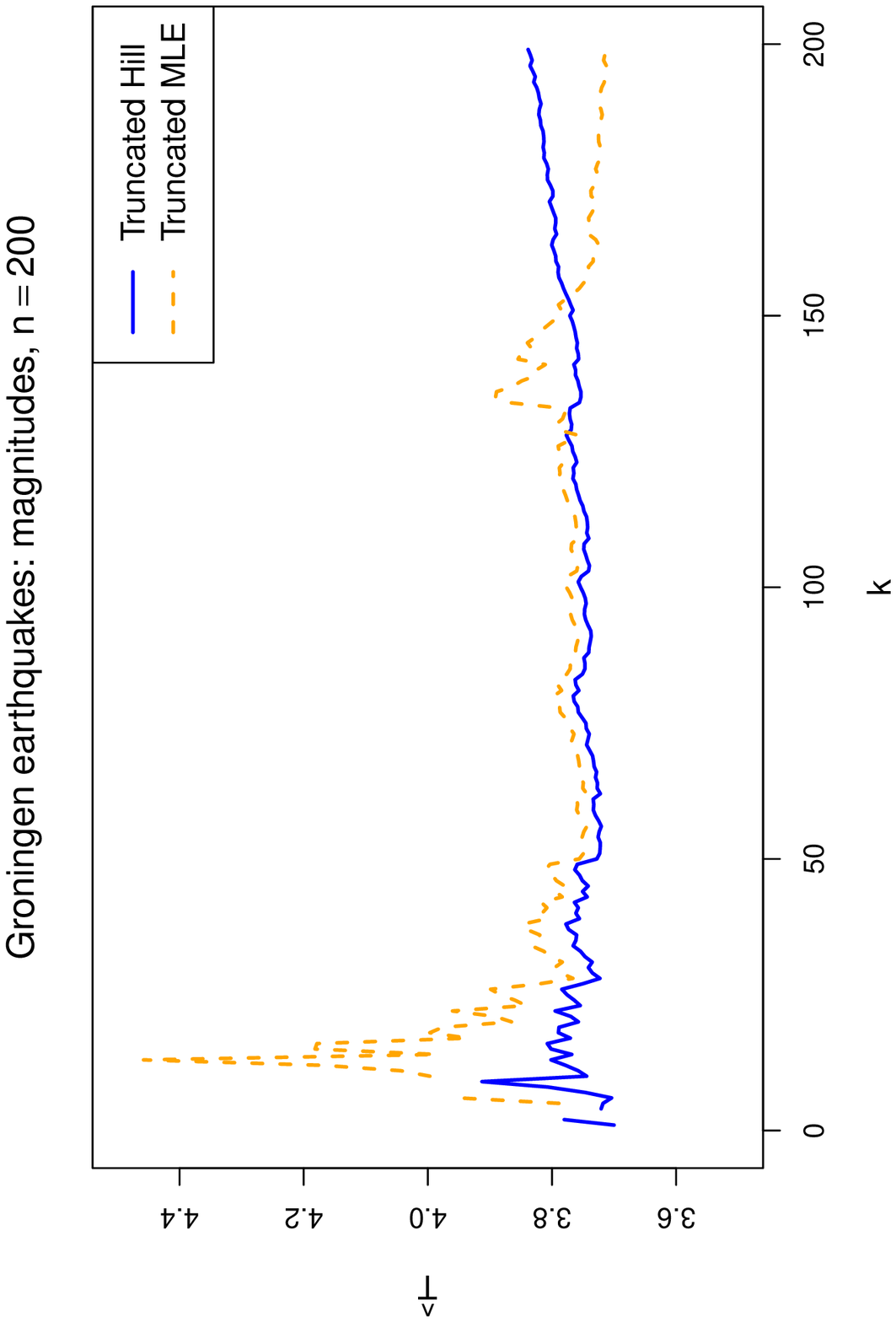}\\
			\includegraphics[height=0.45\textwidth, angle=270]{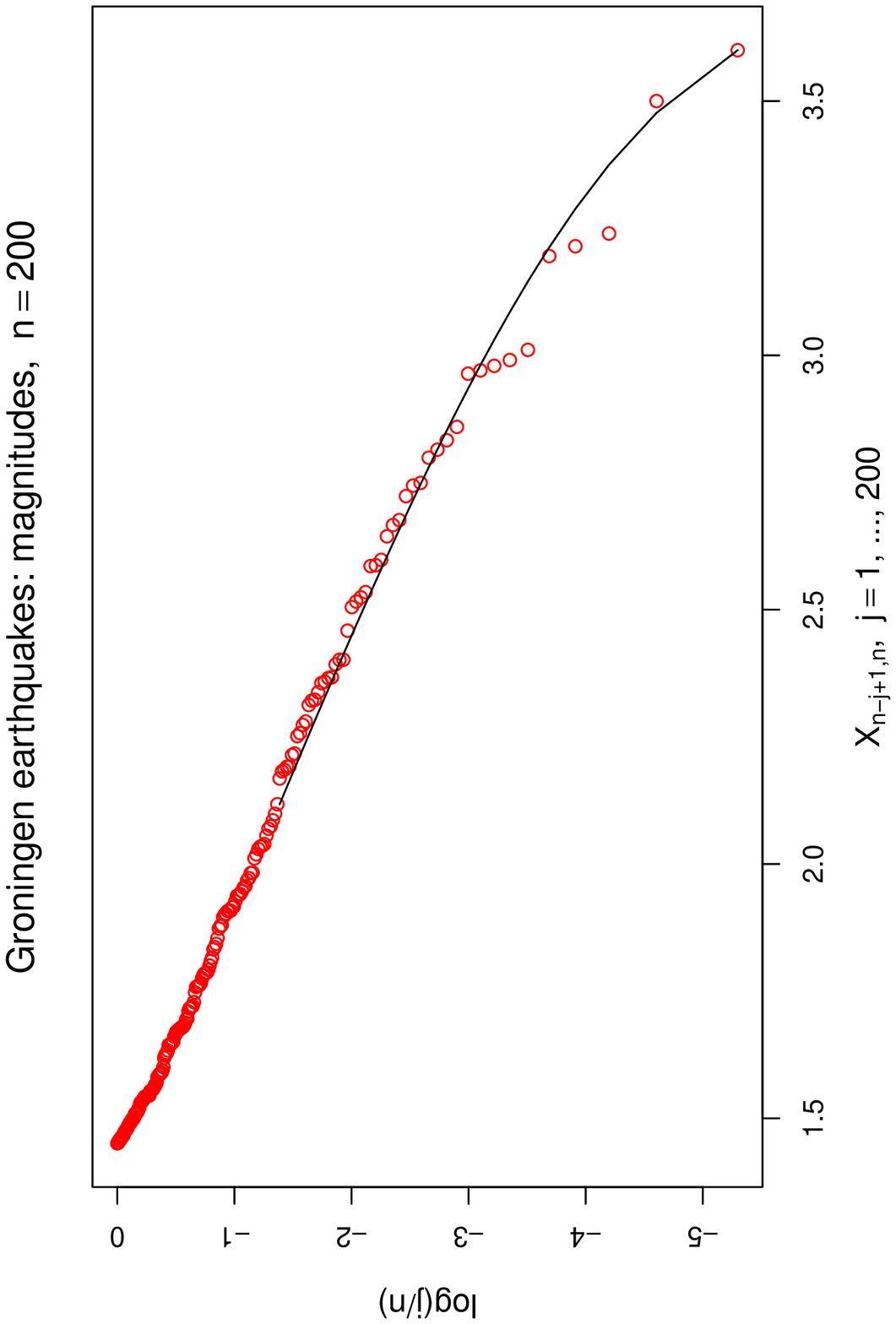}
  \caption {Groningen earthquake magnitude data: $\hat{\xi}^+_{k}$, $\hat{\xi}_{k}$, $\hat{\xi}^{\infty}_{k}$ and $\hat{\xi}^M_{k}$ (top left); P-values for test for truncation (top right); $\hat{D}^+_{T,k}$ and $\hat{D}_{T,k}$ (middle left); the logarithmic transform  of $\hat{T}^+_{E,k}$ and $\hat{T}_{M,k}$ (middle right); exponential QQ-plot with fit based on $k=50$ largest magnitudes (bottom).}
  \label{fig:Groningen2}
       \end{figure}

\vspace{0.3cm}\noindent
Concerning the diamond data introduced in Figure~\ref{fig:DiamondQQ}, $\hat{\xi}_{k}$ and $\hat{\xi}^+_{k}$, respectively $\hat{D}_{T,k}$ and $\hat{D}^+_{T,k}$,  correspond well for $k \geq 250$ and lead to a Pareto fit with extreme value index around 0.5 and a truncation odds $D_T$ around 0.02. The goodness-of-fit test now rejects light truncation for $k\geq 110$. Reconstructing $Q_Y (0.99)$ with $\hat{Q}_{Y,k}(0.99)$ and  $\hat{Q}^+_{Y,k}(0.99)$ leads to a value of 120 cts at $k=250$.

   \begin{figure}[!ht]
	\centering
	      \includegraphics[height=0.45\textwidth, angle=270]{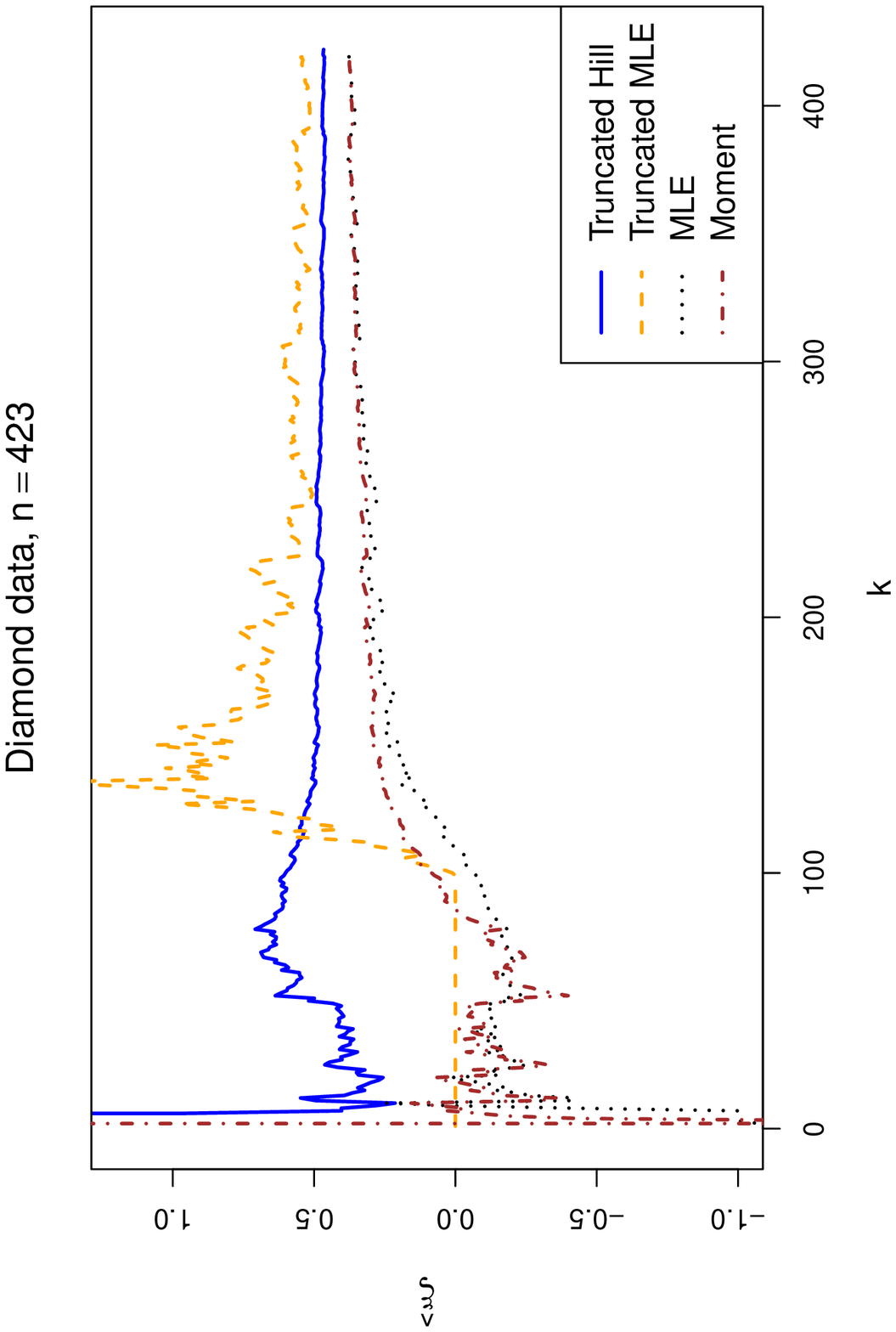}
			\includegraphics[height=0.45\textwidth, angle=270]{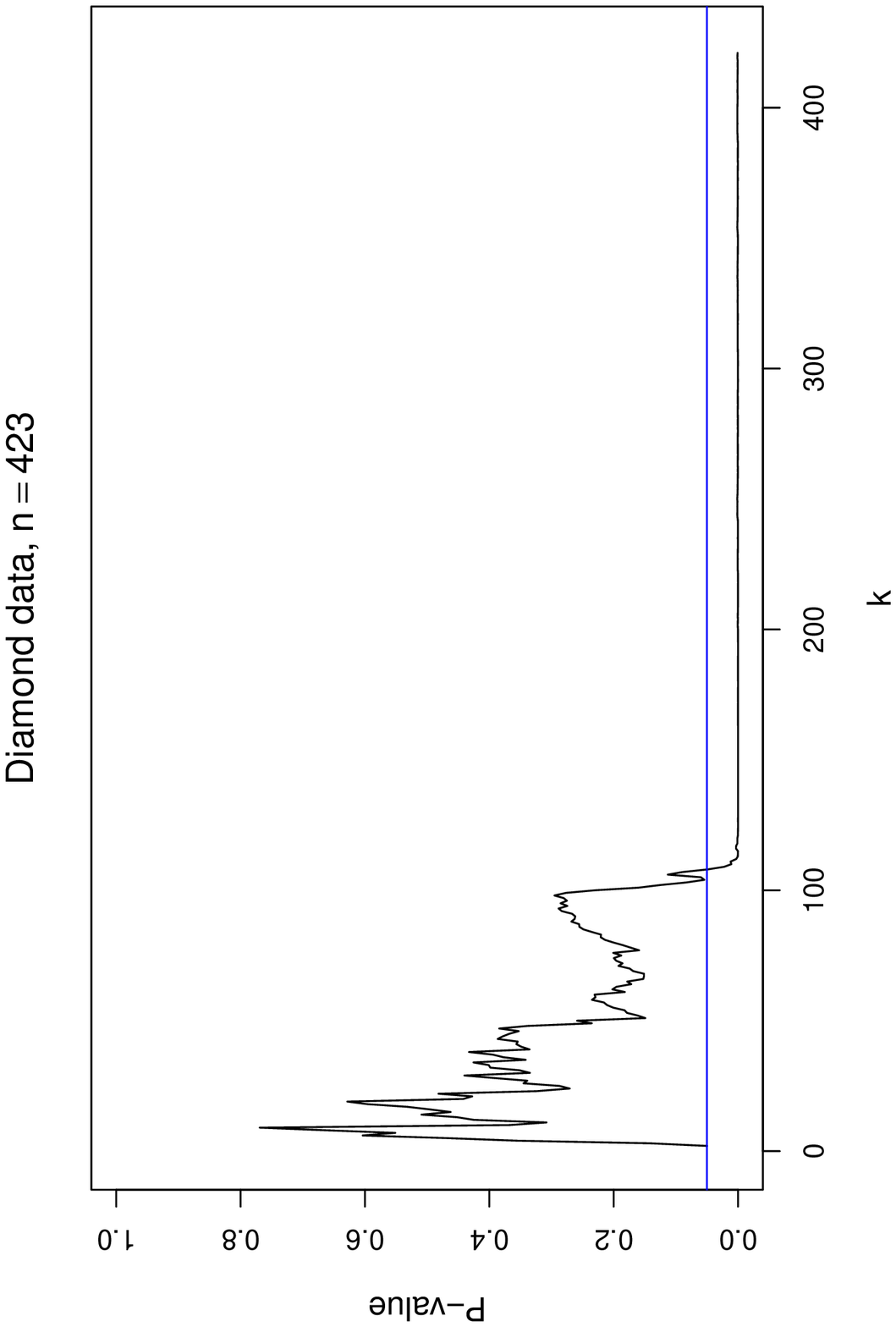}\\
      \includegraphics[height=0.45\textwidth, angle=270]{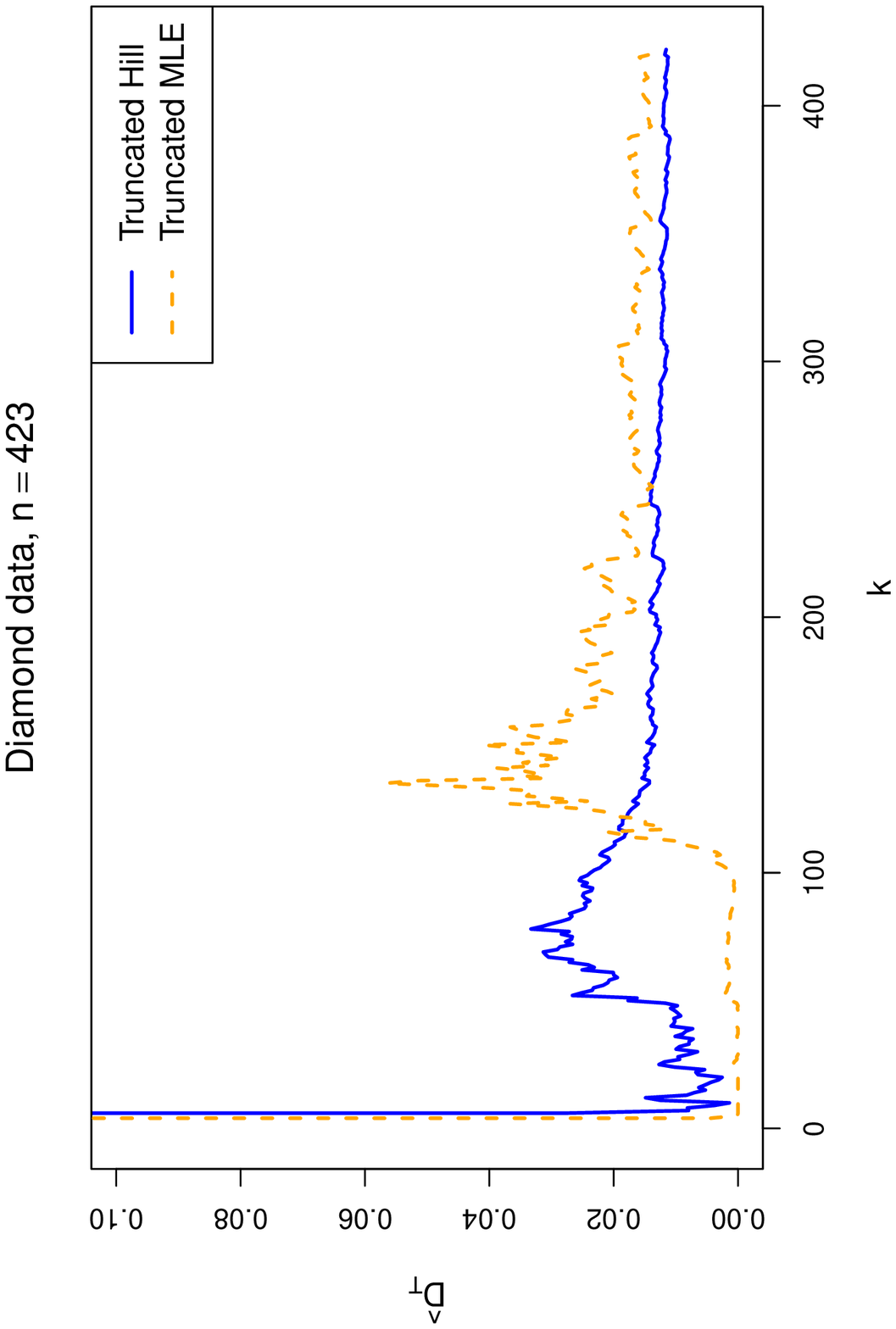}
			\includegraphics[height=0.45\textwidth, angle=270]{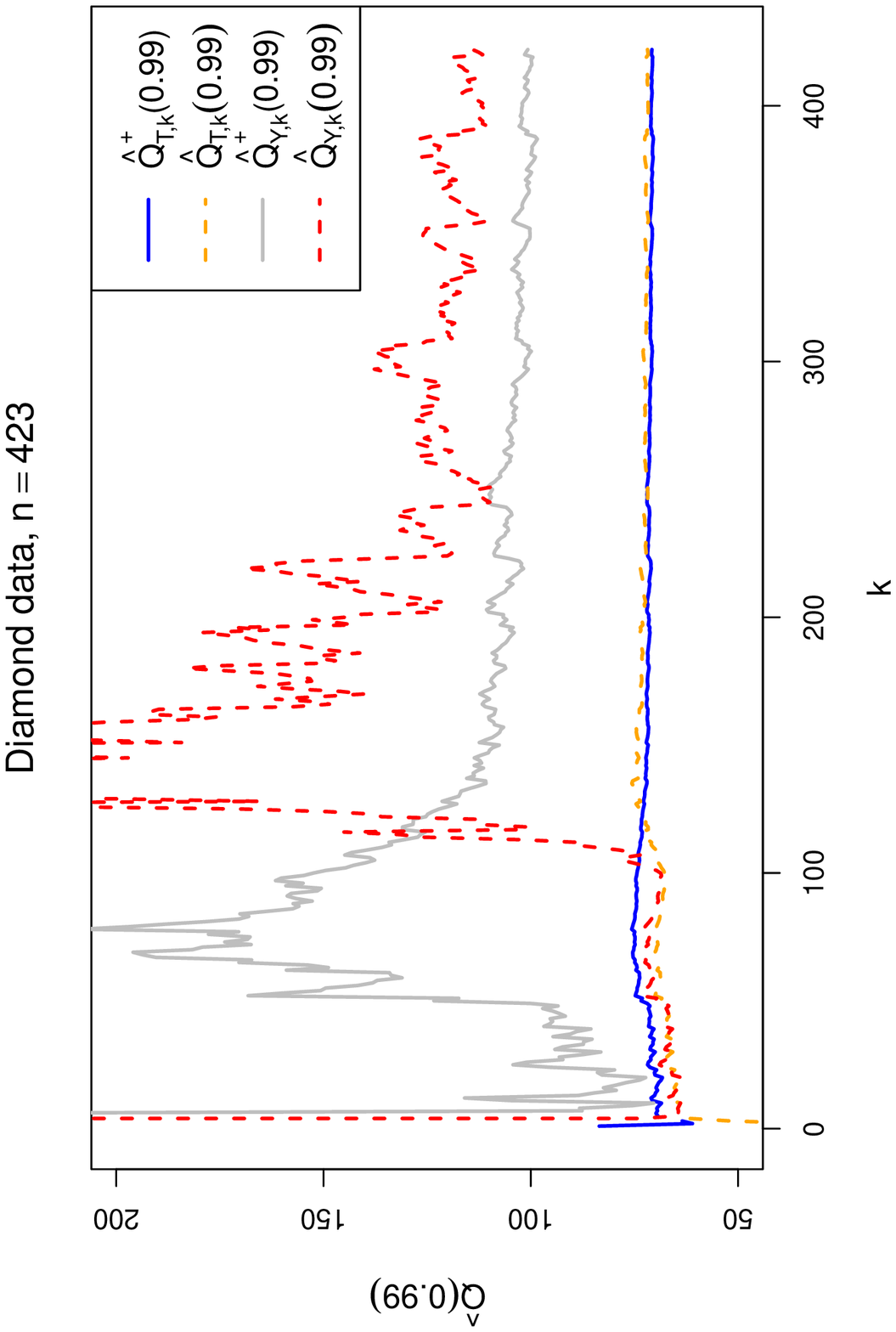}\\
			\includegraphics[height=0.45\textwidth, angle=270]{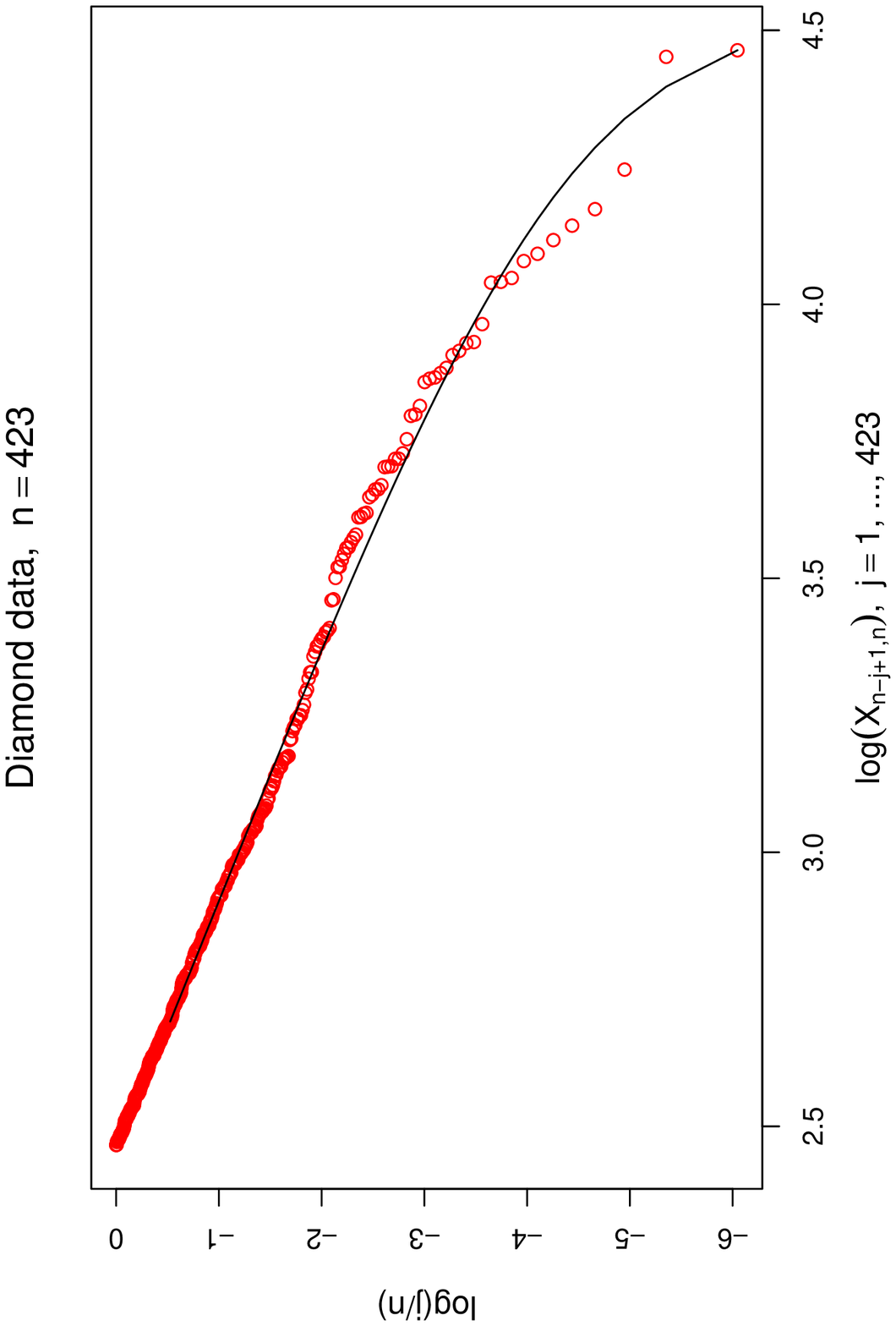}
  \caption {Diamond data: $\hat{\xi}^+_{k}$, $\hat{\xi}_{k}$, $\hat{\xi}^{\infty}_{k}$ and $\hat{\xi}^M_{k}$ (top left); P-values for test for truncation (top right); $\hat{D}^+_{T,k}$ and $\hat{D}_{T,k}$ (middle left); $\hat{Q}^+_{T,k}(0.99)$, $\hat{Q}_{T,k}(0.99)$, $\hat{Q}^+_{Y,k}(0.99)$ and $\hat{Q}_{Y,k}(0.99)$ (middle right); log-log plot with fit based on $k=250$ largest sizes (bottom).}
       \end{figure}

Finally with the Molenbeek data, the goodness-of-fit test and  the fit of the proposed truncation model on the exponential QQ-plot on the top 100 data, again indicate that this $Y$ belongs to the Gumbel domain with an odds $D_T$ around 0.02. Here, the Pareto domain estimators 
$\hat\xi ^+_k$ and $\hat{D} ^+_{T,k}$ clearly do not show a stable pattern as a function of $k$. Reconstructing $Q_Y(0.97)$ leads to a value $\hat{Q}_{Y,100}(0.97)= 6.5\ m^3/s$.
   \begin{figure}[!ht]
	\centering
	     \includegraphics[height=0.45\textwidth, angle=270]{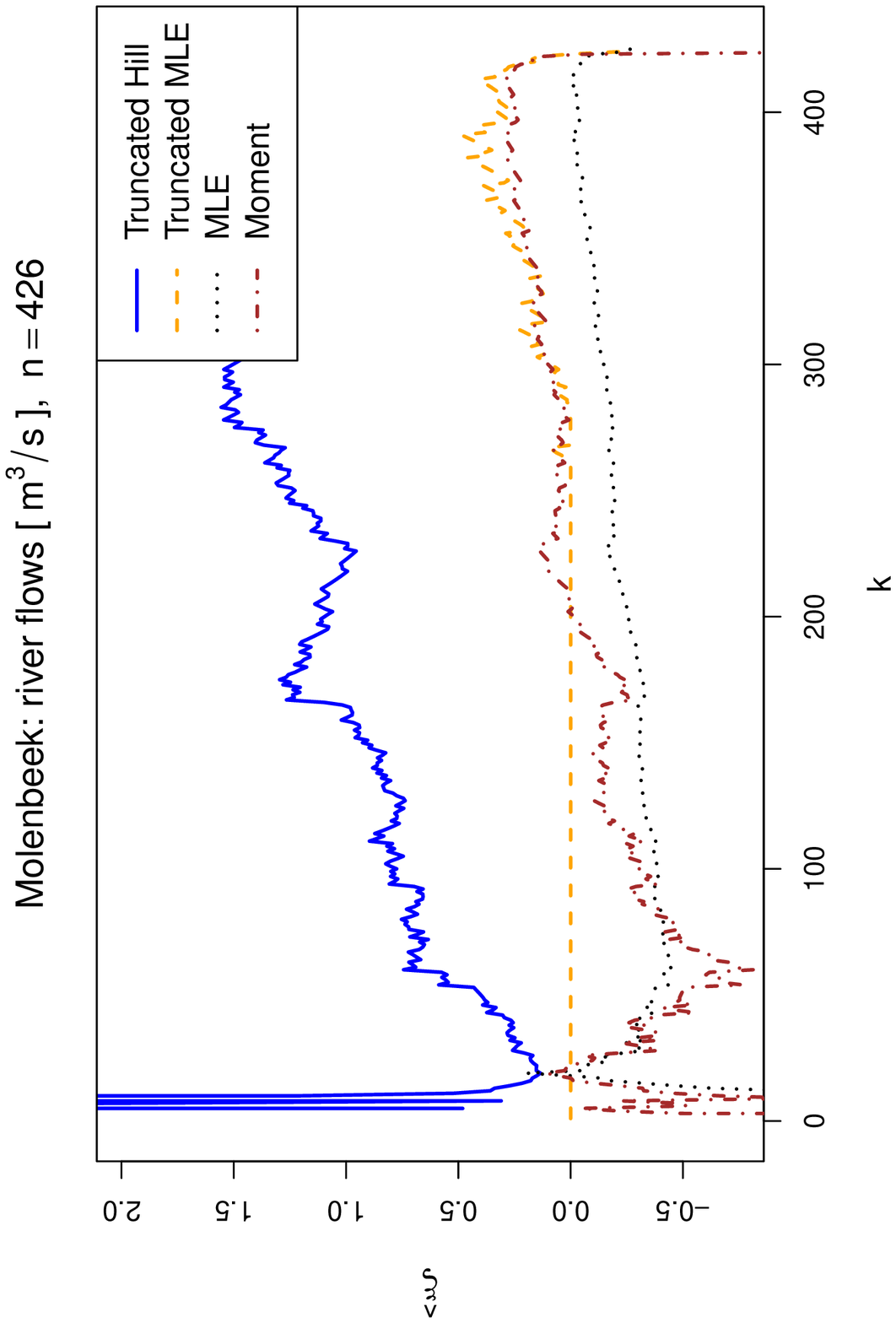}
			\includegraphics[height=0.45\textwidth, angle=270]{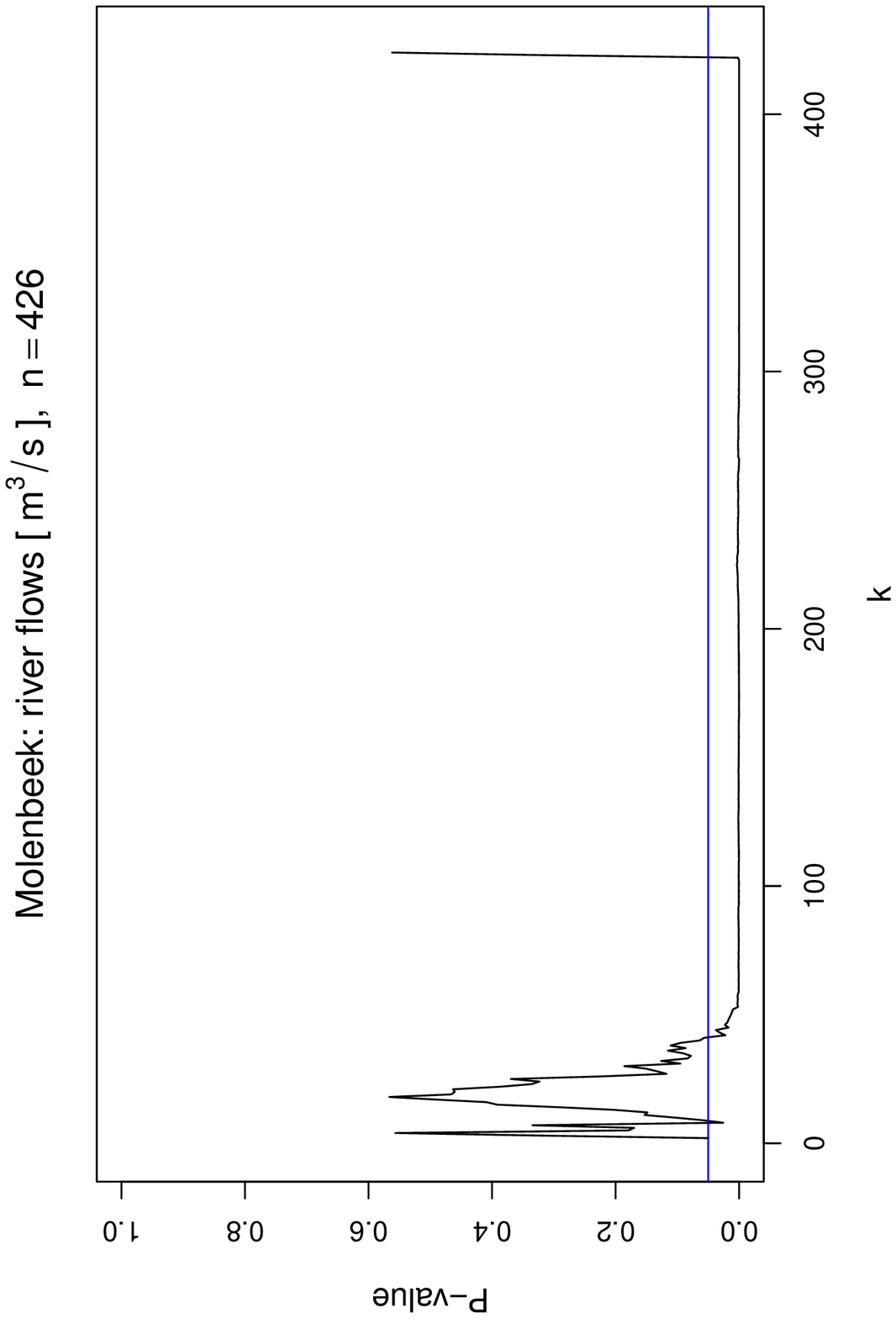}\\
      \includegraphics[height=0.45\textwidth, angle=270]{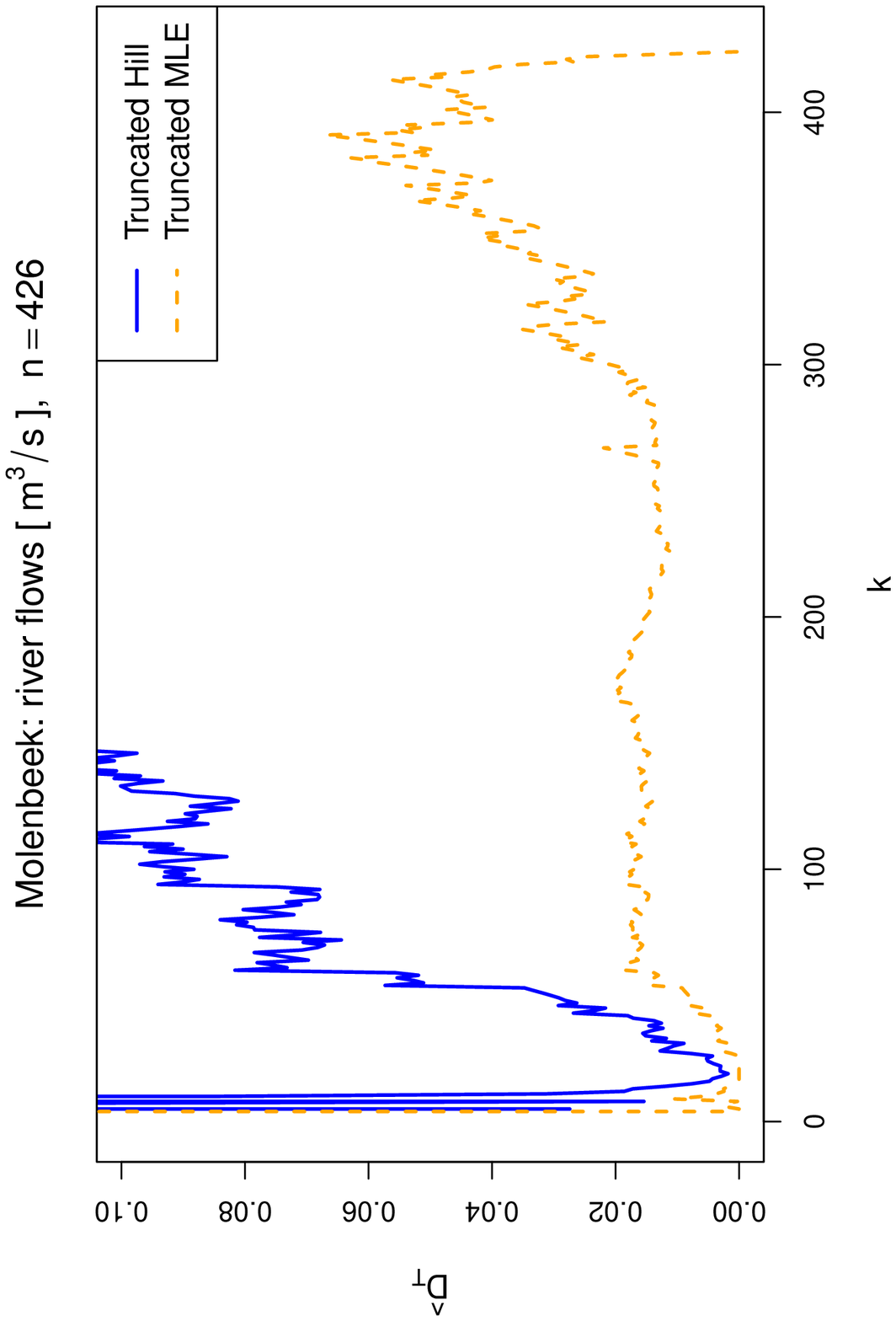}
			\includegraphics[height=0.45\textwidth, angle=270]{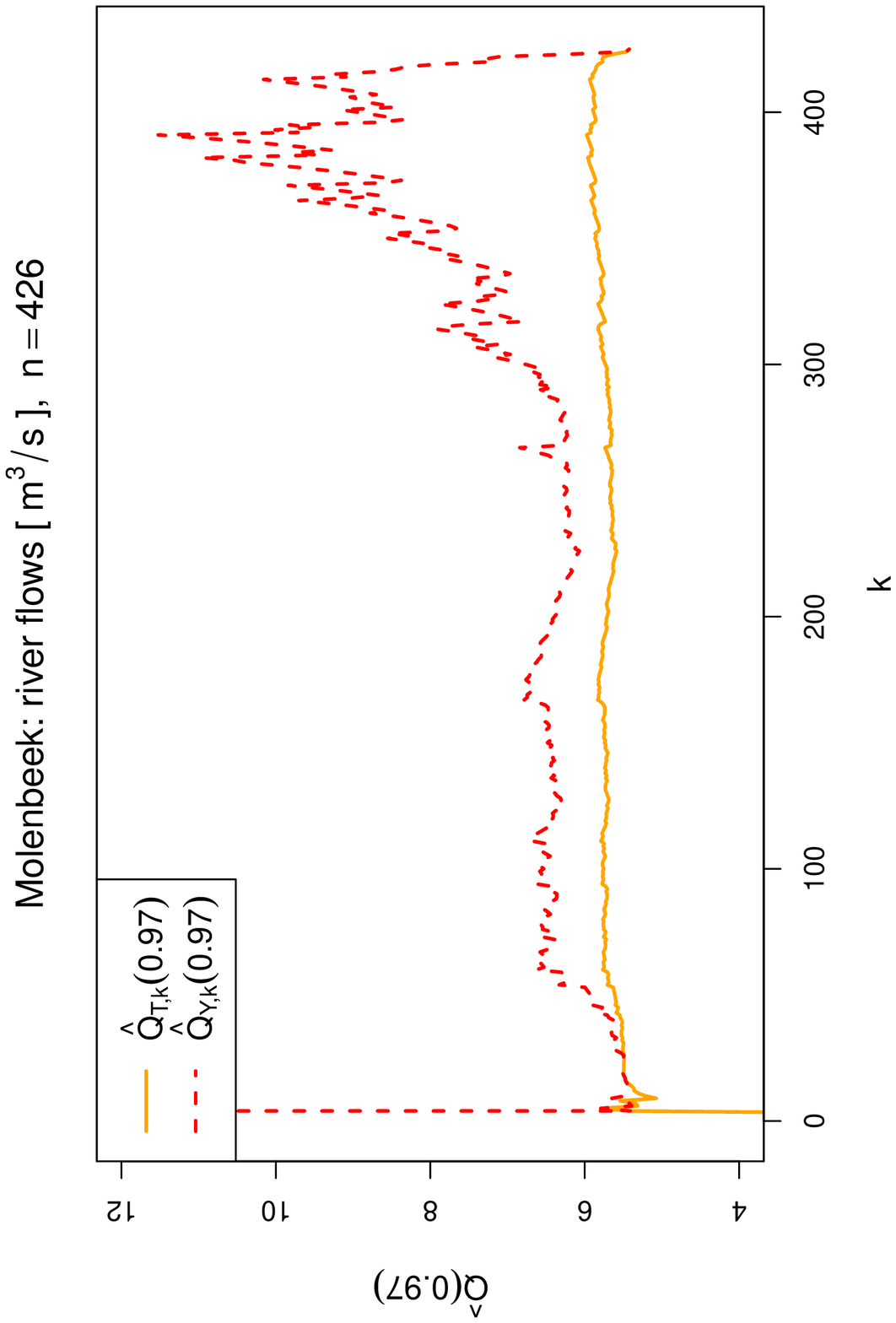}\\
			\includegraphics[height=0.45\textwidth, angle=270]{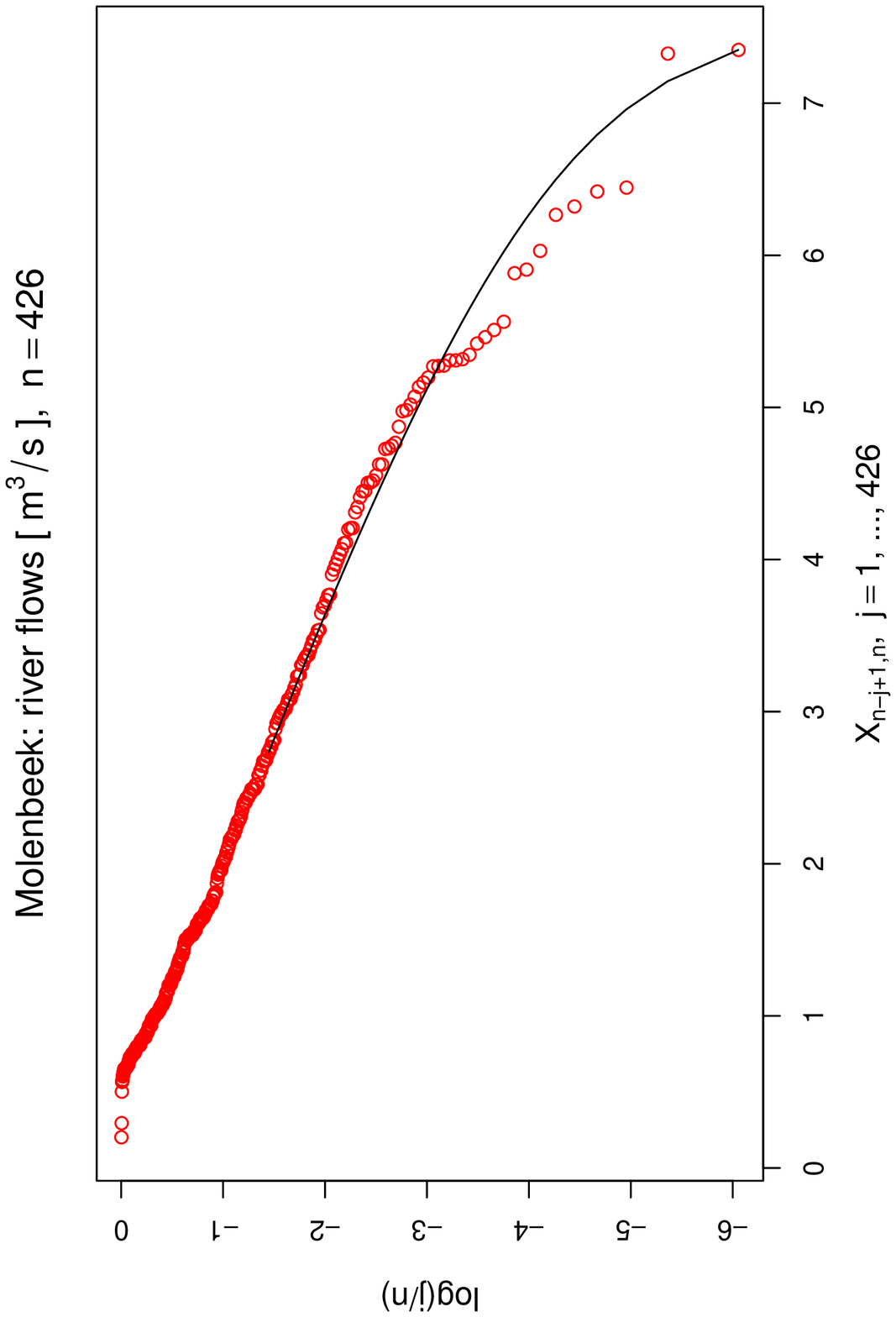}
  \caption {Molenbeek flow data: $\hat{\xi}^+_{k}$, $\hat{\xi}_{k}$, $\hat{\xi}^{\infty}_{k}$ and $\hat{\xi}^M_{k}$ (top left); P-values for test for truncation (top right); $\hat{D}^+_{T,k}$ and $\hat{D}_{T,k}$ (middle left); $\hat{Q}_{T,k}(0.97)$ and  $\hat{Q}_{Y,k}(0.97)$ (middle right); exponential QQ-plot with fit based on $k=100$ largest flows (bottom).}
       \end{figure}

\section{Discussion}

We proposed a general tail estimation approach for cases where truncation affects the ultimate right tail of the distribution. Using applications from geophysics, hydrology and geology we motivated the importance of this problem.
The proposed estimators of the extreme value index, and quantiles of the truncated and underlying non-truncated distribution, in most cases compare well with the best performing alternatives, even in case there is no truncation. The proposed estimator of extreme quantiles of a truncated distribution is performing uniformly best. While the alternative procedures sometimes break down in at least one situation, our proposals remain always useful for large enough $k$. Hence, in addition to the existing methods, this method can be an interesting extra tool when analysing tails.

\section*{Acknowledgements}
We thank the anonymous referee for the careful reading and the several suggestions that helped improving the paper.

  \bibliographystyle{agsm}
\bibliography{Truncating_allmaxdomains_bib}

\appendix

\section*{Appendix: proofs of Theorems}


\begin{proposition}\label{thm:prop_trMLE} Under the condition of Theorem~\ref{thm:trMLE}, one can define a sequence of Brownian motions $\{W_n(s) \, |\, s>0\}$, such that for $\epsilon >0$
\begin{enumerate}
\renewcommand{\theenumi}{(\alph{enumi})}
\renewcommand\labelenumi{\theenumi}
	\item\label{itm:propa} \begin{align*}
	\max_{j=1,\ldots,k} \left(\frac{j}{k+1}\right)^{0.5+\varepsilon} \Bigg| &\sqrt{k} \left[\frac{X_{n-j+1,n}-U_T\left(\frac{n+1}{k+1}\right)}{a_{T,k,n}}-\frac1{\xi}\left(\left(\frac{1+\frac{j}{k+1}b_{T,k,n}}{1+b_{T,k,n}}\right)^{-\xi}-1\right)\right]\\
	&+\frac{b_{T,k,n}}{1+b_{T,k,n}} \left(\frac{1+\frac{j}{k+1}b_{T,k,n}}{1+b_{T,k,n}}\right)^{-1-\xi}W_n\left(\frac{j}{k+1}\right)\\
	&+\sqrt{k} A\left(\frac1{\bar{F}_Y(T)(1+b_{T,k,n})}\right)\Psi_{\xi,\rho}\left(\frac{1+b_{T,k,n}}{1+\frac{j}{k+1}b_{T,k,n}}\right)\Bigg| \to_p 0
	\end{align*}
	
	\item\label{itm:propb} \begin{align*}
	\max_{j=1,\ldots,k} \left(\frac{j}{k+1}\right)^{0.5+\varepsilon} \Bigg| &\sqrt{k} \left[\frac{X_{n-j+1,n}-X_{n-k,n}}{a_{T,k,n}}-\frac1{\xi}\left(\left(\frac{1+\frac{j}{k+1}b_{T,k,n}}{1+b_{T,k,n}}\right)^{-\xi}-1\right)\right]\\
	&+\frac{b_{T,k,n}}{1+b_{T,k,n}} \left[\left(\frac{1+\frac{j}{k+1}b_{T,k,n}}{1+b_{T,k,n}}\right)^{-1-\xi}W_n\left(\frac{j}{k+1}\right)-W_n(1)\right]\\
	&+\sqrt{k} A\left(\frac1{\bar{F}_Y(T)(1+b_{T,k,n})}\right)\Psi_{\xi,\rho}\left(\frac{1+b_{T,k,n}}{1+\frac{j}{k+1}b_{T,k,n}}\right)\Bigg| \to_p 0.
	\end{align*}
\end{enumerate}
\end{proposition}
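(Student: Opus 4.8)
The plan is to mirror the proof of asymptotic normality of the classical maximum likelihood estimators in Chapter~2 of \citet{dHF}, the only structural novelty being that truncation replaces the tail quantile function $U_Y$ by its composition with the map induced by \eqref{eq:UYTD}. First I would apply the quantile transform $X_{n-j+1,n}=_d U_T(1/U_{j,n})$, where $U_{1,n}\le\cdots\le U_{n,n}$ are uniform order statistics, and rewrite $U_T(1/U_{j,n})=U_Y(g(U_{j,n}))$ with $g(u)=1/[\bar{F}_Y(T)(1+u/D_T)]$ via \eqref{eq:UYTD}. Setting $t_n:=1/[\bar{F}_Y(T)(1+b_{T,k,n})]$ one checks directly that $U_T(\tfrac{n+1}{k+1})=U_Y(t_n)$ and $a_Y(t_n)=a_{T,k,n}$, and that the rescaled argument is $z_j:=g(U_{j,n})/t_n=\tfrac{1+b_{T,k,n}}{1+U_{j,n}/D_T}$ with $U_{j,n}/D_T=b_{T,k,n}\,\tfrac{n+1}{k+1}U_{j,n}$. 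Hence the whole statement is driven by the tail empirical quantile process $P_j:=\tfrac{n+1}{k+1}U_{j,n}$, whose deterministic skeleton is $j/(k+1)$ and whose stochastic part will generate the Brownian term.

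Next I would invoke a weighted strong approximation of $P_j$: on a suitable probability space there is a sequence of Brownian motions $\{W_n\}$ with $\max_{1\le j\le k}(j/(k+1))^{1/2+\varepsilon}|\sqrt{k}(P_j-\tfrac{j}{k+1})-W_n(j/(k+1))|\to_p 0$. This follows from the R\'enyi representation $U_{j,n}=_d\Gamma_j/\Gamma_{n+1}$, with $\Gamma_j$ a sum of $j$ i.i.d.\ standard exponentials, a Koml\'os--Major--Tusn\'ady approximation of the exponential partial sums, and the observation that $\tfrac{n+1}{\Gamma_{n+1}}=1+O_p(n^{-1/2})$ contributes only $O_p(\sqrt{k/n})=o_p(1)$ after multiplication by $\sqrt{k}$; it is exactly the tail-process statement underlying Theorem~2.4.8 in \citet{dHF}. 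Because $b_{T,k,n}\to\beta<\infty$ by \eqref{eq:knDT}, the arguments $z_j$ lie, uniformly in $j$, in a compact subset of $(0,\infty)$ around the deterministic values $x_j:=\tfrac{1+b_{T,k,n}}{1+\frac{j}{k+1}b_{T,k,n}}$. I would then apply \eqref{eq:thm342} in its locally uniform form on this compact set and linearise $z\mapsto(z^\xi-1)/\xi$ about $x_j$: the leading term reproduces the deterministic main term $\tfrac1\xi(x_j^{\xi}-1)=\tfrac1\xi((\tfrac{1+\frac{j}{k+1}b_{T,k,n}}{1+b_{T,k,n}})^{-\xi}-1)$; the linear term in the perturbation $z_j-x_j=-\frac{x_j\, b_{T,k,n}\,W_n(j/(k+1))}{\sqrt{k}\,(1+\frac{j}{k+1}b_{T,k,n})}+\cdots$ simplifies, using $x_j^{\xi-1}\cdot\frac{x_j\,b_{T,k,n}}{1+\frac{j}{k+1}b_{T,k,n}}=\frac{b_{T,k,n}}{1+b_{T,k,n}}(\frac{1+\frac{j}{k+1}b_{T,k,n}}{1+b_{T,k,n}})^{-1-\xi}$, to the stated Brownian term after rearrangement; and the second-order remainder of \eqref{eq:thm342} contributes the bias term $\sqrt{k}A(t_n)\Psi_{\xi,\rho}(x_j)$ of the statement. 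Matching these coefficients is then a routine algebraic verification.

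The main obstacle is the uniform control over $j$, especially for small $j$ where $P_j$ has the largest relative fluctuations. Three remainders must be shown to vanish after weighting by $(j/(k+1))^{1/2+\varepsilon}$ and scaling by $\sqrt{k}$: the quadratic Taylor remainder of $z\mapsto(z^\xi-1)/\xi$, the residual of the strong approximation above, and the $o(A(t_n))$ in \eqref{eq:thm342} made uniform on the compact $x_j$-range by Drees-type uniform inequalities. For the first, $|z_j-x_j|\lesssim b_{T,k,n}|W_n(j/(k+1))|/\sqrt{k}$, so the weighted quadratic term is of order $(j/(k+1))^{1/2+\varepsilon}W_n(j/(k+1))^2/\sqrt{k}$; using the a.s.\ bound $\sup_{0<s\le1}|W_n(s)|\,s^{-1/2+\varepsilon'}<\infty$ this is $O(k^{-1/2})$ uniformly for $\varepsilon'$ small. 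The weight $(j/(k+1))^{1/2+\varepsilon}$ is calibrated precisely to dominate the $\sqrt{j/(k+1)}$-type growth of $|W_n(j/(k+1))|$ near the origin, and this weighted-approximation bookkeeping is the technical heart of the proof.

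Finally, part~\ref{itm:propb} follows from part~\ref{itm:propa} by decomposing $\tfrac{X_{n-j+1,n}-X_{n-k,n}}{a_{T,k,n}}$ as the difference of the two centred quantities at indices $j$ and $k$ and applying \ref{itm:propa} to each. The contribution at $j=k$ is constant in $j$, so its weighted version is bounded by its $j$-free value; here $(\tfrac{1+\frac{k}{k+1}b_{T,k,n}}{1+b_{T,k,n}})^{-1-\xi}\to1$, the main-term increment at $j=k$ is $O(1/k)=o(1/\sqrt{k})$, and $\Psi_{\xi,\rho}(\tfrac{1+b_{T,k,n}}{1+\frac{k}{k+1}b_{T,k,n}})=O(1/k^2)$ since $\Psi_{\xi,\rho}(1)=0$ and $\Psi_{\xi,\rho}'(1)=0$, so the $j=k$ bias is negligible. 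Using the a.s.\ continuity of $W_n$ at $1$ to replace $W_n(k/(k+1))$ by $W_n(1)$, the $j=k$ term reduces to $\tfrac{b_{T,k,n}}{1+b_{T,k,n}}W_n(1)$, which is exactly the extra $-W_n(1)$ appearing inside the bracket of \ref{itm:propb}, while the deterministic main and the bias term are inherited unchanged from \ref{itm:propa}.
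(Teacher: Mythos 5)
Your proposal follows essentially the same route as the paper's own proof: a quantile transform to standard Pareto/uniform order statistics, rewriting $U_T$ through $U_Y$ via \eqref{eq:UYTD}, application of the second-order condition \eqref{eq:thm342} at $t=1/[\bar{F}_Y(T)(1+b_{T,k,n})]$, the weighted Brownian approximation of the tail quantile process from \citet{dHF} (the paper cites Lemma~2.4.10 for the Pareto order statistics; your R\'enyi--KMT statement for $P_j=\tfrac{n+1}{k+1}U_{j,n}$ is its equivalent up to the sign of $W_n$), and a mean-value/Taylor linearisation whose coefficient identity you verify correctly, with your handling of the weighted remainders actually more explicit than the paper's. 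The one slip is in part (b): $X_{n-k,n}=X_{n-(k+1)+1,n}$ is the index-$(k+1)$ term, not the index-$k$ term $X_{n-k+1,n}$, so your decomposition requires either extending (a) to $j=k+1$ (which is exactly what the paper does, ``combining (a) with the analogous result for $j=k+1$'' and noting $\Psi_{\xi,\rho}(1)=0$) or an additional bridge $\sqrt{k}\,(X_{n-k+1,n}-X_{n-k,n})/a_{T,k,n}\to_p 0$; since your own $O(1/k)$ and $O(1/k^2)$ estimates and the continuity of $W_n$ at $1$ supply precisely this, the gap is cosmetic rather than substantive.
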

\begin{proof} In order to derive \ref{itm:propa}, note that for $j=1,\ldots,k$,
\begin{align*}
&X_{n-j+1,n}-U_T\left(\frac{n+1}{k+1}\right) =_d U_T(Y_{n-j+1,n})-U_T\left(\frac{n+1}{k+1}\right)\\
&\quad =U_Y\left(\frac{1+b_{T,k,n}}{1+\frac1{Y_{n-j+1,n}D_T}}\frac{1}{\bar{F}_Y(T)(1+b_{T,k,n})}\right)-U_Y\left(\frac{1}{\bar{F}_Y(T)(1+b_{T,k,n})}\right)
\end{align*}
where we used \eqref{eq:UYTD}, and where $Y_{1,n}\leq Y_{2,n}\leq\ldots \leq Y_{n,n}$ denote the order statistics of an i.i.d.~sample from a standard Pareto distribution with distribution function $1-1/x$ for $x\geq1$. Hence, using \eqref{eq:thm342} with \[t=\frac1{\bar{F}_Y(T)(1+b_{T,k,n})} \quad \text{and} \quad x=\frac{1+b_{T,k,n}}{1+\frac{n+1}{jY_{n-j+1,n}} \frac{j}{k+1}b_{T,k,n}},\]
we obtain
\begin{align}\label{eq:propc}
\frac{X_{n-j+1,n}-U_T\left(\frac{n+1}{k+1}\right)}{a_{T,k,n}} &= \frac1{\xi}\left(\left(\frac{1+\frac{n+1}{jY_{n-j+1,n}}\frac{j}{k+1}b_{T,k,n}}{1+b_{T,k,n}}\right)^{-\xi}-1\right)\nonumber\\
	&+ A\left(\frac1{\bar{F}_Y(T)(1+b_{T,k,n})}\right)\Psi_{\xi,\rho}\left(\frac{1+b_{T,k,n}}{1+\frac{j}{k+1}b_{T,k,n}}\right)+o_p(1).
\end{align}
Using Lemma~2.4.10 in \citet{dHF} applied to the standard Pareto distribution one gets
\[\max_{j=1,\ldots,k} \left(\frac{j}{k+1}\right)^{0.5+\varepsilon} \Bigg|\sqrt{k} \left(Y_{n-j+1,n} \frac{j}{n}-1\right)-\left(\frac{j}{k+1}\right)^{-1} W_n\left(\frac{j}{k+1}\right)\Bigg|\to_p 0.\]
Using the mean value theorem we now obtain
\begin{align*}
&\frac1{\xi}\left(\left(\frac{1+\frac{n+1}{jY_{n-j+1,n}}\frac{j}{k+1}b_{T,k,n}}{1+b_{T,k,n}}\right)^{-\xi}-\left(\frac{1+\frac{j}{k+1}b_{T,k,n}}{1+b_{T,k,n}}\right)^{-\xi}\right)\nonumber\\
	&\quad=\frac{b_{T,k,n}}{1+b_{T,k,n}} \frac{j}{k+1}\left(\frac{1+\frac{j}{k+1}b_{T,k,n}}{1+b_{T,k,n}}\right)^{-1-\xi} \left(\frac{jY_{n-j+1,n}}n-1\right)(1+o_p(1)).
\end{align*}
Hence, combining this with \eqref{eq:propc} and the result from Lemma~2.4.10 in \citet{dHF}, we arrive at \ref{itm:propa}.
Combining \ref{itm:propa} with the analogous result for $j=k+1$, one arrives at \ref{itm:propb}. To this end note that $\Psi_{\xi,\rho}(1)=0$.
\end{proof}
\vspace{\baselineskip}
\begin{proof}[Proof of Theorem~\ref{thm:trMLE}]
This proof follows the approach of the proof of Theorem~3.4.2 in \citet{dHF}. Let $\hat{\tau}_k a_{T,k,n}=\hat{\tau}_k^s$, and
\begin{align*}
Z_{T,k,n}\left(\frac{j}{k+1}\right)&=\frac{b_{T,k,n}}{1+b_{T,k,n}} \left(\left(\frac{1+\frac{j}{k+1}b_{T,k,n}}{1+b_{T,k,n}}\right)^{-1-\xi}W_n\left(\frac{j}{k+1}\right)-W_n(1)\right)\\
&\quad+\sqrt{k}A\left(\frac1{\bar{F}_Y(T)(1+b_{T,k,n})}\right)\Psi_{\xi,\rho}\left(\frac{1+b_{T,k,n}}{1+\frac{j}{k+1}b_{T,k,n}}\right).
\end{align*}
Then, uniformly in $j\in\{1,\ldots,k\}$,
\begin{align*}
1+\hat{\tau}_k^s \frac{E_{j,k}}{a_{T,k,n}}&=\left(\frac{1+\frac{j}{k+1}b_{T,k,n}}{1+b_{T,k,n}}\right)^{-\xi}+\frac1{\xi}(\hat{\tau}_k^s-\xi)\left(\left(\frac{1+\frac{j}{k+1}b_{T,k,n}}{1+b_{T,k,n}}\right)^{-\xi}-1\right)\\
&\quad+\hat{\tau}_k^s \frac1{\sqrt{k}} Z_{T,k,n}\left(\frac{j}{k+1}\right)+o_p(1).
\end{align*}
Using $\log(1+u)= u(1+o(1))$ if $u\downarrow0$, we get
\begin{align*}
&\log\left(\left(\frac{1+\frac{j}{k+1}b_{T,k,n}}{1+b_{T,k,n}}\right)^{\xi}\left(1+\hat{\tau}_k^s \frac{E_{j,k}}{a_{T,k,n}}\right)\right)\\
&\quad=\frac1{\xi}(\hat{\tau}_k^s-\xi)\left(1-\left(\frac{1+\frac{j}{k+1}b_{T,k,n}}{1+b_{T,k,n}}\right)^{\xi}\right)+\hat{\tau}_k^s \frac1{\sqrt{k}} Z_{T,k,n}\left(\frac{j}{k+1}\right)\left(\frac{1+\frac{j}{k+1}b_{T,k,n}}{1+b_{T,k,n}}\right)^{\xi}+o_p(1).
\end{align*}
Hence, the first term on the left hand side of \eqref{eq:lik_xitau1} is given by
\begin{align}\label{eq:firstterm_lik_xitau1}
&\frac{1}{k-1}\sum_{j=2}^k \log (1+ \hat{\tau}_k E_{j,k}) = \left[-\xi \int_0^1 \log\left(\frac{1+ub_{T,k,n}}{1+b_{T,k,n}}\right)\,du \right.\nonumber\\
&\left.\quad +\frac1{\xi}(\hat{\tau}_k^s-\xi)\int_0^1 \left(1-\left(\frac{1+ub_{T,k,n}}{1+b_{T,k,n}}\right)^{\xi}\right)\,du+\hat{\tau}_k^s \frac1{\sqrt{k}} \int_0^1Z_{T,k,n}\left(u\right)\left(\frac{1+ub_{T,k,n}}{1+b_{T,k,n}}\right)^{\xi}\,du\right]\nonumber\\
&\sim \left[\xi \left(1-\frac{\log(1+b_{T,k,n})}{b_{T,k,n}}\right) +\frac1{\xi}(\hat{\tau}_k^s-\xi)\left(1-\frac{1+b_{T,k,n}}{b_{T,k,n}(1+\xi)}\left(1-(1+b_{T,k,n})^{-1-\xi}\right)\right)\right.\nonumber\\
&\left.\quad+\hat{\tau}_k^s \frac1{\sqrt{k}} \int_0^1Z_{T,k,n}\left(u\right)\left(\frac{1+ub_{T,k,n}}{1+b_{T,k,n}}\right)^{\xi}\,du\right].
\end{align}
Moreover, using Proposition~\ref{thm:prop_trMLE}\ref{itm:propb} with $j=1$, we obtain
\begin{align*}
&\left(1+\hat{\tau}_k^s \frac{E_{1,k}}{a_{T,k,n}}\right)^{-1/\hat{\xi}_k}=\left(1+\hat{\tau}_k^s \frac1{\xi}\left((1+b_{T,k,n})^{\xi}-1\right)+\hat{\tau}_k^s \frac1{\sqrt{k}} Z_{T,k,n}\left(\frac{1}{k+1}\right)\right)^{-1/\hat{\xi}_k}\\
&\quad=(1+b_{T,k,n})^{-\frac{\xi}{\hat{\xi_k}}}\left(1+(\hat{\tau}_k^s-\xi) \frac1{\xi}\left(1-(1+b_{T,k,n})^{-\xi}\right)\right.\\
&\qquad\qquad\left.+\hat{\tau}_k^s \frac1{\sqrt{k}} Z_{T,k,n}\left(\frac{1}{k+1}\right)(1+b_{T,k,n})^{-\xi}\right)^{-1/\hat{\xi}_k}\\
&\quad=(1+b_{T,k,n})^{-1}\left(1+(\hat{\xi}_k-\xi)\frac1{\xi}\log(1+b_{T,k,n})-(\hat{\tau}_k^s-\xi) \frac1{\xi^2}\left(1-(1+b_{T,k,n})^{-\xi}\right)\right.\\
&\left.\qquad\qquad -\frac{\hat{\tau}_k^s}{\hat{\xi}_k} \frac1{\sqrt{k}} Z_{T,k,n}\left(\frac{1}{k+1}\right)(1+b_{T,k,n})^{-\xi}\right)(1+o_p(1))
\end{align*}
where we used the series expansions \newline$e^{-\left(\frac{\xi}{\hat{\xi}_k}-1\right)\log(1+b_{T,n,k})}= 1-\left(\frac{\xi}{\hat{\xi}_k}-1\right)\log(1+b_{T,n,k})(1+o_p(1))$ and $(1+u)^{-1/\xi}= 1-\frac1{\xi}u(1+o(1))$. Hence, the second term on the left hand side of \eqref{eq:lik_xitau1} equals
\begin{align}\label{eq:secondterm_lik_xitau1}
&-\hat{\xi}_k\frac{\left(1+\hat{\tau}^s_k \frac{E_{1,k}}{a_{T,k,n}}\right)^{-1/\hat{\xi}_k} \log\left(1+\hat{\tau}_k^s  \frac{E_{1,k}}{a_{T,k,n}}\right)^{-1/\hat{\xi}_k}}{1- \left(1+\hat{\tau}_k^s  \frac{E_{1,k}}{a_{T,k,n}}\right)^{-1/\hat{\xi}_k}}\nonumber\\
&\quad=-\hat{\xi}_k(1+b_{T,k,n})^{-1}\left(1+\frac{(\hat{\xi}_k-\xi)}{\xi}\log(1+b_{T,k,n})-\frac{(\hat{\tau}_k^s-\xi)}{\xi^2}\left(1-(1+b_{T,k,n})^{-\xi}\right)\right.\nonumber\\
&\left.\qquad\qquad -\frac{\hat{\tau}_k^s}{\hat{\xi}_k} \frac1{\sqrt{k}} Z_{T,k,n}\left(\frac{1}{k+1}\right)(1+b_{T,k,n})^{-\xi}\right)\nonumber\\
&\qquad\times \log(1+b_{T,k,n})^{-1} \times \left(1-\frac{(\hat{\xi}_k-\xi)}{\xi}+\frac{(\hat{\tau}_k^s-\xi)}{\xi^2}\frac{\left(1-(1+b_{T,k,n})^{-\xi}\right)}{\log(1+b_{T,k,n})}\right.\nonumber\\
&\left.\qquad\qquad +\frac{\hat{\tau}_k^s}{\hat{\xi}_k} \frac1{\sqrt{k}} Z_{T,k,n}\left(\frac{1}{k+1}\right)\frac{(1+b_{T,k,n})^{-\xi}}{\log(1+b_{T,k,n})}\right)\nonumber\\
&\qquad/ \left[\frac{b_{T,k,n}}{1+b_{T,k,n}} \left(1-\frac{(\hat{\xi}_k-\xi)}{\xi}\frac{\log(1+b_{T,k,n})}{b_{T,k,n}}+ \frac{(\hat{\tau}_k^s-\xi)}{\xi^2}\frac{\left(1-(1+b_{T,k,n})^{-\xi}\right)}{b_{T,k,n}}\right.\right.\nonumber\\
&\left.\left.\qquad\qquad +\frac{\hat{\tau}_k^s}{\hat{\xi}_k} \frac1{\sqrt{k}} Z_{T,k,n}\left(\frac{1}{k+1}\right)\frac{(1+b_{T,k,n})^{-\xi}}{b_{T,k,n}}\right)\right](1+o_p(1))\nonumber\\
&\quad \sim \left[\hat{\xi}_k \frac{\log(1+b_{T,k,n})}{b_{T,k,n}}+(\hat{\xi}_k-\xi) \frac{\log(1+b_{T,k,n})}{b_{T,k,n}}\left(-1+\frac{1+b_{T,k,n}}{b_{T,k,n}}\log(1+b_{T,k,n})\right)\right.\nonumber\\
&\qquad\left. -\frac{(\hat{\tau}_k^s-\xi)}{\xi}\left(1-(1+b_{T,k,n})^{-\xi}\right)\left(\frac{1+b_{T,k,n}}{b_{T,k,n}}-\frac1{\log(1+b_{T,k,n})}\right)\frac{\log(1+b_{T,k,n})}{b_{T,k,n}}\right.\nonumber\\
&\qquad\left. -\hat{\tau}_k^s \frac1{\sqrt{k}} Z_{T,k,n}\left(\frac{1}{k+1}\right)(1+b_{T,k,n})^{-\xi}\left(\frac{1+b_{T,k,n}}{b_{T,k,n}}-\frac1{\log(1+b_{T,k,n})}\right)\frac{\log(1+b_{T,k,n})}{b_{T,k,n}}\right].
\end{align}
Combining \eqref{eq:lik_xitau1}, \eqref{eq:firstterm_lik_xitau1} and \eqref{eq:secondterm_lik_xitau1} gives
\begin{align*}
&\left[\xi \left(1-\frac{\log(1+b_{T,k,n})}{b_{T,k,n}}\right) +\frac1{\xi}(\hat{\tau}_k^s-\xi)\left(1-\frac{1+b_{T,k,n}}{b_{T,k,n}(1+\xi)}\left(1-(1+b_{T,k,n})^{-1-\xi}\right)\right)\right.\nonumber\\
&\left.\quad+\hat{\tau}_k^s \frac1{\sqrt{k}} \int_0^1Z_{T,k,n}\left(u\right)\left(\frac{1+ub_{T,k,n}}{1+b_{T,k,n}}\right)^{\xi}\,du\right](1+o_p(1))\\
&\quad+\left[\hat{\xi}_k \frac{\log(1+b_{T,k,n})}{b_{T,k,n}}+(\hat{\xi}_k-\xi) \frac{\log(1+b_{T,k,n})}{b_{T,k,n}}\left(-1+\frac{1+b_{T,k,n}}{b_{T,k,n}}\log(1+b_{T,k,n})\right)\right.\nonumber\\
&\qquad\left. -\frac{(\hat{\tau}_k^s-\xi)}{\xi}\left(1-(1+b_{T,k,n})^{-\xi}\right)\left(\frac{1+b_{T,k,n}}{b_{T,k,n}}-\frac1{\log(1+b_{T,k,n})}\right)\frac{\log(1+b_{T,k,n})}{b_{T,k,n}}\right.\nonumber\\
&\qquad\left. - \hat{\tau}_k^s \frac1{\sqrt{k}} Z_{T,k,n}\left(\frac{1}{k+1}\right)(1+b_{T,k,n})^{-\xi}\left(\frac{1+b_{T,k,n}}{b_{T,k,n}}-\frac1{\log(1+b_{T,k,n})}\right)\frac{\log(1+b_{T,k,n})}{b_{T,k,n}}\right]\\
&=\hat{\xi}_k (1+o_p(1)).
\end{align*}
This equation can be written as
\begin{align}\label{eq:final1}
&\left[(\hat{\xi}_k-\xi) \left(-1+\frac{(1+b_{T,k,n})\log^2(1+b_{T,k,n})}{b_{T,k,n}^{2}}\right)\right.\nonumber\\
& \left.+\frac1{\xi}(\hat{\tau}_k^s-\xi)\left(\frac{\xi}{1+\xi} \frac{1+b_{T,k,n}}{b_{T,k,n}}\left(1-(1+b_{T,k,n})^{-1-\xi}\right)\right.\right.\nonumber\\
&\left.\left.\qquad-\frac{(1+b_{T,k,n})}{b_{T,k,n}^2}\log(1+b_{T,k,n})\left(1-(1+b_{T,k,n})^{-\xi}\right)\right)\right.\nonumber\\
&\left. +  \frac{\hat{\tau}_k^s}{\sqrt{k}} \int_0^1Z_{T,k,n}\left(u\right)\left(\frac{1+ub_{T,k,n}}{1+b_{T,k,n}}\right)^{\xi}\,du\right.\nonumber\\
&\qquad\left. -\frac{\hat{\tau}_k^s}{\sqrt{k}} Z_{T,k,n}\left(\frac{1}{k+1}\right)(1+b_{T,k,n})^{-\xi}\left(\frac{(1+b_{T,k,n})\log(1+b_{T,k,n})}{b_{T,k,n}^2}-\frac1{b_{T,k,n}}\right)\right](1+o_p(1))\nonumber\\
& =0.
\end{align}
The left hand side of \eqref{eq:lik_xitau2} yields, using similar asymptotic methods as above,
\begin{align}\label{eq:firstterm_lik_xitau2}
&\frac1{k-1}\sum_{j=2}^k \left(\frac{1+\frac{j}{k+1}b_{T,k,n}}{1+b_{T,k,n}}\right)^{\xi}\left[1-\frac{(\hat{\tau}_k^s-\xi)}{\xi}\left(1-\left(\frac{1+\frac{j}{k+1}b_{T,k,n}}{1+b_{T,k,n}}\right)^{\xi}\right)\right.\nonumber\\
&\quad\left.-\frac{\hat{\tau}_k^s}{\sqrt{k}} Z_{T,k,n}\left(\frac{j}{k+1}\right)\left(1-\left(\frac{1+\frac{j}{k+1}b_{T,k,n}}{1+b_{T,k,n}}\right)^{\xi}\right)\right](1+o_p(1))\nonumber\\
&=\left[\frac1{1+\xi} \frac{1+b_{T,k,n}}{b_{T,k,n}}\left(1-(1+b_{T,k,n})^{-1-\xi}\right)\right.\nonumber\\
&\qquad\left.-\frac{(\hat{\tau}_k^s-\xi)}{\xi}\left(\frac{\xi(1+b_{T,k,n})}{(1+\xi)(1+2\xi)}-\frac{(1+b_{T,k,n})^{-\xi}}{1+\xi}+\frac{(1+b_{T,k,n})^{-2\xi}}{1+2\xi}\right)\right.\nonumber\\
&\qquad\left. -\frac{\hat{\tau}_k^s}{\sqrt{k}} \int_0^1 Z_{T,k,n}\left(u\right)\left(\frac{1+ub_{T,k,n}}{1+b_{T,k,n}}\right)^{2\xi}\, du\right](1+o_p(1)).
\end{align}

The right hand side of \eqref{eq:lik_xitau2} is asymptotically equivalent to (where we used again Proposition~\ref{thm:prop_trMLE}\ref{itm:propb} with $j=1$)
\newpage
\begin{align}\label{eq:secondterm_lik_xitau2}
&\frac1{1+\hat{\xi}_k}\Bigg[1-\frac1{1+b_{T,k,n}}\left(1+(\hat{\xi}_k-\xi)\frac1{\xi}\log(1+b_{T,k,n})-\frac1{\xi^2}(\hat{\tau}_k^s-\xi)\left(1-(1+b_{T,k,n})^{-\xi}\right)\right.\nonumber\\
&\qquad\qquad\left. -\hat{\tau}_k^s \frac1{\hat{\xi}_k\sqrt{k}}Z_{T,k,n}\left(\frac1{k+1}\right)(1+b_{T,k,n})^{-\xi}\right)\nonumber\\
&\qquad\times \left((1+b_{T,k,n})^{\xi}+\frac1{\xi}(\hat{\tau}_k^s-\xi)\left((1+b_{T,k,n})^{\xi}-1\right)+\hat{\tau}_k^s \frac1{\sqrt{k}}Z_{T,k,n}\left(\frac1{k+1}\right)\right)^{-1} \Bigg]\nonumber\\
&\quad\times \left[1-\frac1{1+b_{T,k,n}}\left(1+(\hat{\xi}_k-\xi)\frac1{\xi}\log(1+b_{T,k,n})-\frac1{\xi^2}(\hat{\tau}_k^s-\xi)\left(1-(1+b_{T,k,n})^{-\xi}\right)\right.\right.\nonumber\\
&\qquad\left.\left. -\hat{\tau}_k^s \frac1{\hat{\xi}_k\sqrt{k}}Z_{T,k,n}\left(\frac1{k+1}\right)(1+b_{T,k,n})^{-\xi}\right)\right]^{-1}\nonumber\\
&\sim \frac1{1+\hat{\xi}_k} \frac{(1+b_{T,k,n})\left(1-(1+b_{T,k,n})^{-1-\xi}\right)}{b_{T,k,n}}\left(1-(\hat{\xi}_k-\xi)\frac1{\xi}\log(1+b_{T,k,n})\frac{(1+b_{T,k,n})^{-1-\xi}}{1-(1+b_{T,k,n})^{-1-\xi}}\right.\nonumber\\
&\qquad\qquad\left. +(\hat{\tau}_k^s-\xi)\frac{1+\xi}{\xi^2}\frac{(1+b_{T,k,n})^{-1-\xi}}{1-(1+b_{T,k,n})^{-1-\xi}}\left(1-(1+b_{T,k,n})^{-\xi}\right)\right.\nonumber\\
&\qquad\qquad\left. +\frac{\hat{\tau}_k^s}{\hat{\xi}_k} \frac1{\sqrt{k}}Z_{T,k,n}\left(\frac1{k+1}\right) \frac{1+\hat{\xi}_k}{\hat{\xi}_k}\frac{(1+b_{T,k,n})^{-1-2\xi}}{1-(1+b_{T,k,n})^{-1-\xi}}\right)\nonumber\\
&\quad\times \Bigg[1-(\hat{\xi}_k-\xi)\frac1{\xi}\frac{\log(1+b_{T,k,n})}{b_{T,k,n}}+(\hat{\tau}_k^s-\xi)\frac{1}{\xi^2}\frac{1-(1+b_{T,k,n})^{-\xi}}{b_{T,k,n}}\nonumber\\
&\qquad\qquad+\frac{\hat{\tau}_k^s}{\hat{\xi}_k} \frac1{\sqrt{k}}Z_{T,k,n}\left(\frac1{k+1}\right) \frac{(1+b_{T,k,n})^{-\xi}}{b_{T,k,n}}\Bigg]^{-1}\nonumber\\
&\sim \frac1{1+\hat{\xi}_k} \frac{(1+b_{T,k,n})\left(1-(1+b_{T,k,n})^{-1-\xi}\right)}{b_{T,k,n}} + \frac{(\hat{\xi}_k-\xi)}{\xi(1+\xi)}\frac{1+b_{T,k,n}}{b_{T,k,n}^2}\log(1+b_{T,k,n})\left(1-(1+b_{T,k,n})^{-\xi}\right)\nonumber\\
&\qquad+(\hat{\tau}_k^s-\xi)\frac{1}{\xi^2(1+\xi)}\frac{1+b_{T,k,n}}{b_{T,k,n}}\left(1-(1+b_{T,k,n})^{-\xi}\right) \Bigg(-\frac{1-(1+b_{T,k,n})^{-1-\xi}}{b_{T,k,n}}\nonumber\\
& \qquad \qquad +(1+\xi)(1+b_{T,k,n})^{-1-\xi} \Bigg)\nonumber\\
&\qquad-\frac1{1+\xi}\frac{\hat{\tau}^s_k}{\hat{\xi}_k}\frac1{\sqrt{k}}Z_{T,k,n}\left(\frac1{k+1}\right) \frac{(1+b_{T,k,n})^{1-\xi}}{b_{T,k,n}^2}\left(1-(1+b_{T,k,n})^{-\xi}\right).
\end{align}
Combining \eqref{eq:lik_xitau2}, \eqref{eq:firstterm_lik_xitau2} and \eqref{eq:secondterm_lik_xitau2} leads to (after some lengthy calculations)
\begin{align}\label{eq:final2}
&(\hat{\xi}_k-\xi)\frac1{\xi} \frac{1+b_{T,k,n}}{b_{T,k,n}} \left(\frac{\xi}{1+\xi}\left(1-(1+b_{T,k,n})^{-1-\xi}\right)-\frac{\log(1+b_{T,k,n})}{b_{T,k,n}}\left(1-(1+b_{T,k,n})^{-\xi}\right)\right)\nonumber\\
&-(\hat{\tau}_k^s-\xi)\frac{1+b_{T,k,n}}{b_{T,k,n}} \frac1{\xi}\left(\frac{\xi}{1+2\xi}\left(1-(1+b_{T,k,n})^{-1-2\xi}\right)-\frac1{b_{T,k,n}}\frac1{\xi}\left(1-(1+b_{T,k,n})^{-\xi}\right)^2\right)\nonumber\\
&= \frac{\xi(\xi+1)}{\sqrt{k}} \int_0^1 Z_{T,k,n}\left(u\right)\left(\frac{1+ub_{T,k,n}}{1+b_{T,k,n}}\right)^{2\xi}\, du\nonumber\\
&\qquad-\frac1{\sqrt{k}}Z_{T,k,n}\left(\frac1{k+1}\right) \frac{(1+b_{T,k,n})^{1-\xi}}{b_{T,k,n}^2}\left(1-(1+b_{T,k,n})^{-\xi}\right).
\end{align}
\end{proof}

\begin{proof}[Proof of Theorem~\ref{thm:Q_T}]
\begin{align*}
\hat{Q}_{T,k}&(1-p)\\
&=X_{n-k,n}+\frac1{\hat{\tau}_k} \left(\left(1+\frac{k}{n\hat{D}_T}\right)^{\hat{\xi}_k}\left(1+\frac1{d_n}\frac{k}{n\hat{D}_T}\right)^{-\hat{\xi}_k}-1\right)\\
&=X_{n-k,n}+\frac1{\hat{\tau}_k} \left(\left(\frac{1-\frac1k}{(1+\hat{\tau}_kE_{1,k})^{-\frac1{\hat{\xi}_k}}-\frac1k}\right)^{\hat{\xi}_k}\left(1+\frac1{d_n}\frac{k}{n\hat{D}_T}\right)^{-\hat{\xi}_k}-1\right)\\
&=X_{n-k,n}+\frac1{\hat{\tau}_k} \left((1+\hat{\tau}_kE_{1,k})\left(\frac{1-\frac1k}{1-\frac1k(1+\hat{\tau}_kE_{1,k})^{\frac1{\hat{\xi}_k}}}\right)^{\hat{\xi}_k}\left(1+\frac1{d_n}\frac{k}{n\hat{D}_T}\right)^{-\hat{\xi}_k}-1\right)\\
&=X_{n-k,n}+\frac1{\hat{\tau}_k} \left((1+\hat{\tau}_kE_{1,k})\left(1-\frac{\hat{\xi}_k}k\left(1-(1+\hat{\tau}_kE_{1,k})^{\frac1{\hat{\xi}_k}}\right)(1+o_p(1))\right)\right.\\
&\qquad\left. \times \left(1-\frac{\hat{\xi}_k}{d_n}\frac{k}{n\hat{D}_T}(1+o_p(1))\right)-1\right)\\
&=X_{n-k,n}+\left(E_{1,k}+(1+\hat{\tau}_kE_{1,k})\left(-\frac{\hat{\xi}_k}{\hat{\tau}_kk}\left(1-(1+\hat{\tau}_kE_{1,k})^{\frac1{\hat{\xi}_k}}\right) -\frac{\hat{\xi}_k}{d_n\hat{\tau}_k}\frac{k}{n\hat{D}_T}\right)(1+o_p(1))\right)\\
&=X_{n,n}-\frac{\hat{\xi}_k}{\hat{\tau}_k}(1+\hat{\tau}_kE_{1,k})\left(\frac1k\left(1-(1+\hat{\tau}_kE_{1,k})^{\frac1{\hat{\xi}_k}}\right) +\frac1{d_n}\frac{k}{n\hat{D}_T}\right)(1+o_p(1)).
\end{align*}
Hence, 
\begin{align}\label{eq:Q_T1}
& \hspace{-0.2cm}
\hat{Q}_{T,k}(1-p)-Q_{T}(1-p) \nonumber \\
&=\left(X_{n,n}-Q_{T}\left(1-\frac1n\right)\right)+\left(Q_{T}\left(1-\frac1n\right)-Q_{T}(1-p)\right)\nonumber\\
&\quad -\frac{\hat{\xi}_k}{\hat{\tau}_k}(1+\hat{\tau}_kE_{1,k})\left(\frac1k\left(1-(1+\hat{\tau}_kE_{1,k})^{\frac1{\hat{\xi}_k}}\right) +\frac1{d_n}\frac{k}{n\hat{D}_T}\right)\left(1+o_p\left(\frac1{d_n}\right)\right).
\end{align}
First, using again the notation $Y_{1,n}\leq Y_{2,n}\leq\ldots \leq Y_{n,n}$ for the order statistics of an i.i.d.~sample of size $n$ from a standard Pareto distribution, we obtain using \eqref{eq:dHF}
\begin{align}\label{eq:Q_T2}
X_{n,n}&-Q_{T}\left(1-\frac1n\right)=_dU_T(Y_{n,n})-U_T(n)\nonumber\\
&=U_Y\left(\frac1{\bar{F}_Y(T)\left(1+\frac{n}{Y_{n,n}}\frac1{nD_T}\right)}\right)-U_Y\left(\frac1{\bar{F}_Y(T)\left(1+\frac1{nD_T}\right)}\right)\nonumber\\
&=U_Y\left(\frac{1+\frac1{nD_T}}{1+\frac{n}{Y_{n,n}}\frac1{nD_T}}\frac1{\bar{F}_Y(T)\left(1+\frac1{nD_T}\right)}\right)-U_Y\left(\frac1{\bar{F}_Y(T)\left(1+\frac1{nD_T}\right)}\right)\nonumber\\
&=a_Y\left(\frac1{\bar{F}_Y(T)\left(1+\frac1kb_{T,k,n}\right)}\right)\left(\frac1{\xi}\left(\left(\frac{1+\frac{b_{T,k,n}}k}{1+\frac{n}{Y_{n,n}}\frac{b_{T,k,n}}k}\right)^{\xi}-1\right)\right.\nonumber\\
&\quad+\left.A\left(\frac1{\bar{F}_Y(T)\left(1+\frac{1}{k}b_{T,k,n}\right)}\right)\Psi_{\xi,\rho}\left(\frac{1+\frac{b_{T,k,n}}{k}}{1+\frac{n}{Y_{n,n}}\frac{b_{T,k,n}}k}\right)\right)(1+o_p(1))\nonumber\\
&=a_Y\left(\frac1{\bar{F}_Y(T)}\right)\left(1+\frac{b_{T,k,n}}k\right)^{-\xi}\left(1+A\left(\frac1{\bar{F}_Y(T)}\right)C\left(\frac{\left(1+\frac{b_{T,k,n}}k\right)^{-\rho}-1}{\rho}\right)\right)\nonumber\\
&\qquad \times \left(-b_{T,k,n}\frac1k\left(\frac{n}{Y_{n,n}}-1\right)\left(1+O_p\left(\frac1k\right)\right)+O_p\left(\frac1{k^2}\right)\right)\nonumber\\
&=a_Y\left(\frac1{\bar{F}_Y(T)}\right)\left(1-\frac{\xi b_{T,k,n}}k-A\left(\frac1{\bar{F}_Y(T)}\right)C\frac{b_{T,k,n}}k+ O_p\left(\frac1{k^2}\right)\right)\nonumber\\
&\qquad \times \left(-\frac{b_{T,k,n}}k(E-1)+O_p\left(\frac1{k^2}\right)\right).
\end{align}
Here, we used that $\frac{n}{Y_{n,n}}=_d E+O_p\left(\frac1n\right)$ and that  $\Psi_{\xi,\rho}\left(1+\frac{D}k\right)=O\left(\frac1{k^2}\right)$ for any constant $D$.
Furthermore,
\begin{align}\label{eq:Q_T3}
Q&_{T}\left(1-\frac1n\right)-Q_{T}\left(1-p\right)\nonumber\\
&=U_Y\left(\frac1{\bar{F}_Y(T)\left(1+\frac1{nD_T}\right)}\right)-U_Y\left(\frac1{\bar{F}_Y(T)\left(1+\frac{p}{D_T}\right)}\right)\nonumber\\
&=U_Y\left(\frac{1+\frac{b_{T,k,n}}{d_n}}{1+\frac{b_{T,k,n}}{k}}\frac1{\bar{F}_Y(T)\left(1+\frac{b_{T,k,n}}{d_n}\right)}\right)-U_Y\left(\frac1{\bar{F}_Y(T)\left(1+\frac{p}{D_T}\right)}\right)\nonumber\\
&=a_Y\left(\frac1{\bar{F}_Y(T)\left(1+\frac{b_{T,k,n}}{d_n}\right)}\right)\left(\frac1{\xi}\left(\left(\frac{1+\frac{b_{T,k,n}}{d_n}}{1+\frac{b_{T,k,n}}{k}}\right)^{\xi}-1\right)\right.\nonumber\\
&\quad+\left.A\left(\frac1{\bar{F}_Y(T)\left(1+\frac{b_{T,k,n}}{d_n}\right)}\right)\Psi_{\xi,\rho}\left(\frac{1+\frac{b_{T,k,n}}{d_n}}{1+\frac{b_{T,k,n}}k}\right)\right)(1+o_p(1))\nonumber\\
&=a_Y\left(\frac1{\bar{F}_Y(T)}\right)\left(1+\frac{b_{T,k,n}}{d_n}\right)^{-\xi}\left(1+A\left(\frac1{\bar{F}_Y(T)}\right)C\left(\frac{\left(1+\frac{b_{T,k,n}}{d_n}\right)^{-\rho}-1}{\rho}\right)\right)\nonumber\\
&\qquad \times \left(b_{T,k,n}\left(\frac1{d_n}-\frac1k\right)\left(1+O\left(\frac1{d_n}\right)\right)+O\left(\frac1{d_n^2}\right)\right)\nonumber\\
&=a_Y\left(\frac1{\bar{F}_Y(T)}\right)\left(b_{T,k,n}\left(\frac1{d_n}-\frac1k\right)+O\left(\frac1{d_n^2}\vee\frac1{k^2}\right)\right).
\end{align}
Finally, using 
$k/(n\hat{D}_T) = \left((1+\hat{\tau}_kE_{1,k})^{\frac1{\hat{\xi}_k}}-1\right)(1+O_p(1/k))$ and 
derivations as in the proof of Theorem~\ref{thm:trMLE}, the third term in the right hand side of \eqref{eq:Q_T1} equals
\newpage
\begin{align}\label{eq:Q_T4}
&-\left(\frac{\hat{\xi}_k}{\hat{\tau}_k}\frac1{a_{T,k,n}}\right)a_{T,k,n}(1+\hat{\tau}_kE_{1,k})\left(\frac1k\left(1-(1+\hat{\tau}_kE_{1,k})^{\frac1{\hat{\xi}_k}}\right)+\frac1{d_n}\frac{k}{n\hat{D}_T}\right) \nonumber\\
&=-a_Y\left(\frac1{\bar{F}_Y(T)}\right) (1+b_{T,k,n})^{-\xi} \left(1+A\left(\frac1{\bar{F}_Y(T)}\right)C\left(\frac{\left(1+b_{T,k,n}\right)^{-\rho}-1}{\rho}\right)\right)\nonumber\\
&\quad\times \left(1+\left(\frac{\hat{\xi}_k}{\hat{\tau}_k}\frac1{a_{T,k,n}}-1\right)\right)\nonumber\\
&\quad\times (1+b_{T,k,n})^{\xi} \left(1+(\hat{\tau}_k^s-\xi)\frac1{\xi}\left(1-(1+b_{T,k,n})^{-\xi}\right)\right.\nonumber\\
&\qquad\qquad\qquad\qquad+\left.\frac{\hat{\tau}_k^s}{\sqrt{k}}Z_{T,k,n}\left(\frac1{k+1}\right)(1+b_{T,k,n})^{-\xi}\right)\nonumber\\
&\quad\times  \left(\left(1-(1+\hat{\tau}_kE_{1,k})^{\frac1{\hat{\xi}_k}}\right)\left(\frac1{d_n}-\frac1k\right) +O_p\left(\frac{1}{d_n k}\right) \right)\nonumber\\
&=-a_Y\left(\frac1{\bar{F}_Y(T)}\right) \left(1+A\left(\frac1{\bar{F}_Y(T)}\right)C\left(\frac{\left(1+b_{T,k,n}\right)^{-\rho}-1}{\rho}\right)\right)\left(1+O_p\left(\frac1k\right)\right)\nonumber\\
&\quad\times \left(1+\left(\frac{\hat{\xi}_k}{\hat{\tau}_k}\frac1{a_{T,k,n}}-1\right)\right)\nonumber\\
&\quad\times \left(1+(\hat{\tau}_k^s-\xi)\frac1{\xi}\left(1-(1+b_{T,k,n})^{-\xi}\right)+\frac{\hat{\tau}_k^s}{\sqrt{k}}Z_{T,k,n}\left(\frac1{k+1}\right)(1+b_{T,k,n})^{-\xi}\right)\nonumber\\
&\quad \times b_{T,k,n} \left(\frac1{d_n}-\frac1k\right)\nonumber \\
&\quad\times \left(1-(\hat{\xi}_k-\xi)\frac1{\xi} \frac{1+b_{T,k,n}}{b_{T,k,n}}\log(1+b_{T,k,n})+(\hat{\tau}_k^s-\xi)\frac1{\xi^2}\frac{1+b_{T,k,n}}{b_{T,k,n}}\left(1-(1+b_{T,k,n})^{-\xi}\right)\right.\nonumber\\
&\hspace{1.3cm}\left.+\frac{1}{\sqrt{k}}Z_{T,k,n}\left(\frac1{k+1}\right)\frac{(1+b_{T,k,n})^{1-\xi}}{b_{T,k,n}}\right).
\end{align}
The result follows from joining \eqref{eq:Q_T1}, \eqref{eq:Q_T2}, \eqref{eq:Q_T3} and \eqref{eq:Q_T4} and retaining terms of order $O\left(\frac1k\right)$, $O\left(\left(\frac1{d_n}-\frac1k\right)A\left(\frac1{\bar{F}_Y(T)}\right)\right)$ and $O\left(\left(\frac1{d_n}-\frac1k\right)\frac1{\sqrt{k}}\right)$.
\end{proof}

\newpage
\begin{proof}[Proof of Theorem~\ref{thm:trTest_mle}]
Note that using \eqref{eq:UYCD} and $\bar{F}_Y(T) = D_T F_Y(T)$, we obtain
\begin{align*}
T_{k,n}&= k\left( 1+ \hat{\tau}_k^s \frac{E_{1,k}}{a_{T,k,n}}\right)^{-1/\hat{\xi}_k} \\
&= k\left( 1+ \hat{\tau}_k^s \frac{U_T(Y_{n,n}) -U_T(Y_{n-k,n})}{a_{T,k,n}} \right)^{-1/\hat{\xi}_k} \\
&= k\left( 1+ \frac{\hat{\tau}_k^s}{a_Y \left(\frac{1}{D_T(1+ b_{T,k,n})F_Y(T)} \right)} \right. \\
& \qquad \left.\times \left[
U_Y\left( \frac{Y_{n,n}}{F_Y(T)(1+Y_{n,n} D_T) }\right) -  
U_Y\left(\frac{Y_{n-k,n}}{F_Y(T)(1+Y_{n-k,n} D_T)} \right)
\right]
\right)^{-1/\hat{\xi}_k} \\
&= k\left( 1+ \frac{\hat{\tau}_k^s}{a_Y \left(\frac{n/k}{(1 +nD_T/k)F_Y(T)}\right)} \right. \\
& \left.\qquad \times 
\left[
U_Y\left( \frac{\frac{Y_{n,n}}{Y_{n-k,n}}}{\frac{1+Y_{n,n}D_T}{1+Y_{n-k,n}D_T}}
\frac{\frac{kY_{n-k,n}}{n} \frac{n}{k}}{F_Y(T)\left(1+\frac{kY_{n-k,n}}{n}\frac{nD_T}{k}\right)}\right) \right. \right.\\
& \quad\qquad \left.\left. - 
U_Y\left(  \frac{\frac{kY_{n-k,n}}{n} \frac{n}{k}}{F_Y(T)\left(1+\frac{kY_{n-k,n}}{n}\frac{nD_T}{k}\right)}\right) \right]
\right)^{-1/\hat{\xi}_k}.
\end{align*}
Now one applies \eqref{eq:thm342} with $t=\frac{\frac{kY_{n-k,n}}{n} \frac{n}{k}}{F_Y(T)\left(1+\frac{kY_{n-k,n}}{n}\frac{nD_T}{k}\right)} = \frac{n}{k}(1+o_p(1))$  and $x=\frac{Y_{n,n}}{Y_{n-k,n}}\frac{1+Y_{n-k,n}D_T}{1+Y_{n,n}D_T}= U_{1,k}^{-1} (1+o_p(1))$ since
$\frac{kY_{n-k,n}}{n} = 1 + O_p(1/\sqrt{k})$, $Y_{n,n}/n = 1+o_p(1)$, $nD_T \to 0$ and 
$Y_{n-k,n}/Y_{n,n}=_d  U_{1,k}$, the minimum of an i.i.d.~sample of size $k$ from the uniform (0,1) distribution. This, with $\hat{\tau}^s_k/\xi = 1+o_p(1)$, yields
\begin{align*}
T_{k,n} &=  k \left( 1+ \frac{\hat{\tau}_k^s}{\xi} 
 \left[ 
   U_{1,k}^{-\xi}(1+o_p(1))-1
   +\xi A\left(\frac{n}{k}(1+o_p(1))\right)\Psi_{\xi,\rho} \left(U_{1,k}^{-1}(1+o_p(1))\right)
 \right]
\right)^{-1/\hat{\xi}_k} \\
&= k\left(  U_{1,k}^{-\xi} (1+o_p(1)) +\xi A\left(\frac{n}{k}(1+o_p(1))\right)\Psi_{\xi,\rho} \left(U_{1,k}^{-1}(1+o_p(1))\right)\right)^{-(1/\xi)(1+O_p(1/\sqrt{k}))} \\
&= 
kU_{1,k}\left(1+o_p(1) + \xi U_{1,k}^{\xi} A\left(\frac{n}{k}(1+o_p(1))\right)\Psi_{\xi,\rho} (k (1+o_p(1)))\right)^{-1/\xi}\end{align*}
because $U_{1,k}^{-1} =O_p(k)$ and $U_{1,k}^{\xi}\Psi_{\xi,\rho} (k (1+o_p(1)))=O_p(1)$. The result now follows from $kU_{1,k}=_d E (1+o_p(1))$. 
\end{proof}

\section*{Appendix: simulation results}

\begin{landscape}
   \begin{figure}[!ht]
		\centering
		   \includegraphics[height=6.25cm, angle=270]{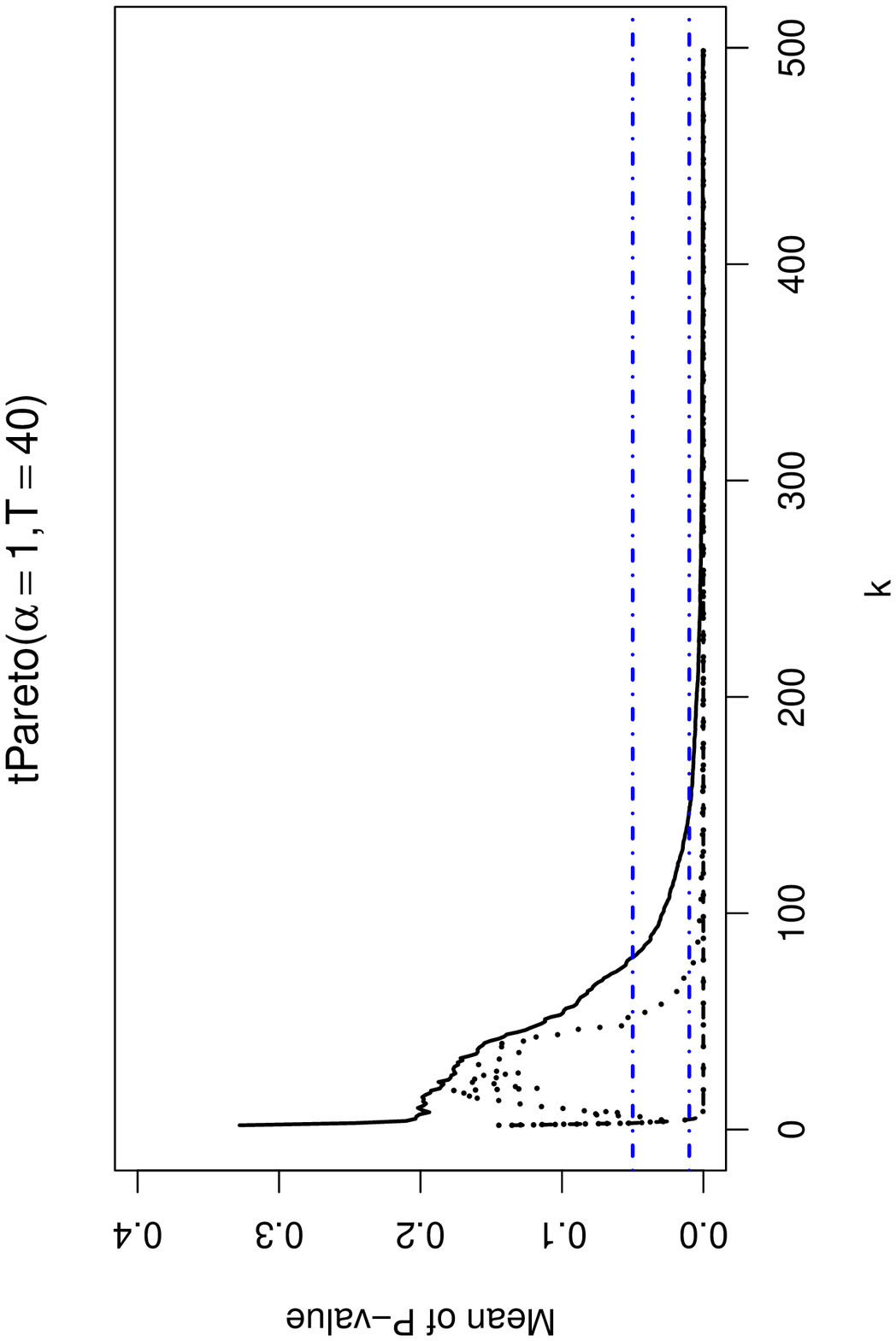}
	   \includegraphics[height=6.25cm, angle=270]{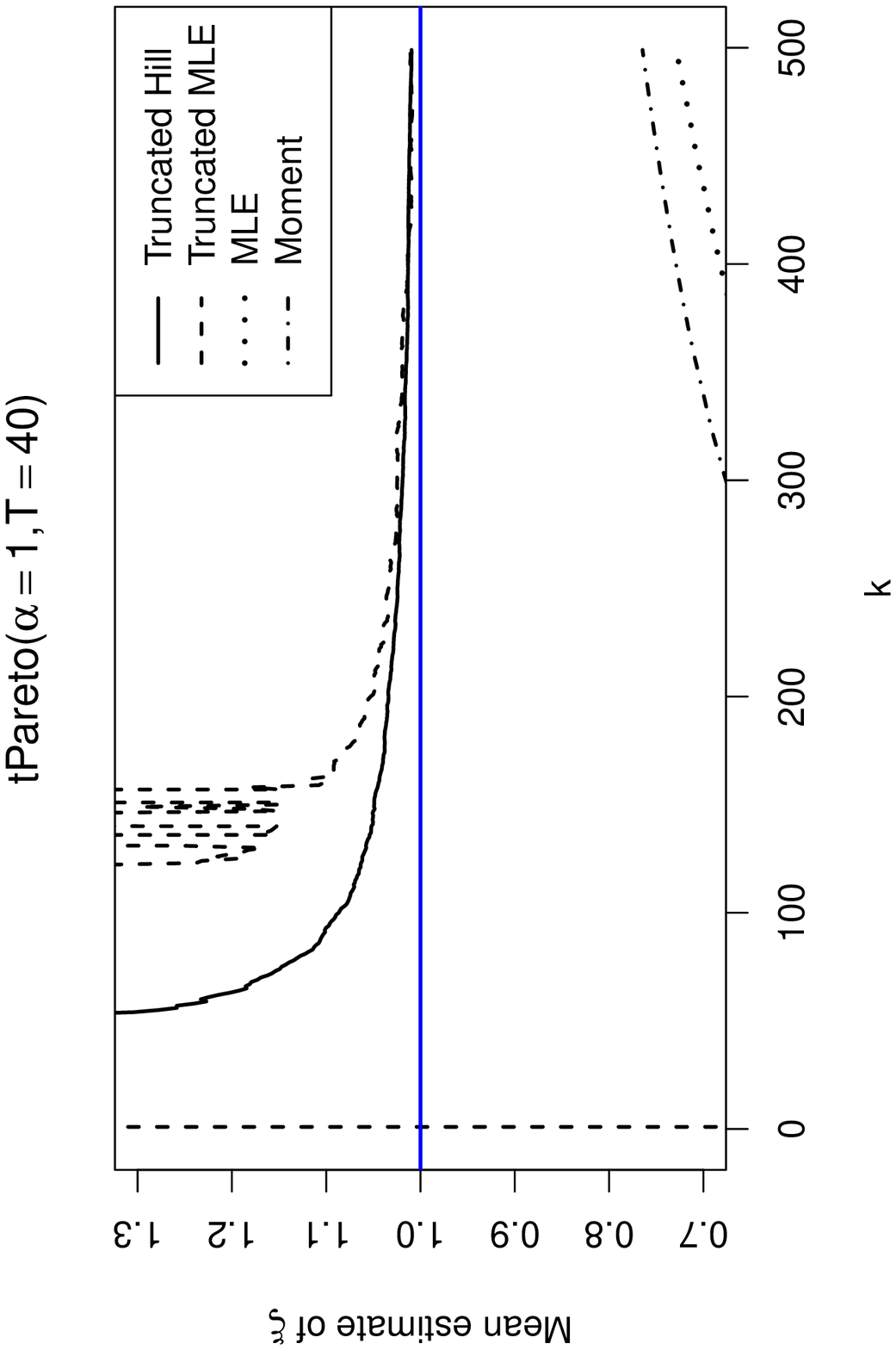}
  \includegraphics[height=6.25cm, angle=270]{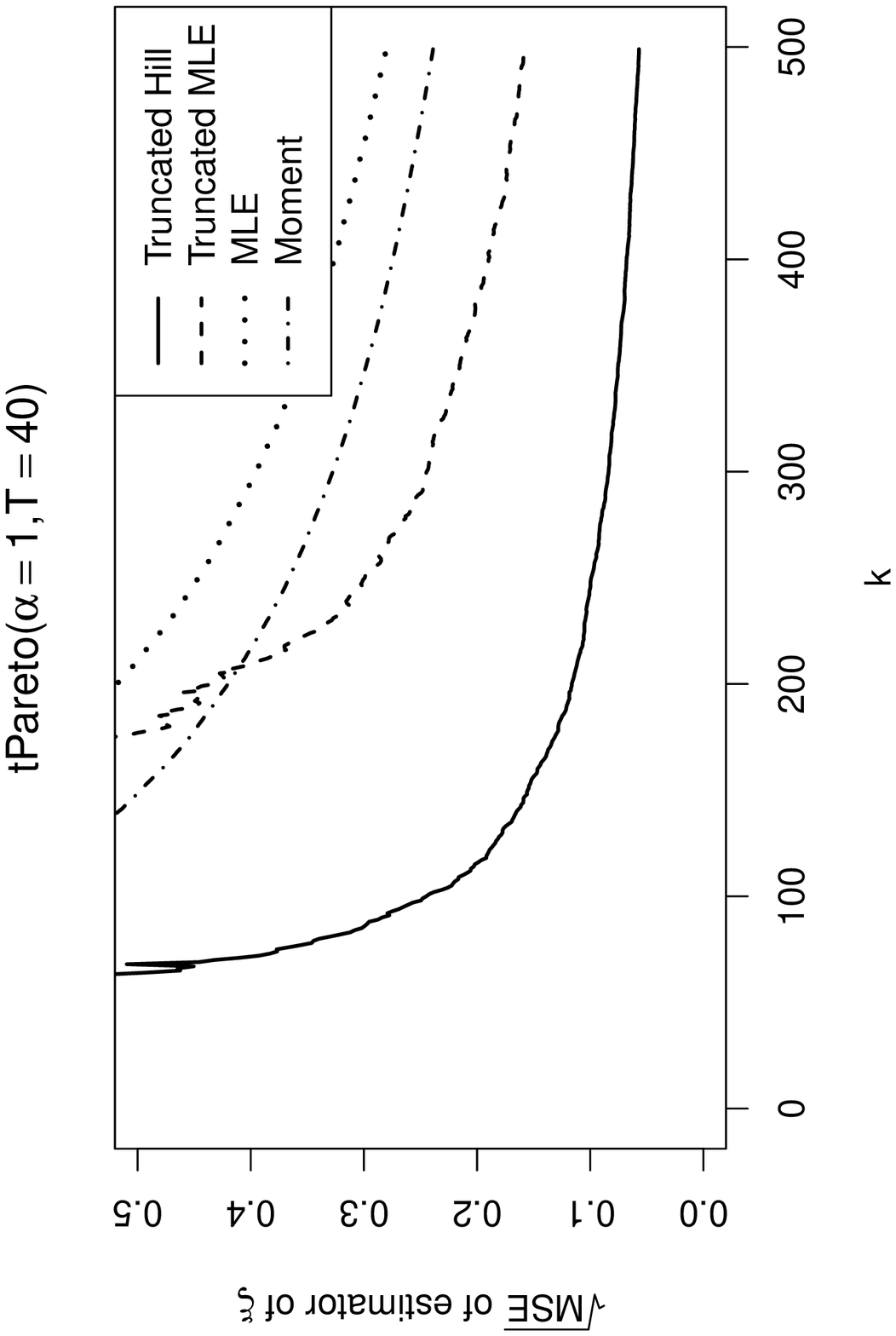}\\   
	\includegraphics[height=6.25cm, angle=270]{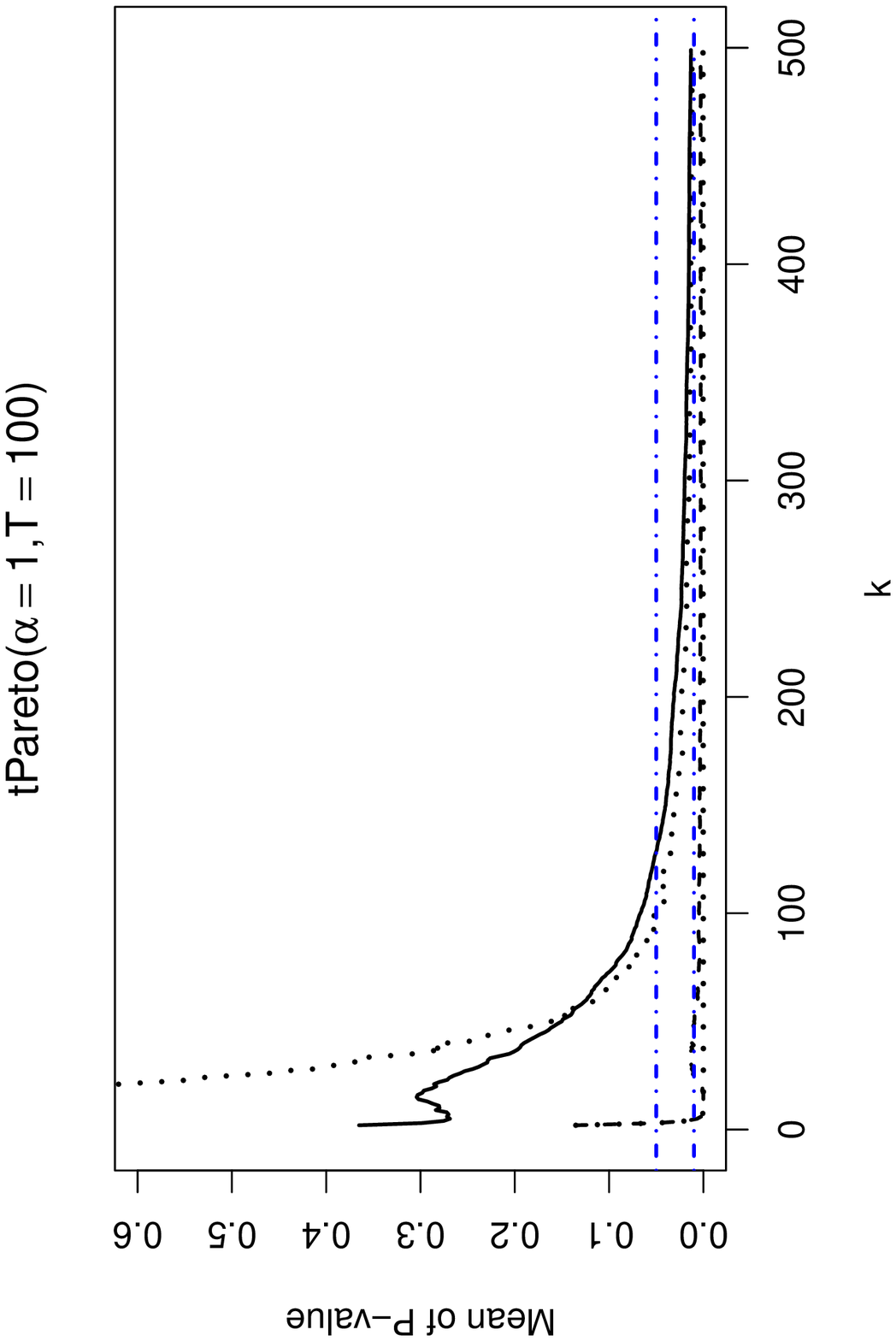}
   \includegraphics[height=6.25cm, angle=270]{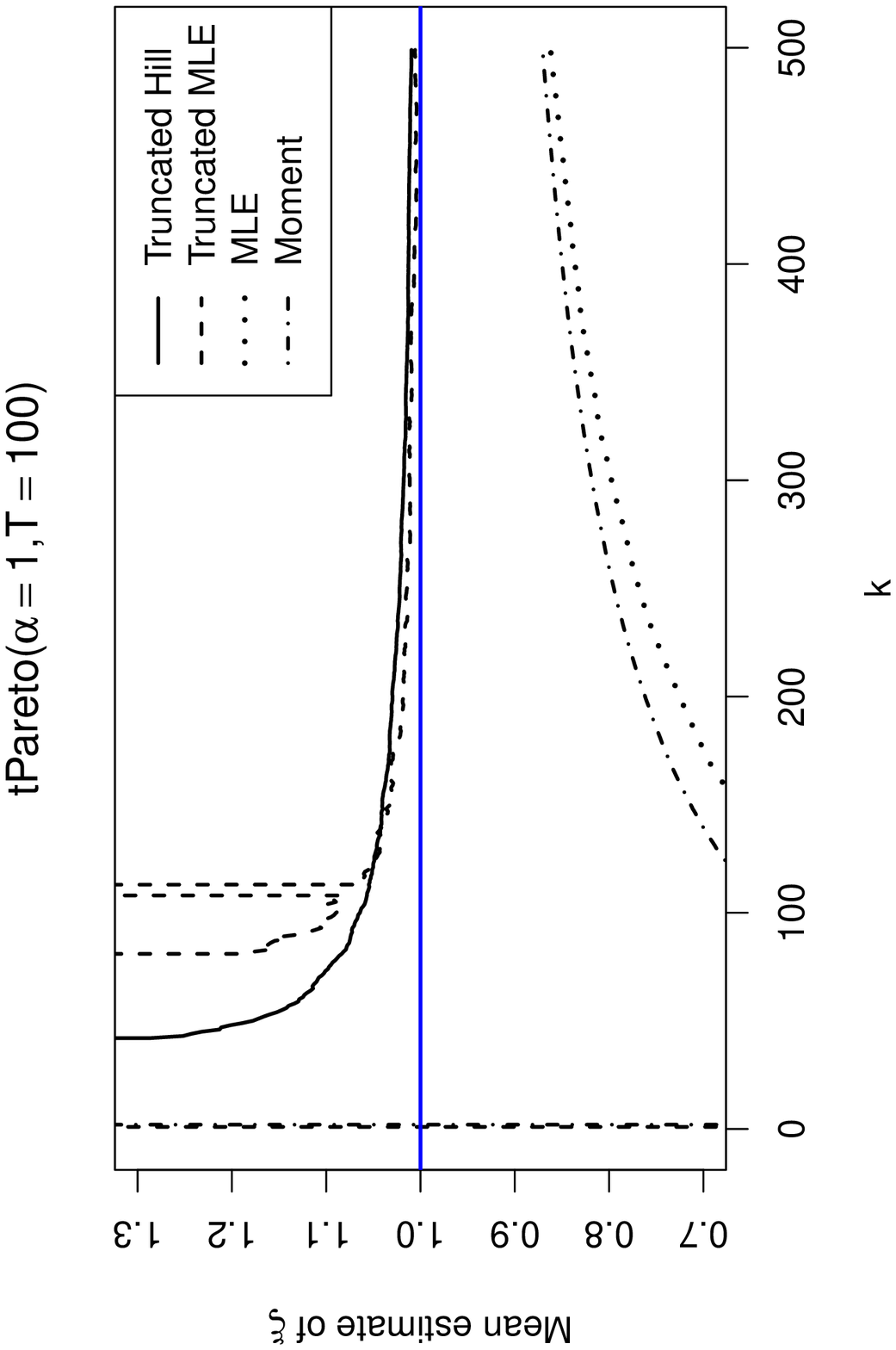}
  \includegraphics[height=6.25cm, angle=270]{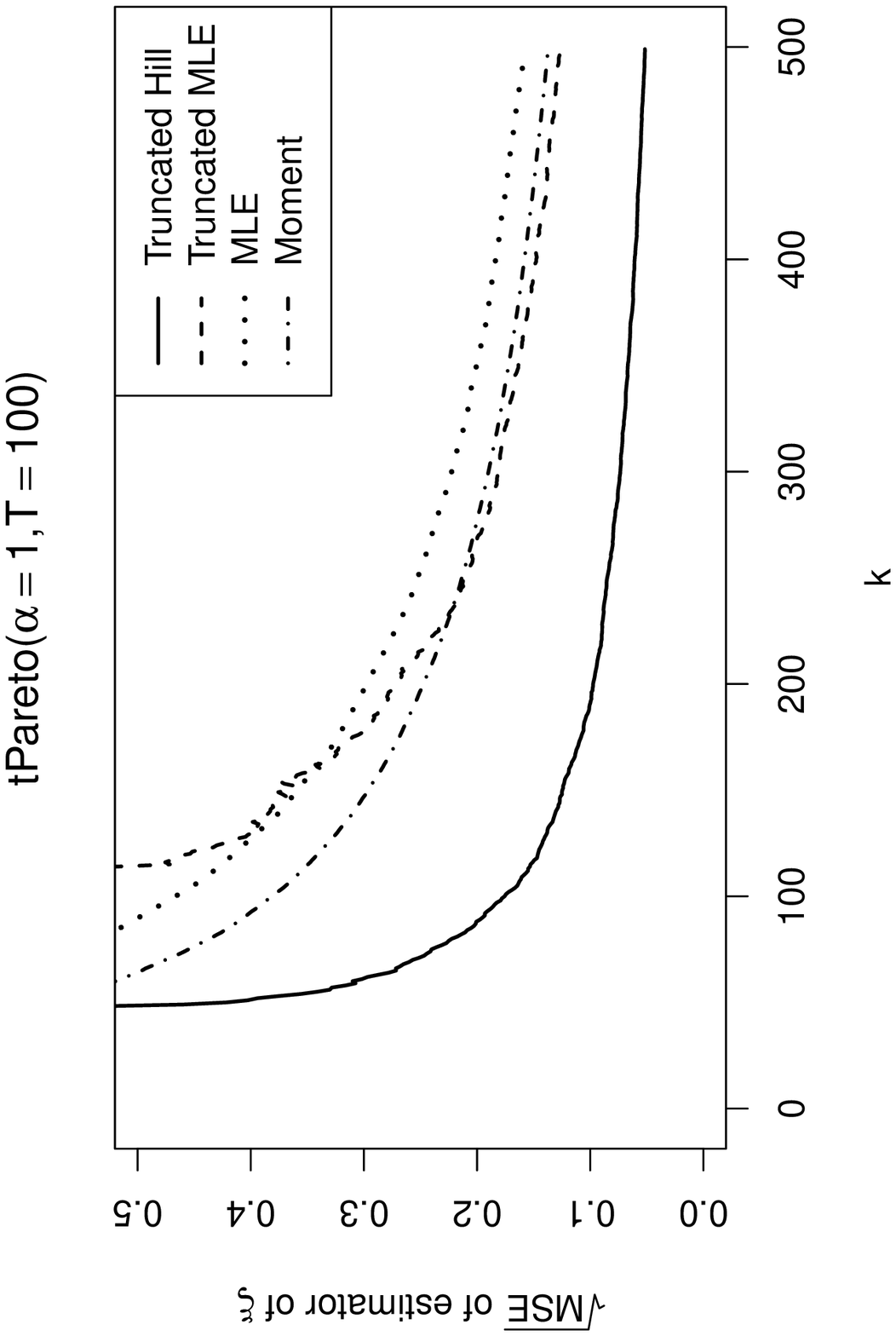}\\
	  \includegraphics[height=6.25cm, angle=270]{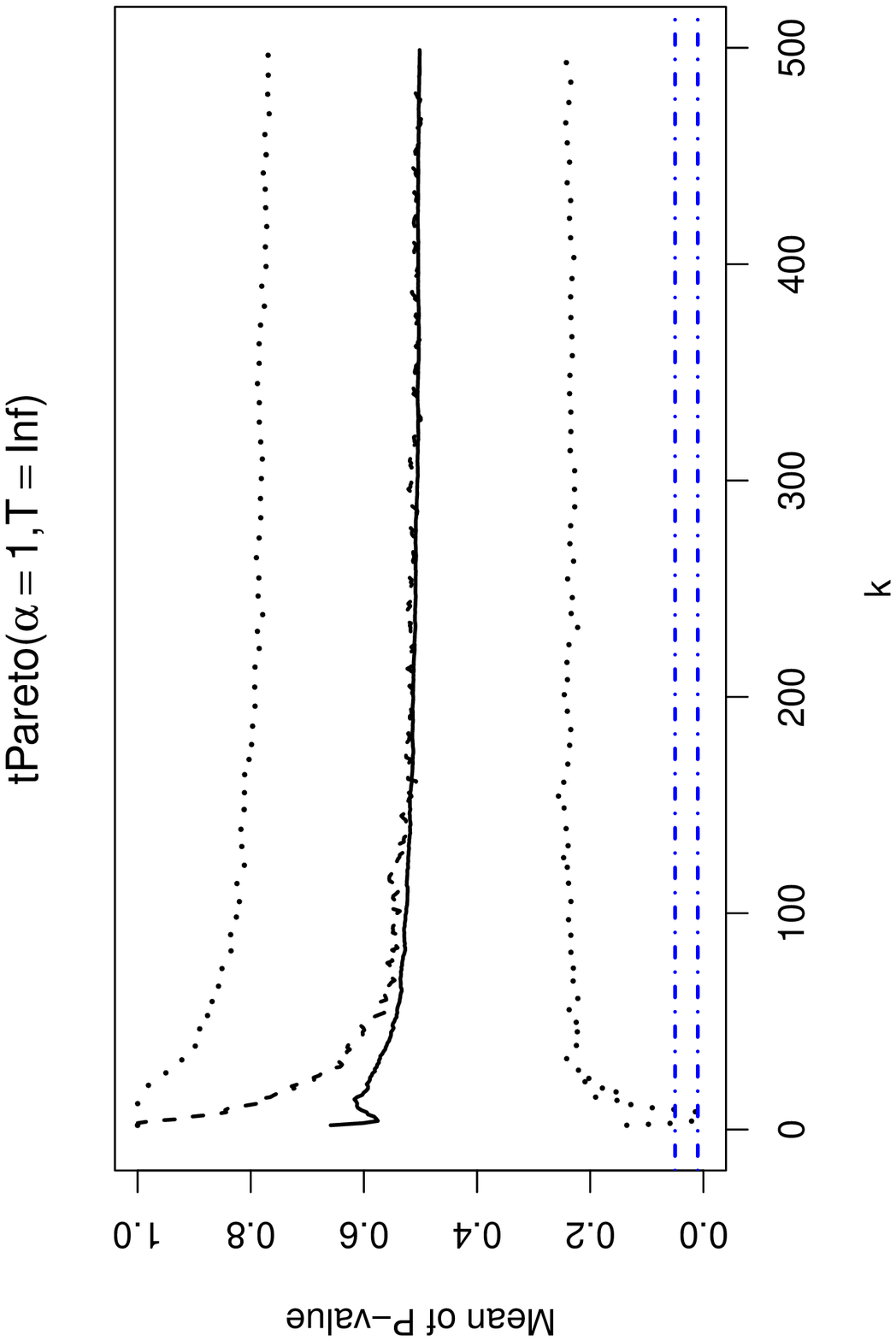}
  \includegraphics[height=6.25cm, angle=270]{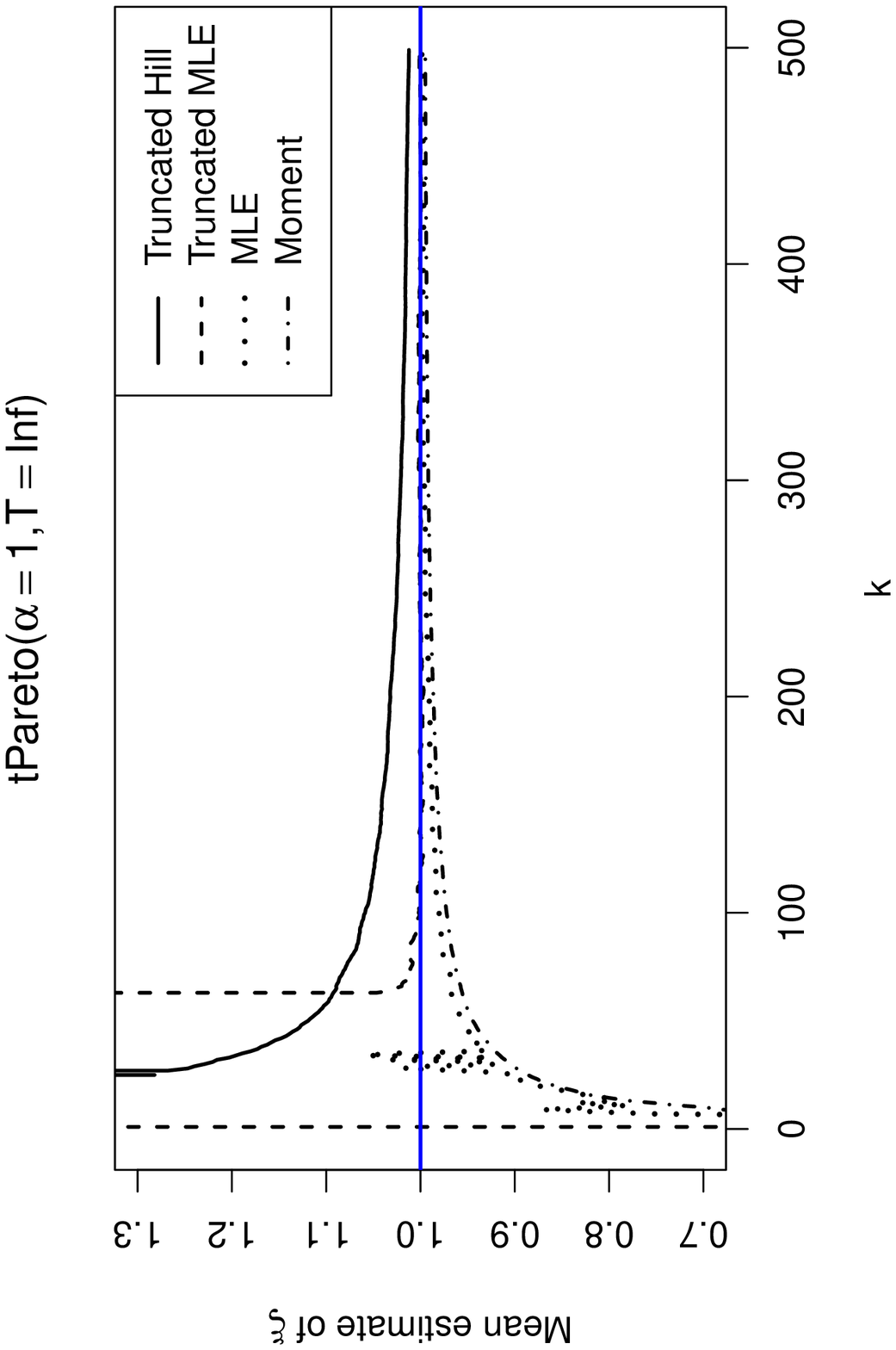}
  \includegraphics[height=6.25cm, angle=270]{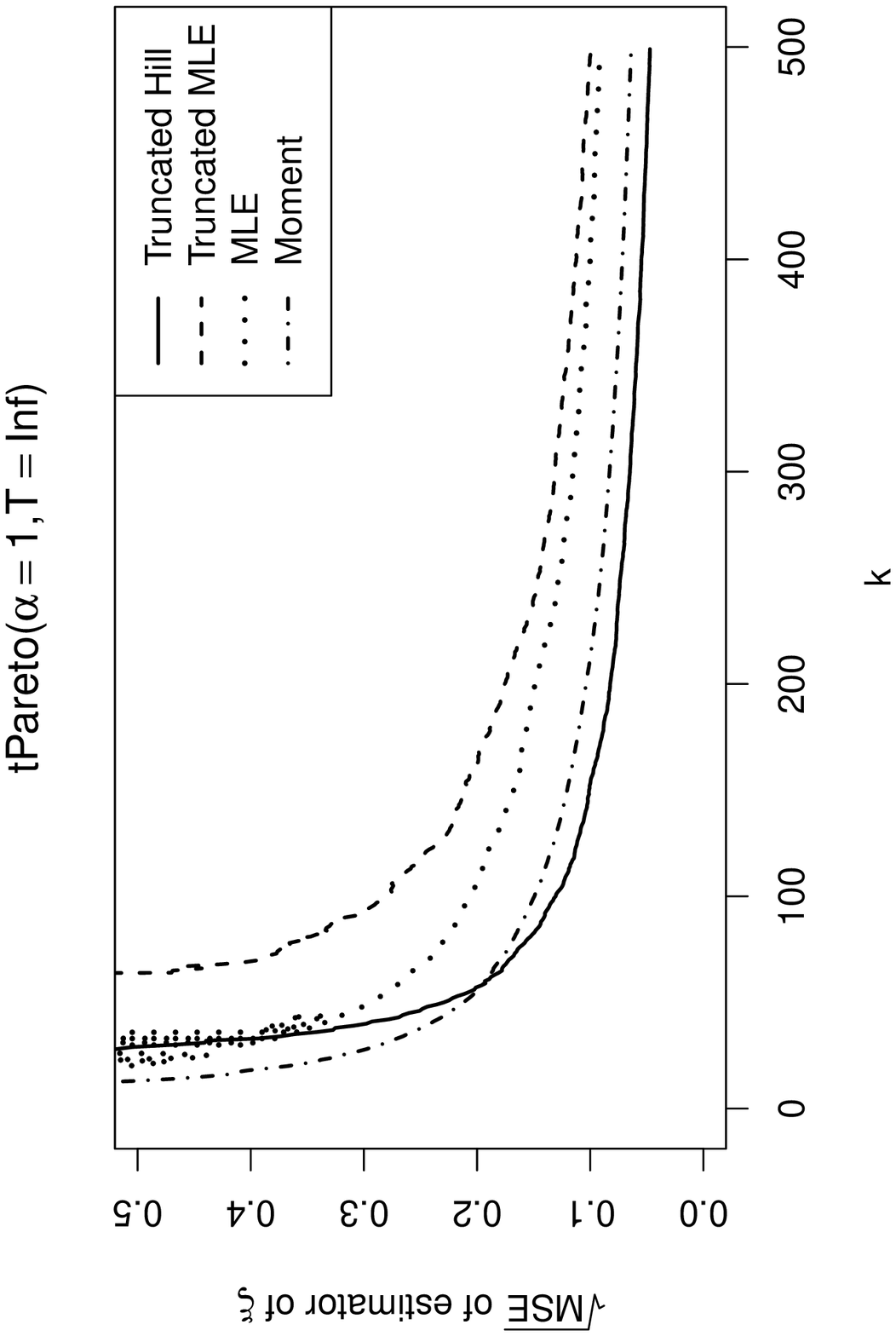}
  \caption{Means and boxplots of P-values for test (left), means (middle) and root MSE (right) of $\hat{\xi}^+_k$, $\hat{\xi}_k$, $\hat{\xi}^{\infty}_k$ and $\hat{\xi}^M_k$  from the standard Pareto distribution truncated at $Q_Y (0.975)$ (top), $Q_Y (0.99)$ (middle) and non truncated (bottom).}\label{fig:sim_xi_first}
  \end{figure}
	\end{landscape}

\begin{landscape}
   \begin{figure}[!ht]
		\centering
		   \includegraphics[height=6.25cm, angle=270]{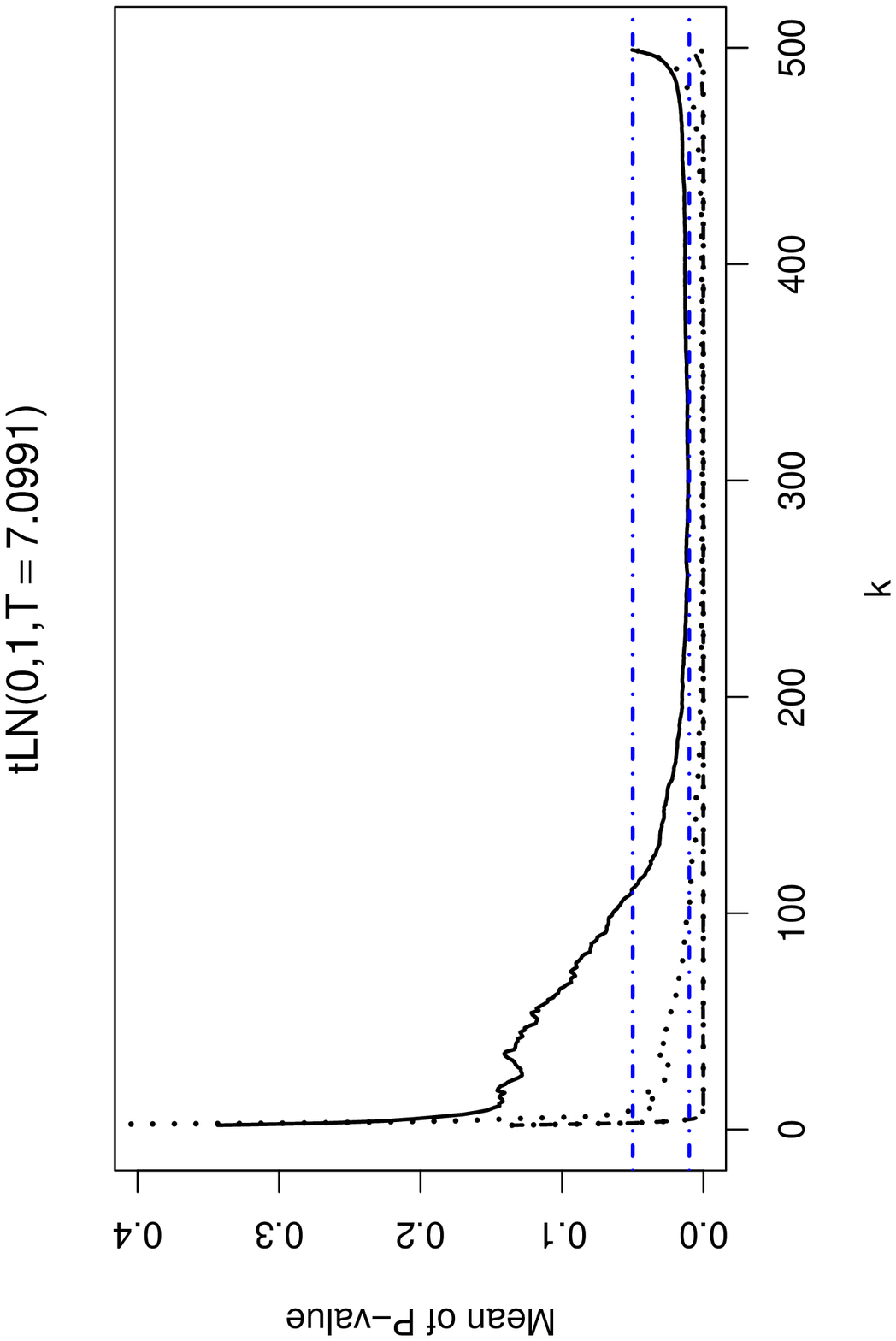}
	   \includegraphics[height=6.25cm, angle=270]{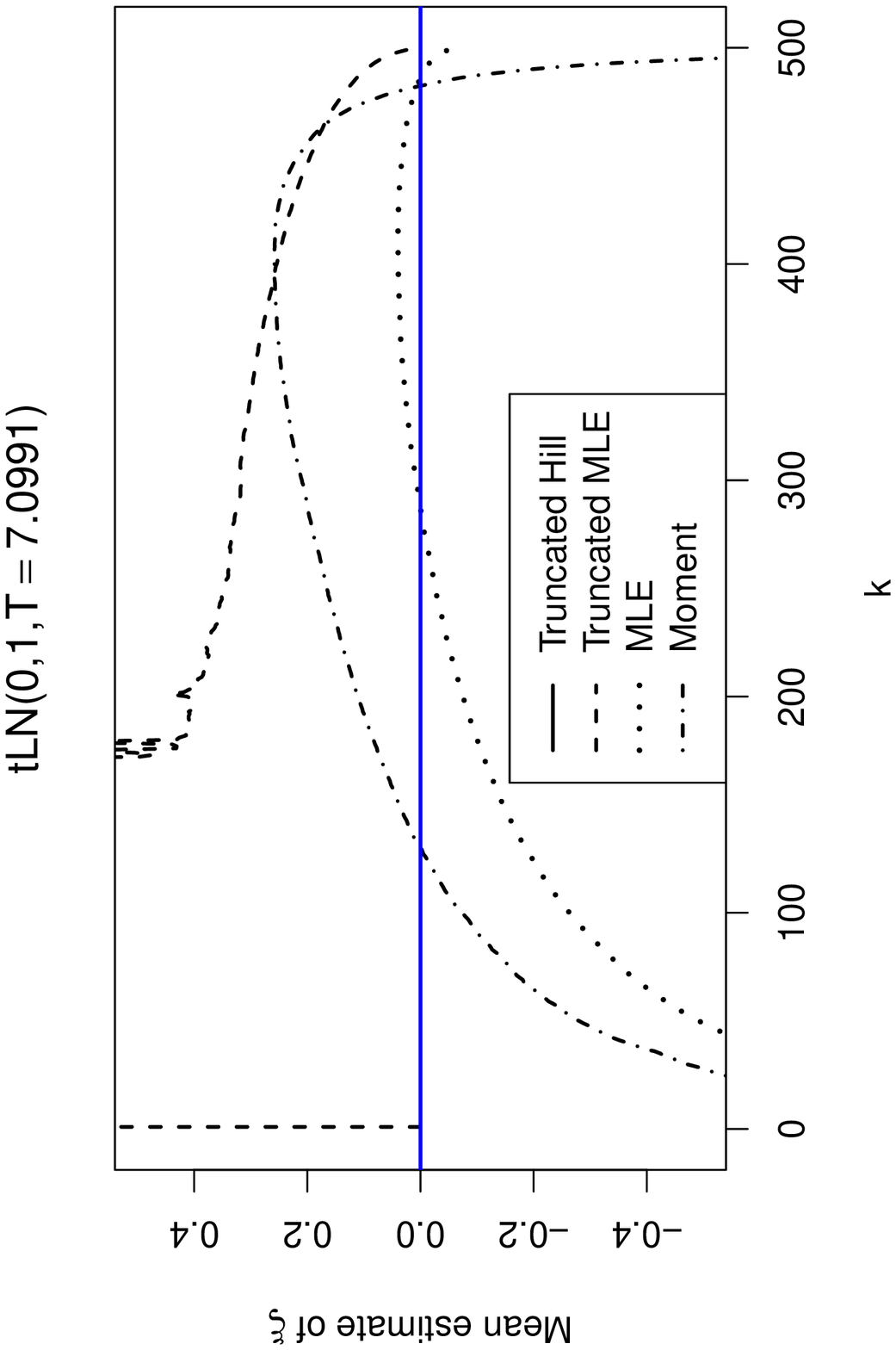}
  \includegraphics[height=6.25cm, angle=270]{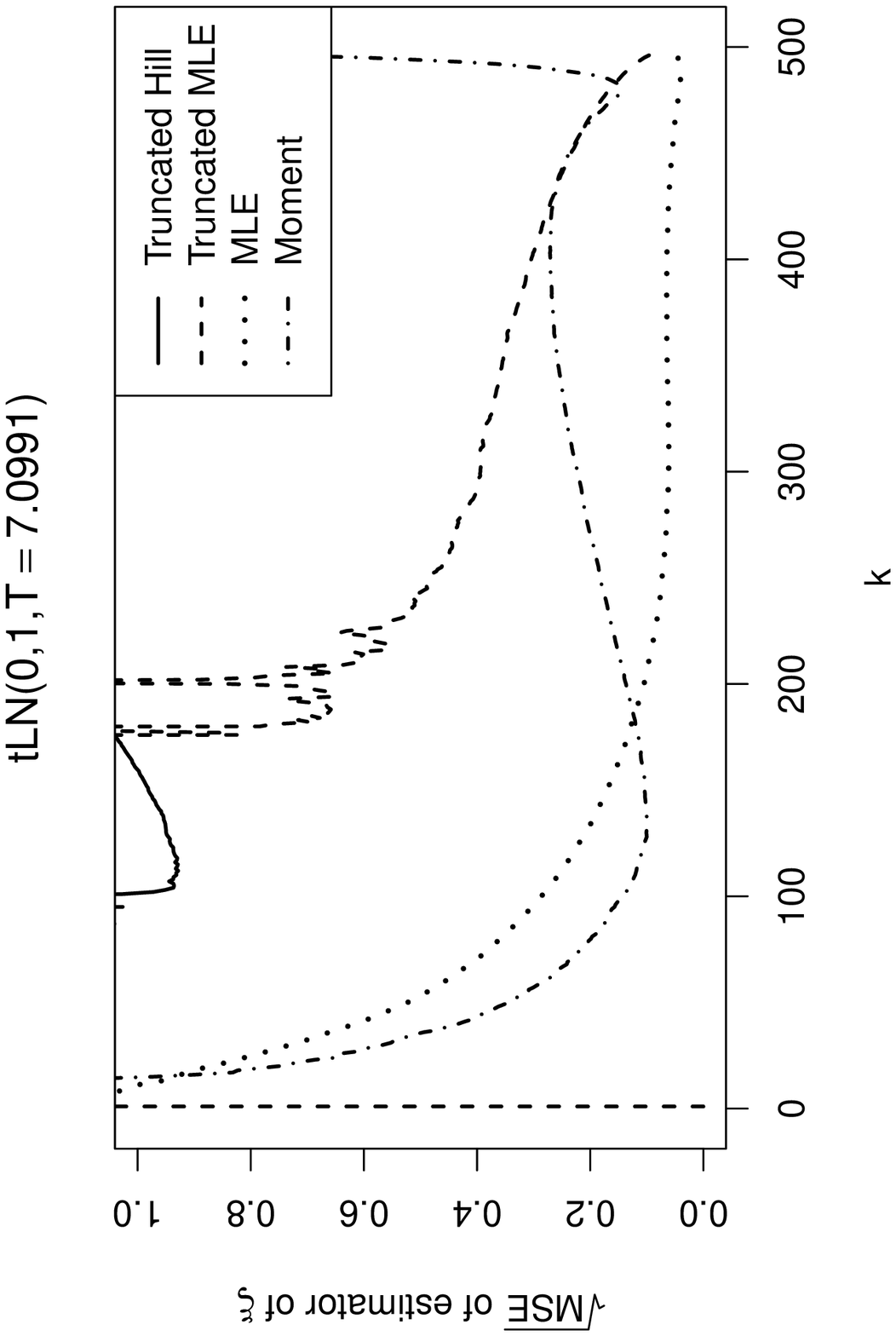}\\   
	\includegraphics[height=6.25cm, angle=270]{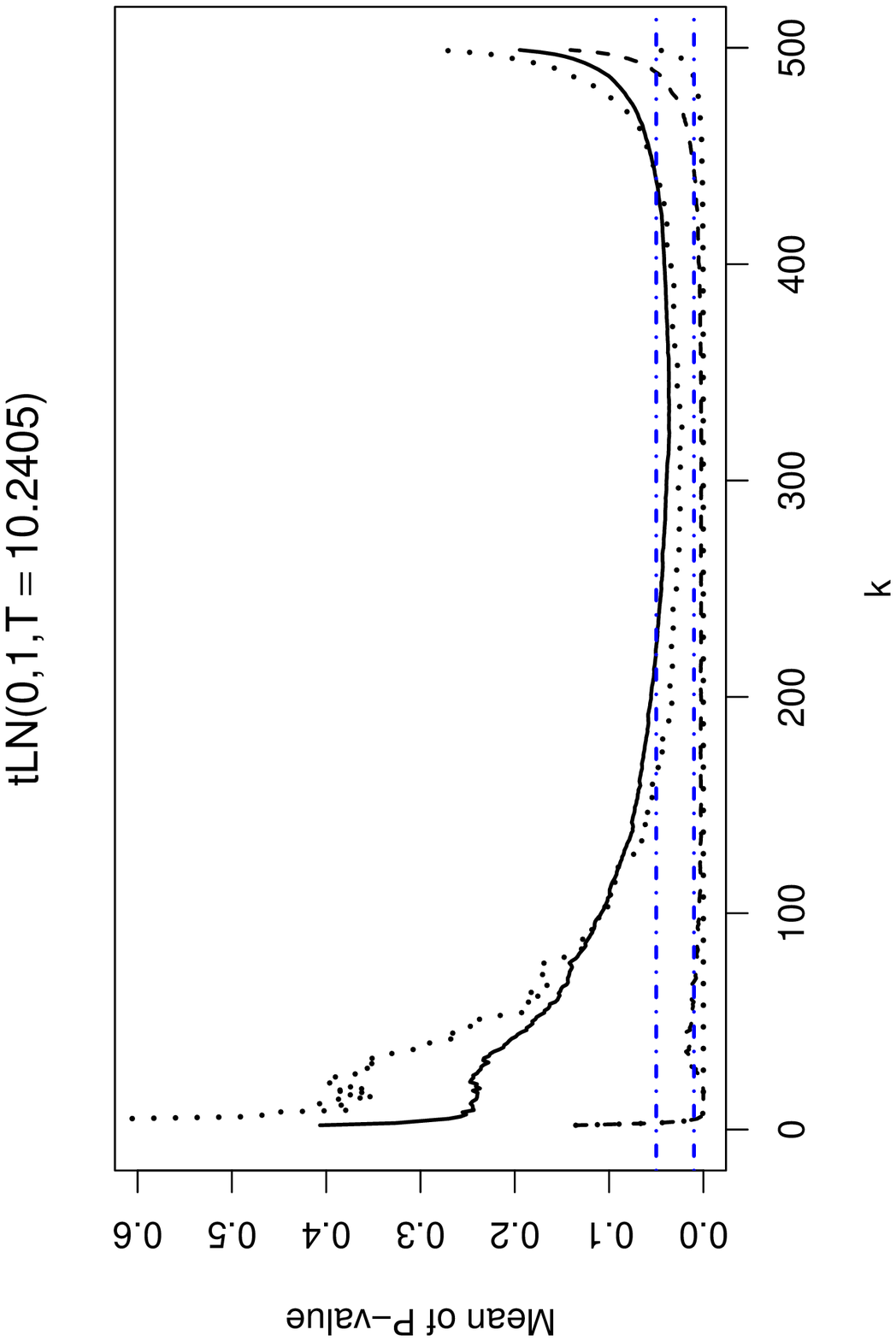}
   \includegraphics[height=6.25cm, angle=270]{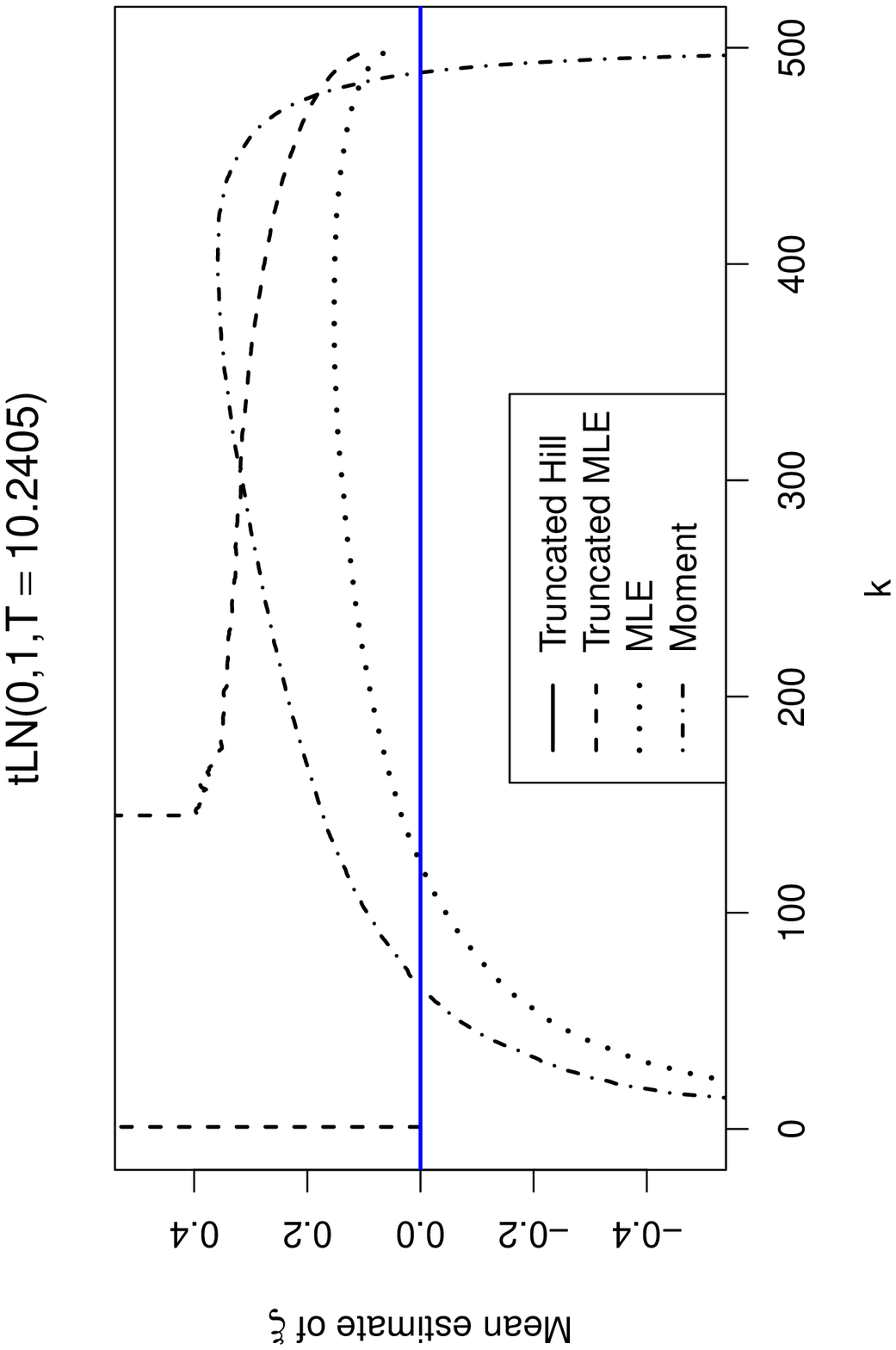}
  \includegraphics[height=6.25cm, angle=270]{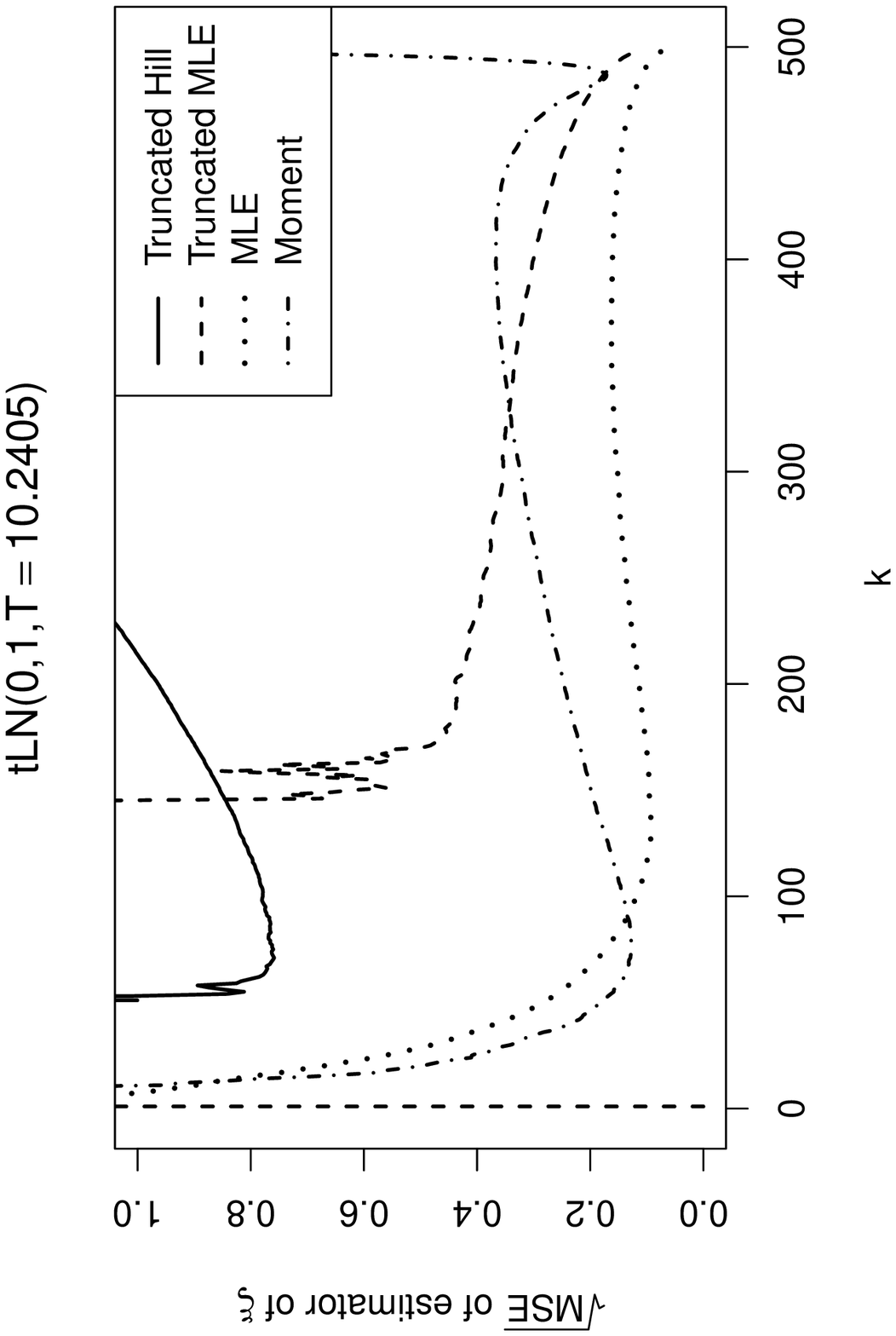}\\
	  \includegraphics[height=6.25cm, angle=270]{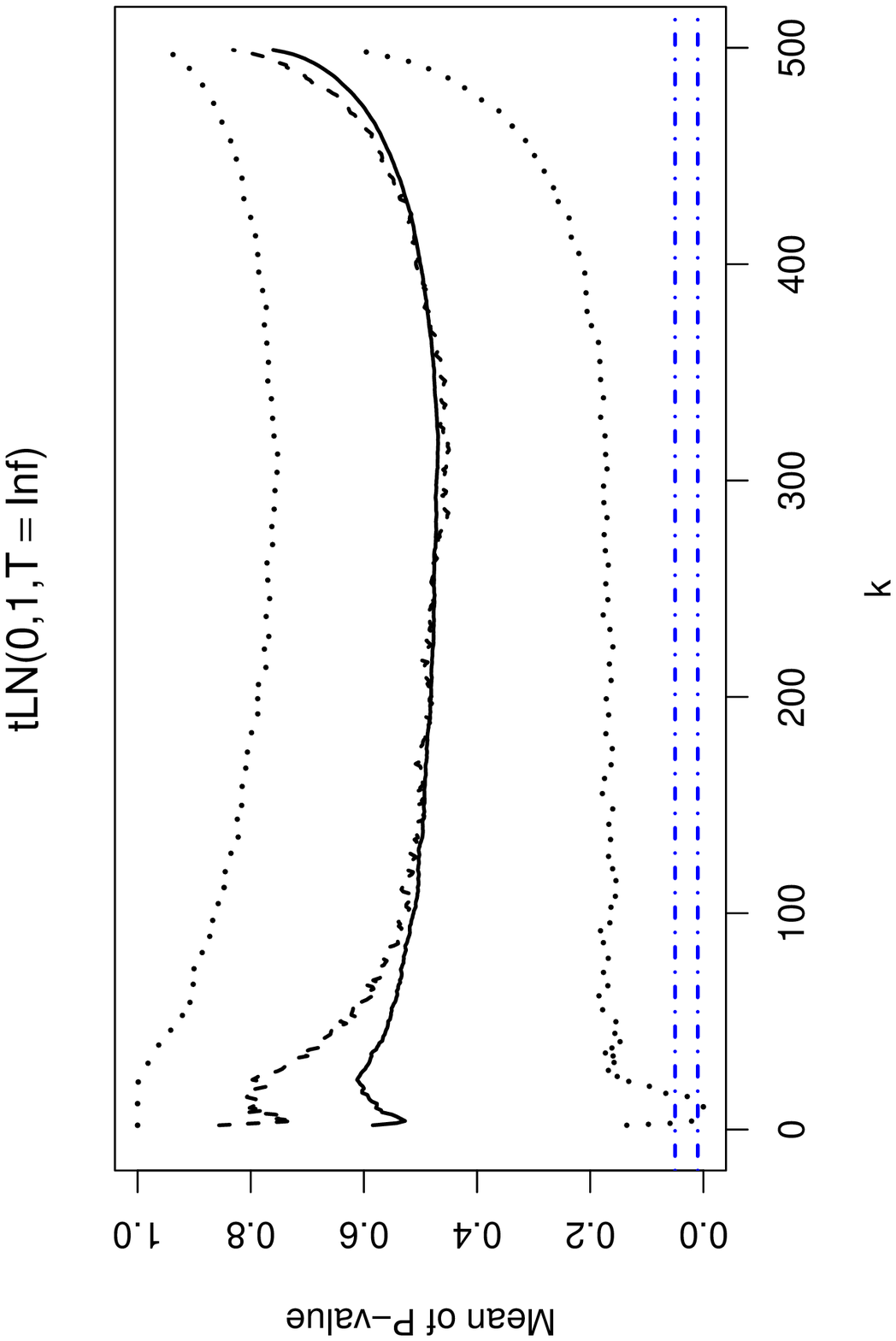}
  \includegraphics[height=6.25cm, angle=270]{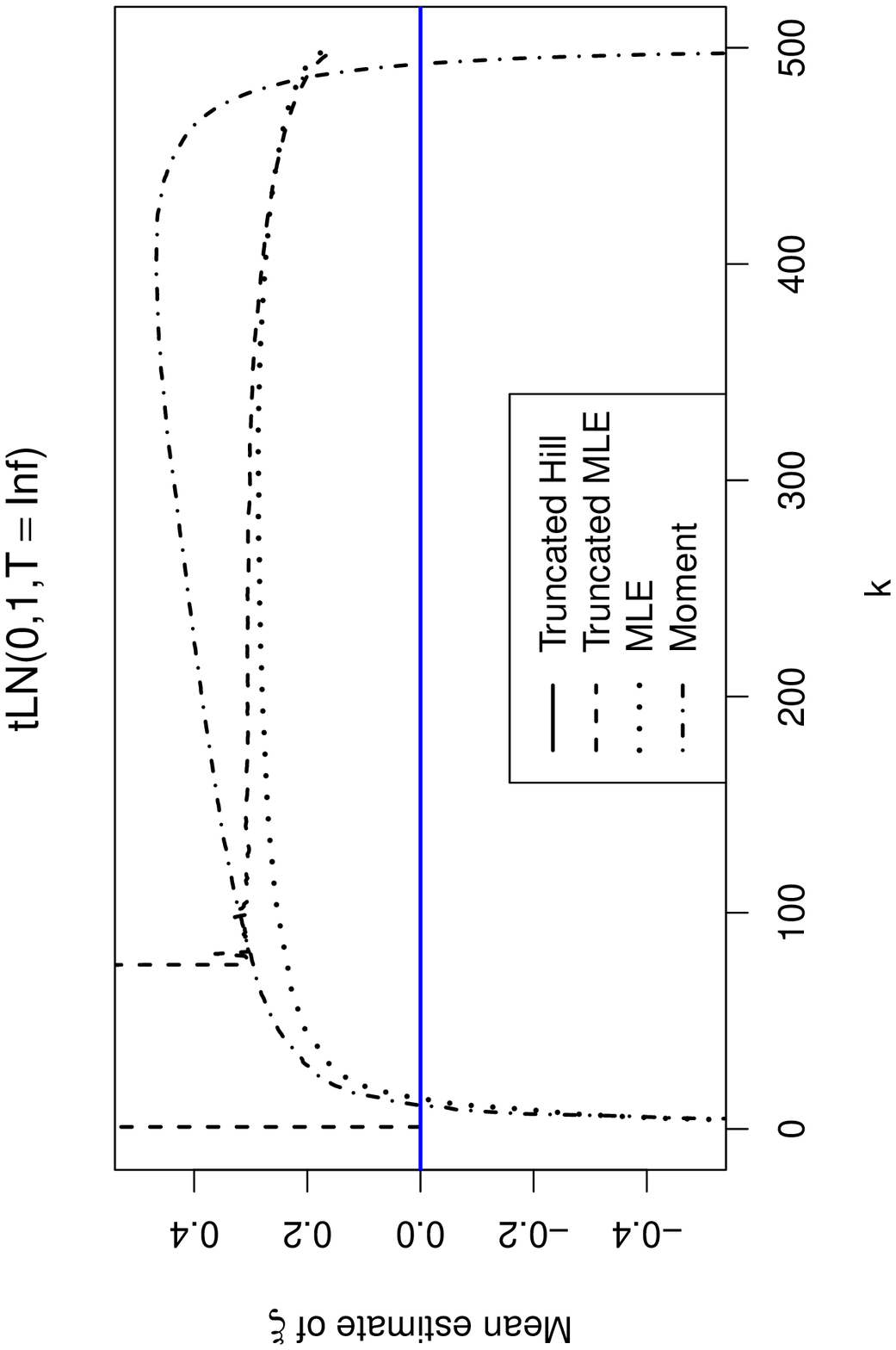}
  \includegraphics[height=6.25cm, angle=270]{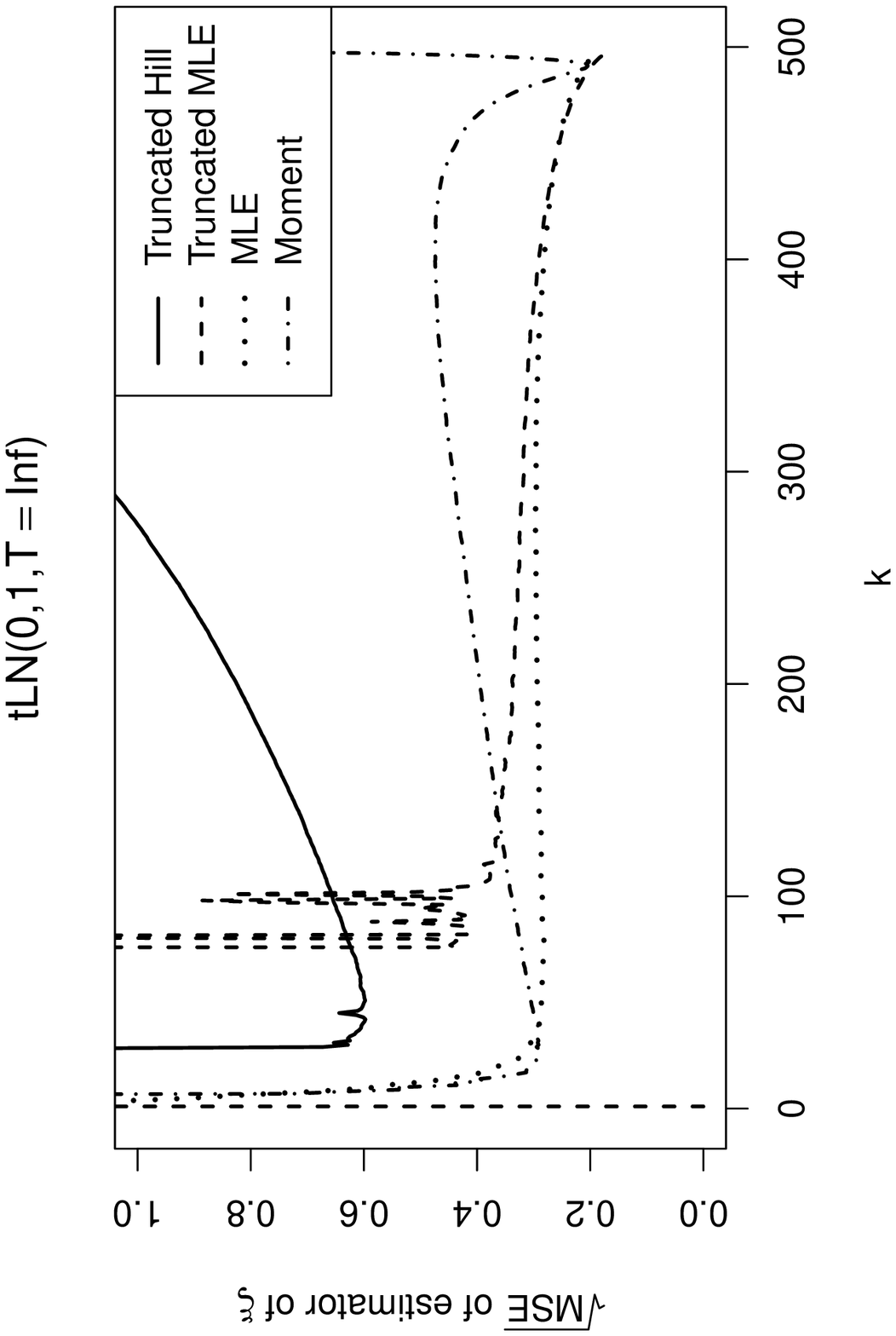}
  \caption{Means and boxplots of P-values for test (left), means (middle) and root MSE (right) of $\hat{\xi}^+_k$, $\hat{\xi}_k$, $\hat{\xi}^{\infty}_k$ and $\hat{\xi}^M_k$  from the standard lognormal distribution truncated at $Q_Y (0.975)$ (top), $Q_Y (0.99)$ (middle) and non truncated (bottom).}
  \end{figure}
	\end{landscape}

\begin{landscape}
   \begin{figure}[!ht]
		\centering
		   \includegraphics[height=6.25cm, angle=270]{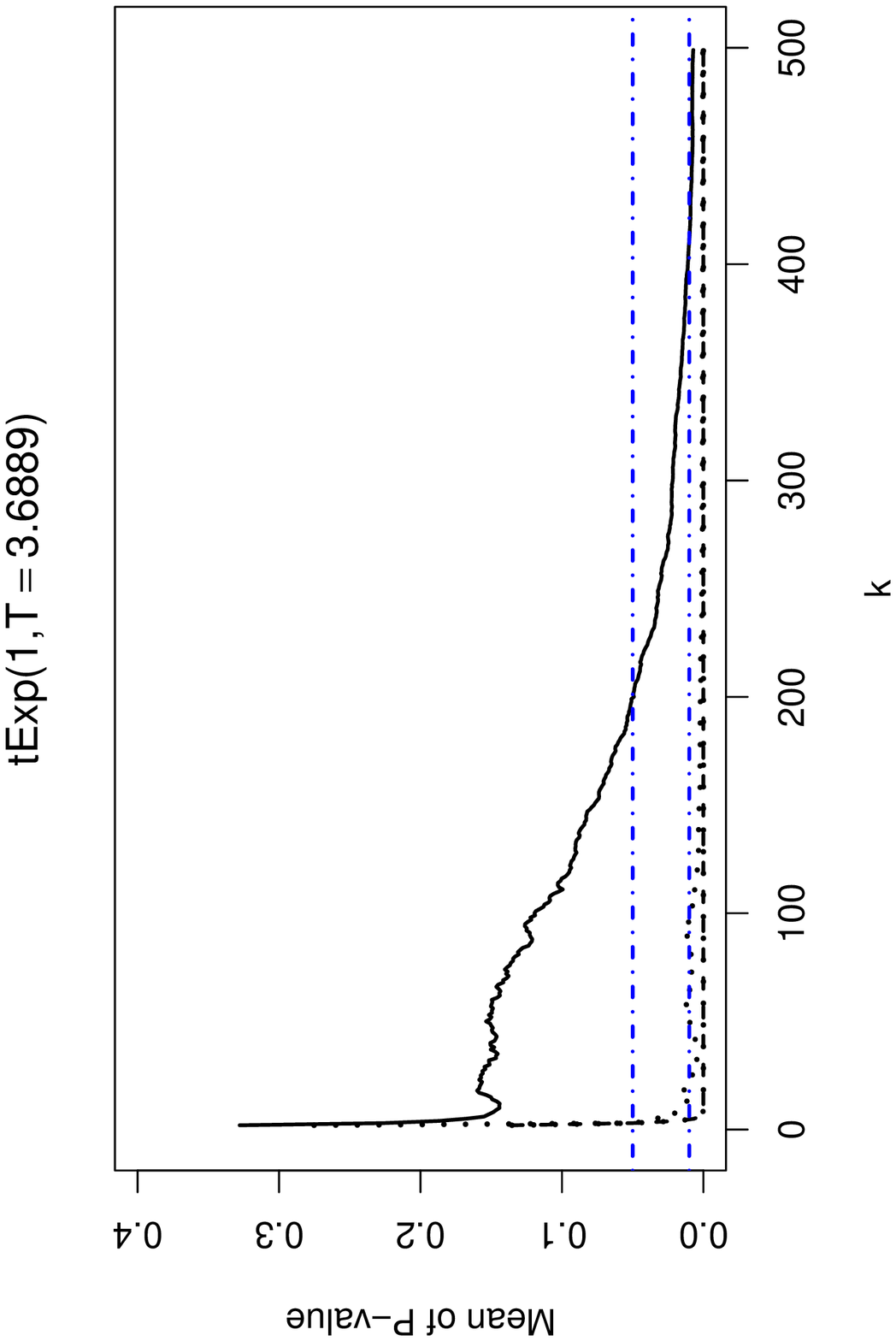}
	   \includegraphics[height=6.25cm, angle=270]{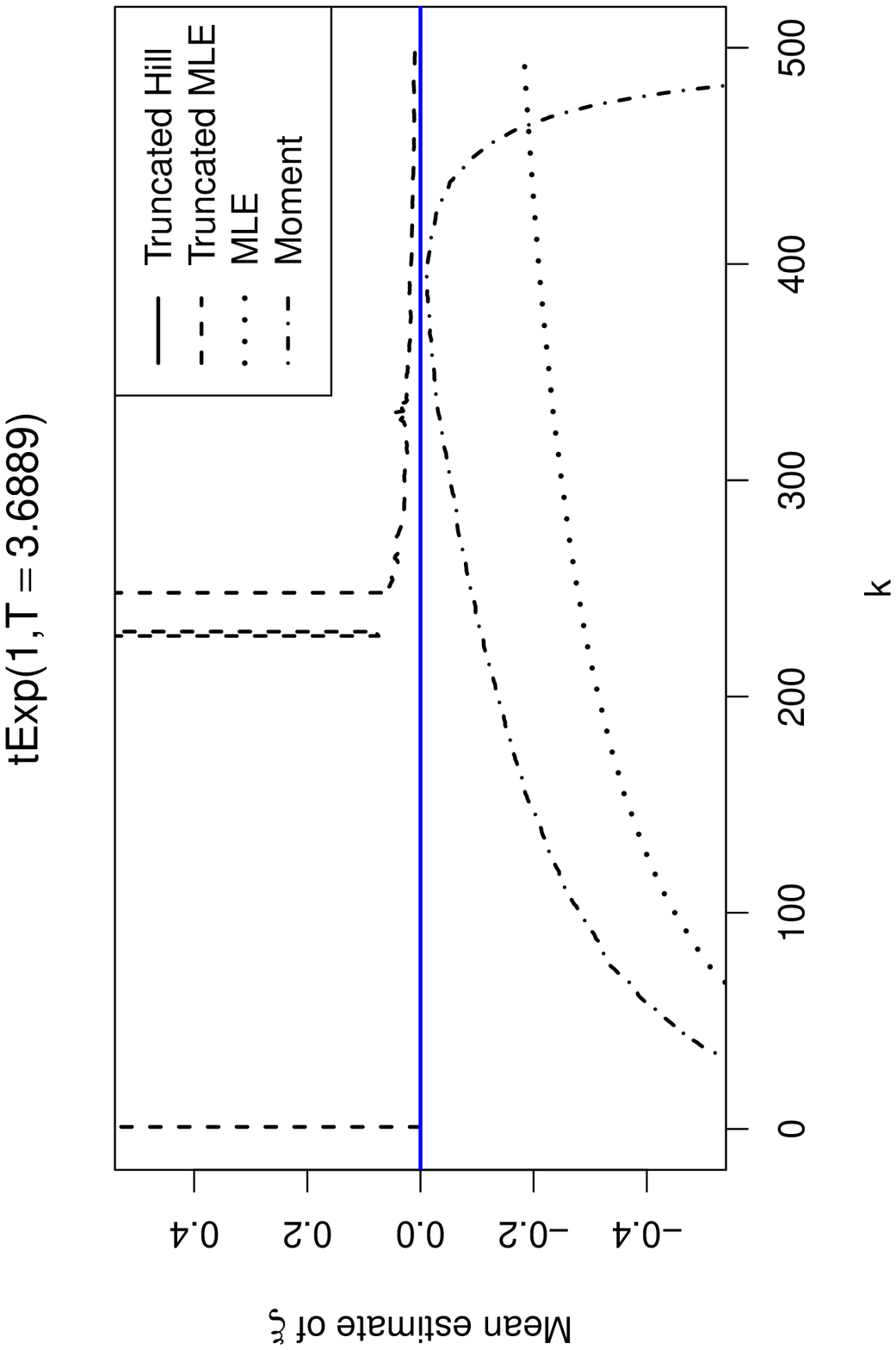}
  \includegraphics[height=6.25cm, angle=270]{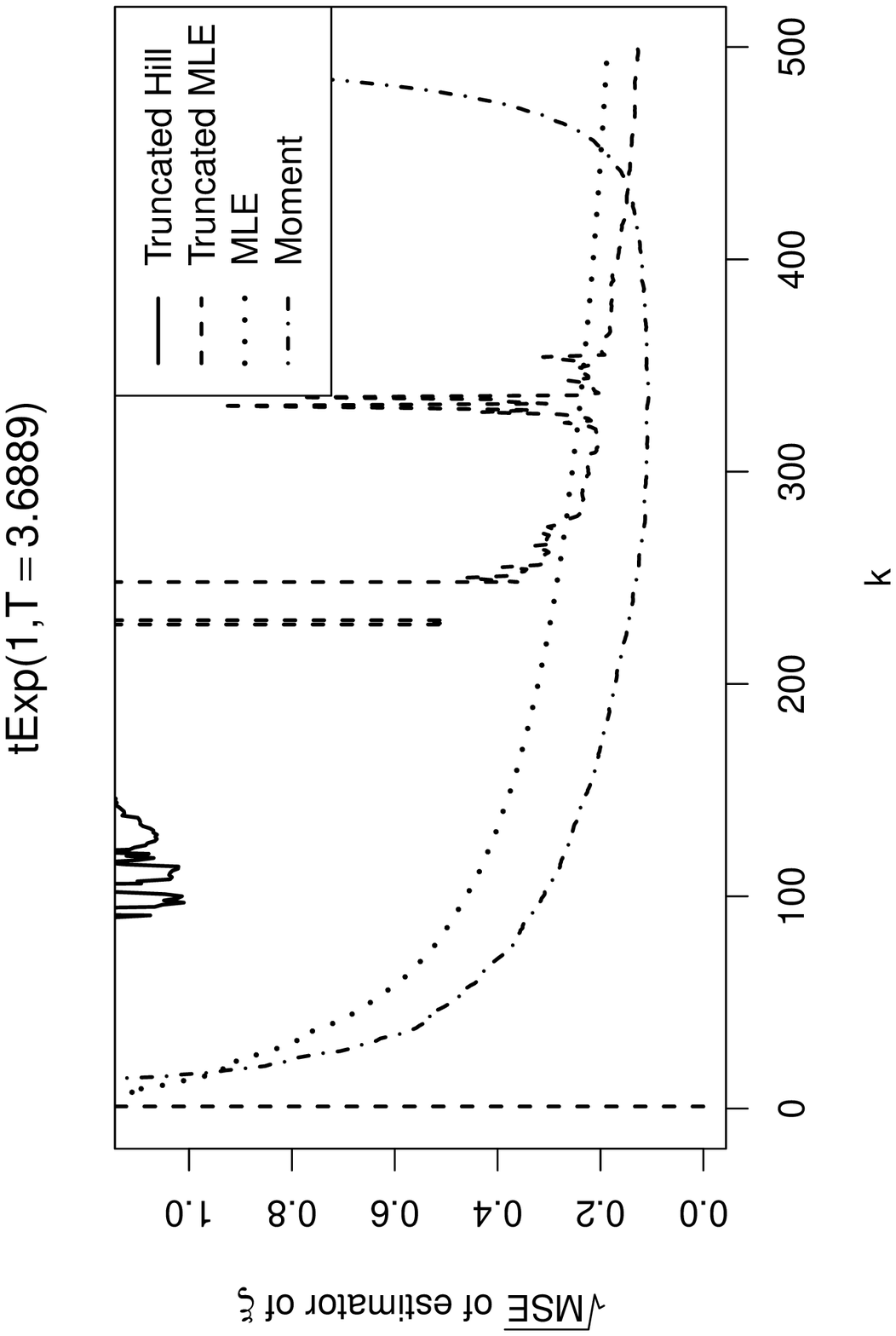}\\   
	\includegraphics[height=6.25cm, angle=270]{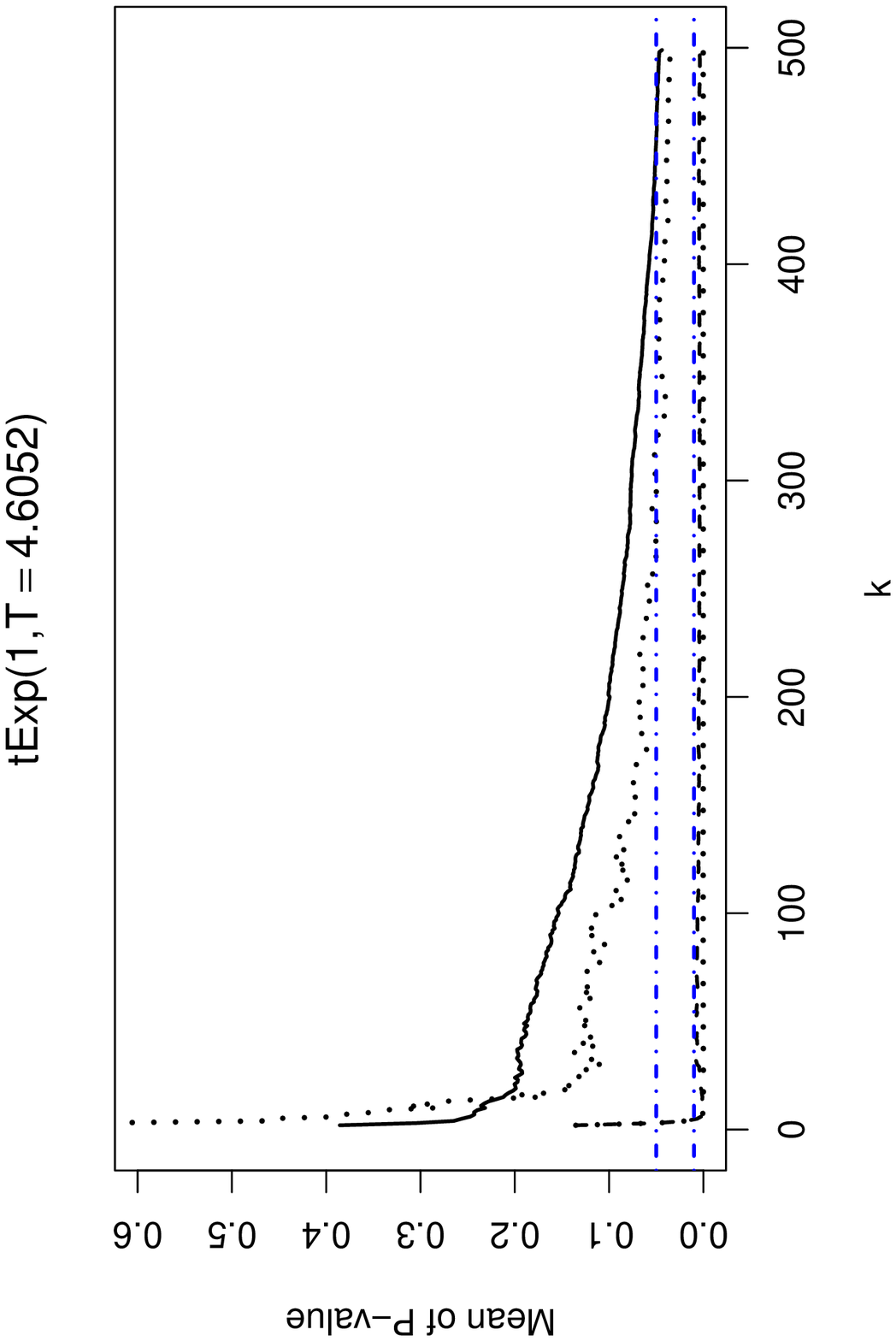}
   \includegraphics[height=6.25cm, angle=270]{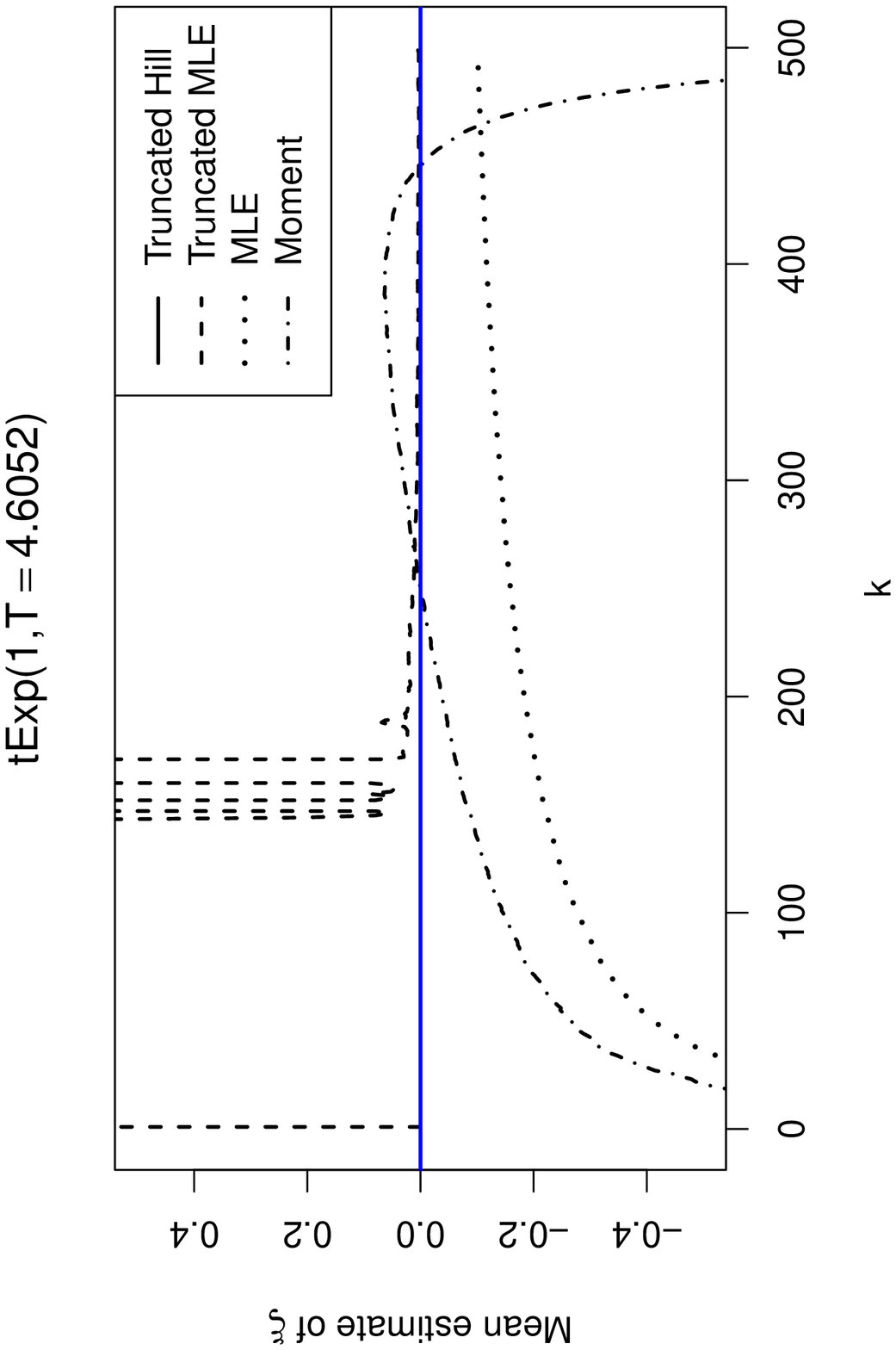}
  \includegraphics[height=6.25cm, angle=270]{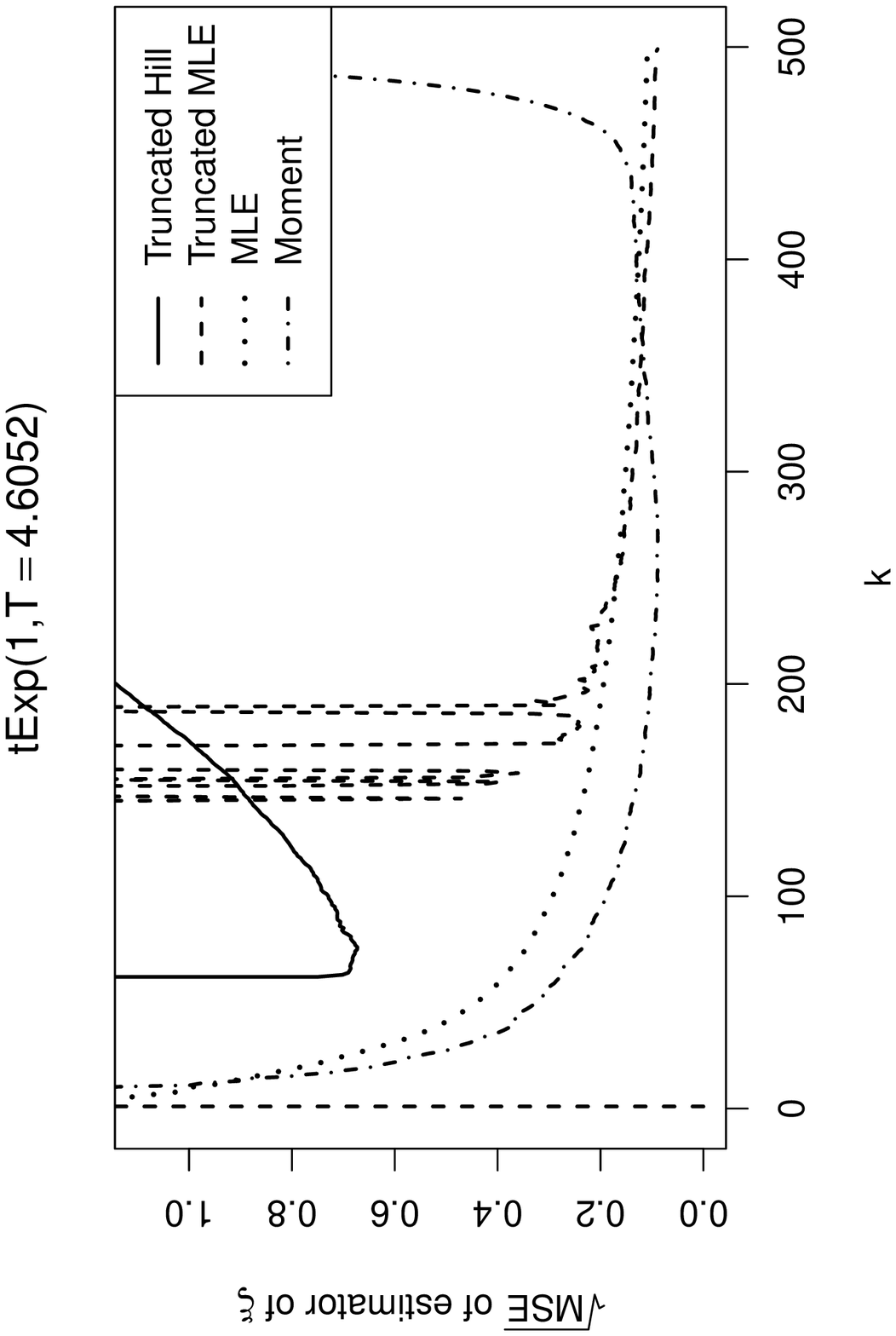}\\
	  \includegraphics[height=6.25cm, angle=270]{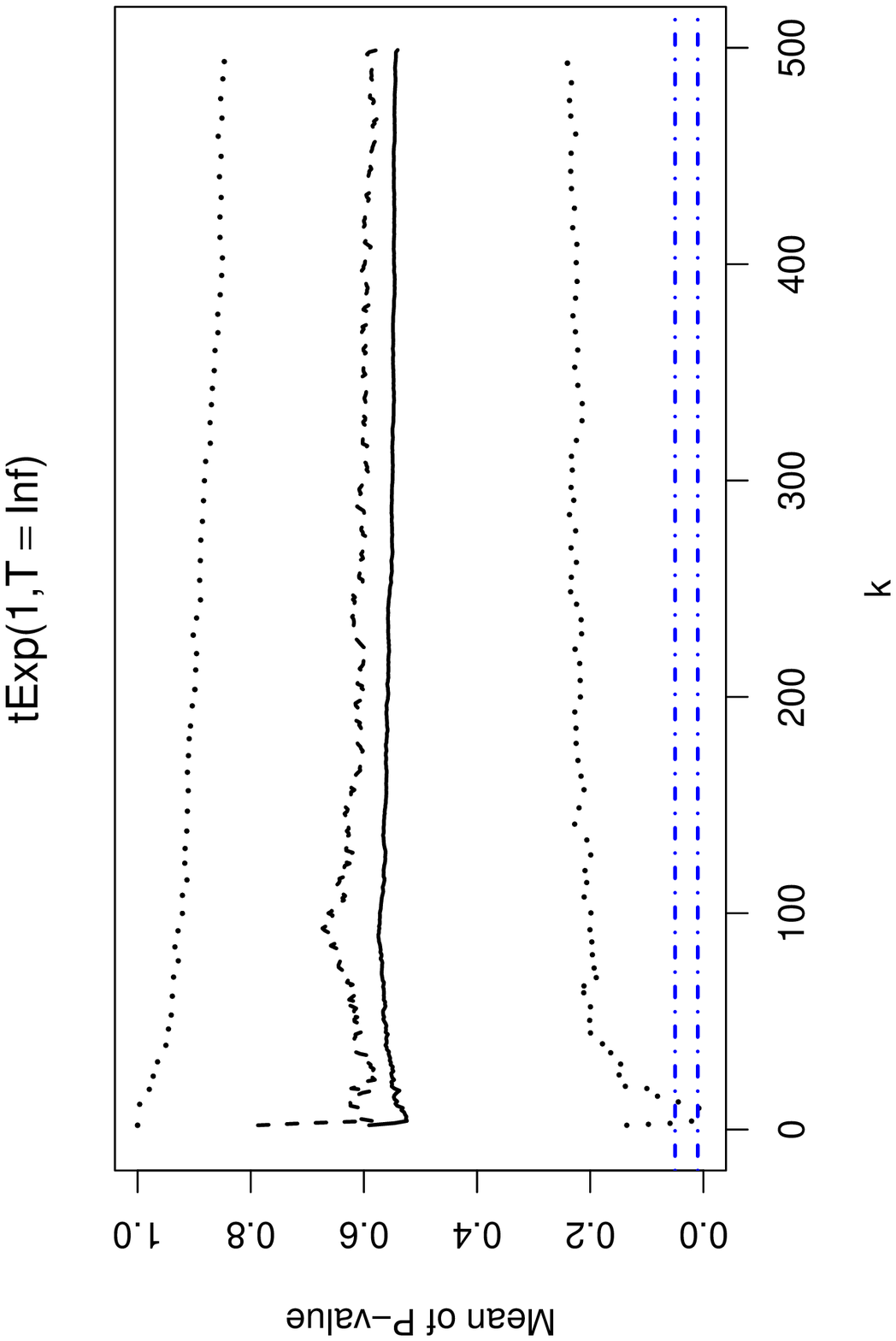}
  \includegraphics[height=6.25cm, angle=270]{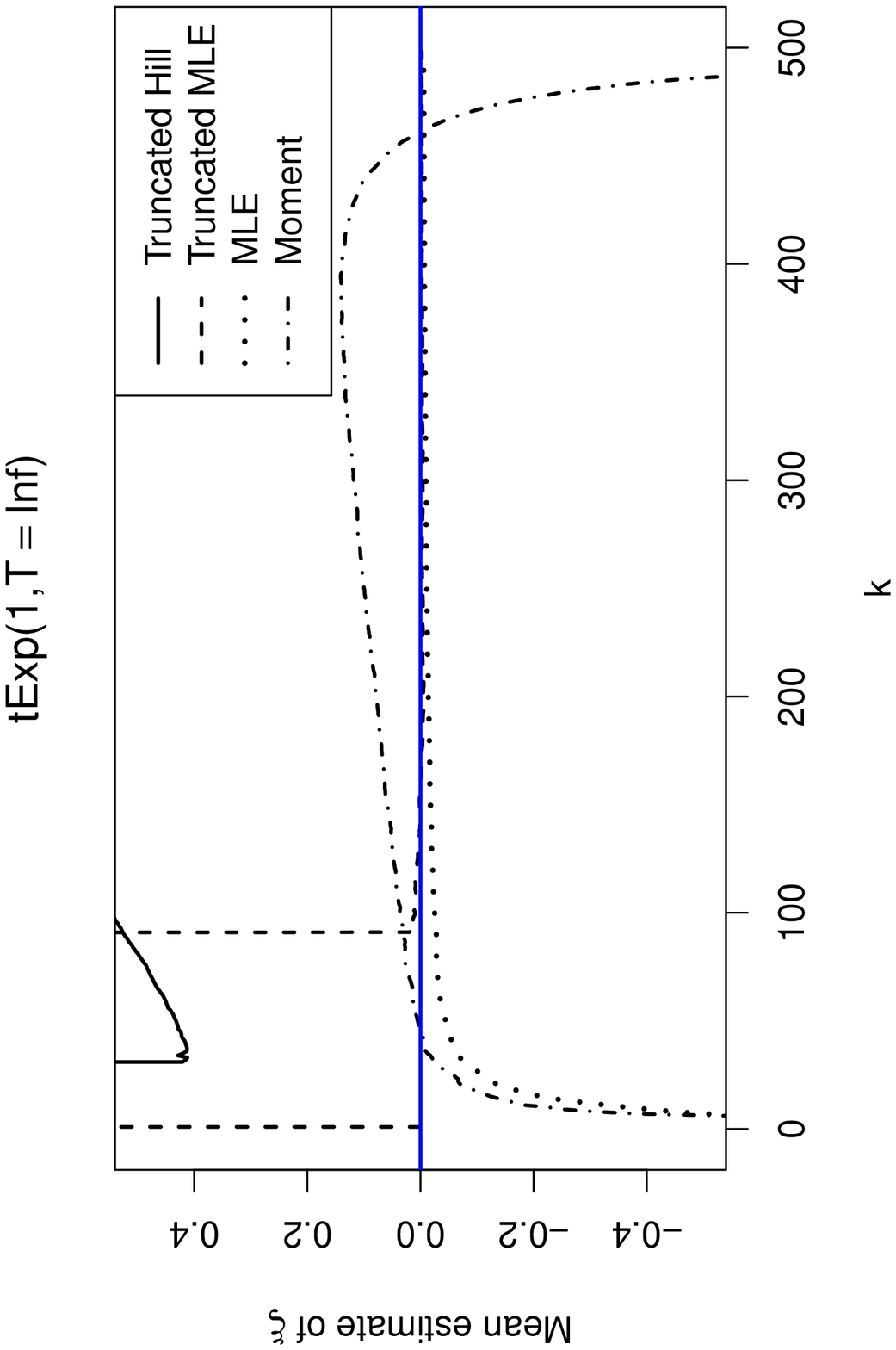}
  \includegraphics[height=6.25cm, angle=270]{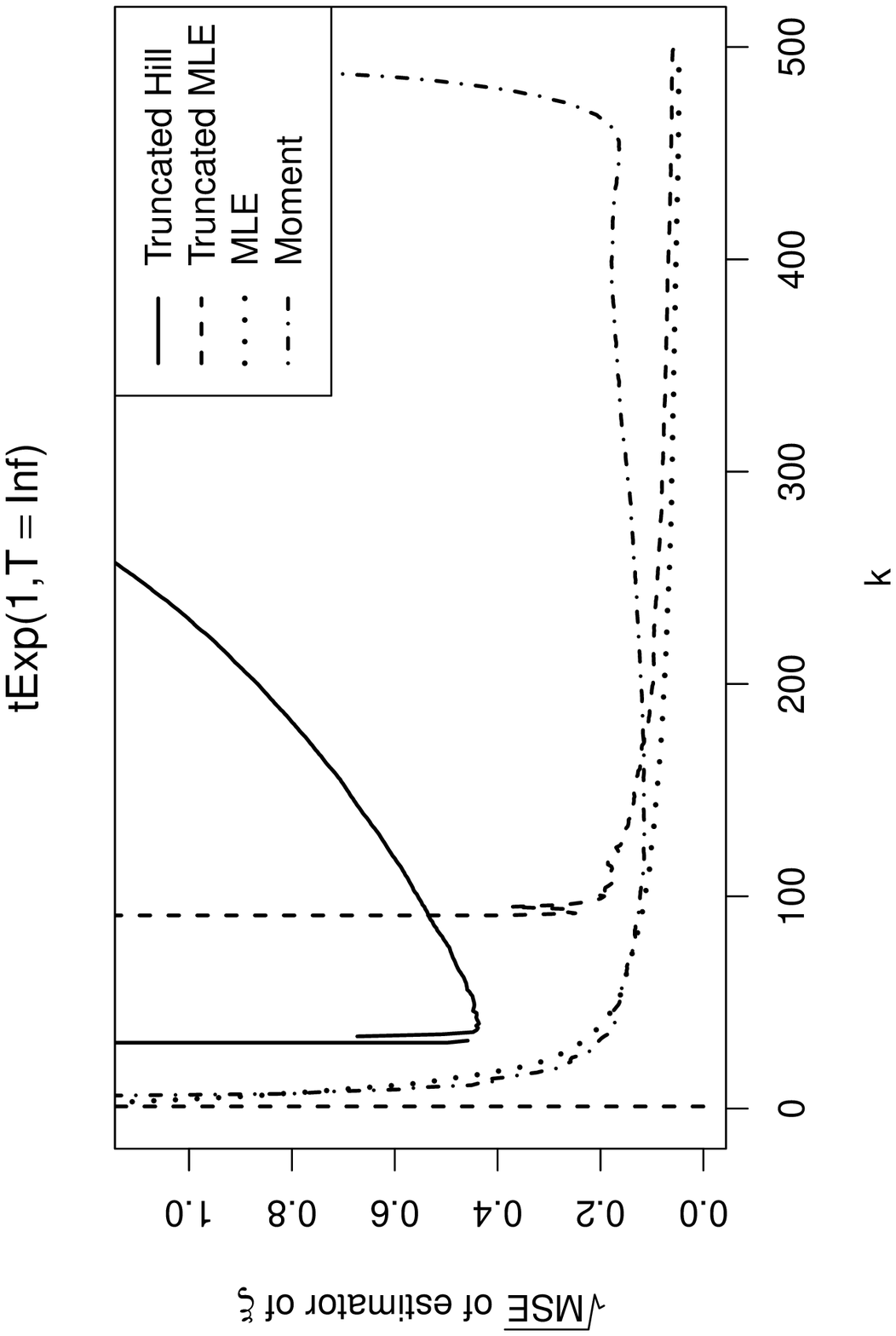}
  \caption{Means and boxplots of P-values for test (left), means (middle) and root MSE (right) of $\hat{\xi}^+_k$, $\hat{\xi}_k$, $\hat{\xi}^{\infty}_k$ and $\hat{\xi}^M_k$  from the standard exponential distribution truncated at $Q_Y (0.975)$ (top), $Q_Y (0.99)$ (middle) and non truncated (bottom).}
  \end{figure}
	\end{landscape}

\begin{landscape}
   \begin{figure}[!ht]
		\centering
		   \includegraphics[height=6.25cm, angle=270]{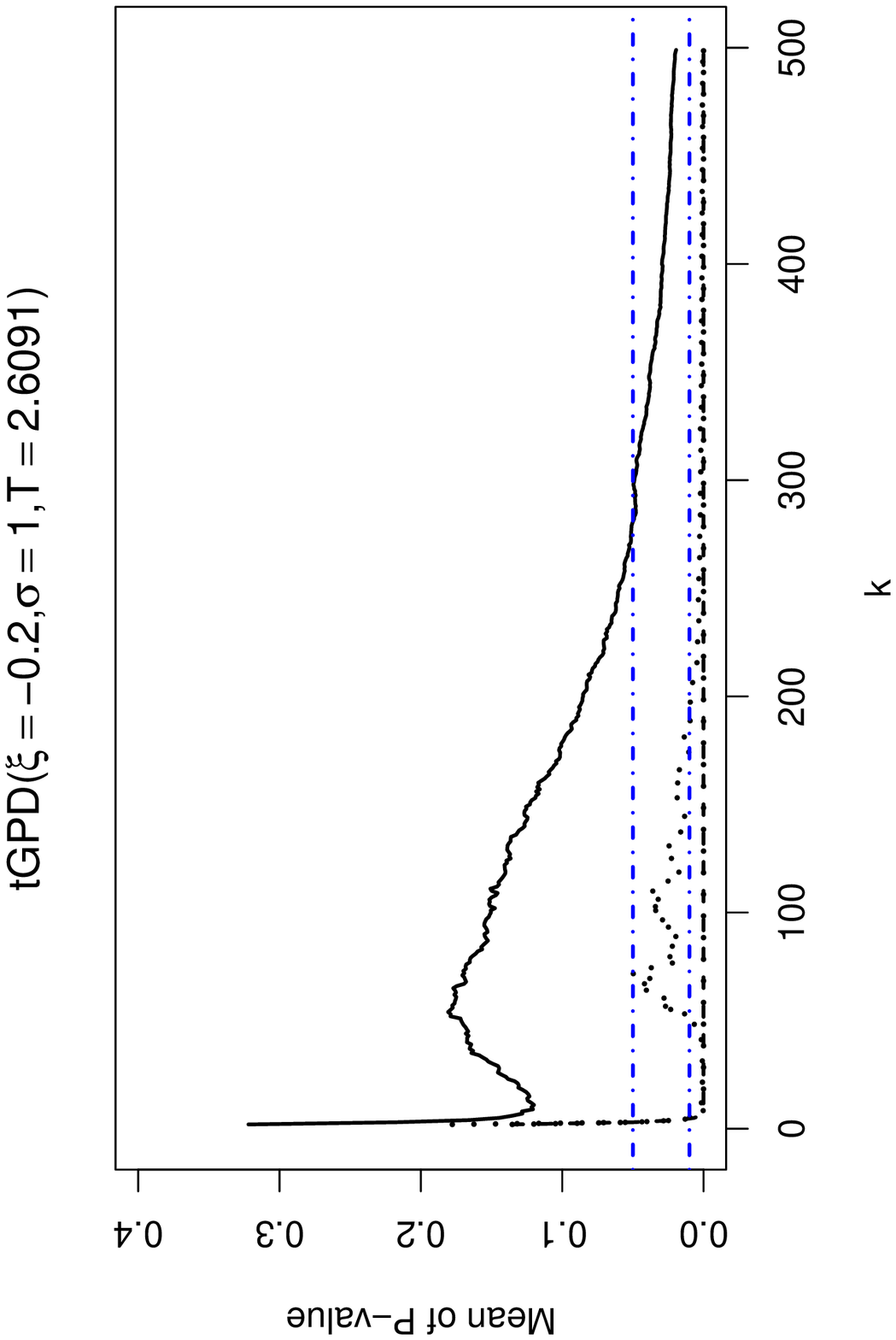}
	   \includegraphics[height=6.25cm, angle=270]{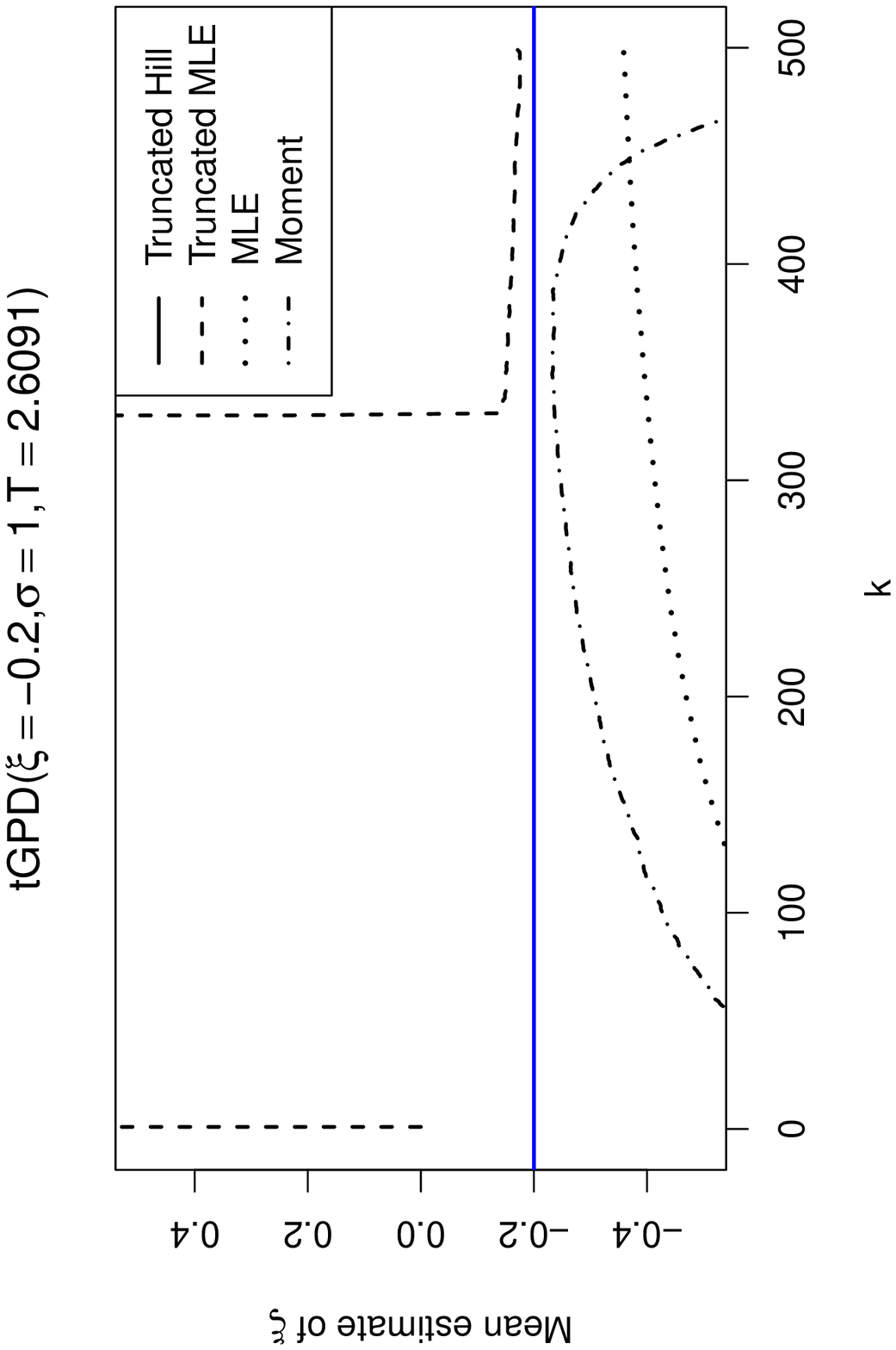}
  \includegraphics[height=6.25cm, angle=270]{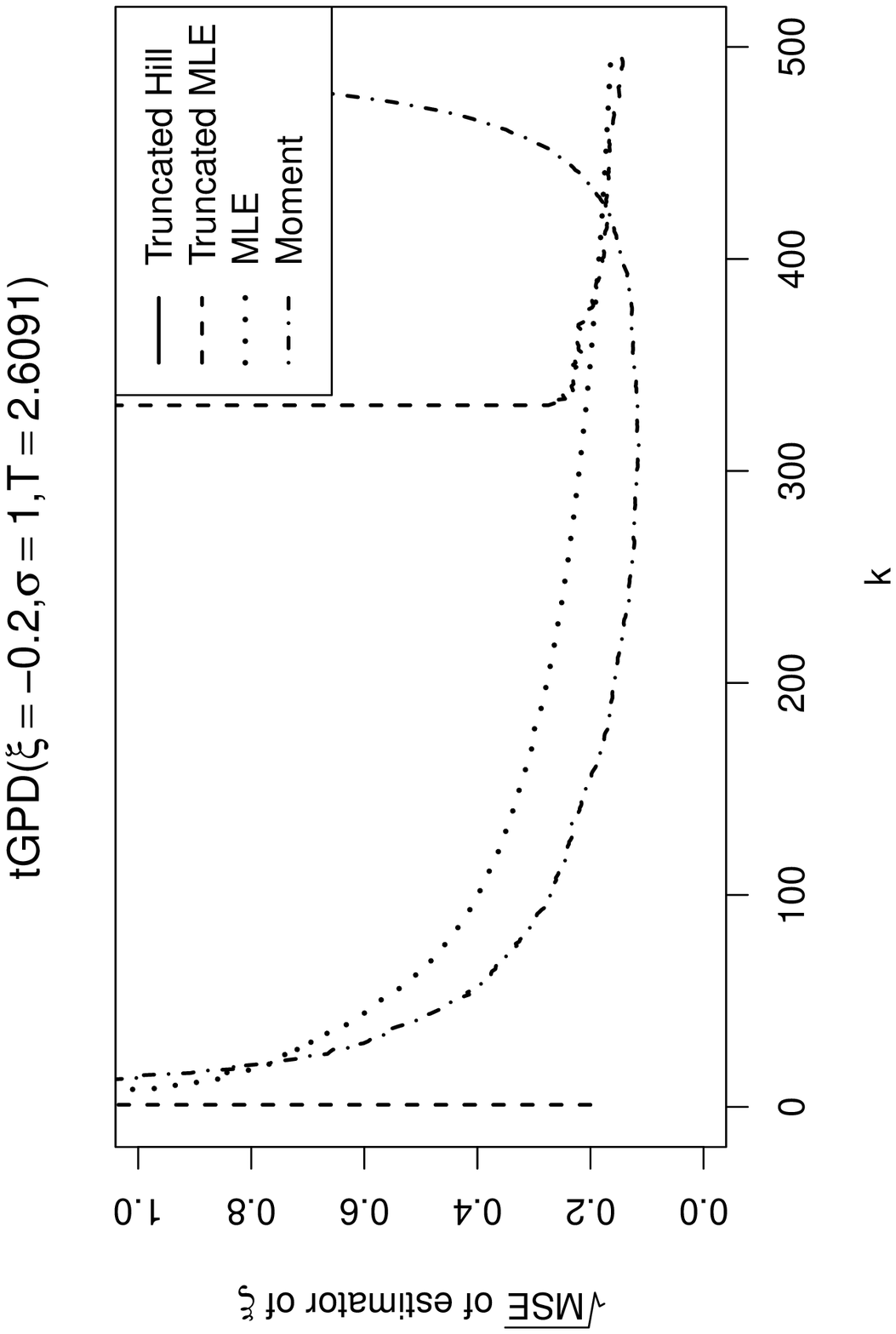}\\   
	\includegraphics[height=6.25cm, angle=270]{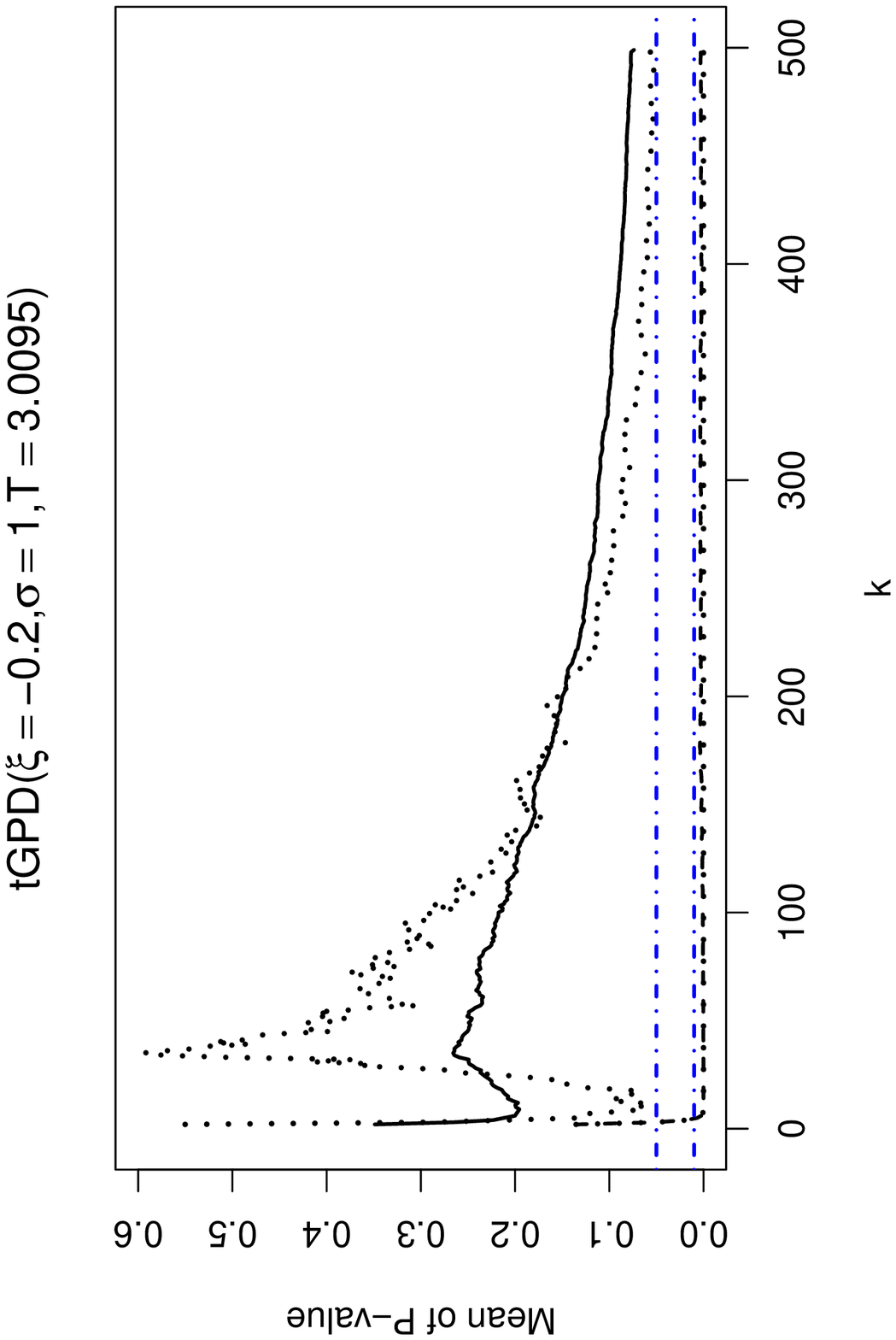}
   \includegraphics[height=6.25cm, angle=270]{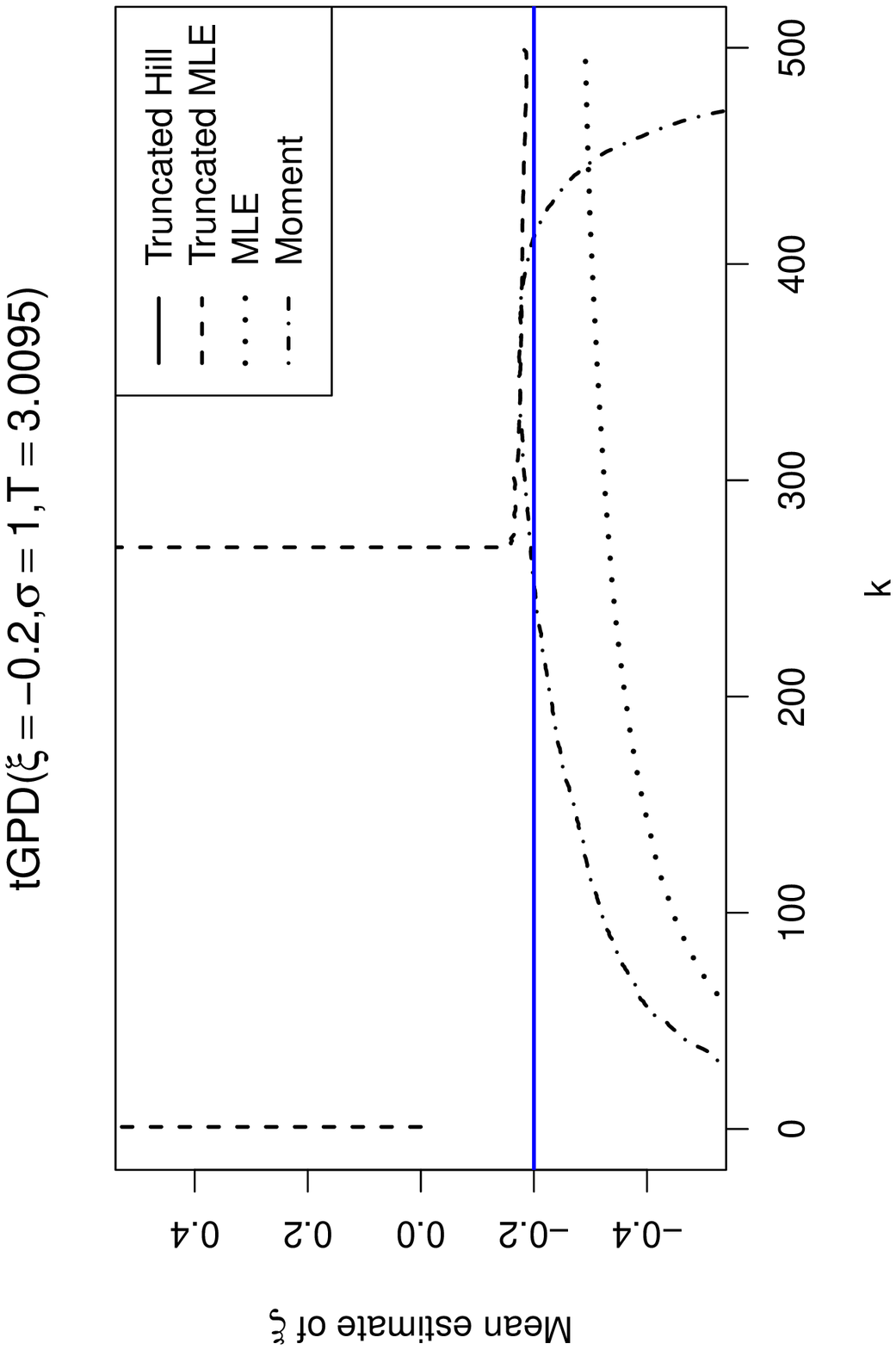}
  \includegraphics[height=6.25cm, angle=270]{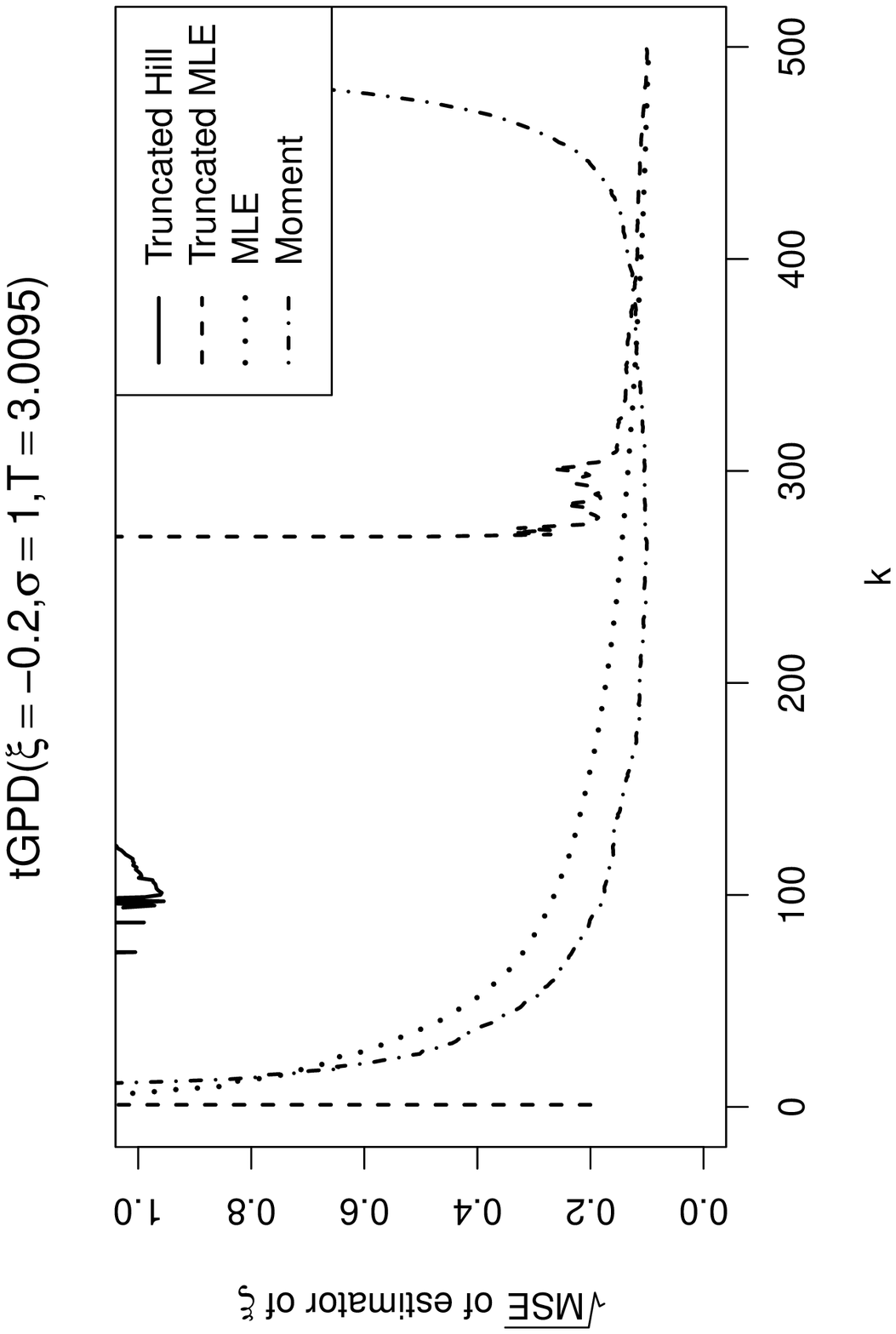}\\
	  \includegraphics[height=6.25cm, angle=270]{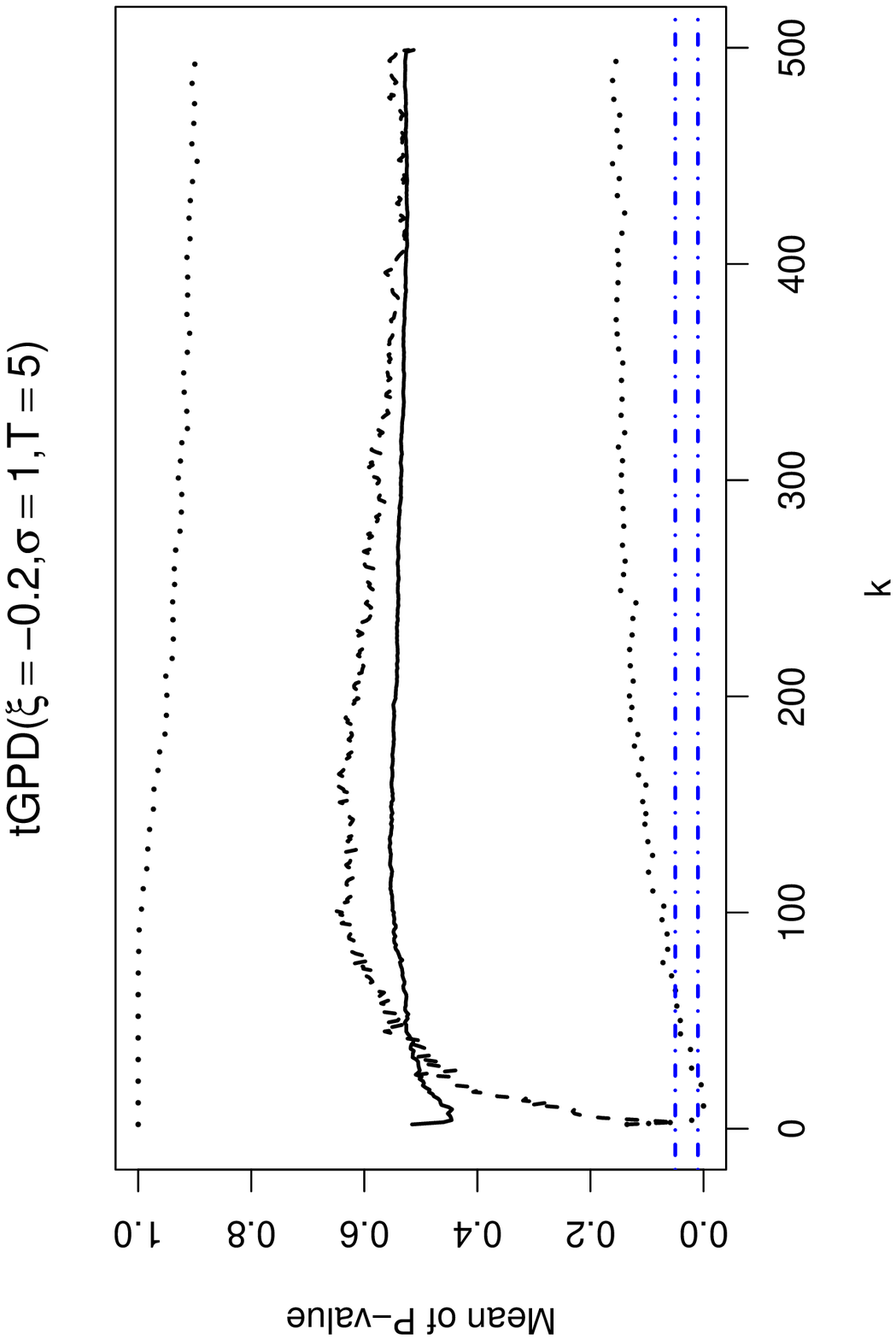}
  \includegraphics[height=6.25cm, angle=270]{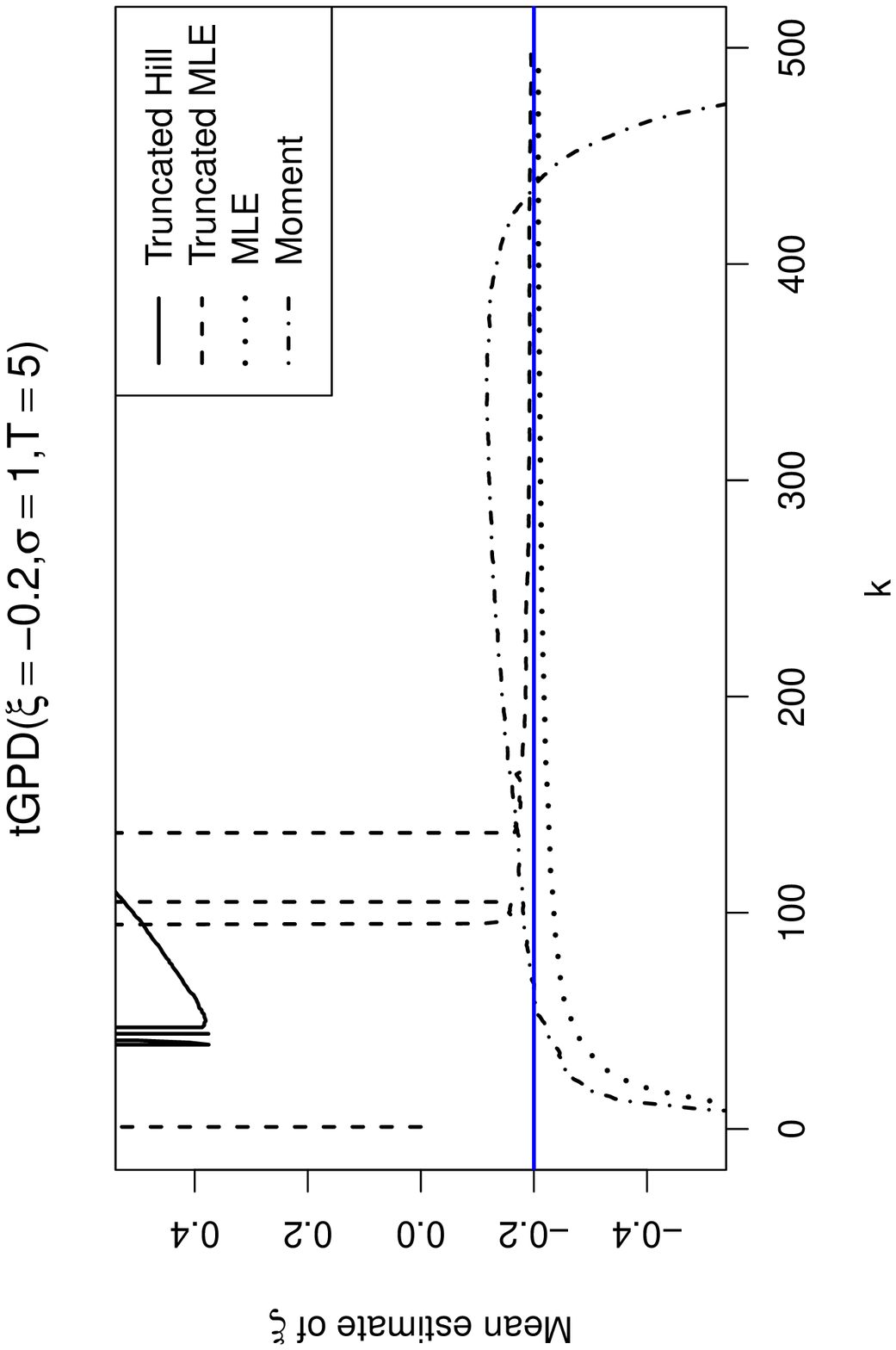}
  \includegraphics[height=6.25cm, angle=270]{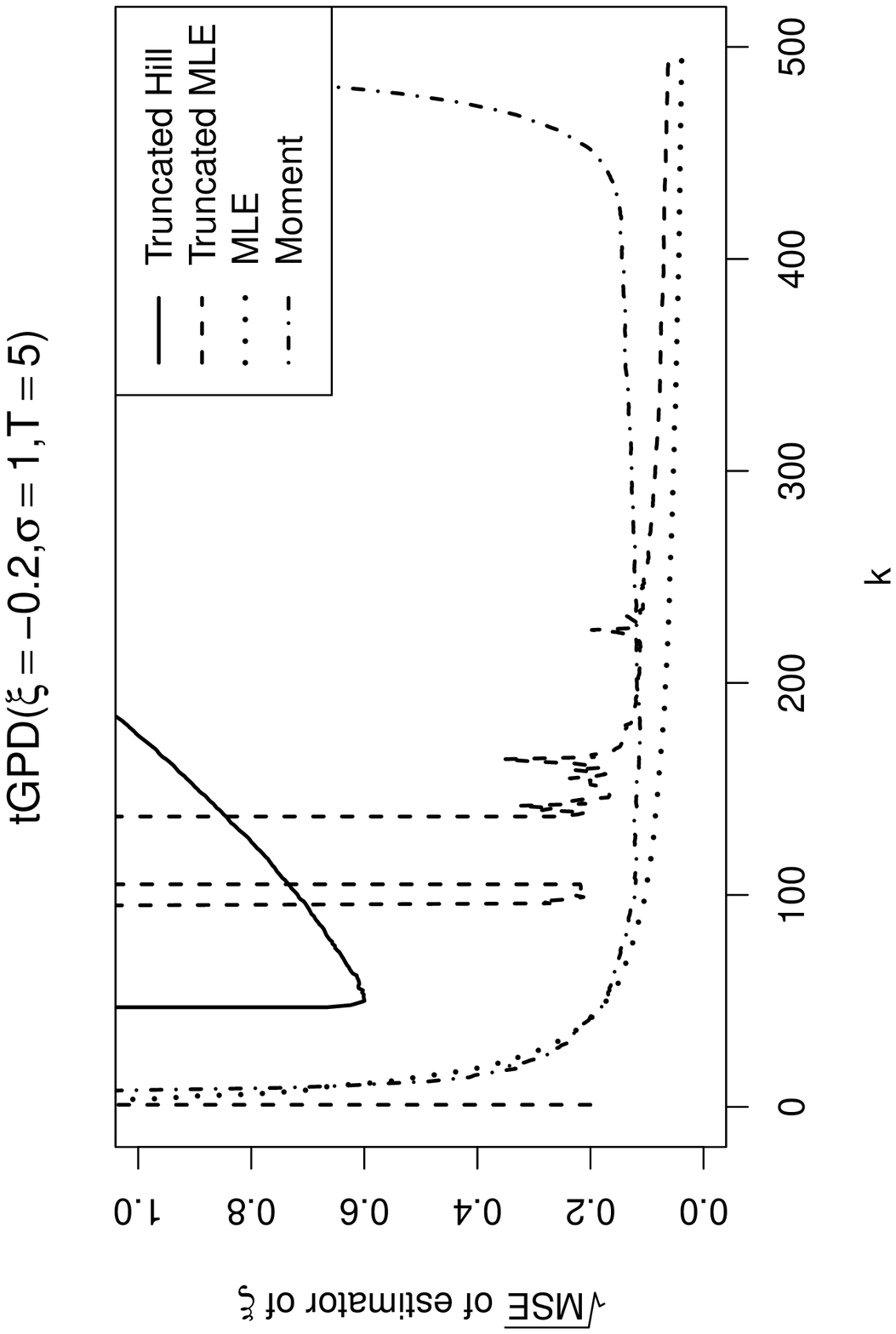}
  \caption{Means and boxplots of P-values for test (left), means (middle) and root MSE (right) of $\hat{\xi}^+_k$, $\hat{\xi}_k$, $\hat{\xi}^{\infty}_k$ and $\hat{\xi}^M_k$  from GPD(-0.2,1) truncated at $Q_Y (0.975)$ (top), $Q_Y (0.99)$ (middle) and  $Q_Y (1)$ (bottom).}\label{fig:sim_xi_last}
  \end{figure}
	\end{landscape}

		\newpage					
   \begin{figure}[!ht]
	\centering
	  \includegraphics[height=0.495\textwidth, angle=270]{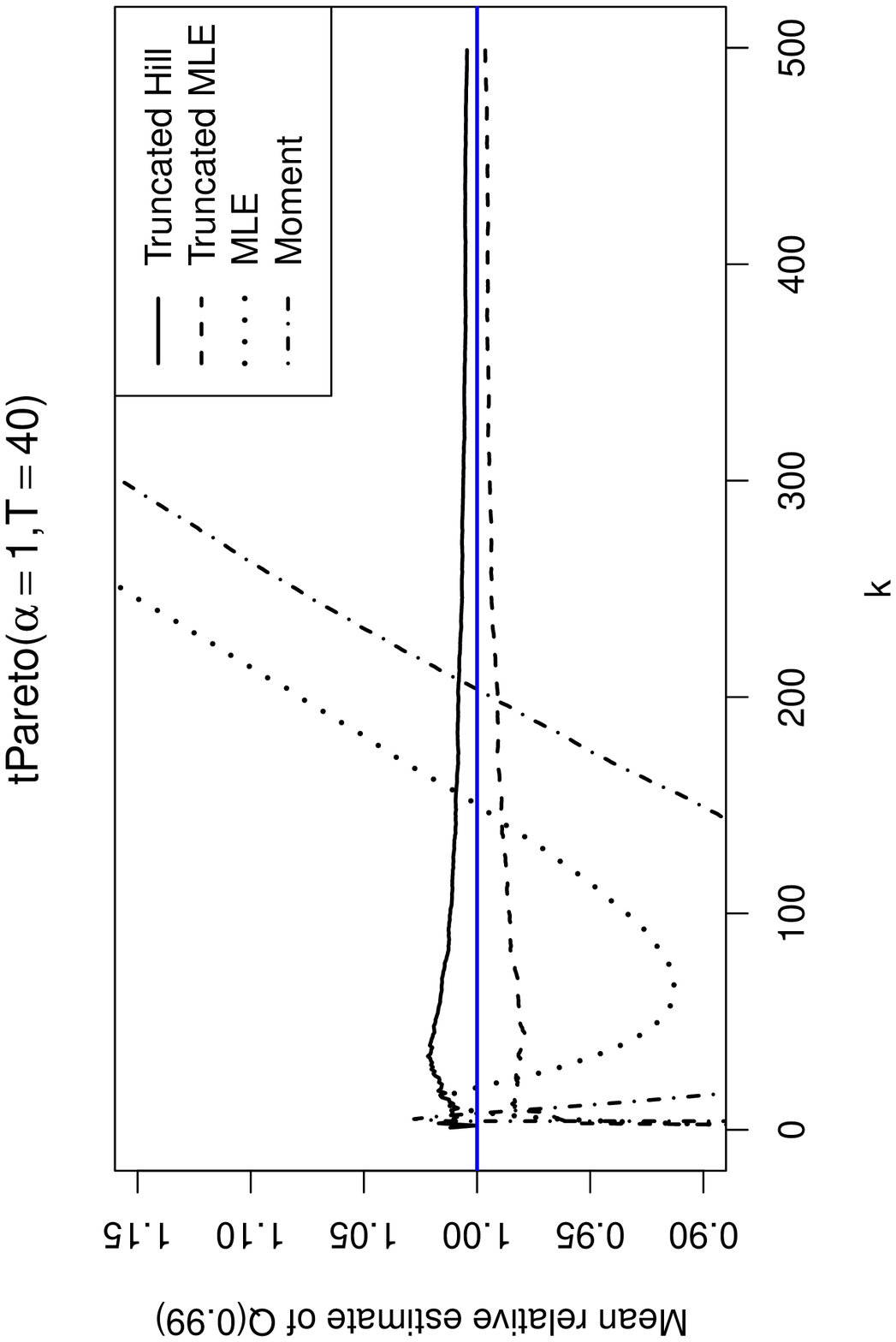}
	 \includegraphics[height=0.495\textwidth, angle=270]{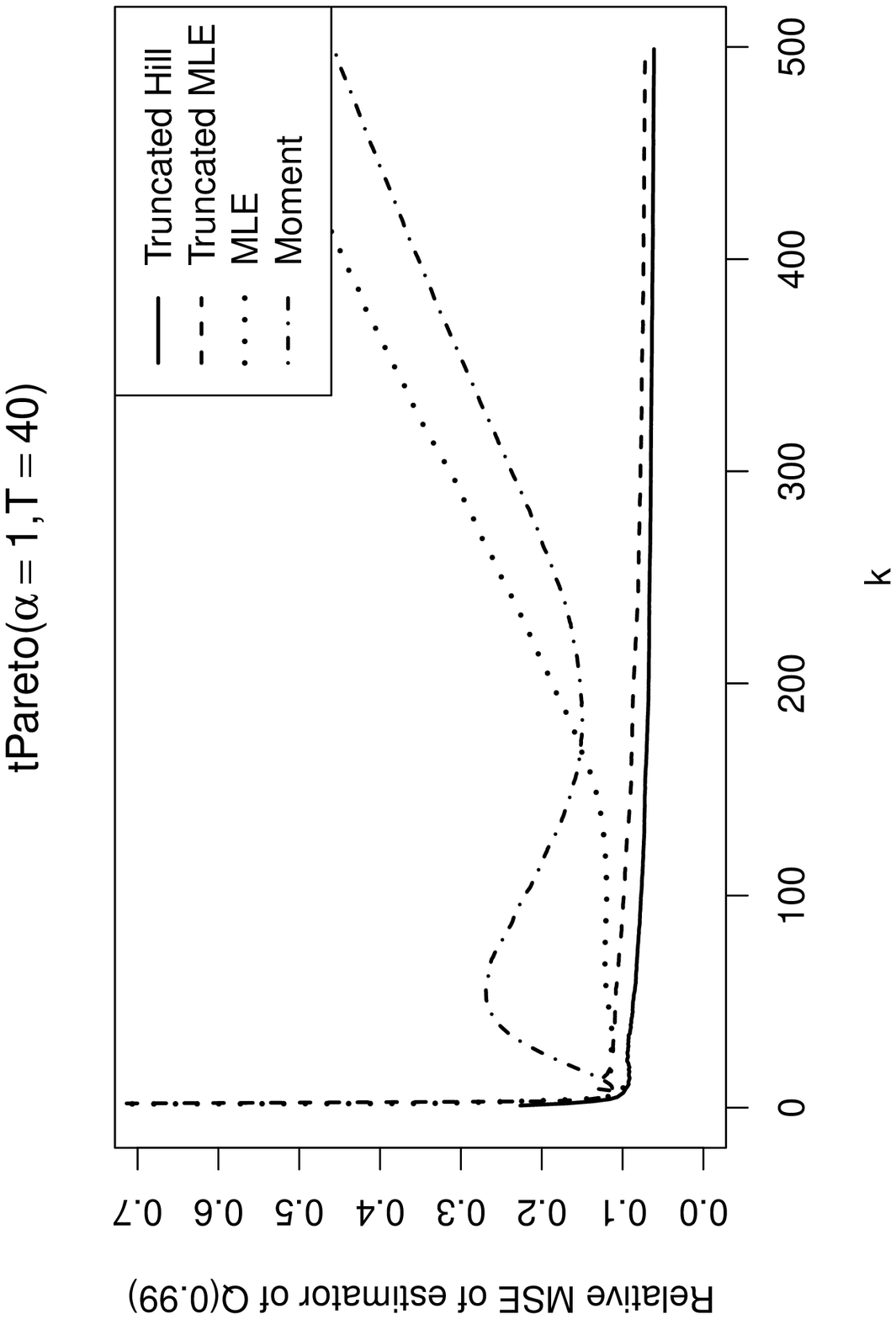}\\
	  \includegraphics[height=0.495\textwidth, angle=270]{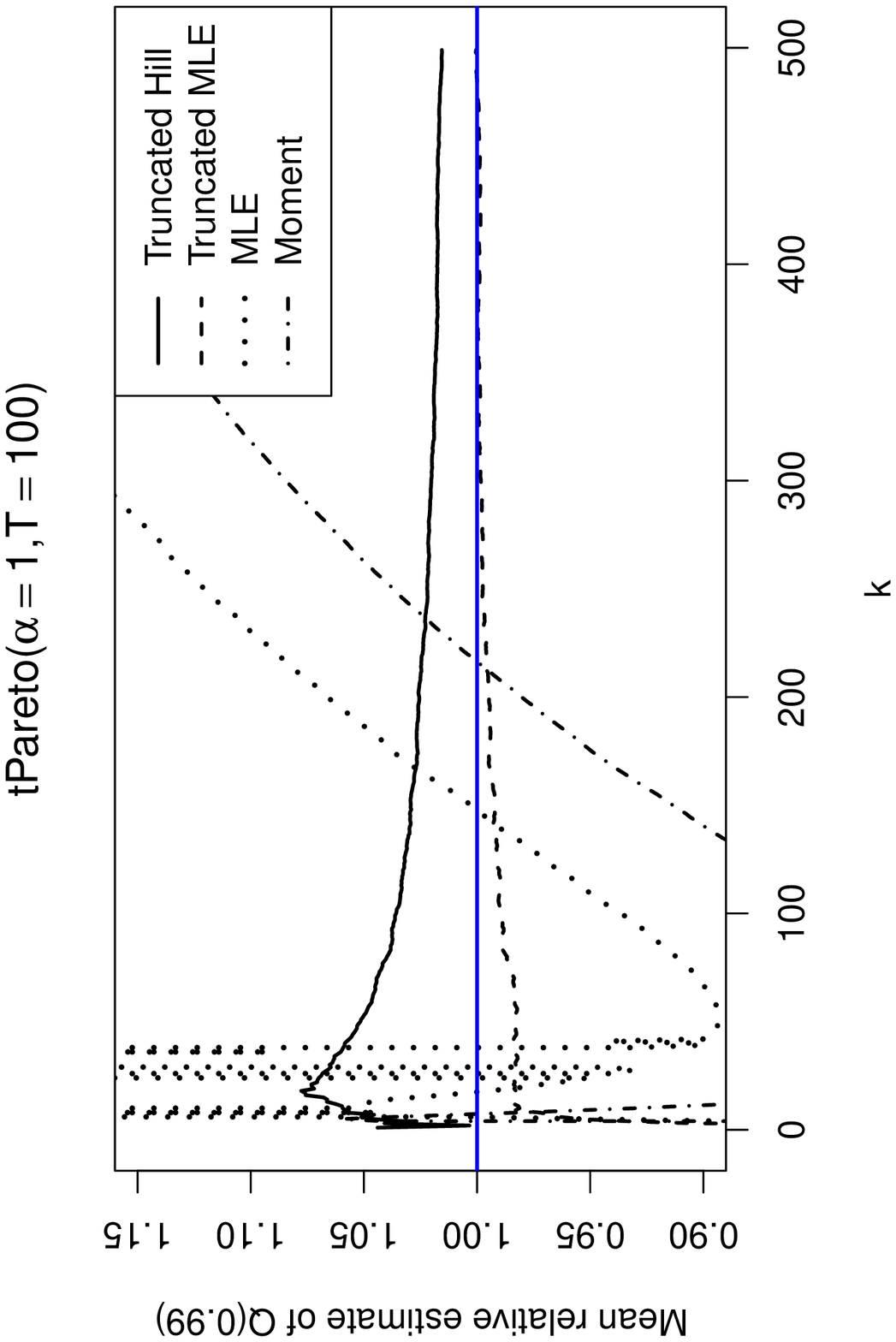}
	 \includegraphics[height=0.495\textwidth, angle=270]{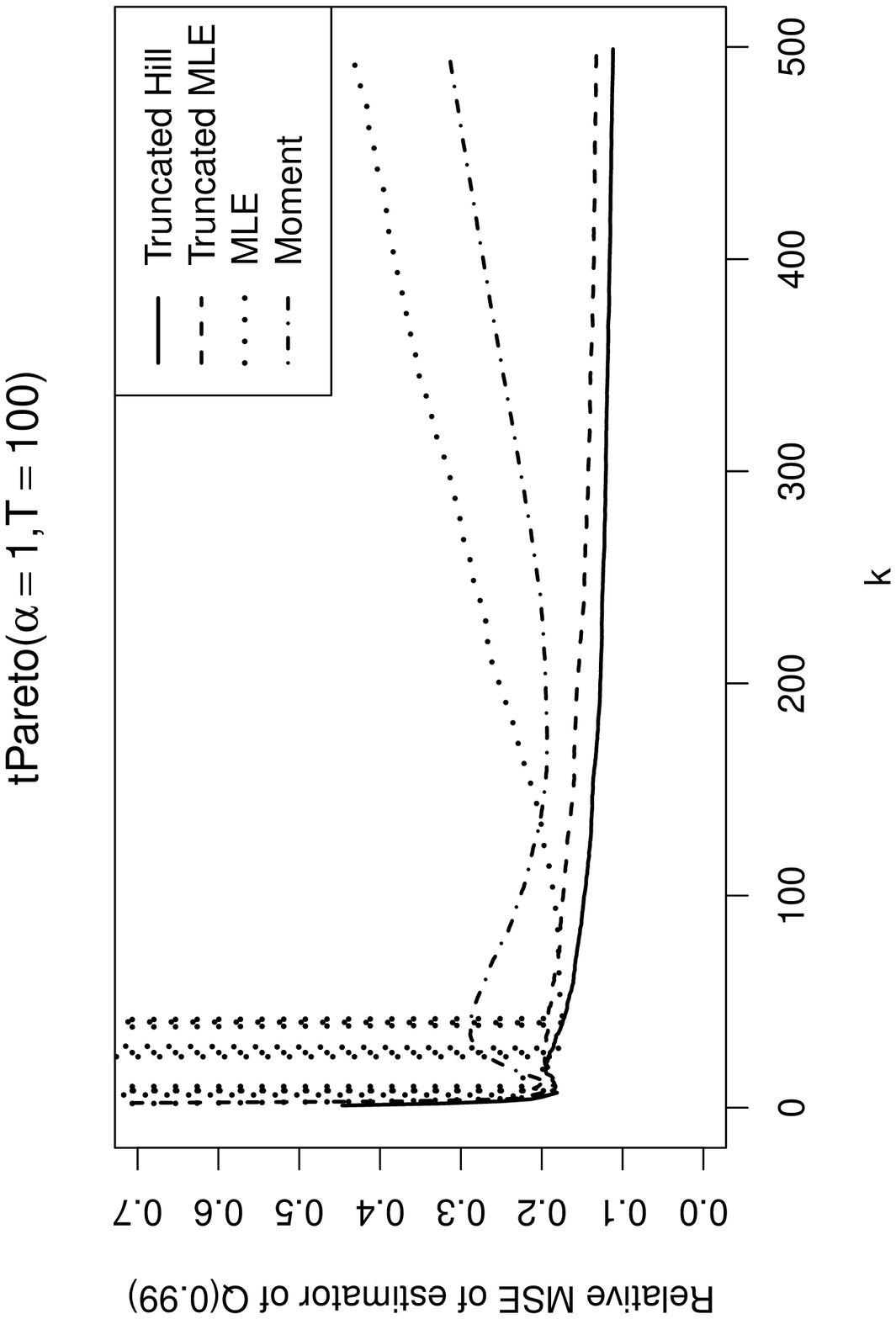}\\
	  \includegraphics[height=0.495\textwidth, angle=270]{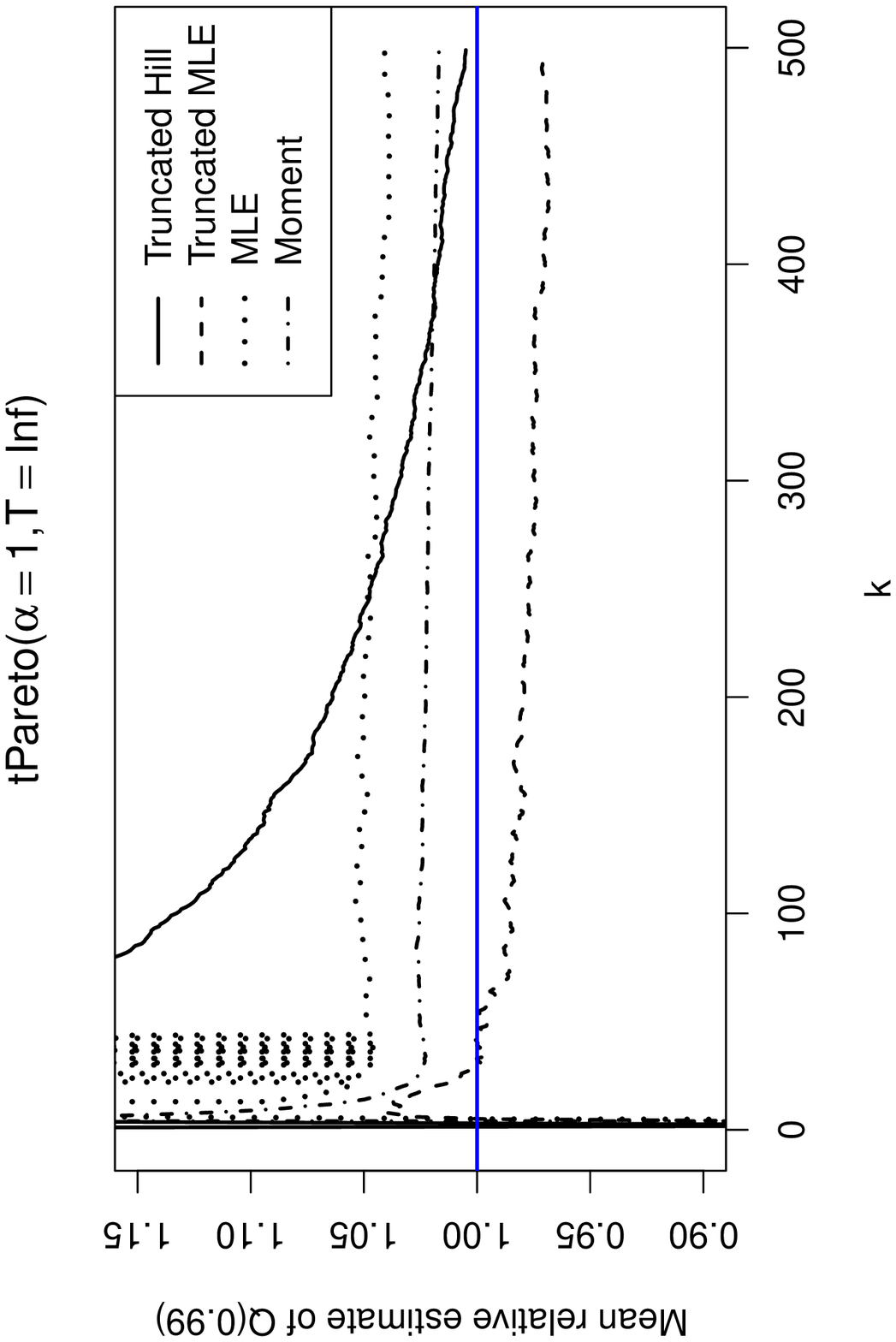}
	 \includegraphics[height=0.495\textwidth, angle=270]{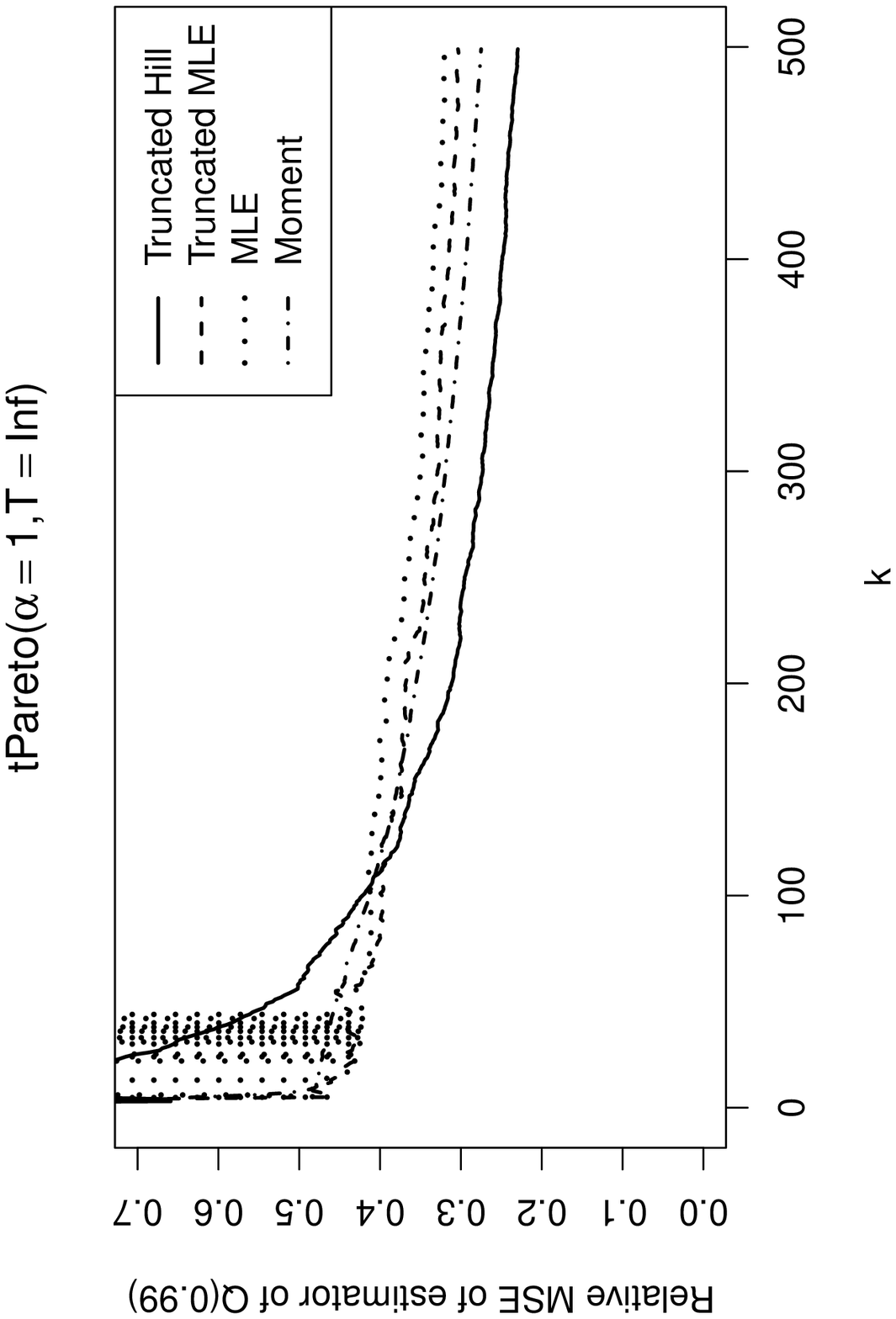}
  \caption{Mean deviations of $\hat{Q}^+_{T,k}(1-p)/Q_T(1-p)$, $\hat{Q}_{T,k}(1-p)/Q_T(1-p)$, $\hat{Q}^{\infty}_{k}(1-p)/Q_T(1-p)$, $\hat{Q}^M_{k}(1-p)/Q_T(1-p)$  and corresponding MSE with $p=0.01$ for the standard Pareto distribution truncated at $Q_Y (0.975)$ (top), $Q_Y (0.99)$ (middle) and non truncated (bottom).}\label{fig:sim_Q_first}
   \end{figure}
	
					\newpage					
   \begin{figure}[!ht]
	\centering
  \includegraphics[height=0.495\textwidth, angle=270]{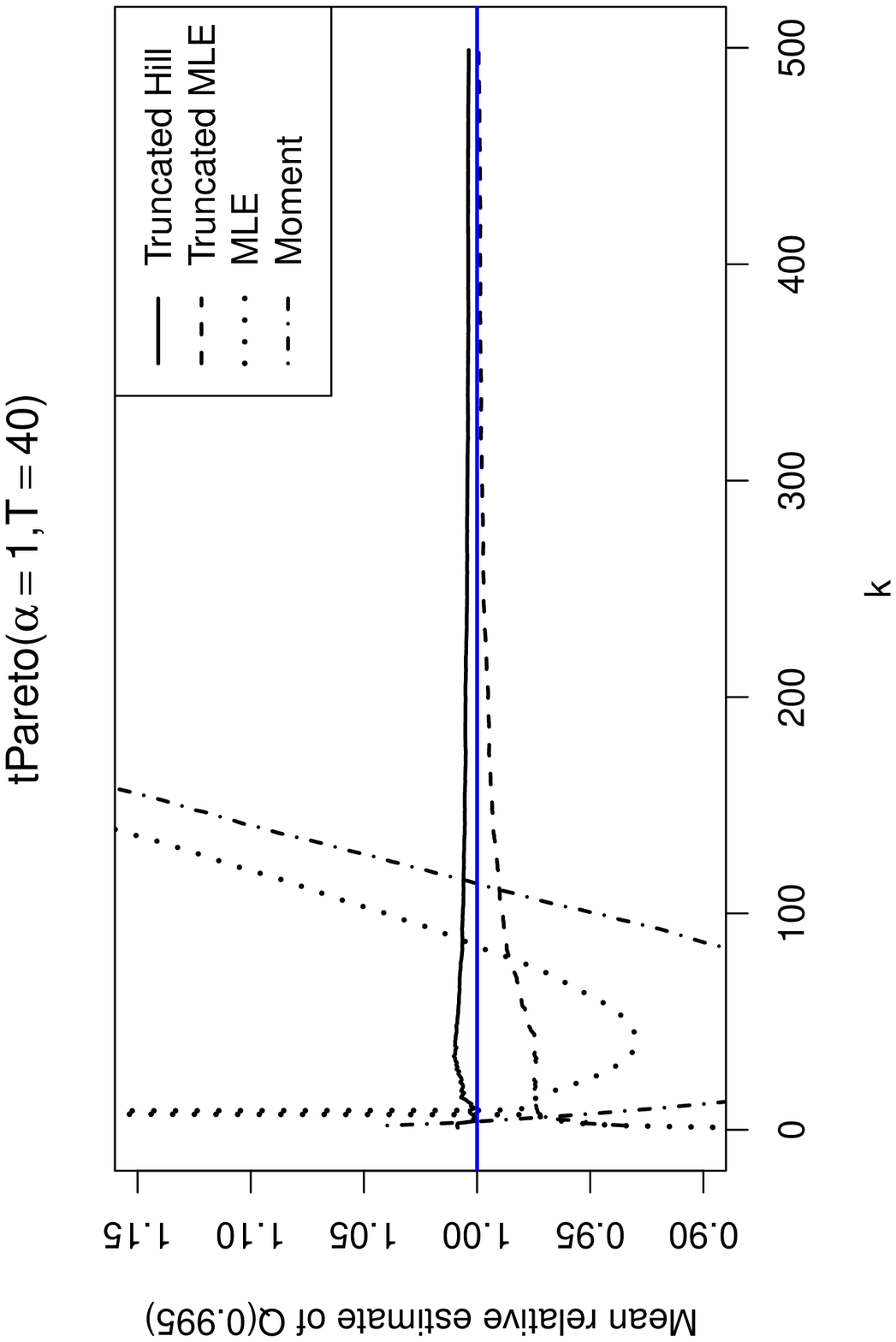}
	 \includegraphics[height=0.495\textwidth, angle=270]{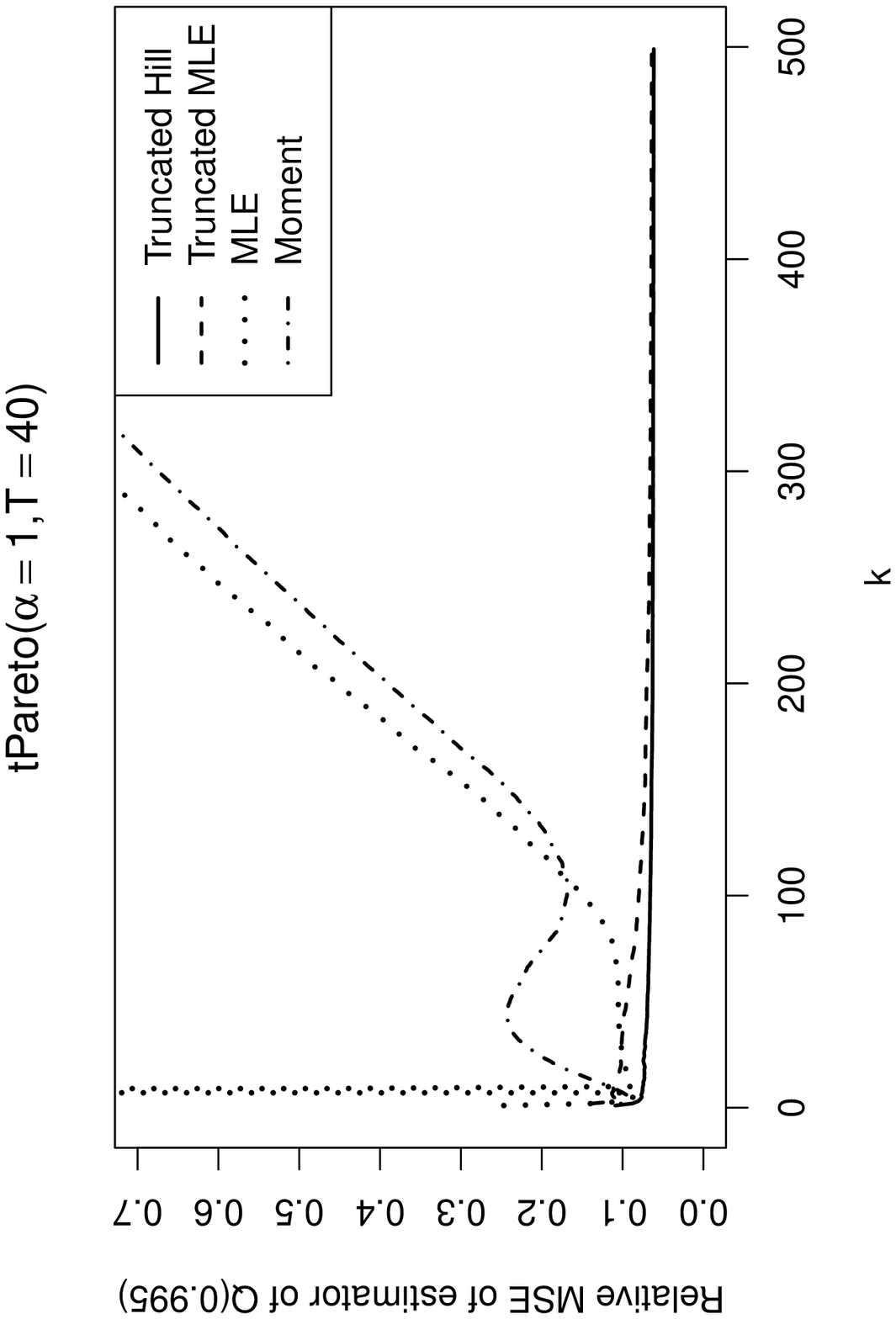}\\
	  \includegraphics[height=0.495\textwidth, angle=270]{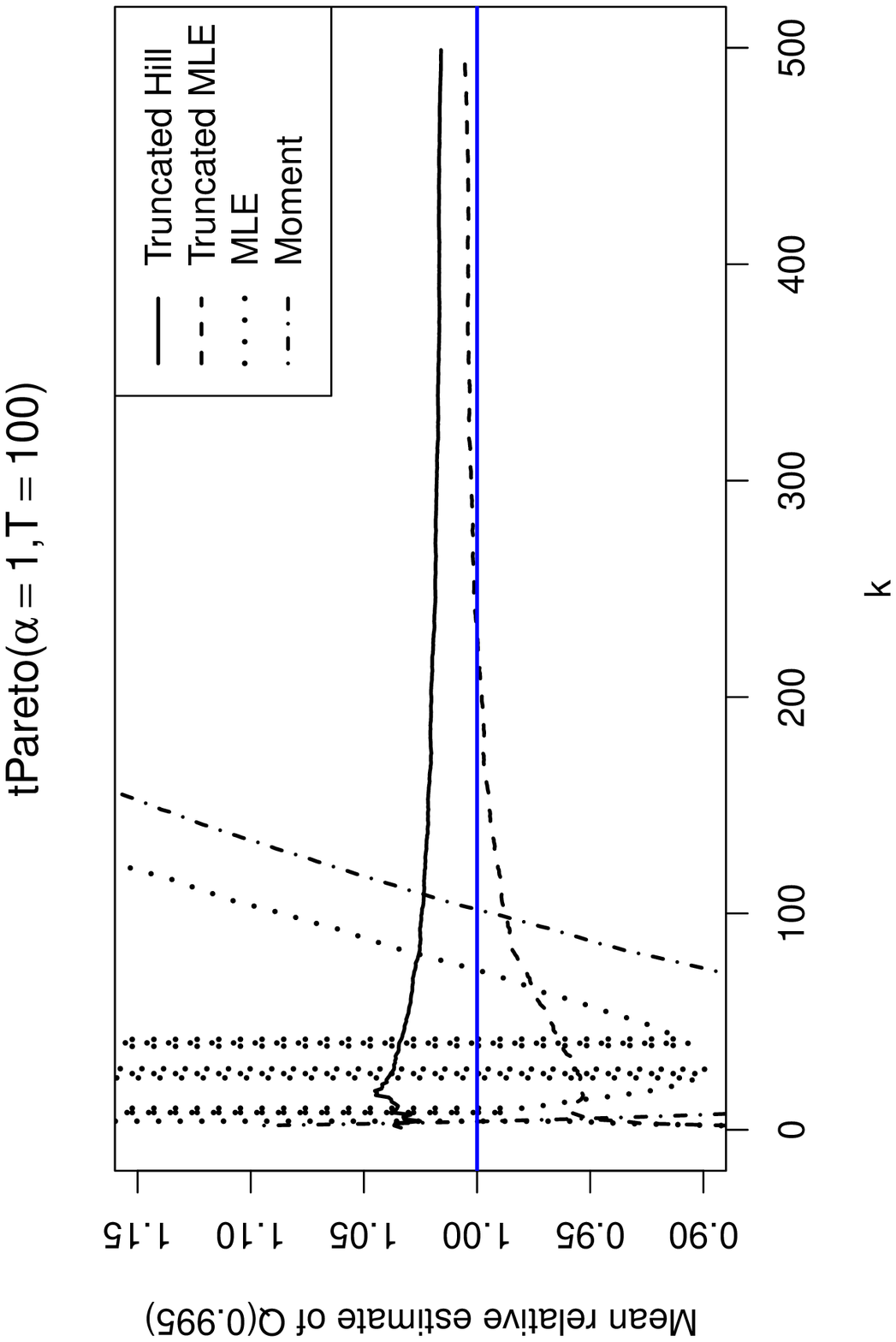}
	 \includegraphics[height=0.495\textwidth, angle=270]{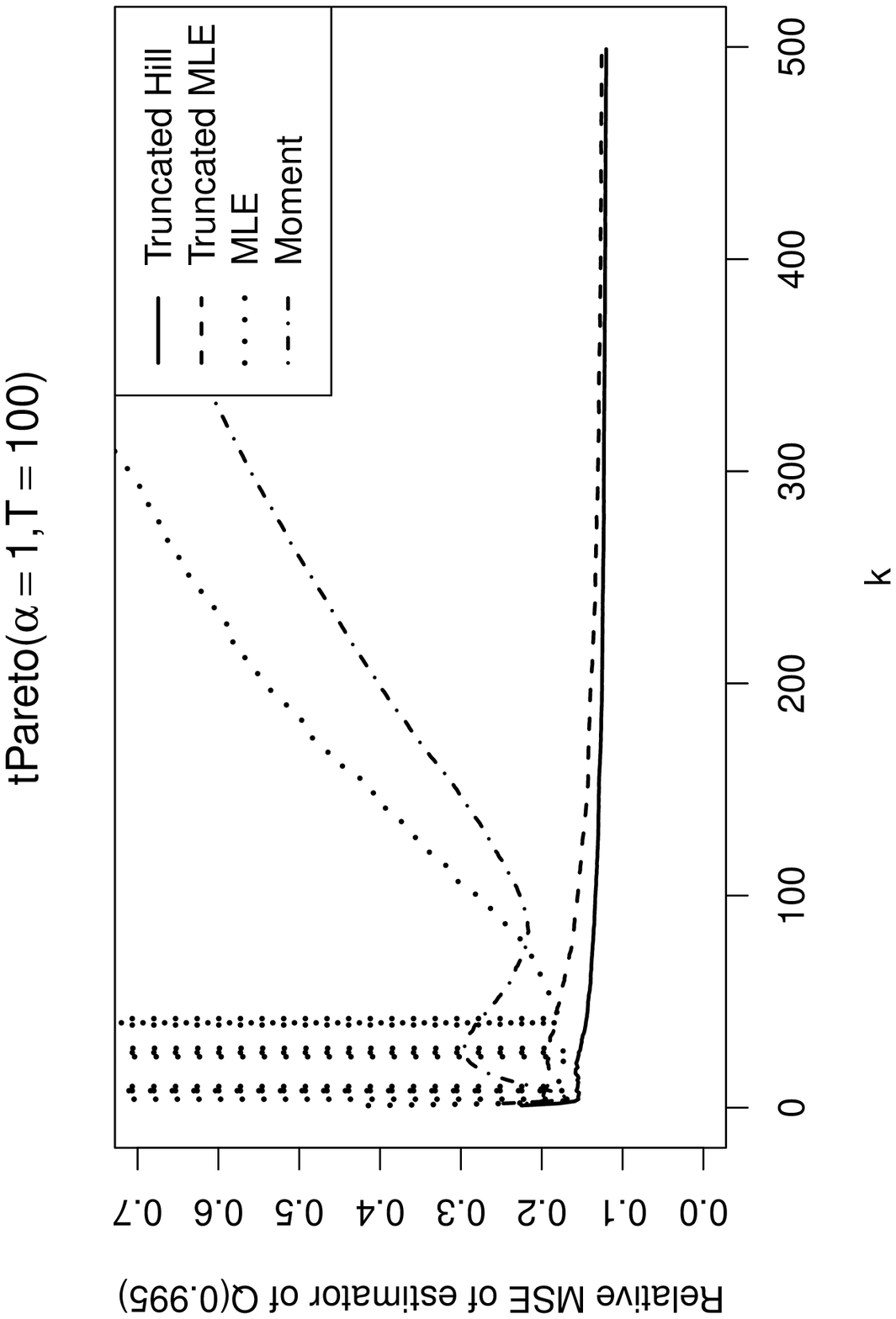}\\
	  \includegraphics[height=0.495\textwidth, angle=270]{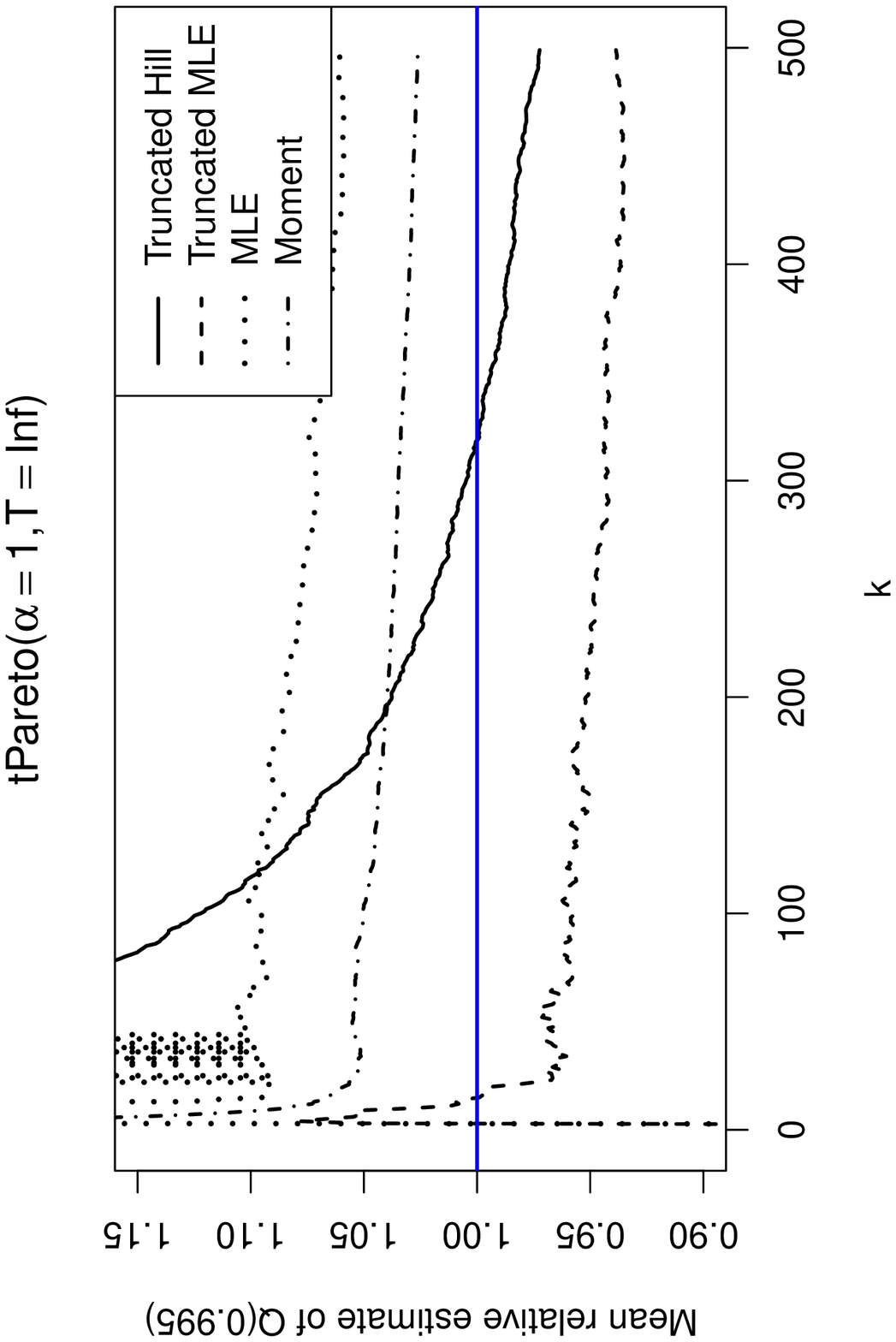}
	 \includegraphics[height=0.495\textwidth, angle=270]{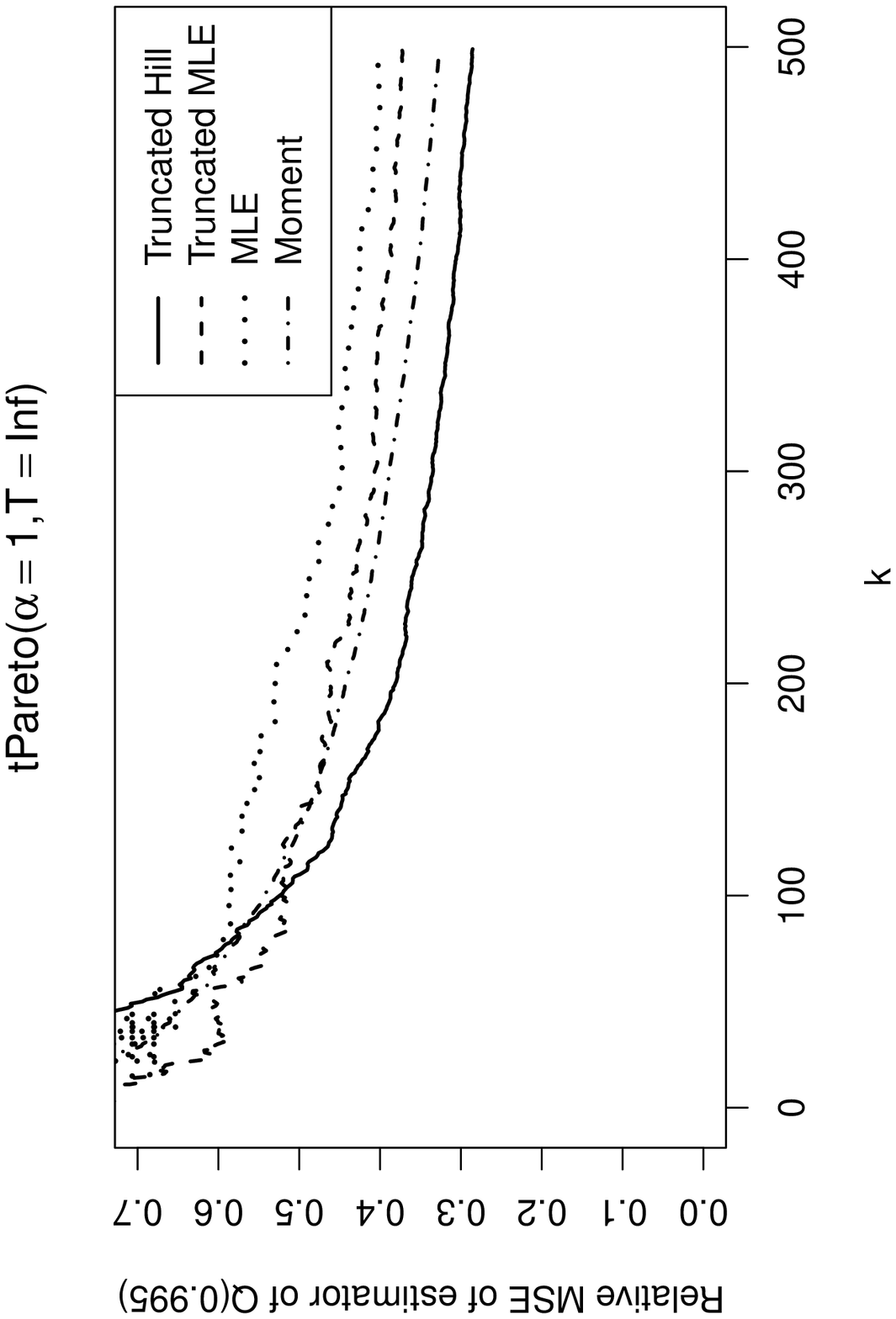}
  \caption{Mean deviations of $\hat{Q}^+_{T,k}(1-p)/Q_T(1-p)$, $\hat{Q}_{T,k}(1-p)/Q_T(1-p)$, $\hat{Q}^{\infty}_{k}(1-p)/Q_T(1-p)$, $\hat{Q}^M_{k}(1-p)/Q_T(1-p)$  and corresponding MSE with $p=0.005$ for the standard Pareto distribution truncated at $Q_Y (0.975)$ (top), $Q_Y (0.99)$ (middle) and non truncated (bottom).}
        \end{figure}

				\newpage													
   \begin{figure}[!ht]
	\centering
	  \includegraphics[height=0.495\textwidth, angle=270]{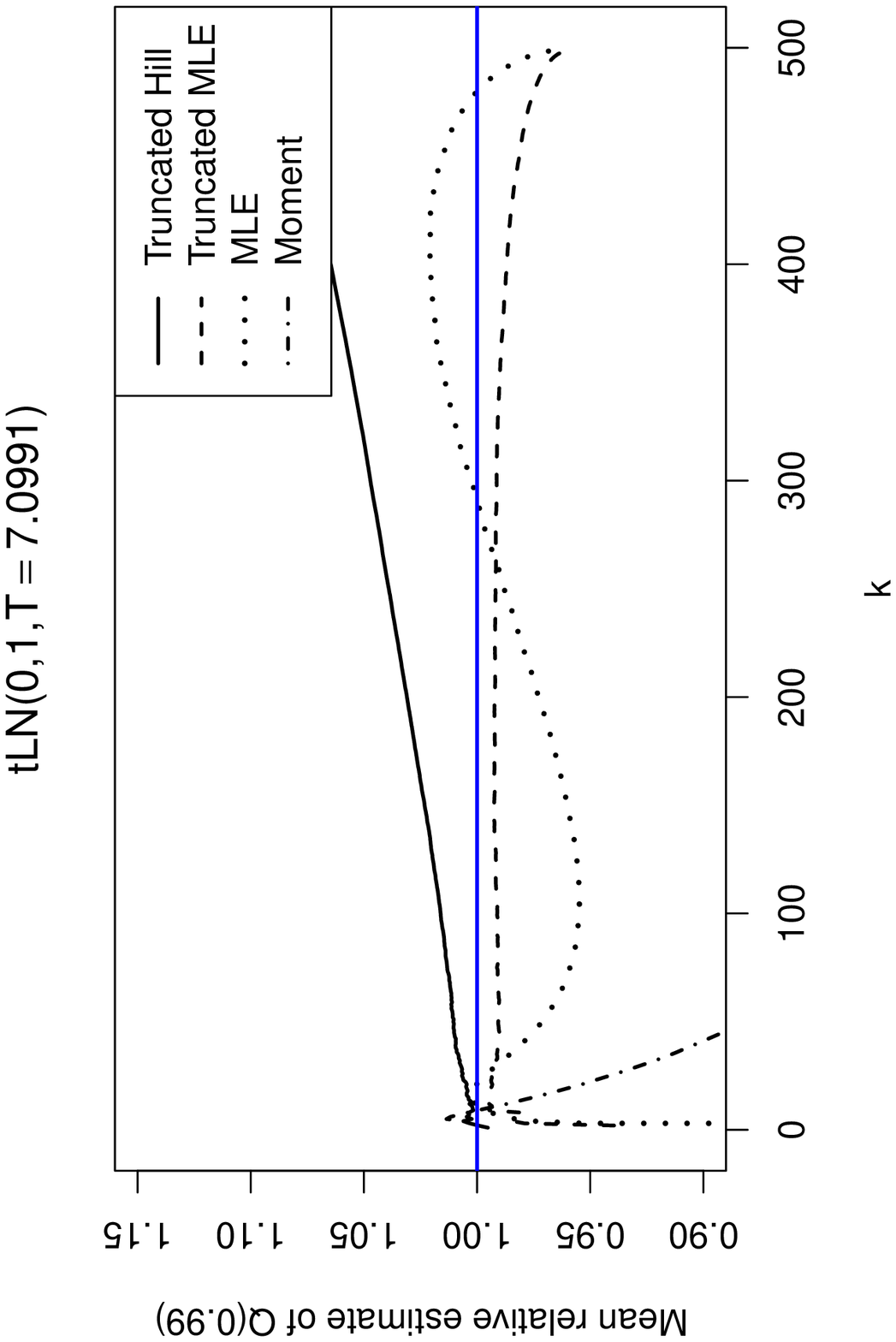}
	 \includegraphics[height=0.495\textwidth, angle=270]{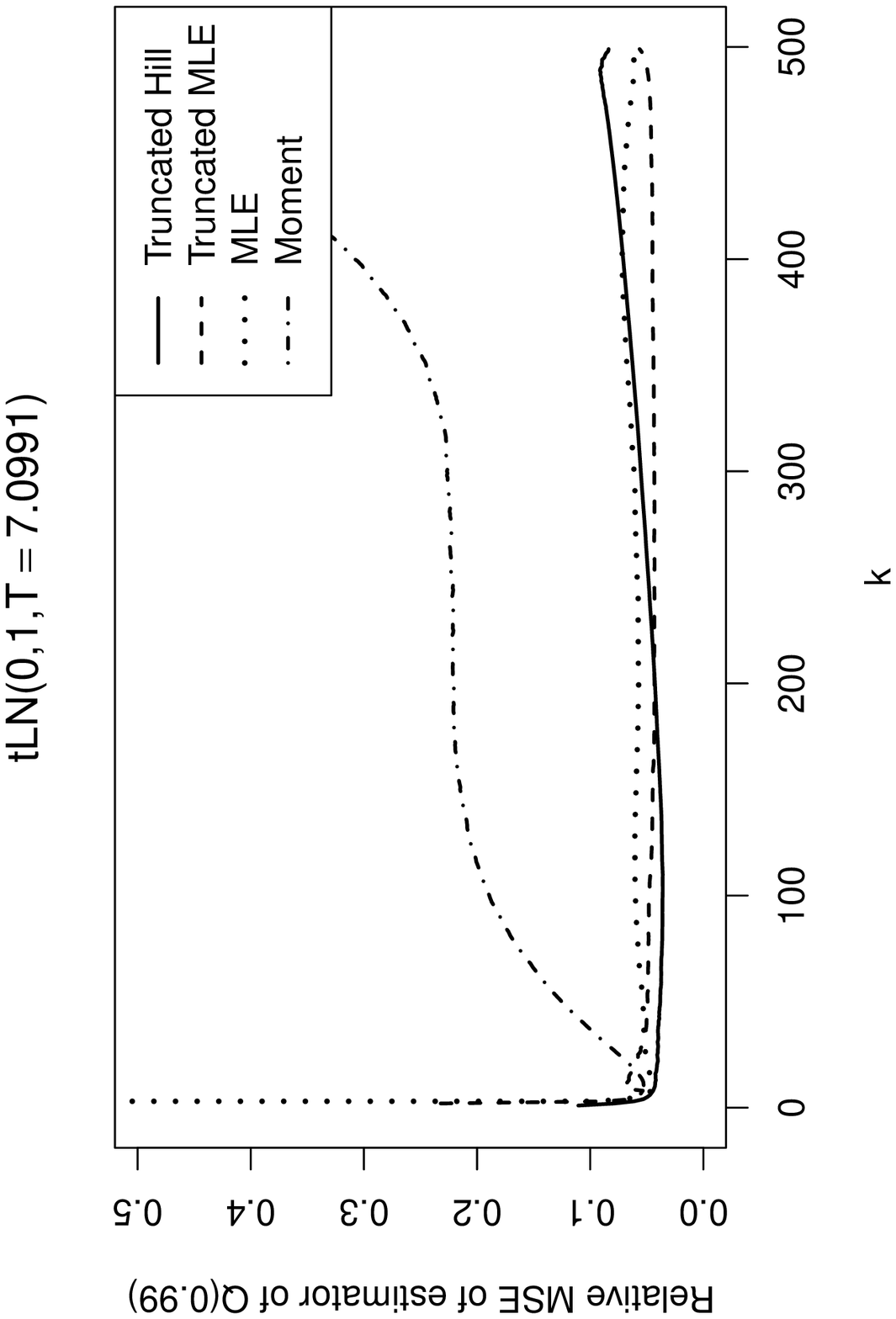}\\
	  \includegraphics[height=0.495\textwidth, angle=270]{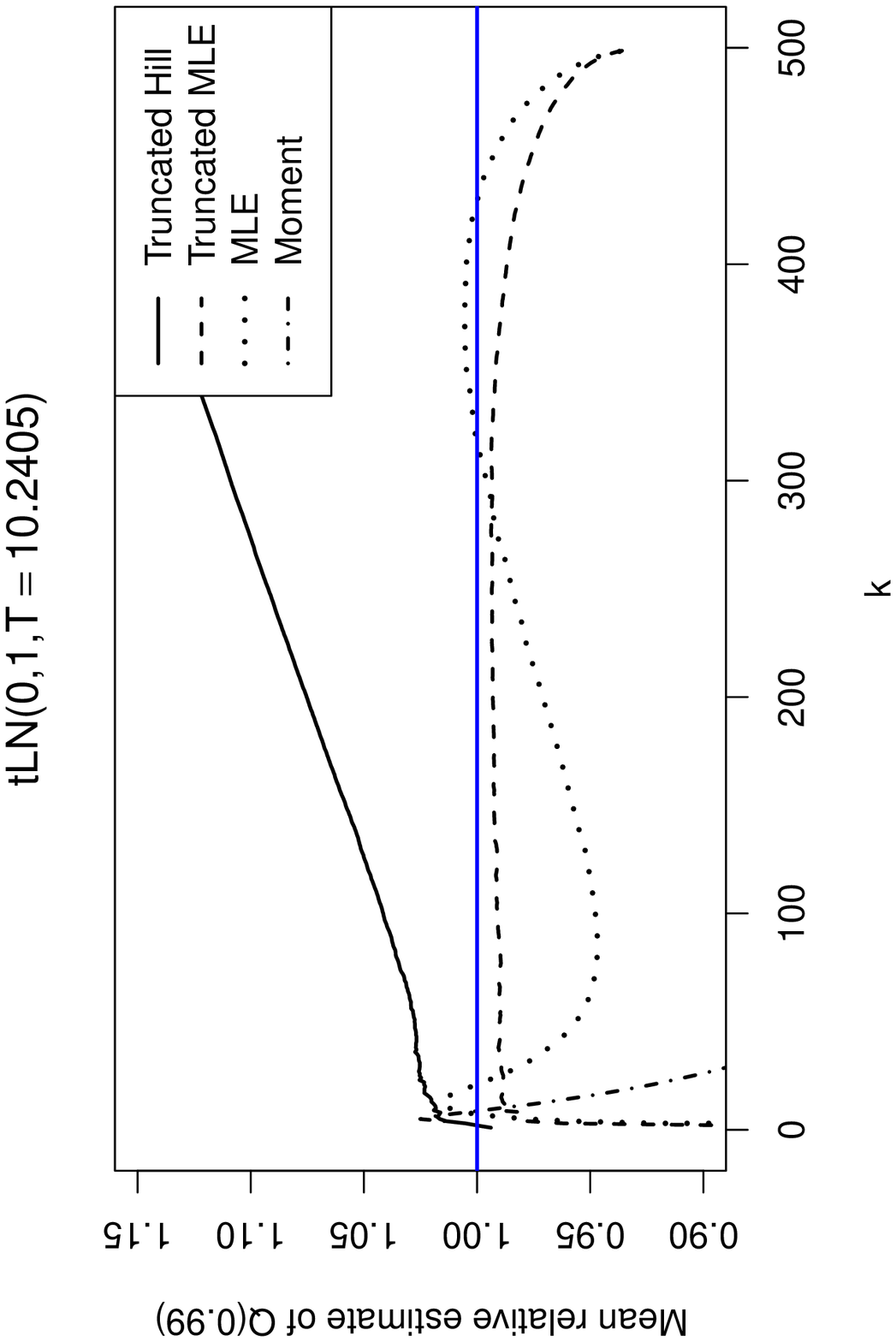}
	 \includegraphics[height=0.495\textwidth, angle=270]{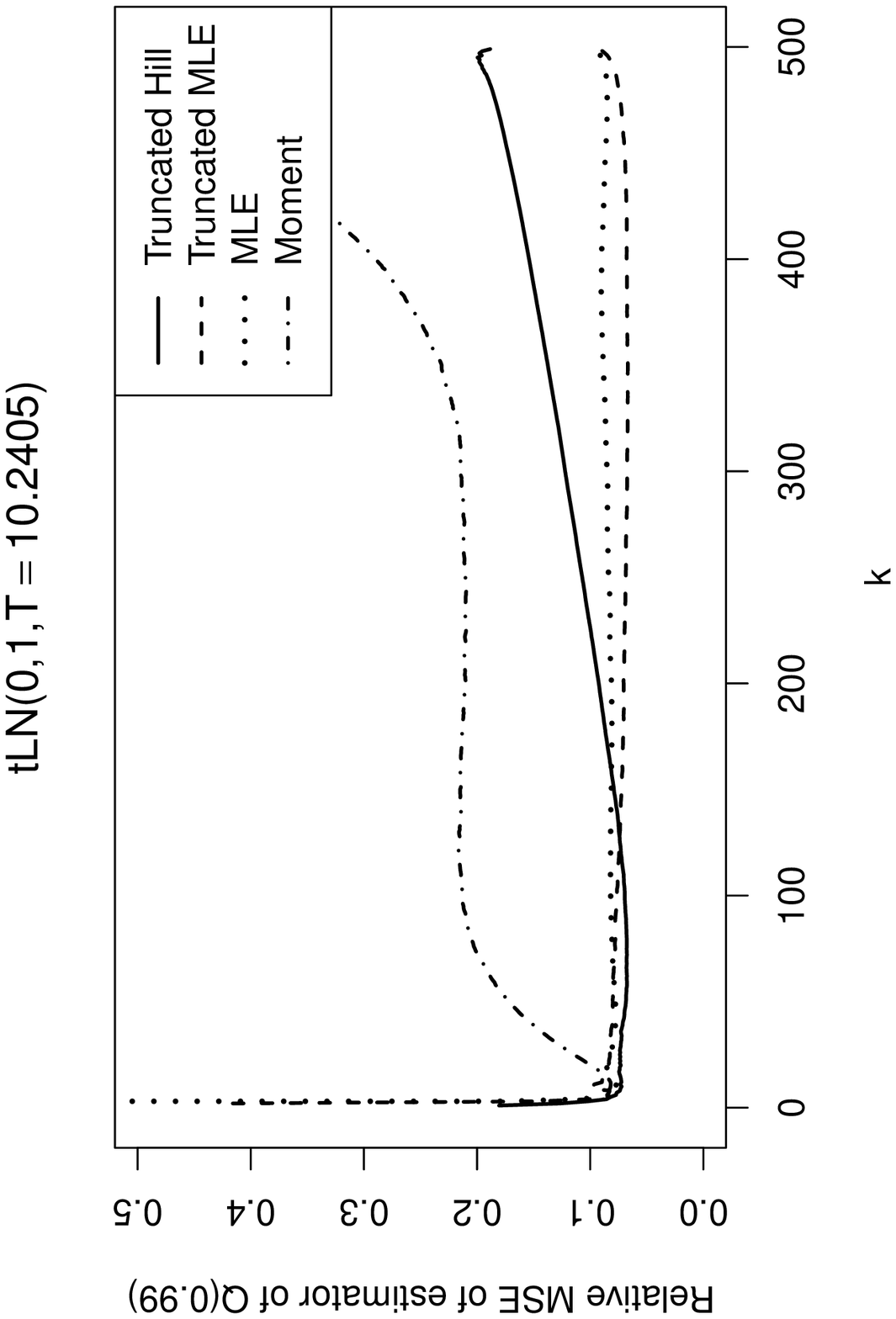}\\
	  \includegraphics[height=0.495\textwidth, angle=270]{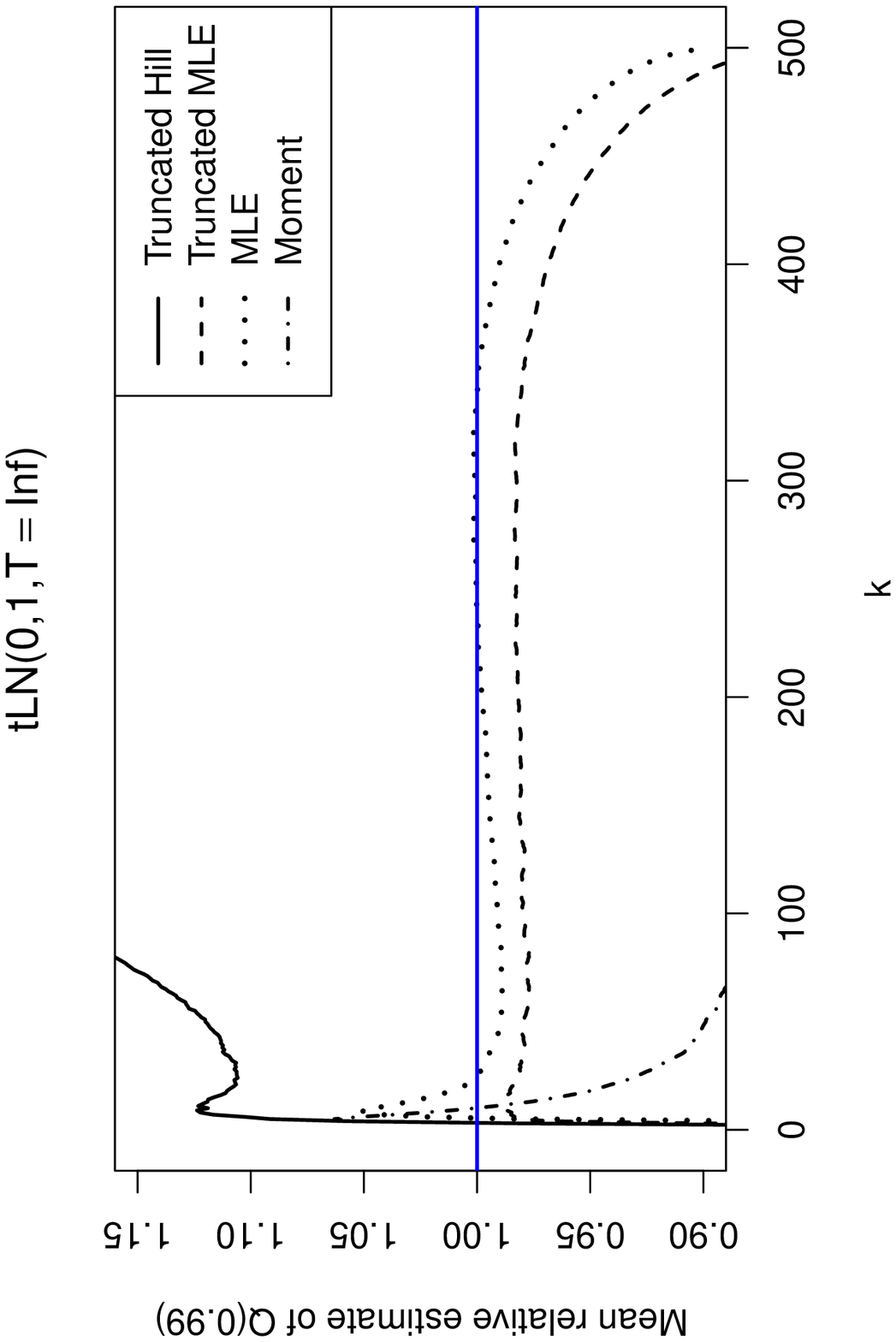}
	 \includegraphics[height=0.495\textwidth, angle=270]{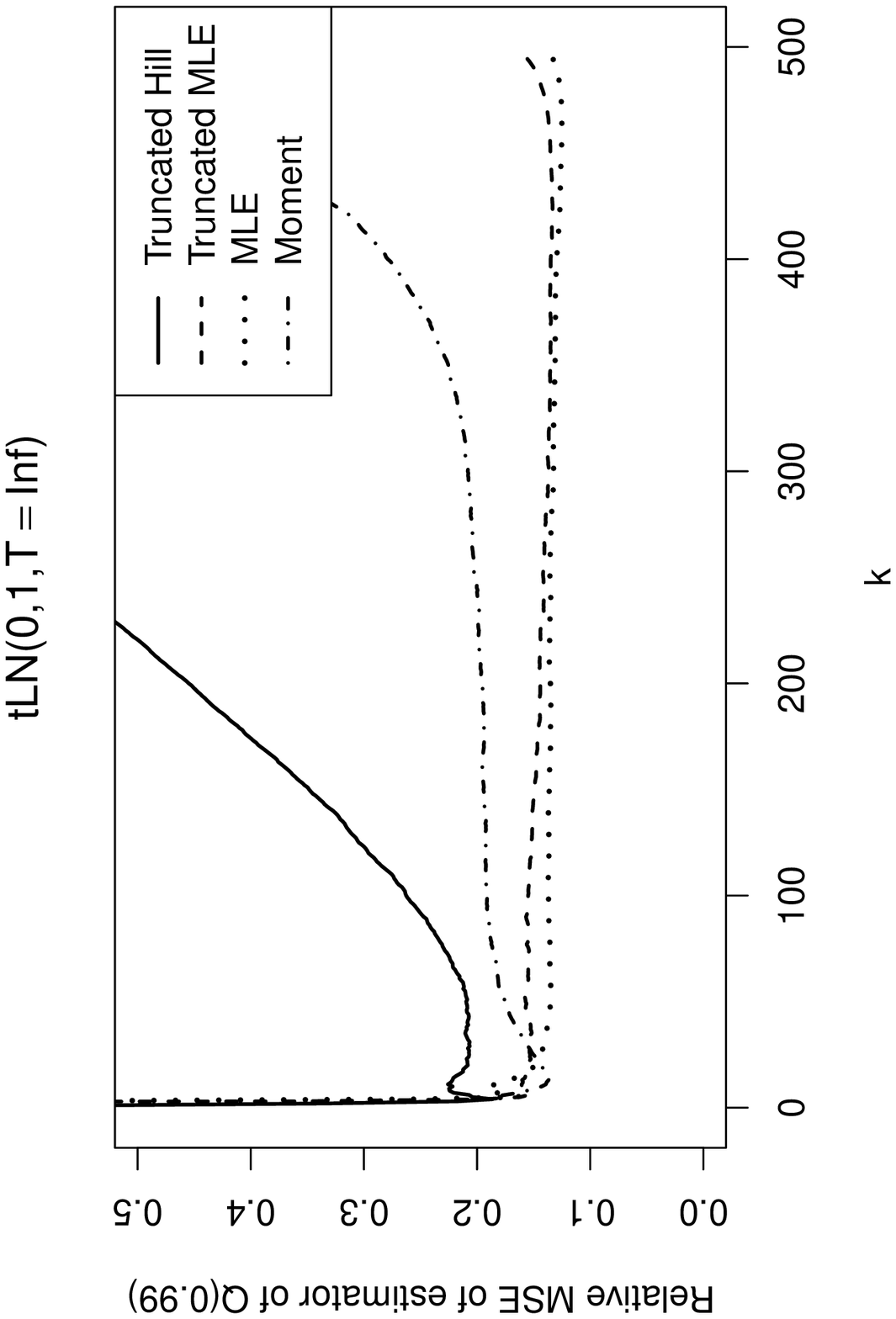}
  \caption{Mean deviations of $\hat{Q}^+_{T,k}(1-p)/Q_T(1-p)$, $\hat{Q}_{T,k}(1-p)/Q_T(1-p)$, $\hat{Q}^{\infty}_{k}(1-p)/Q_T(1-p)$, $\hat{Q}^M_{k}(1-p)/Q_T(1-p)$  and corresponding MSE with $p=0.01$ for the standard lognormal distribution truncated at $Q_Y (0.975)$ (top), $Q_Y (0.99)$ (middle) and non truncated (bottom).}
   \end{figure}
	
			\newpage	
   \begin{figure}[!ht]
	\centering
	  \includegraphics[height=0.495\textwidth, angle=270]{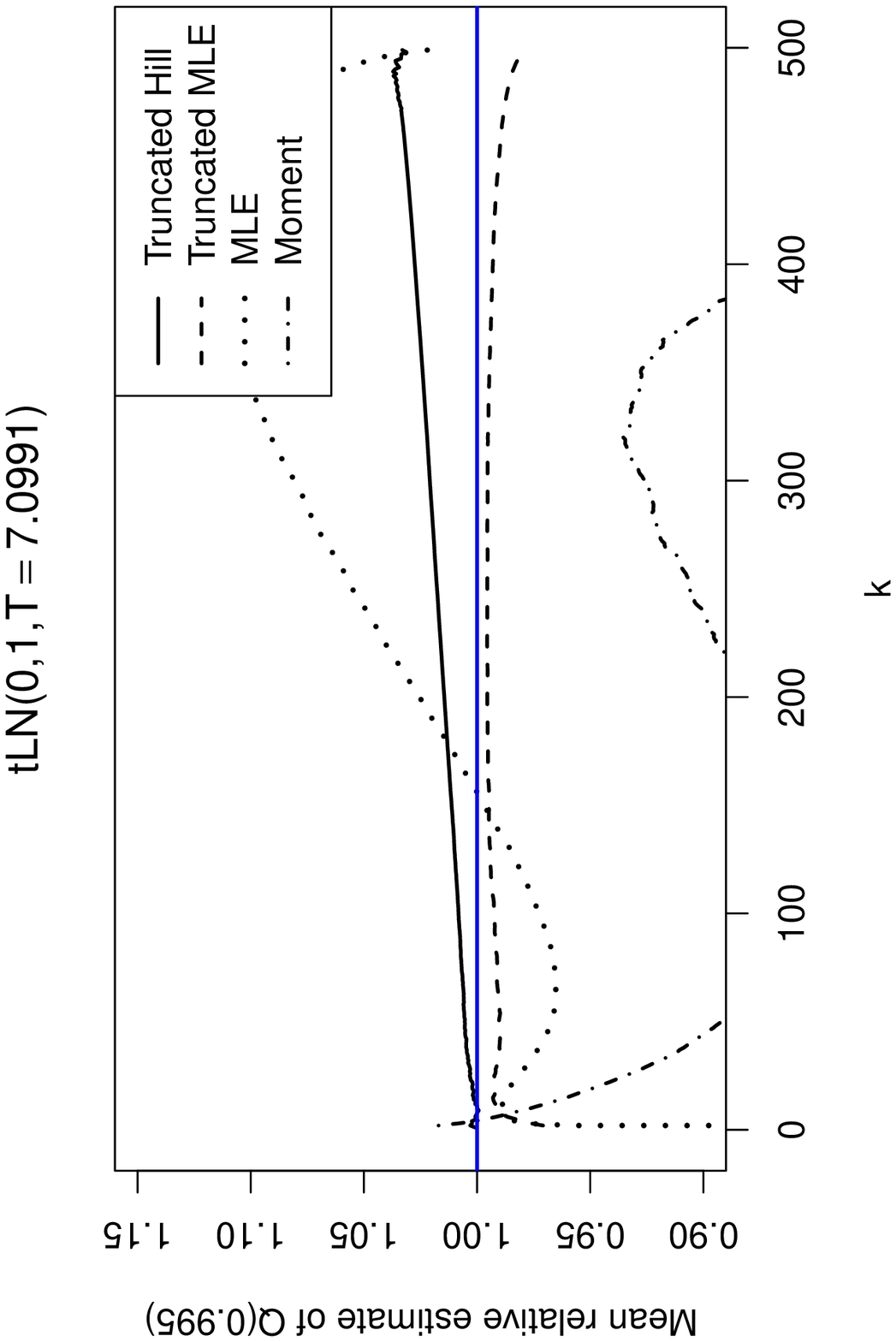}
	 \includegraphics[height=0.495\textwidth, angle=270]{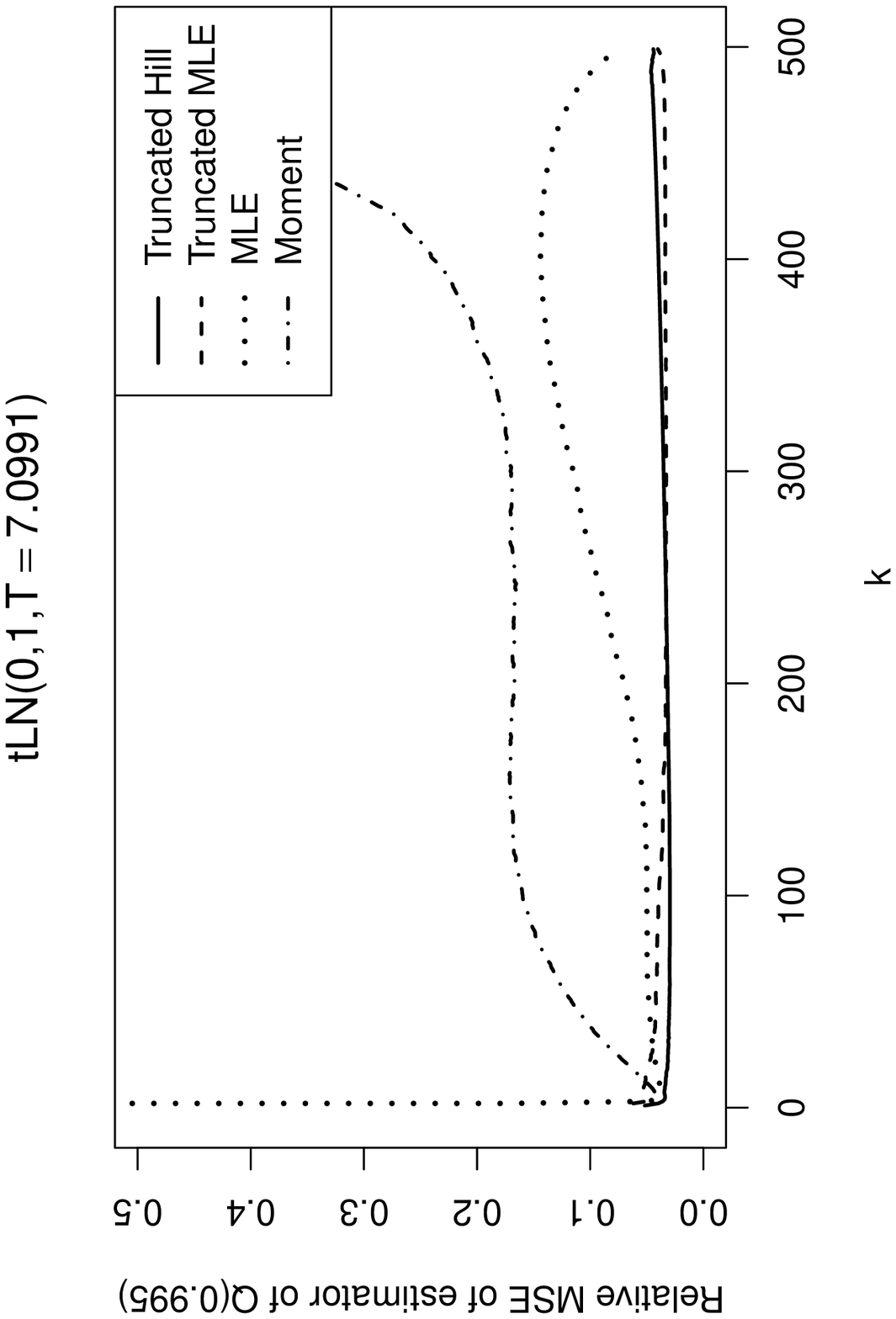}\\
	  \includegraphics[height=0.495\textwidth, angle=270]{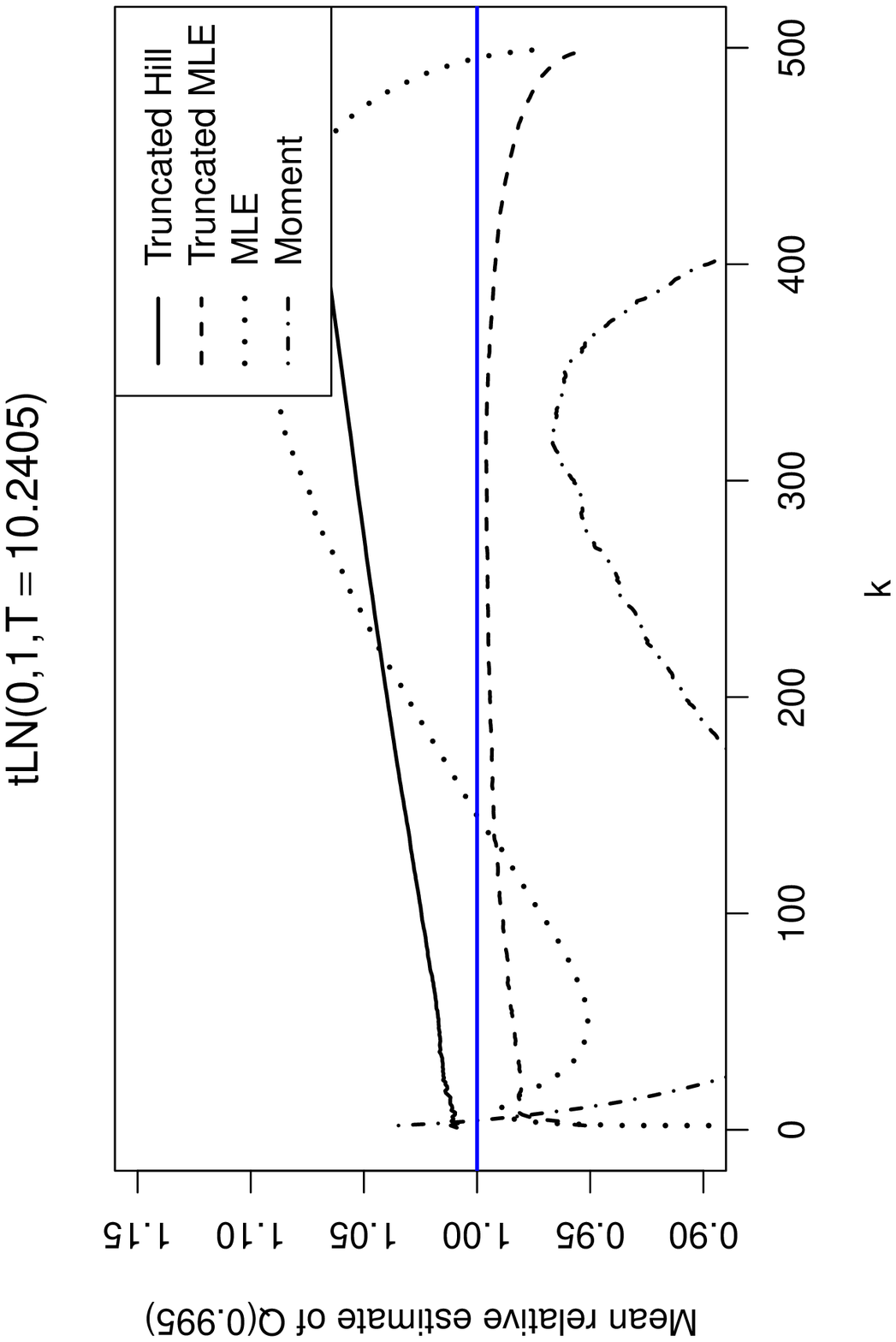}
	 \includegraphics[height=0.495\textwidth, angle=270]{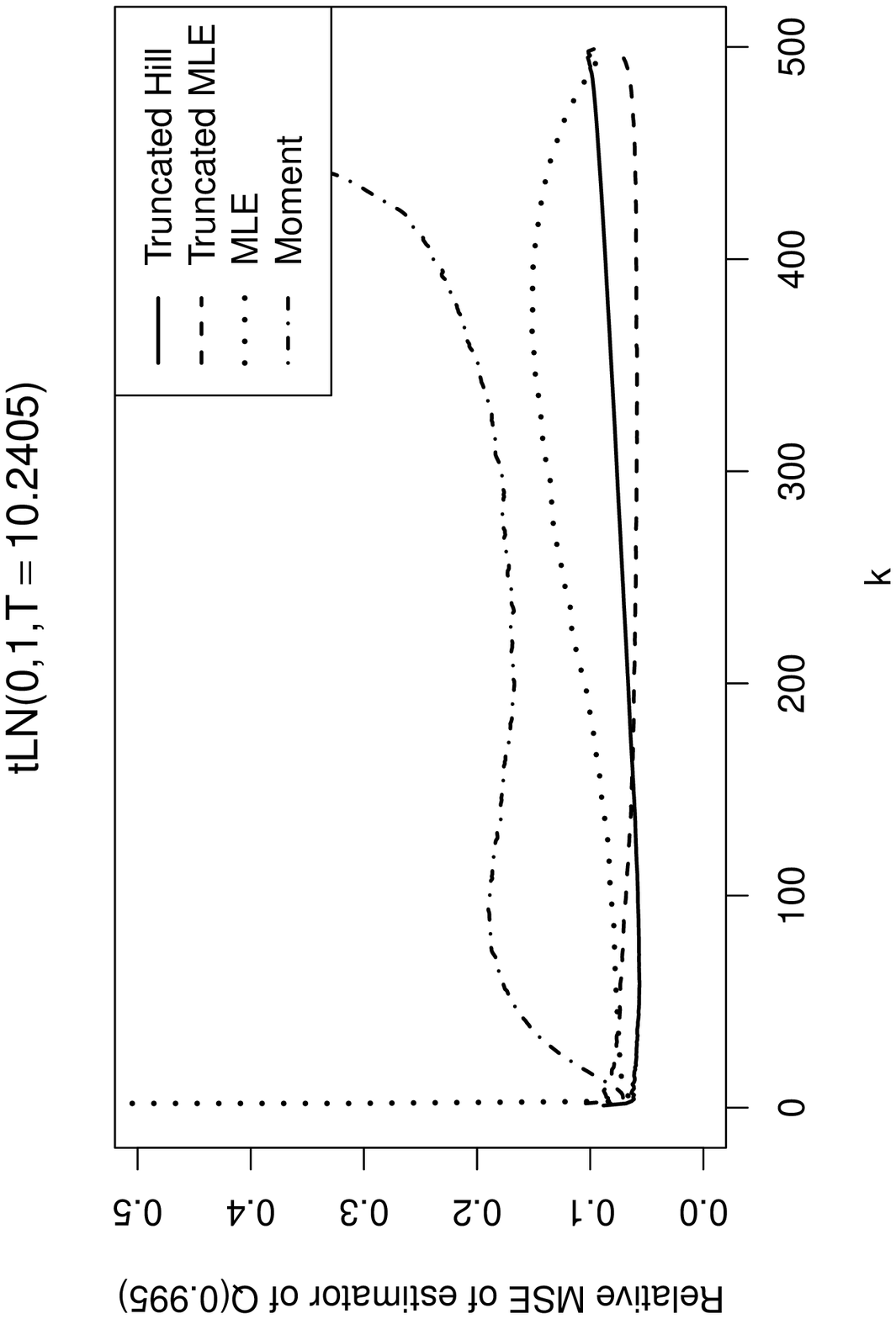}\\
	  \includegraphics[height=0.495\textwidth, angle=270]{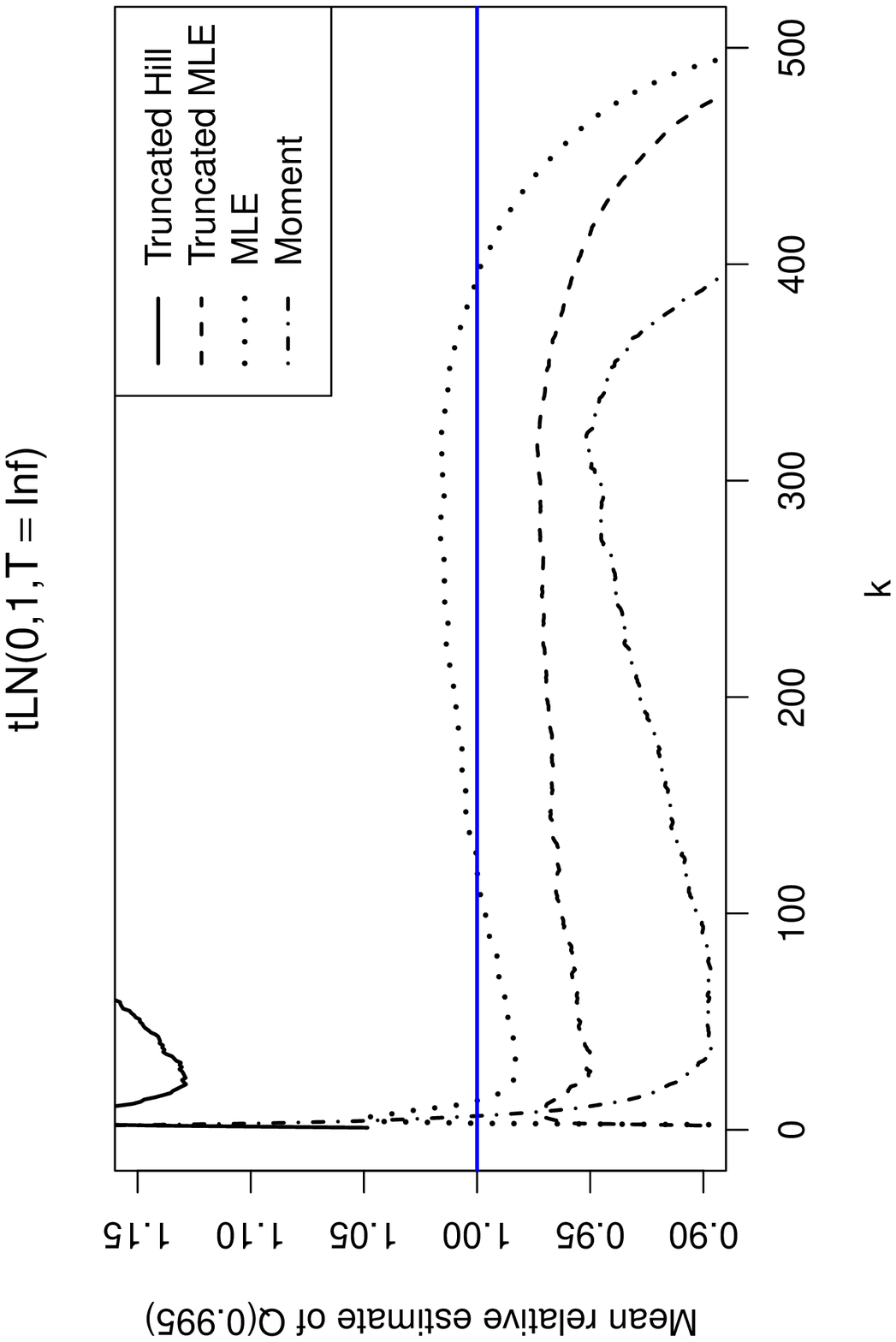}
	 \includegraphics[height=0.495\textwidth, angle=270]{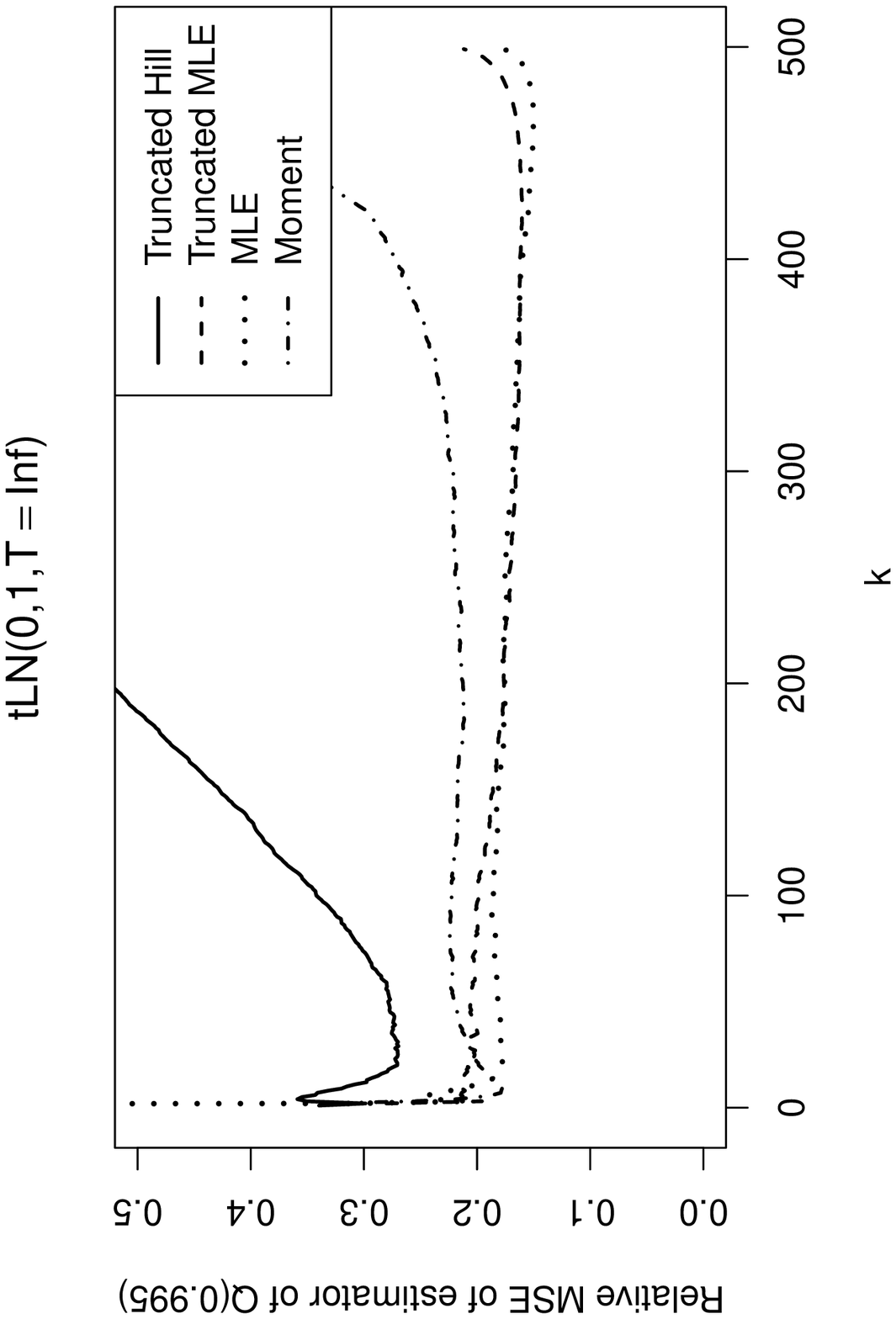}
	  \caption{Mean deviations of $\hat{Q}^+_{T,k}(1-p)/Q_T(1-p)$, $\hat{Q}_{T,k}(1-p)/Q_T(1-p)$, $\hat{Q}^{\infty}_{k}(1-p)/Q_T(1-p)$, $\hat{Q}^M_{k}(1-p)/Q_T(1-p)$  and corresponding MSE with $p=0.005$ for the standard lognormal distribution truncated at $Q_Y (0.975)$ (top), $Q_Y (0.99)$ (middle) and non truncated (bottom).}
        \end{figure}
				
				\newpage			
   \begin{figure}[!ht]
	\centering
	  \includegraphics[height=0.495\textwidth, angle=270]{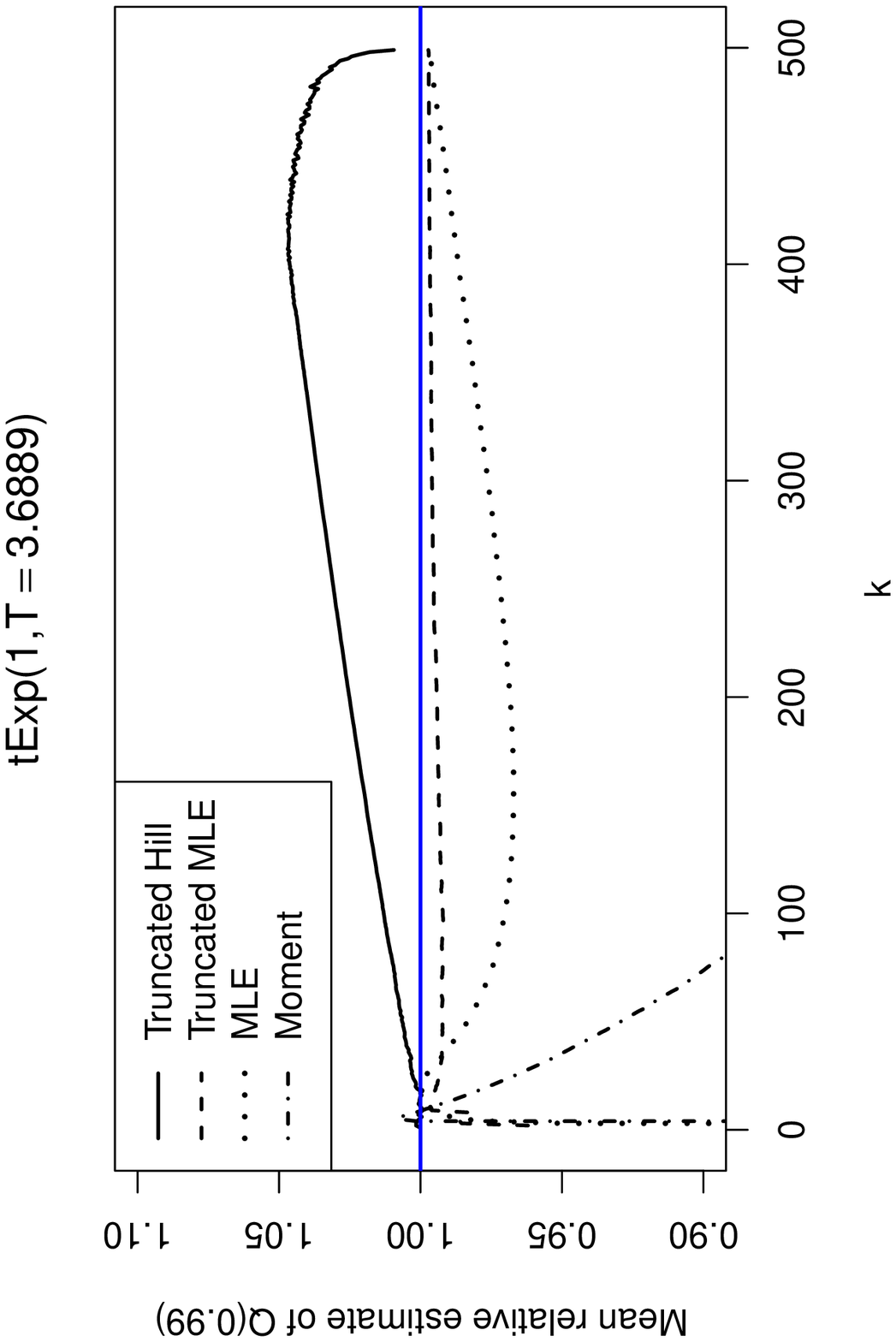}
	 \includegraphics[height=0.495\textwidth, angle=270]{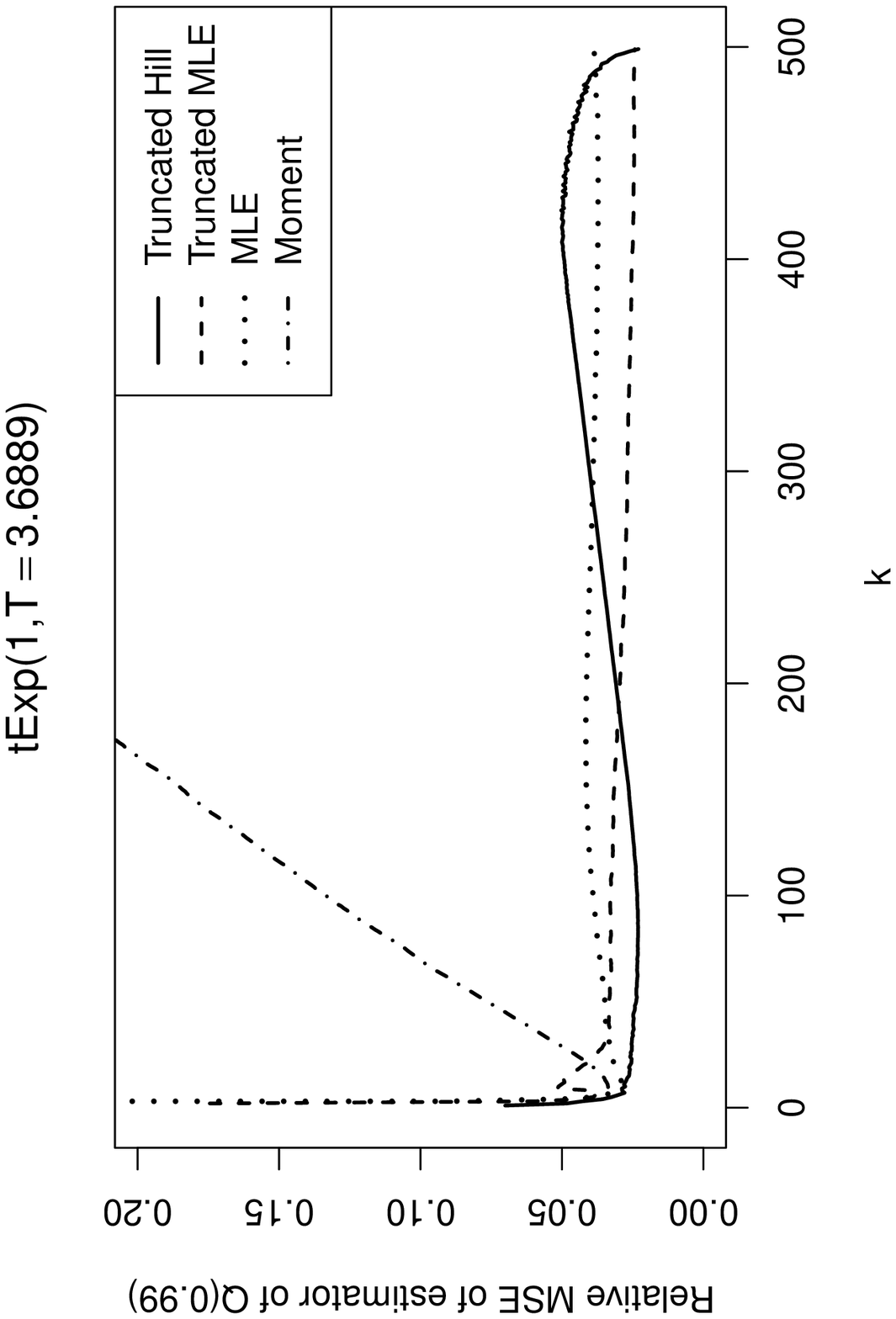}\\
	  \includegraphics[height=0.495\textwidth, angle=270]{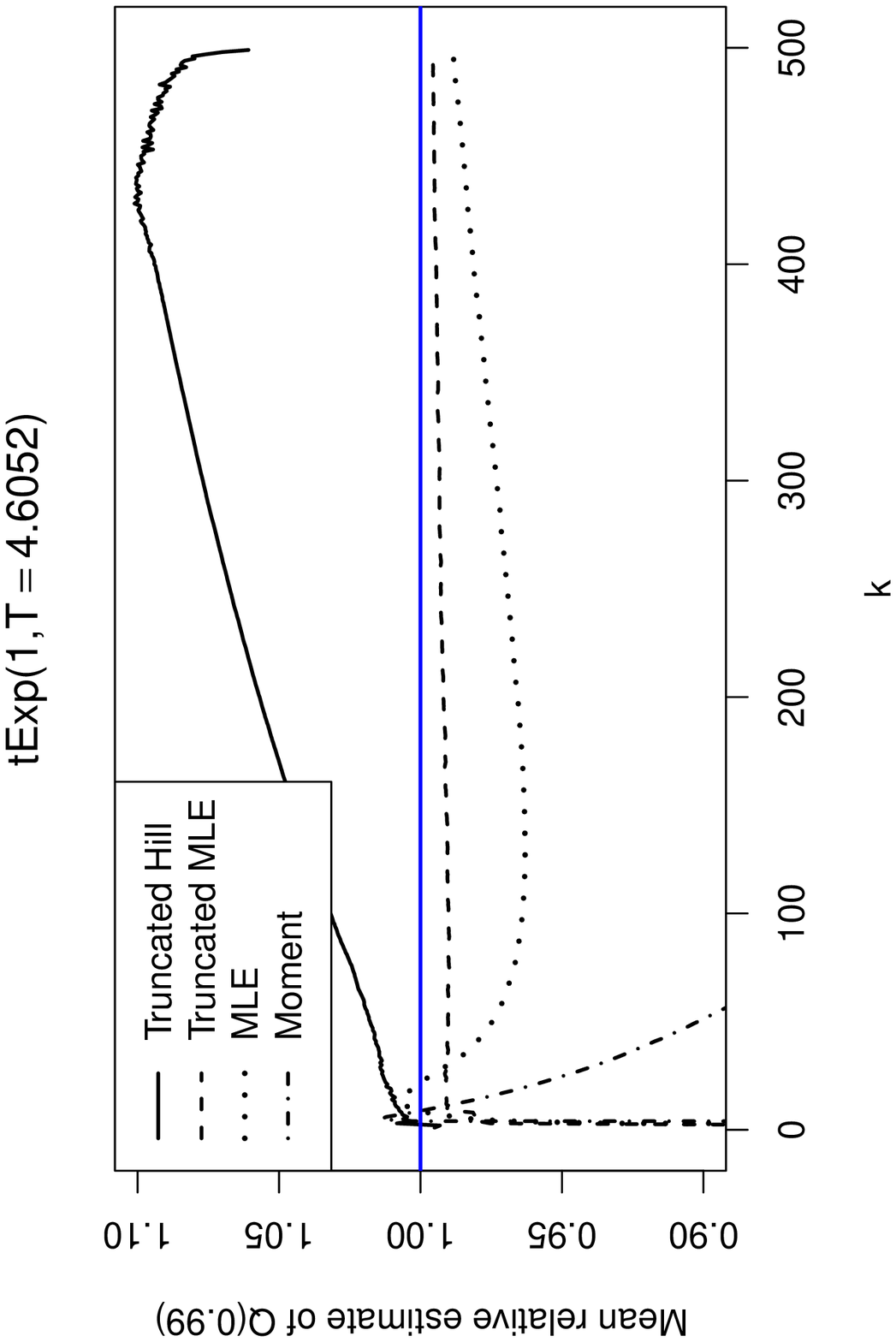}
	 \includegraphics[height=0.495\textwidth, angle=270]{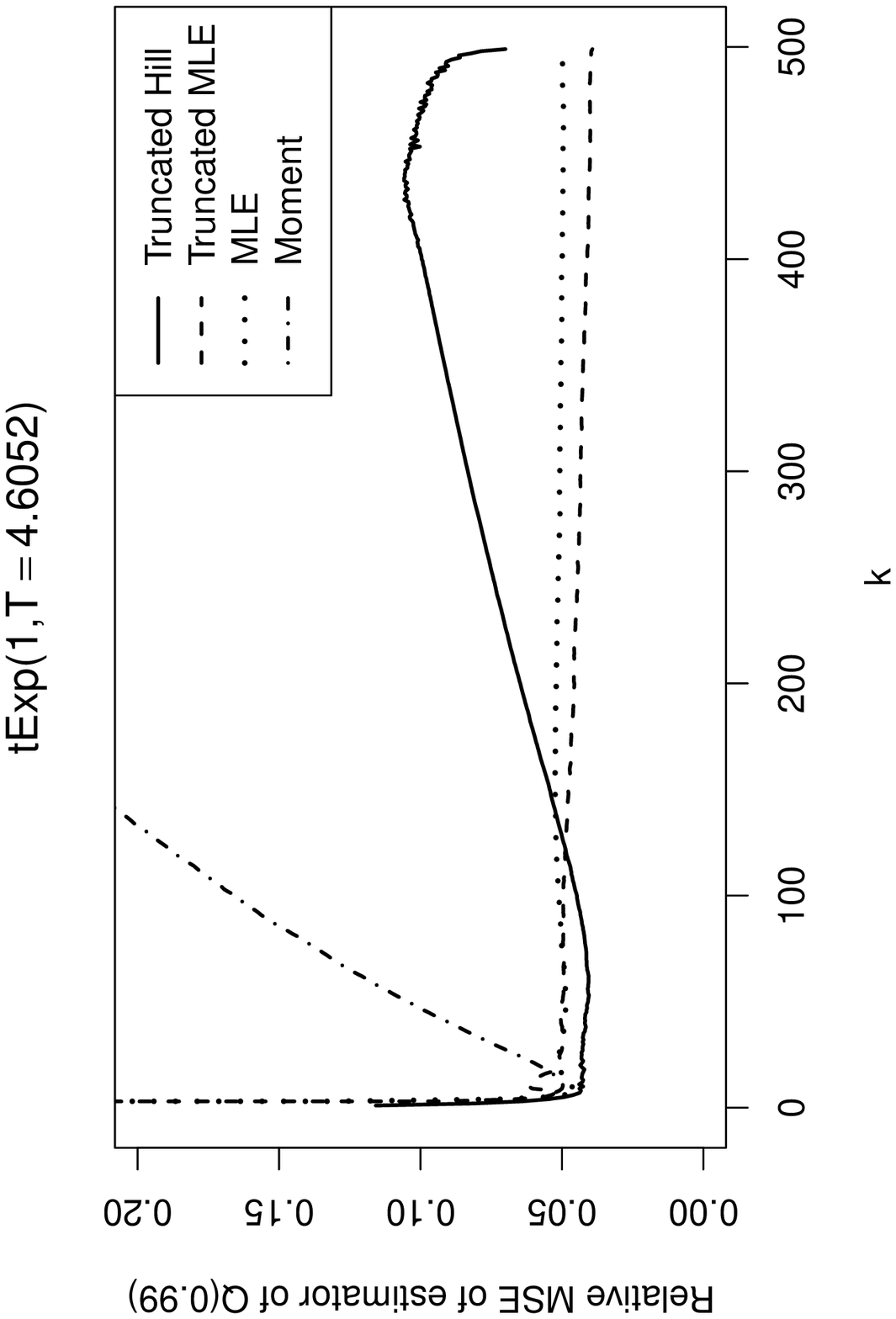}\\
	  \includegraphics[height=0.495\textwidth, angle=270]{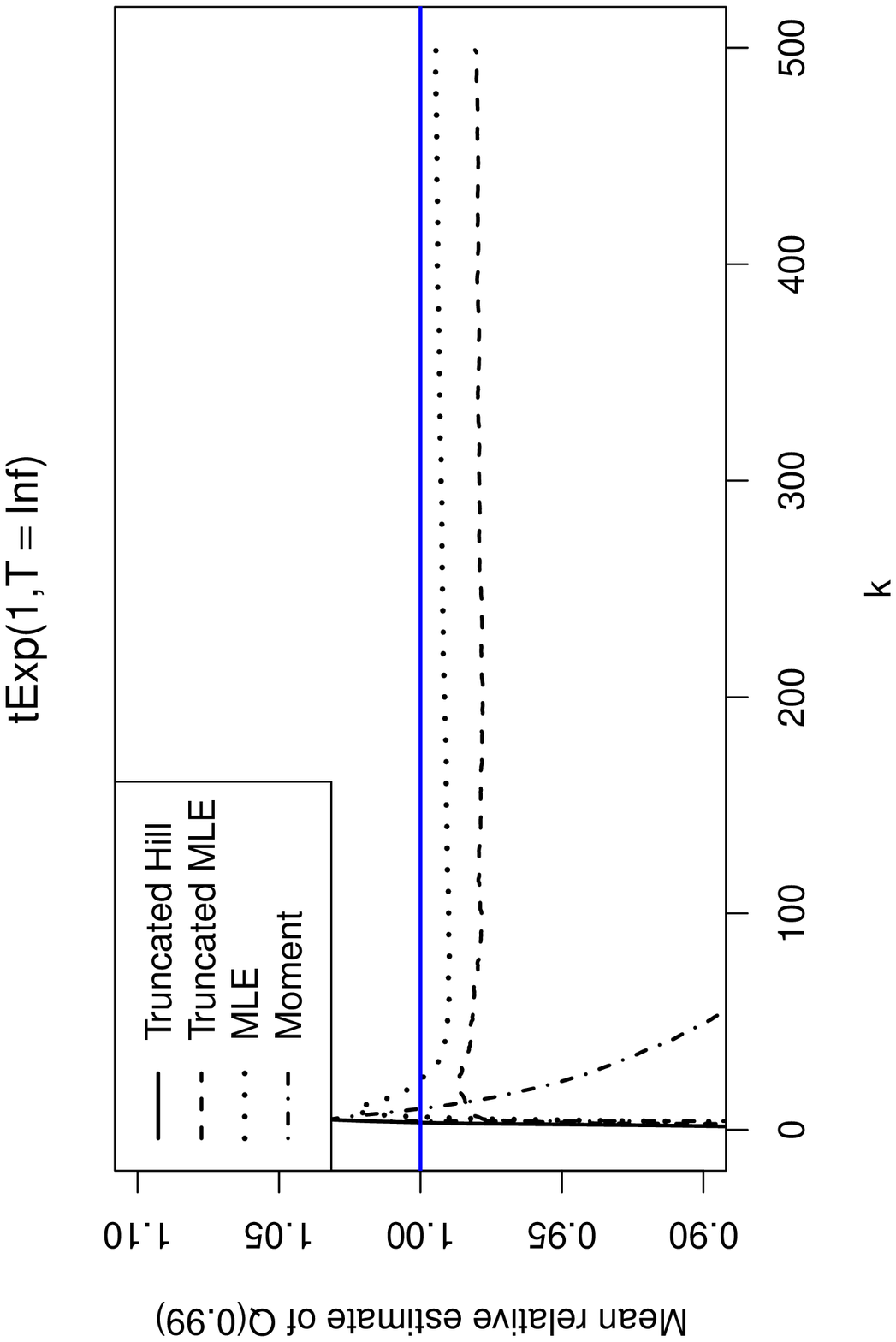}
	 \includegraphics[height=0.495\textwidth, angle=270]{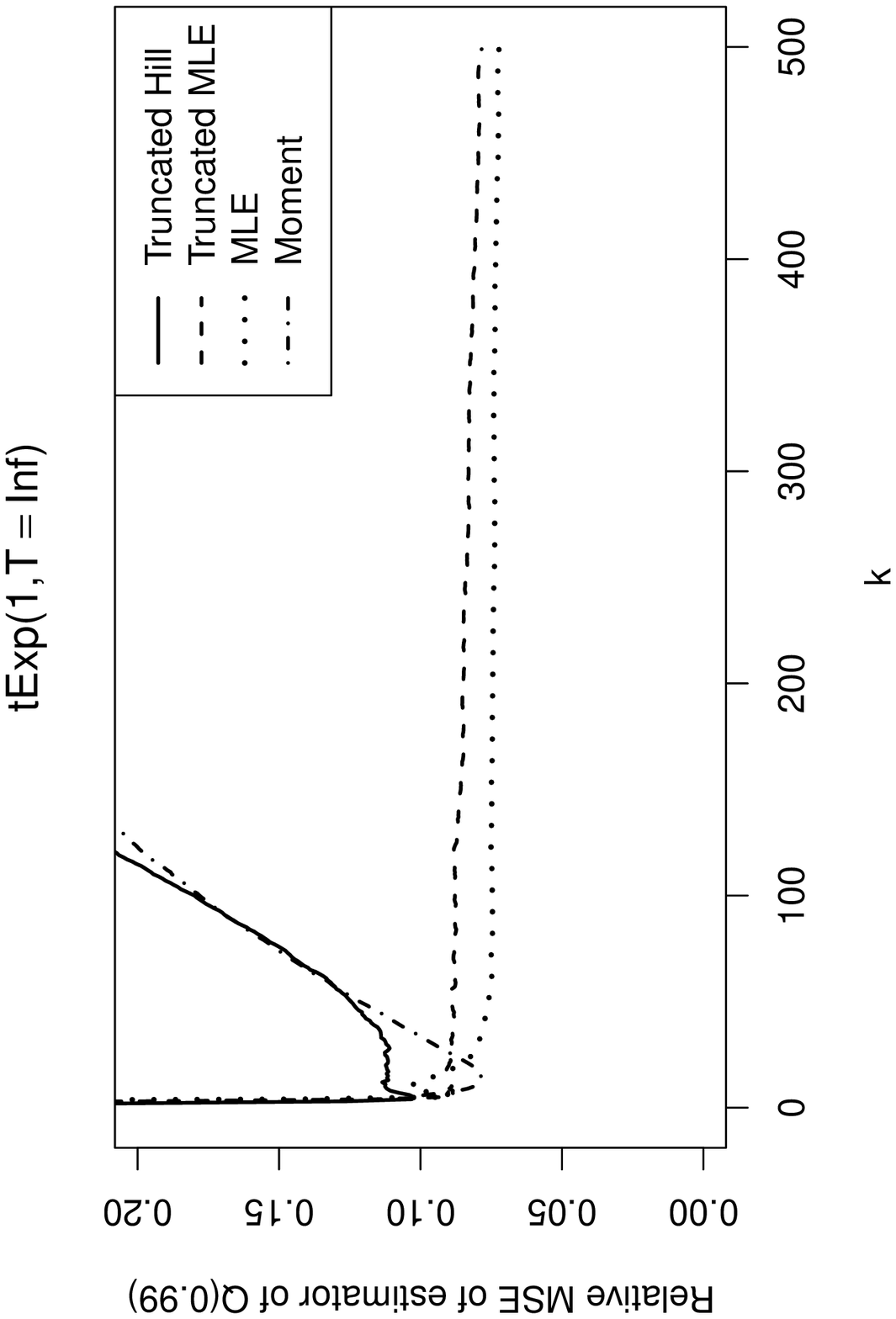}
  \caption{Mean deviations of $\hat{Q}^+_{T,k}(1-p)/Q_T(1-p)$, $\hat{Q}_{T,k}(1-p)/Q_T(1-p)$, $\hat{Q}^{\infty}_{k}(1-p)/Q_T(1-p)$, $\hat{Q}^M_{k}(1-p)/Q_T(1-p)$  and corresponding MSE with $p=0.01$ for the standard exponential distribution truncated at $Q_Y (0.975)$ (top), $Q_Y (0.99)$ (middle) and non truncated (bottom).}
   \end{figure}
				
			\newpage							
   \begin{figure}[!ht]
	\centering
	  \includegraphics[height=0.495\textwidth, angle=270]{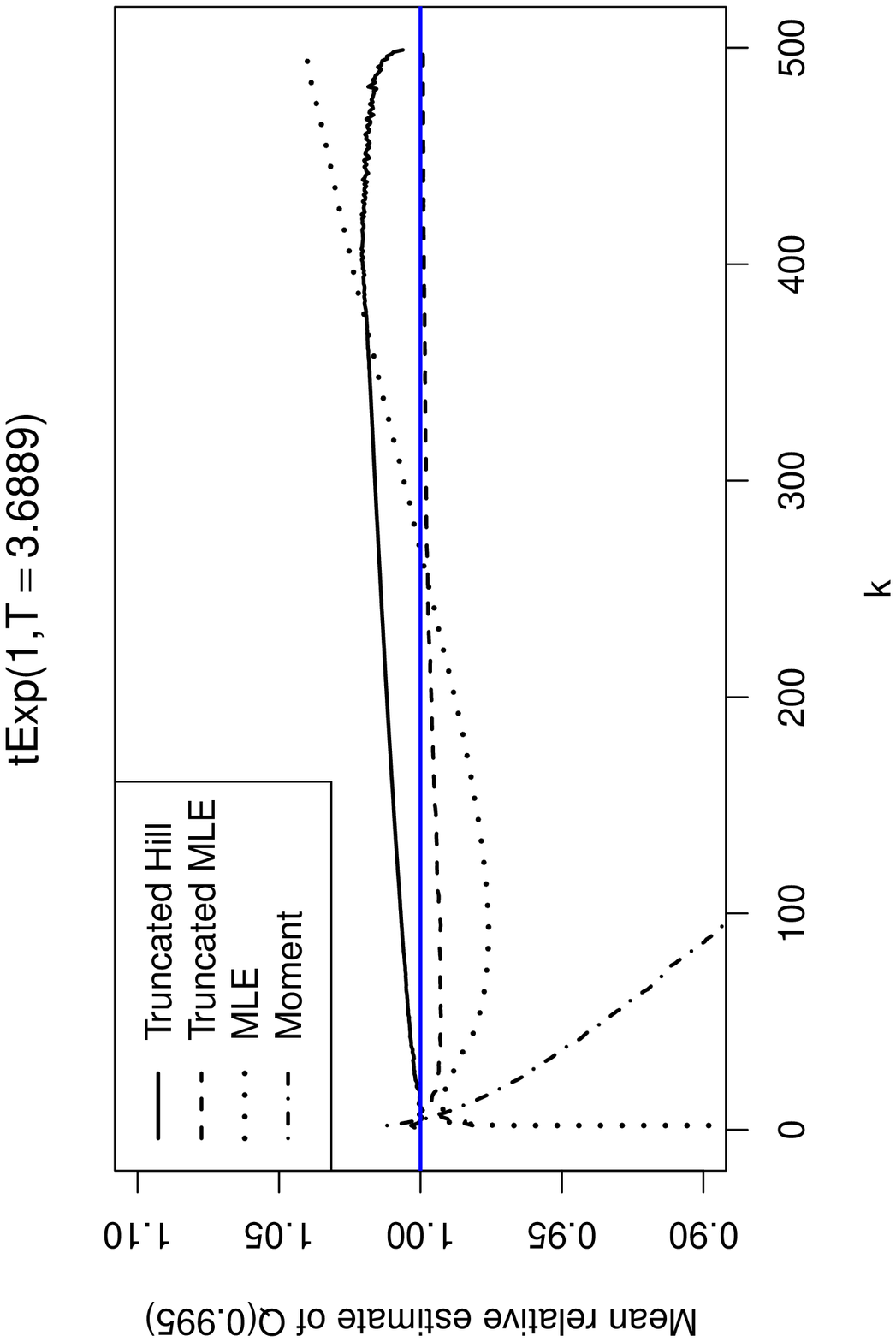}
	 \includegraphics[height=0.495\textwidth, angle=270]{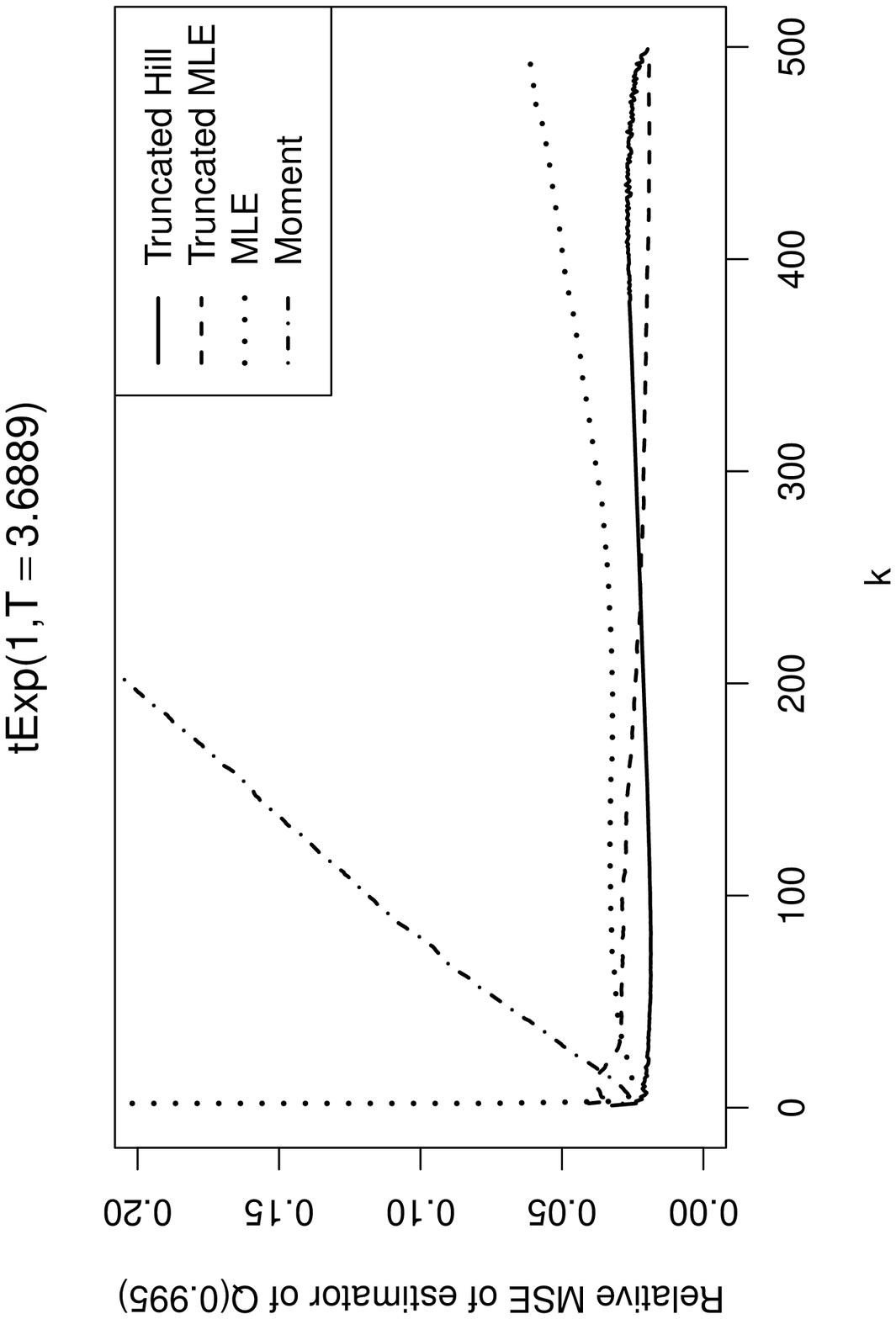}\\
	  \includegraphics[height=0.495\textwidth, angle=270]{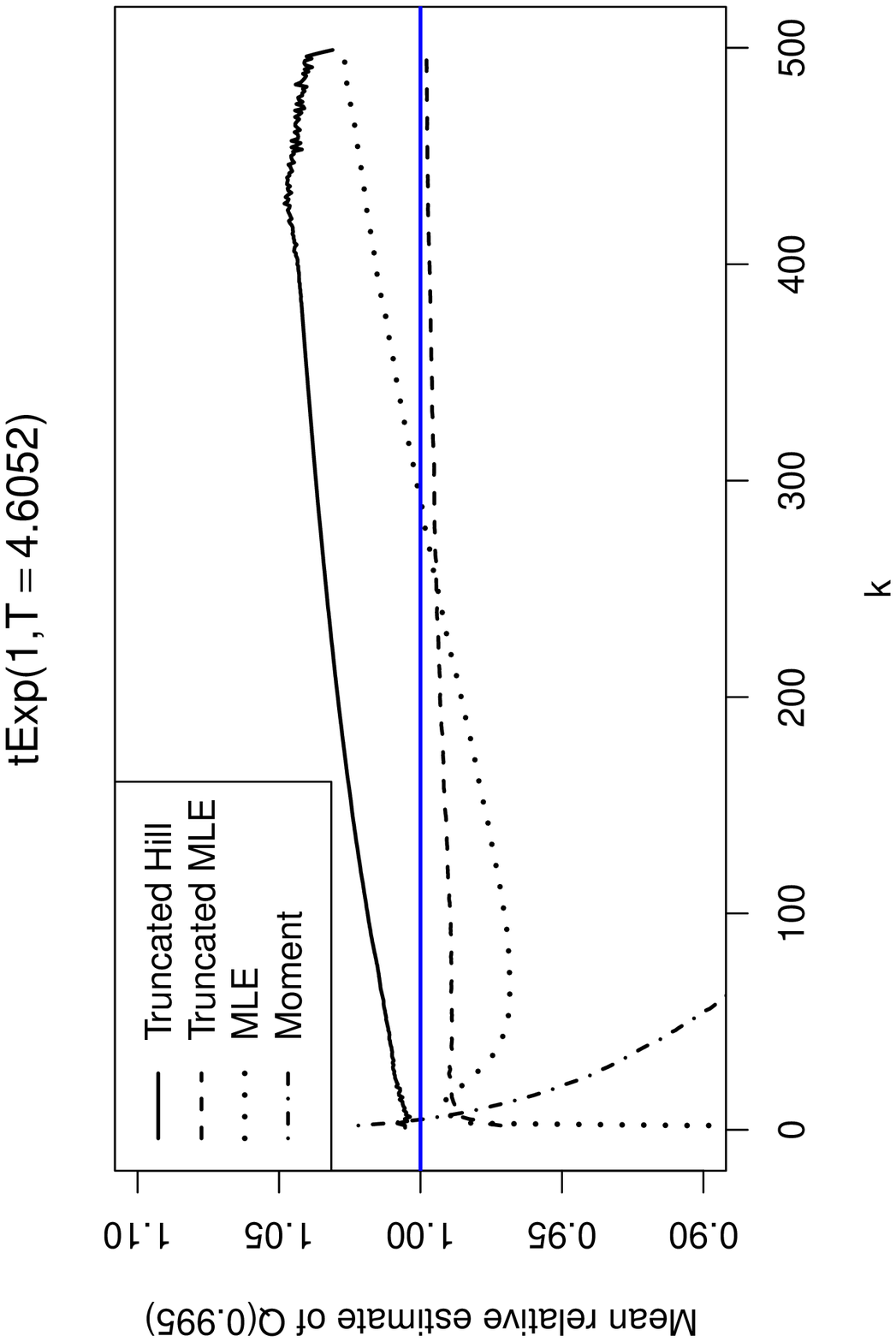}
	 \includegraphics[height=0.495\textwidth, angle=270]{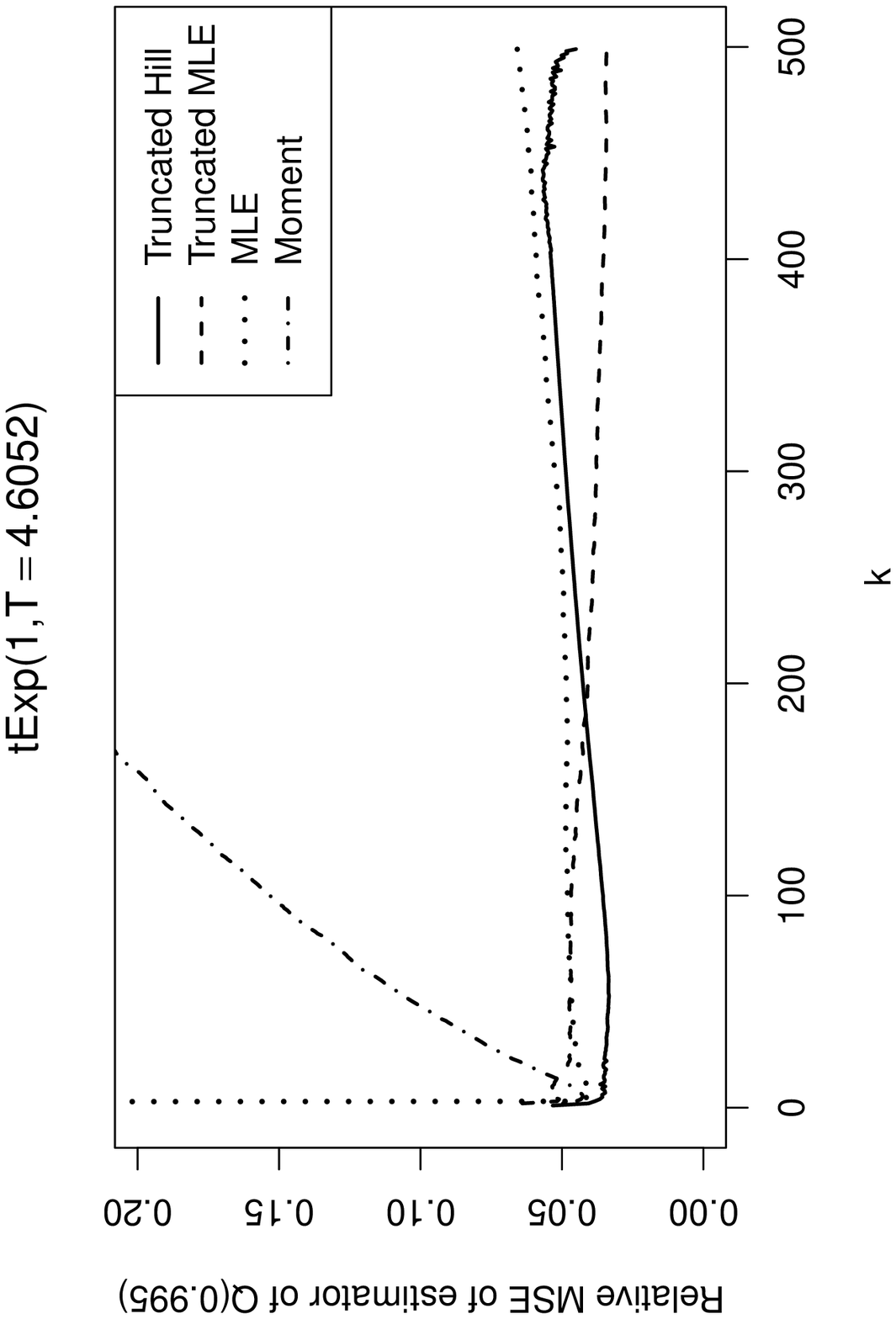}\\
	  \includegraphics[height=0.495\textwidth, angle=270]{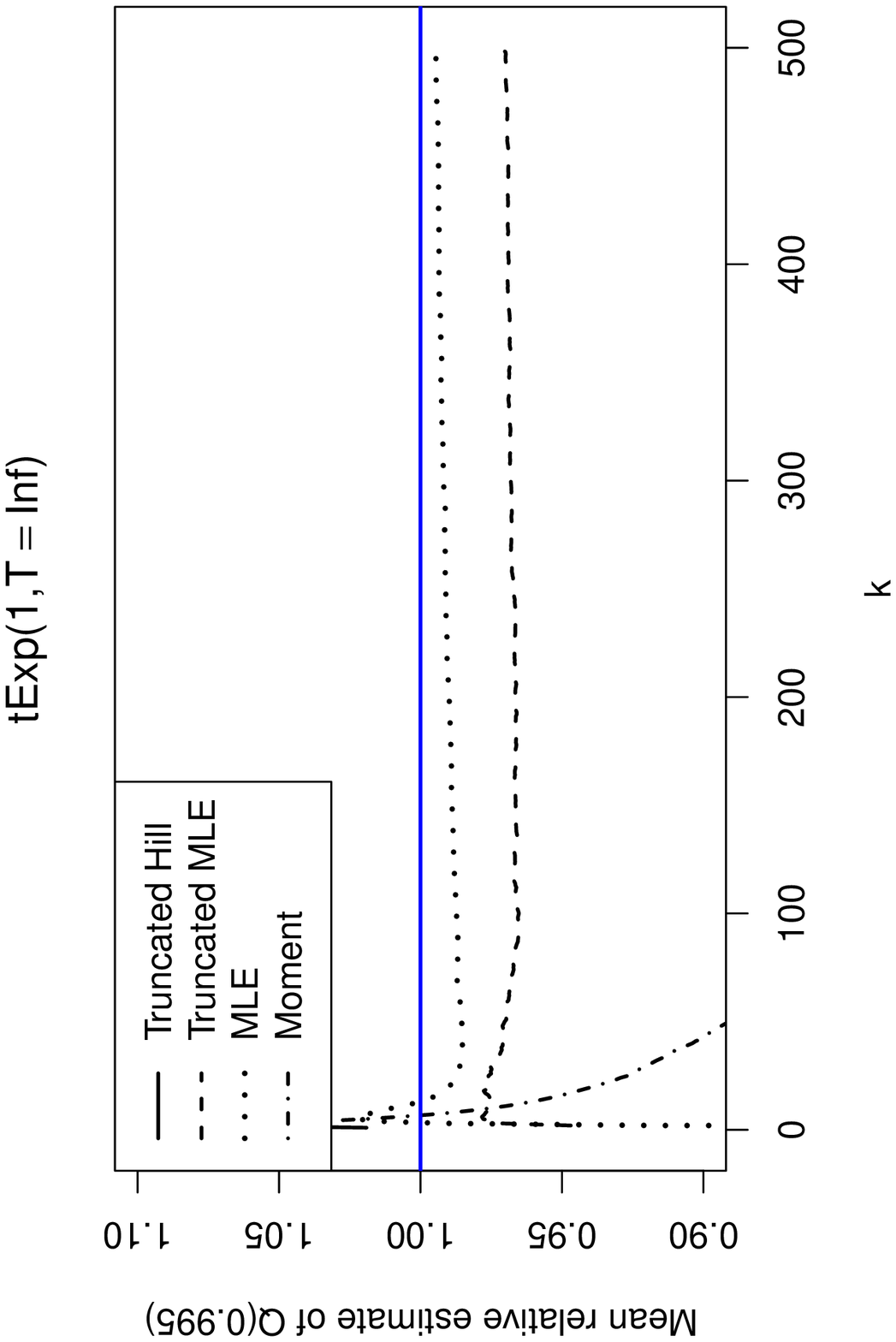}
	 \includegraphics[height=0.495\textwidth, angle=270]{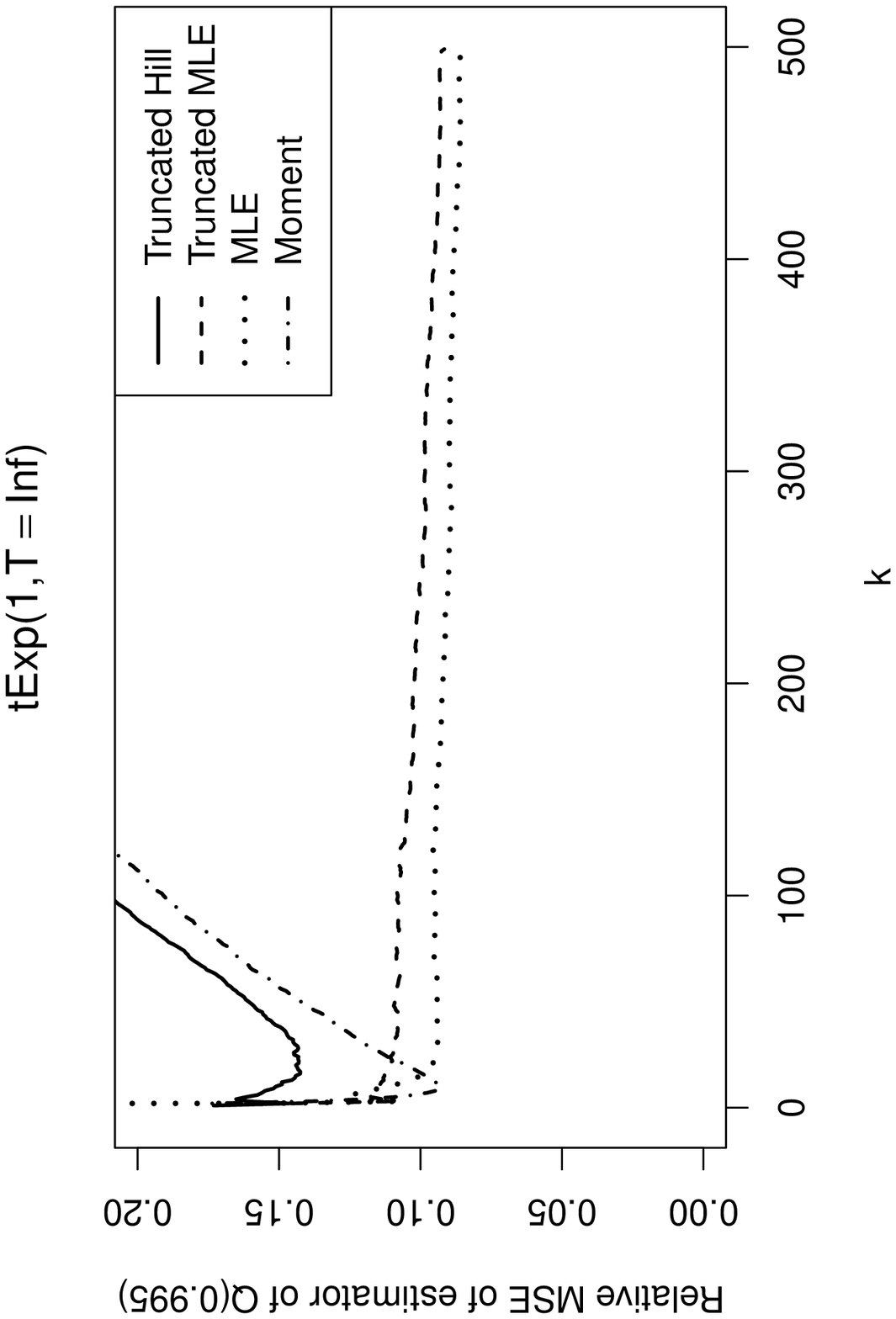}
	  \caption{Mean deviations of $\hat{Q}^+_{T,k}(1-p)/Q_T(1-p)$, $\hat{Q}_{T,k}(1-p)/Q_T(1-p)$, $\hat{Q}^{\infty}_{k}(1-p)/Q_T(1-p)$, $\hat{Q}^M_{k}(1-p)/Q_T(1-p)$  and corresponding MSE with $p=0.005$ for the standard exponential distribution truncated at $Q_Y (0.975)$ (top), $Q_Y (0.99)$ (middle) and non truncated (bottom).}
        \end{figure}

				\newpage						
	\begin{figure}[!ht]
	\centering
  \includegraphics[height=0.495\textwidth, angle=270]{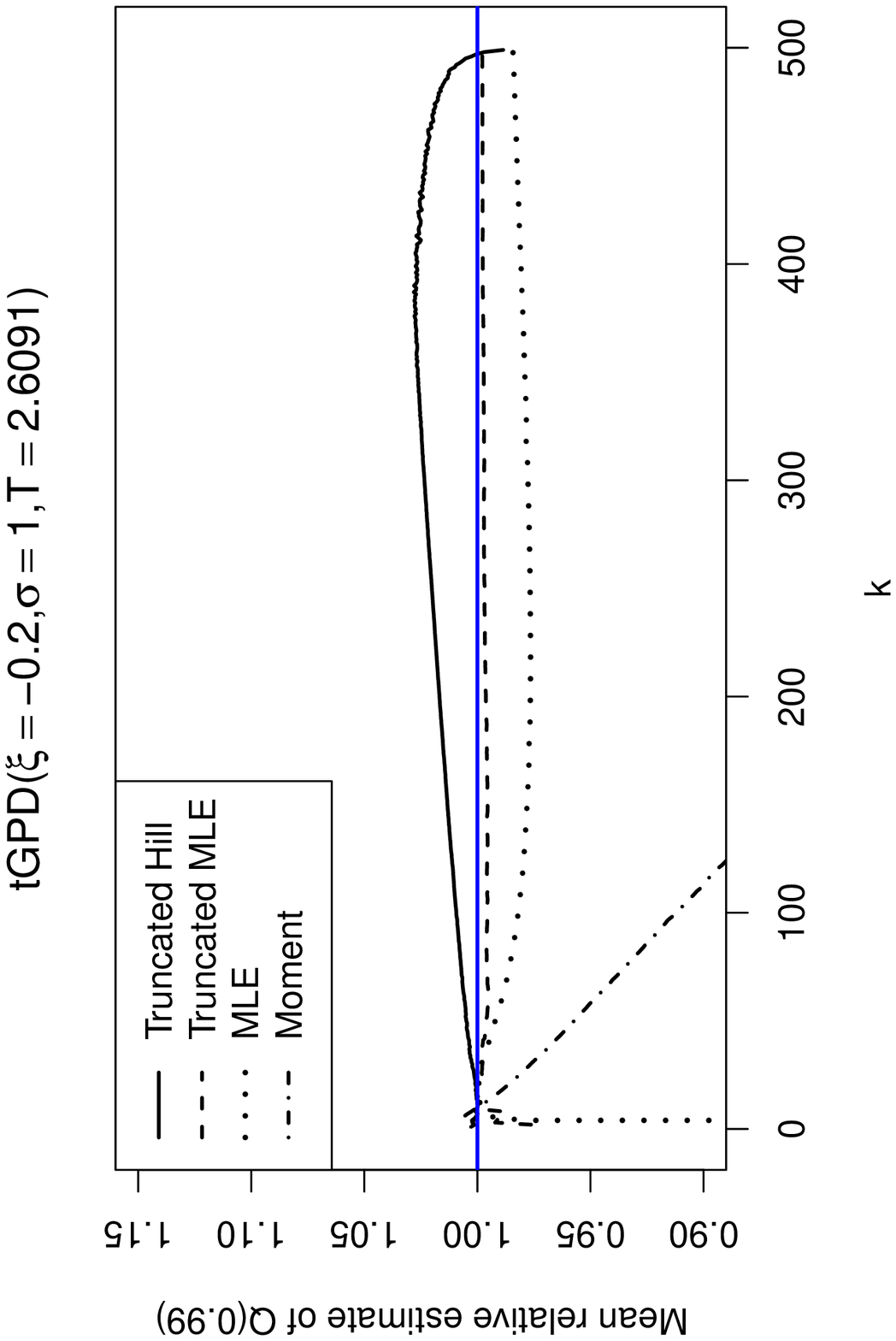}
	 \includegraphics[height=0.495\textwidth, angle=270]{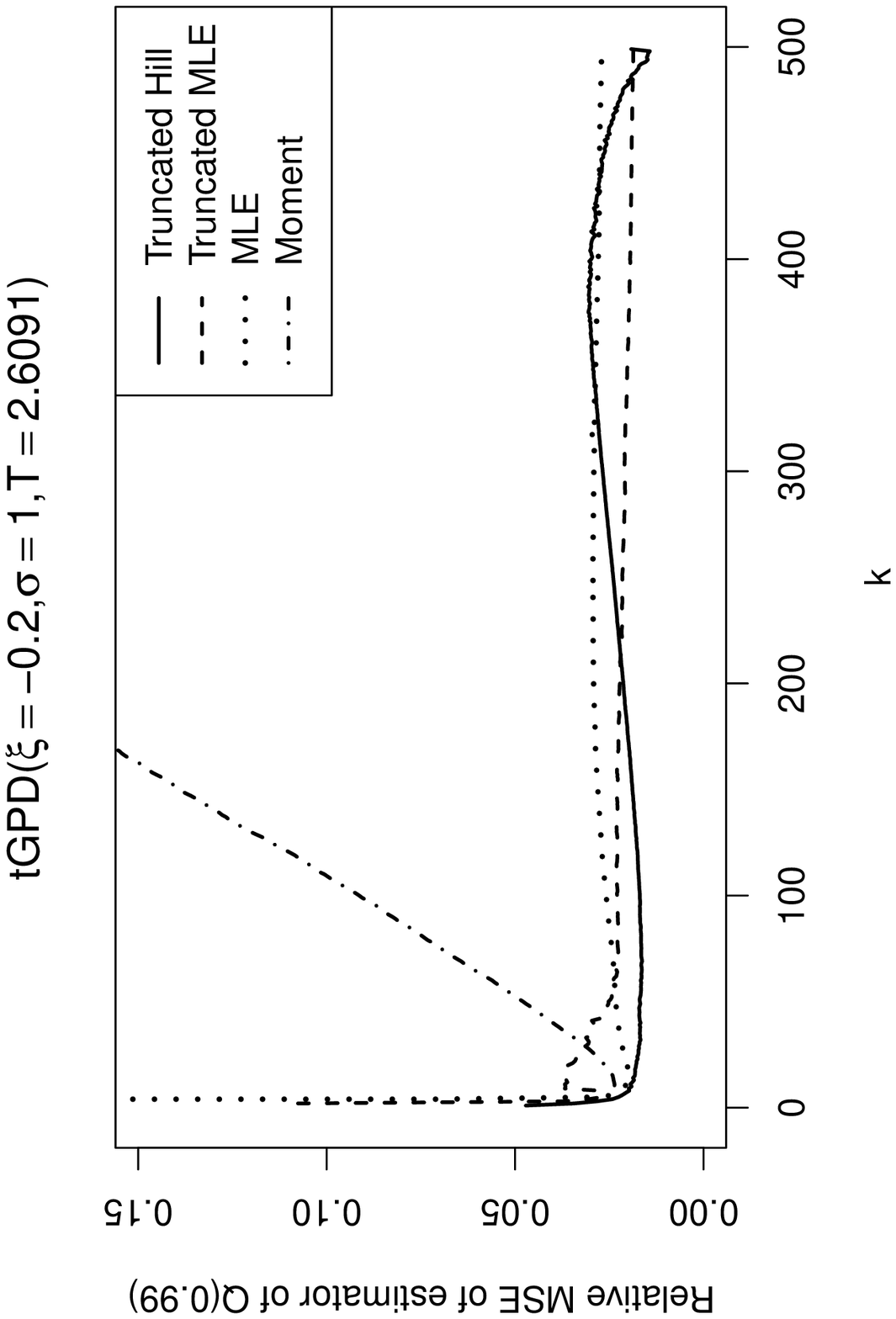}\\
	  \includegraphics[height=0.495\textwidth, angle=270]{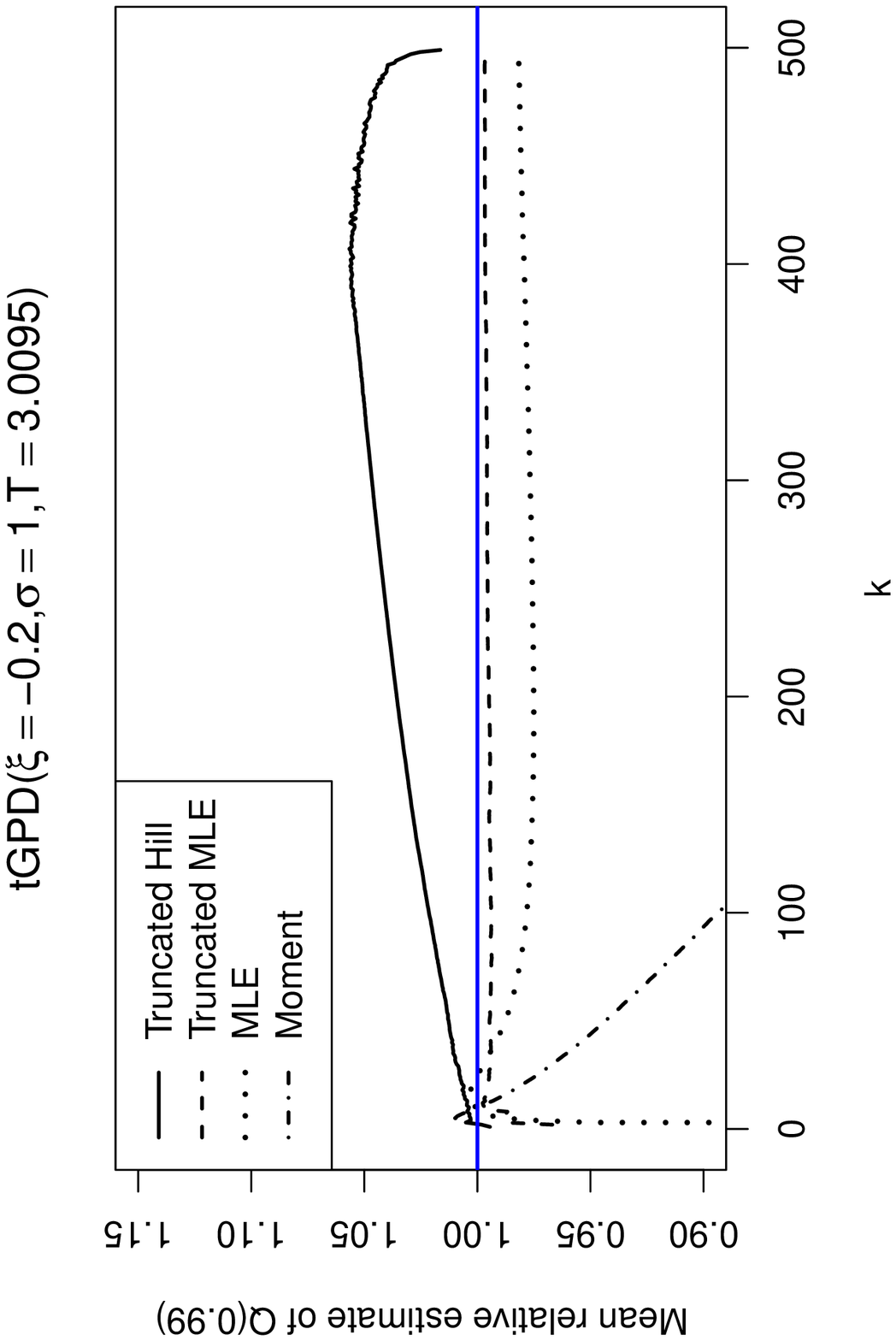}
	 \includegraphics[height=0.495\textwidth, angle=270]{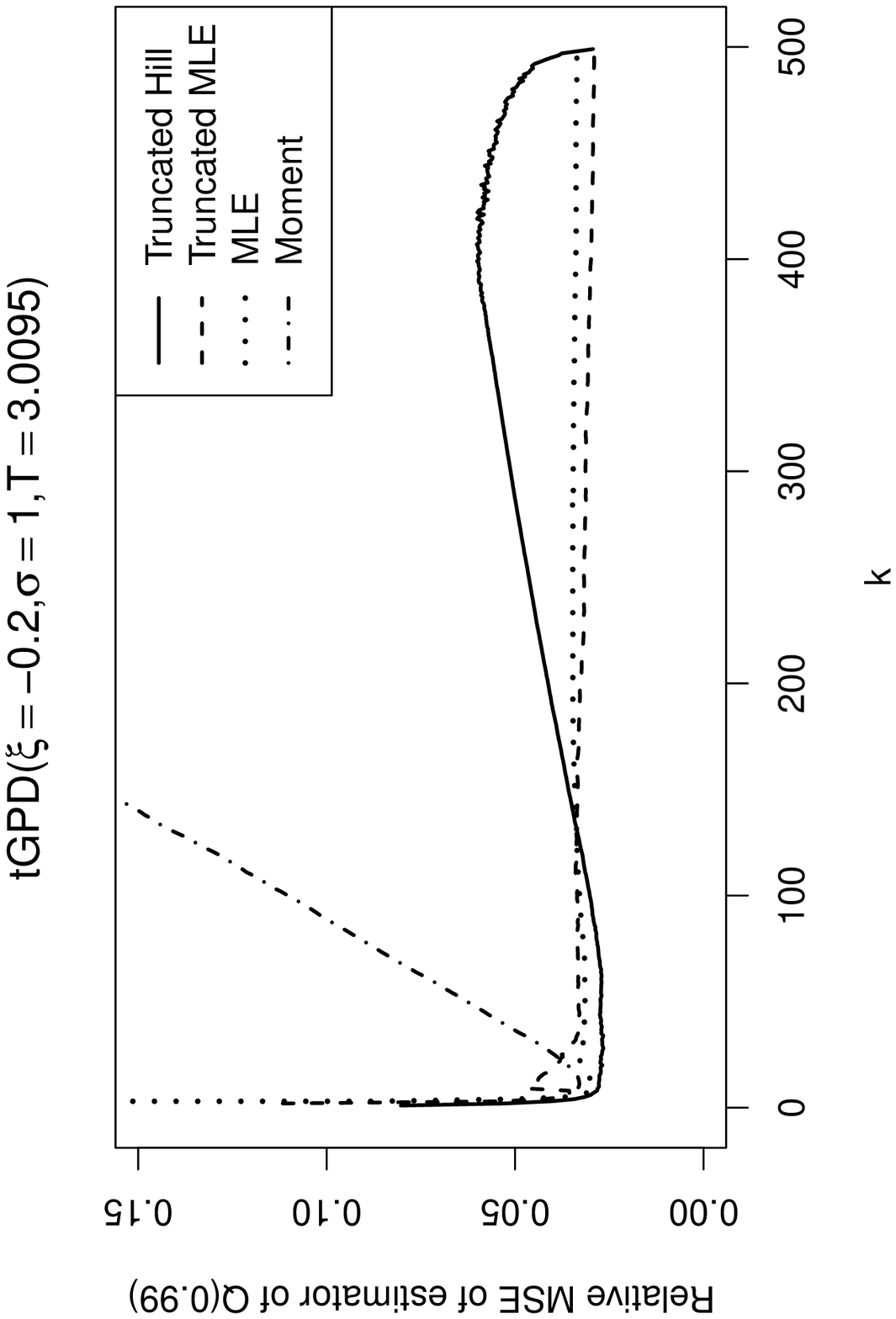}\\
	  \includegraphics[height=0.495\textwidth, angle=270]{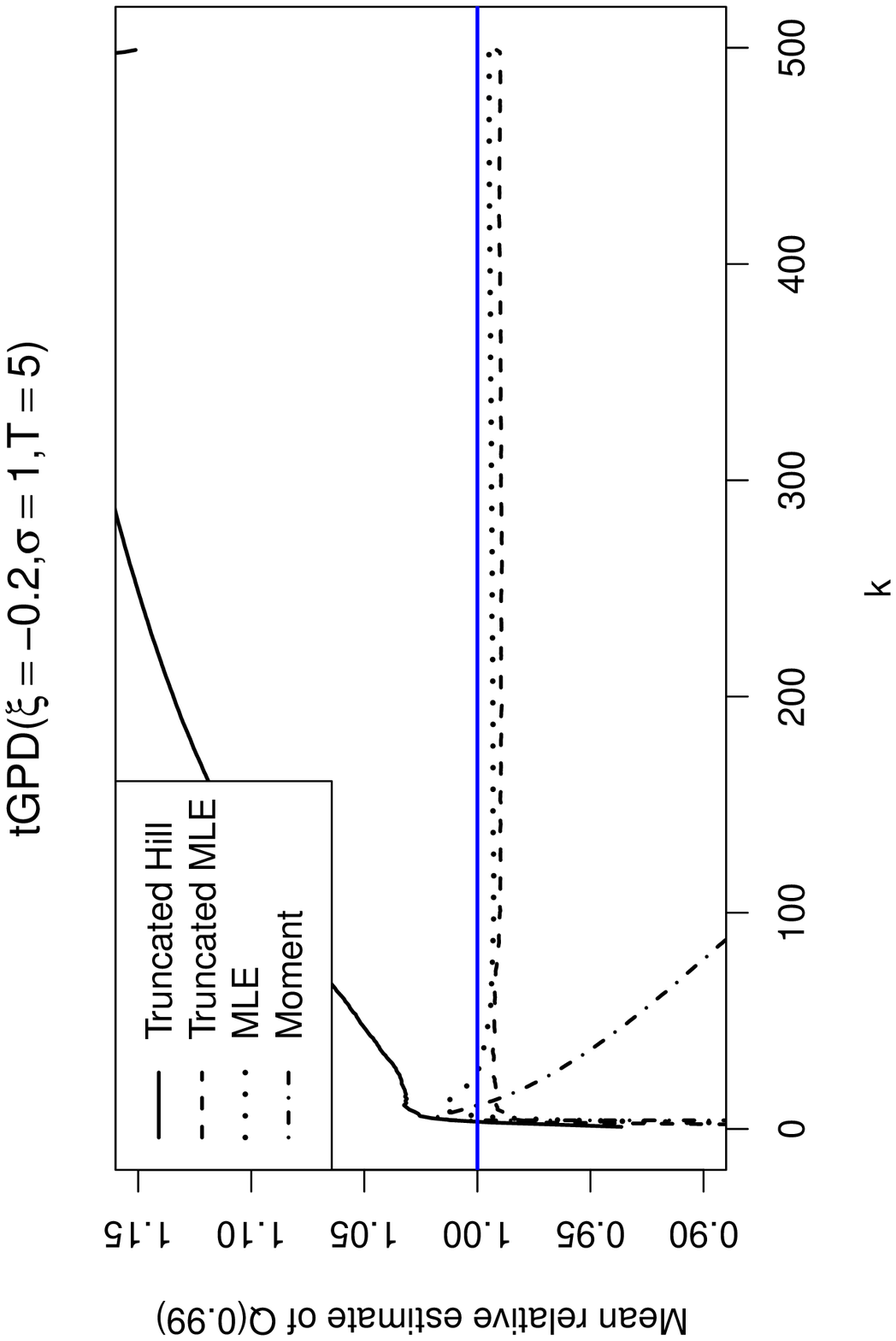}
	 \includegraphics[height=0.495\textwidth, angle=270]{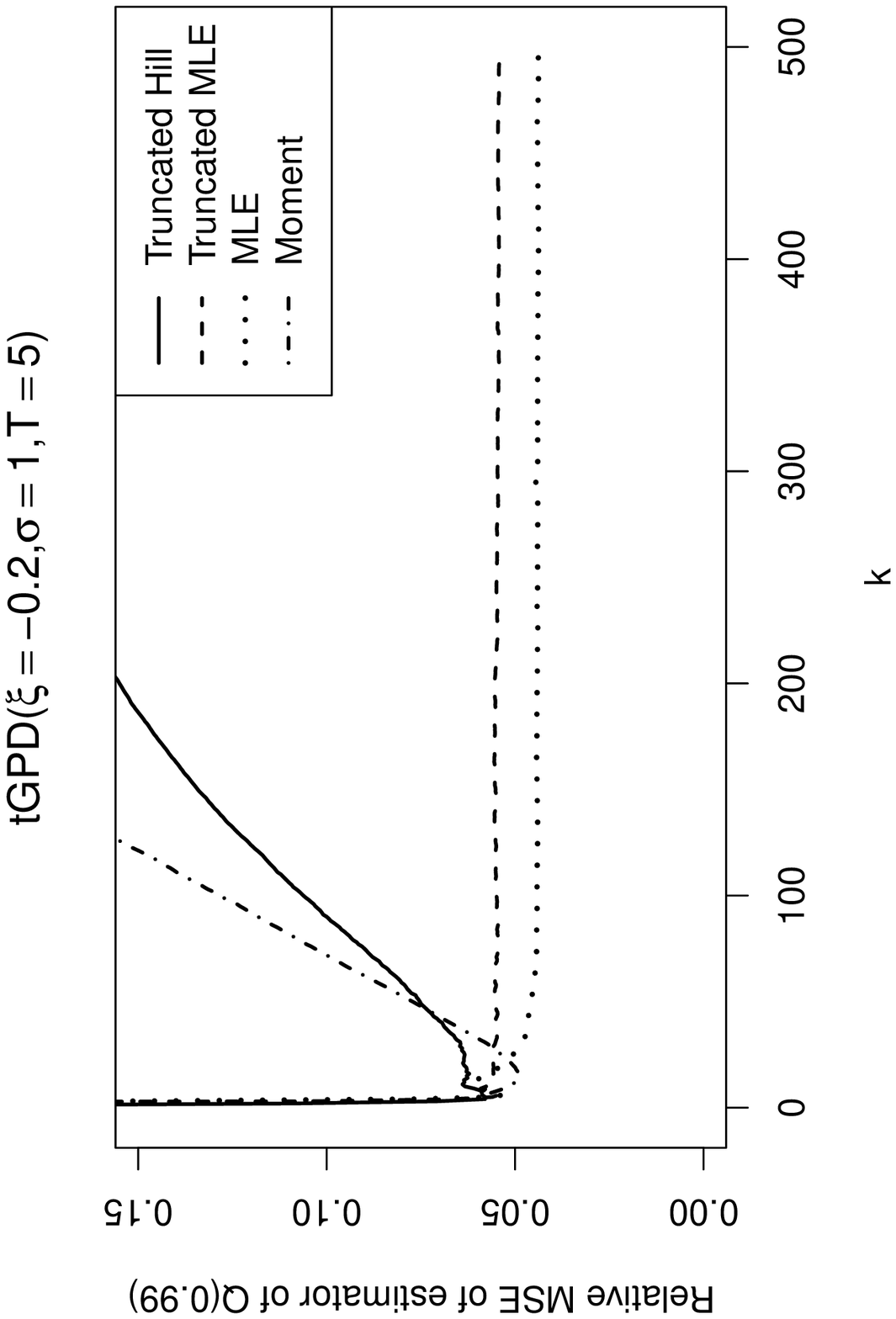}
  \caption{Mean deviations of $\hat{Q}^+_{T,k}(1-p)/Q_T(1-p)$, $\hat{Q}_{T,k}(1-p)/Q_T(1-p)$, $\hat{Q}^{\infty}_{k}(1-p)/Q_T(1-p)$, $\hat{Q}^M_{k}(1-p)/Q_T(1-p)$  and corresponding MSE with $p=0.01$ for GPD(-0.2,1) truncated at $Q_Y (0.975)$ (top), $Q_Y (0.99)$ (middle) and $Q_Y (1)$ (bottom).}
   \end{figure}
				
				\newpage	
	\begin{figure}[!ht]
	\centering
	  \includegraphics[height=0.495\textwidth, angle=270]{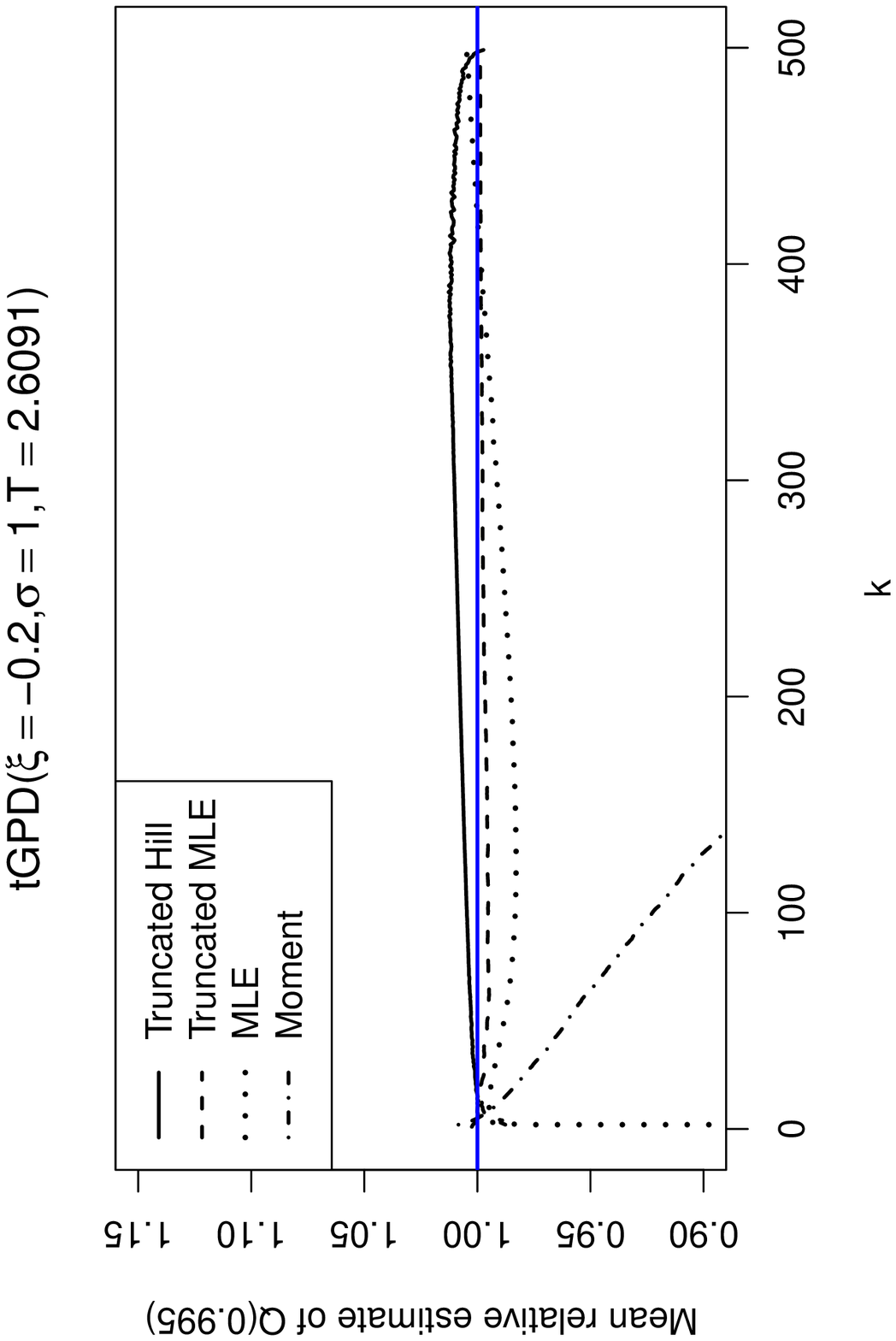}
	 \includegraphics[height=0.495\textwidth, angle=270]{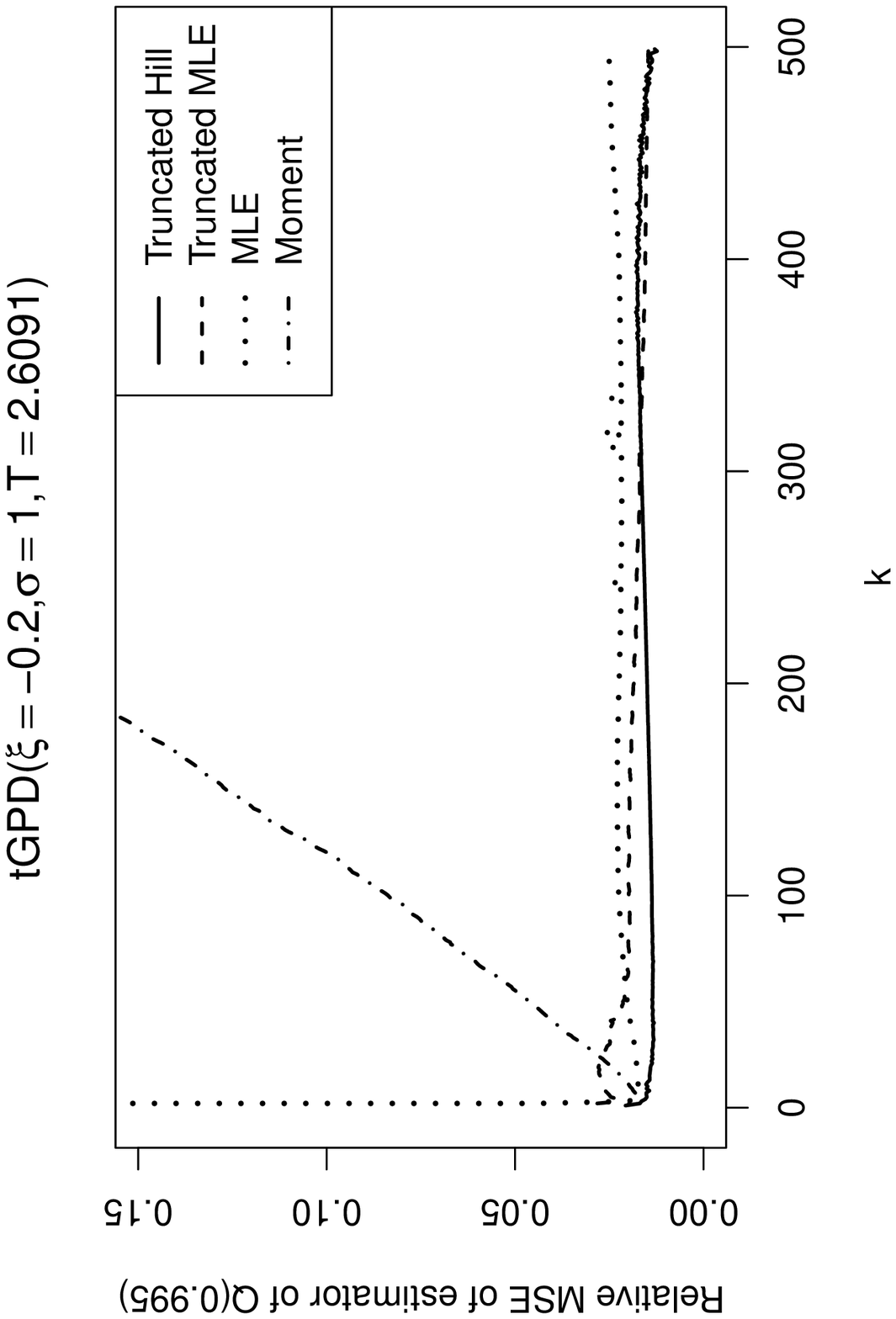}\\
	  \includegraphics[height=0.495\textwidth, angle=270]{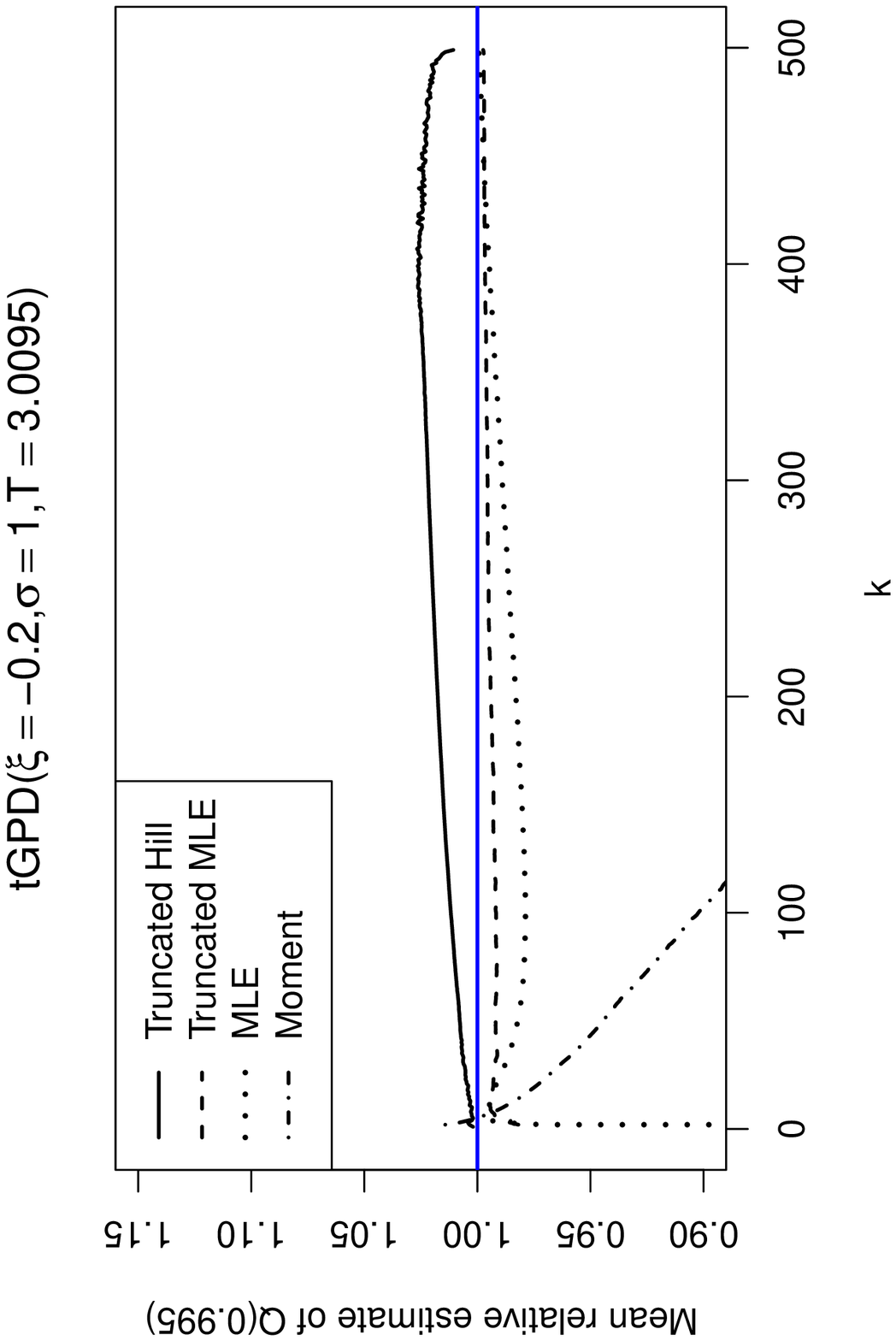}
	 \includegraphics[height=0.495\textwidth, angle=270]{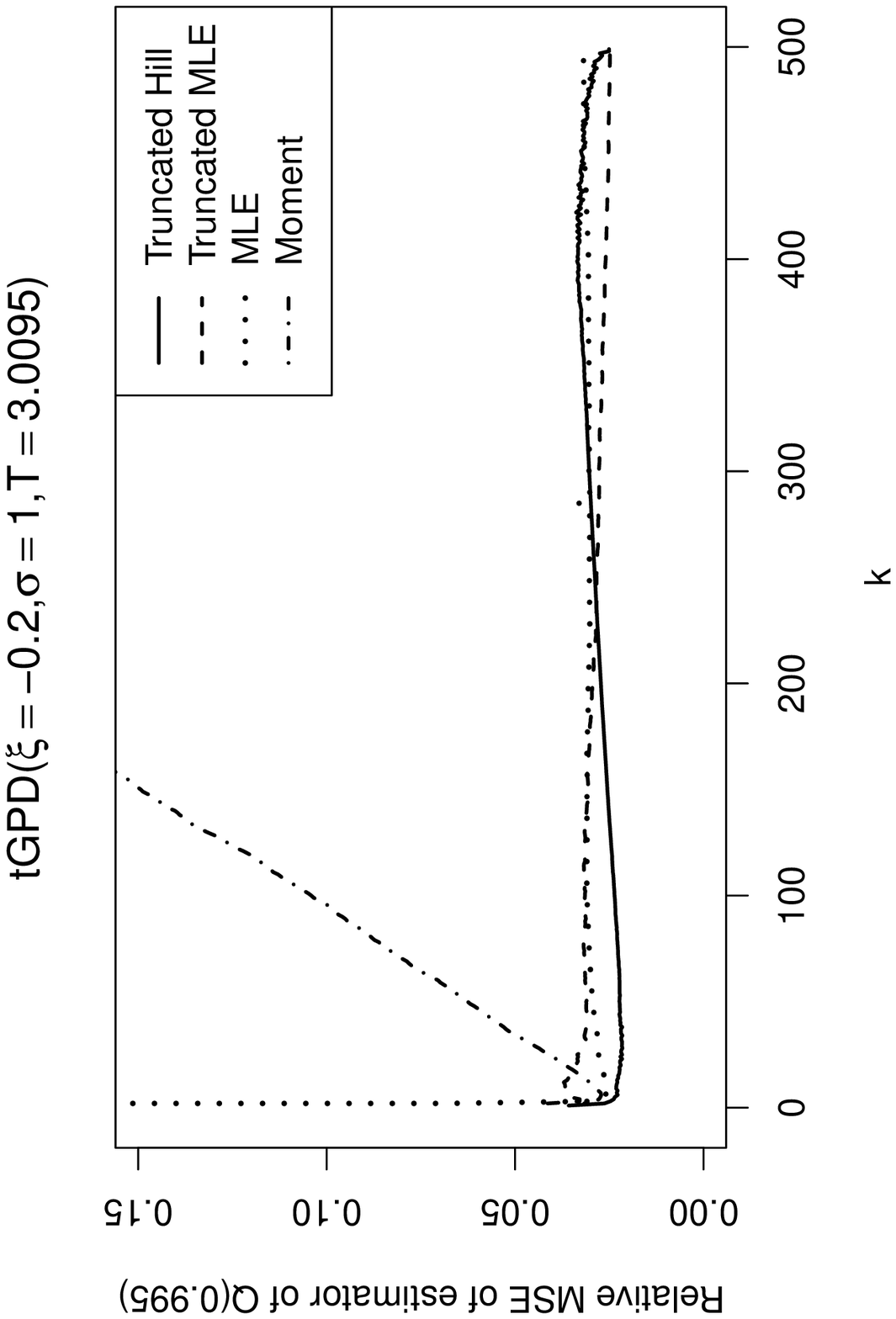}\\
	  \includegraphics[height=0.495\textwidth, angle=270]{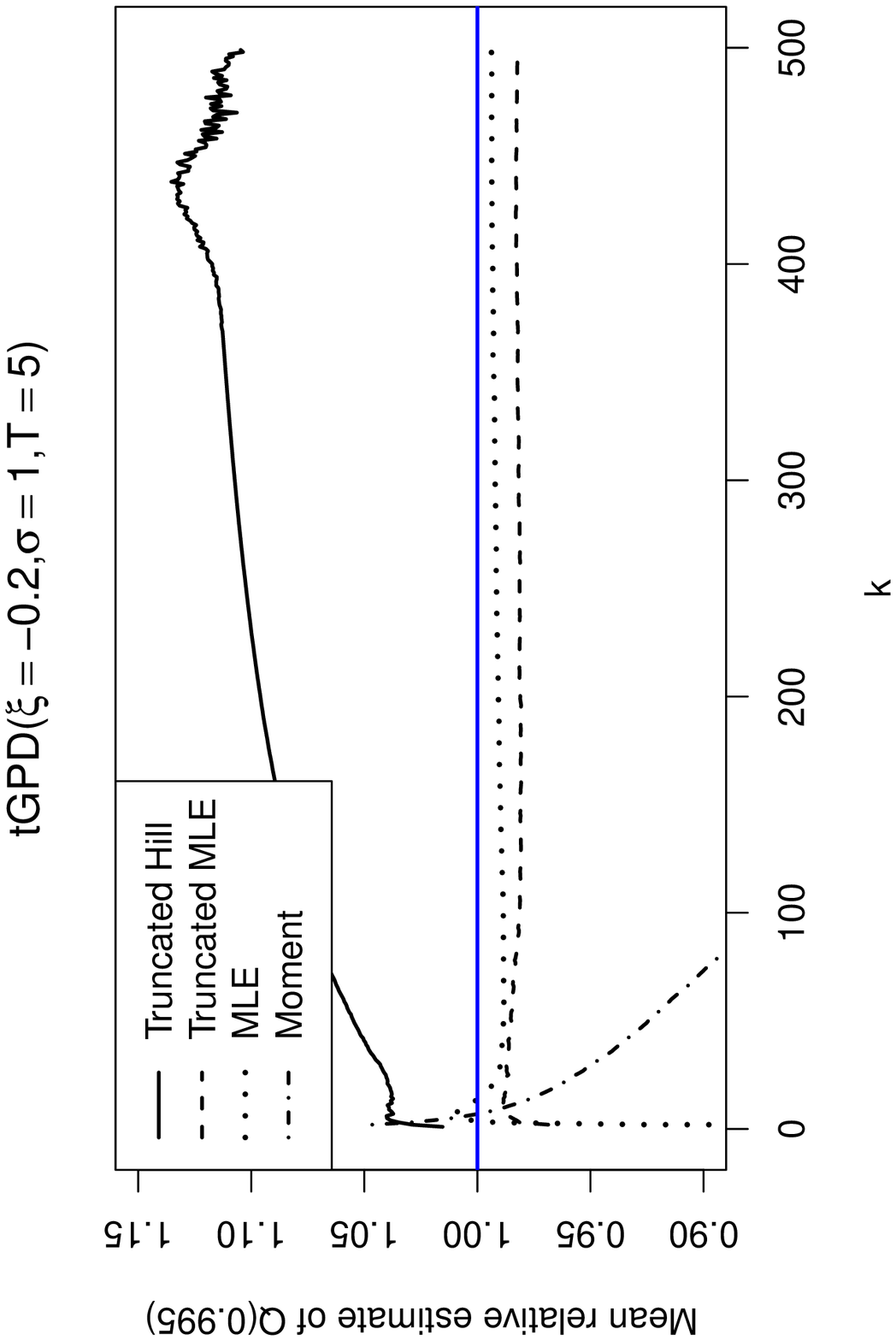}
	 \includegraphics[height=0.495\textwidth, angle=270]{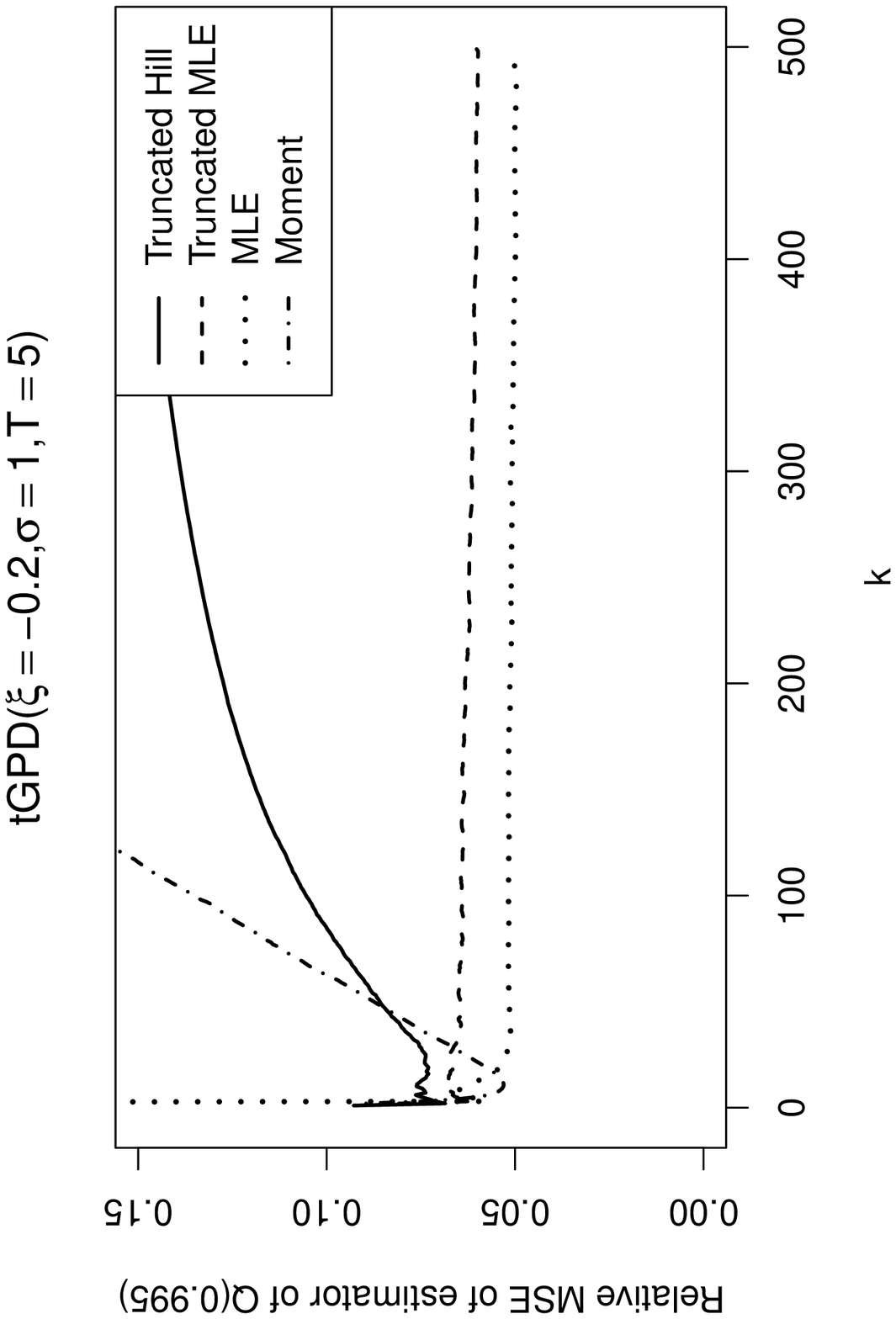}
  \caption{Mean deviations of $\hat{Q}^+_{T,k}(1-p)/Q_T(1-p)$, $\hat{Q}_{T,k}(1-p)/Q_T(1-p)$, $\hat{Q}^{\infty}_{k}(1-p)/Q_T(1-p)$, $\hat{Q}^M_{k}(1-p)/Q_T(1-p)$  and corresponding MSE with $p=0.005$ for GPD(-0.2,1) truncated at $Q_Y (0.975)$ (top), $Q_Y (0.99)$ (middle) and  $Q_Y (1)$ (bottom).}\label{fig:sim_Q_last}
        \end{figure}

		\end{document}